\newtheorem{theorem}{Theorem}
\newtheorem{lemma}{Lemma}
\newtheorem*{example*}{Example}
\def\BState{\State\hskip-\ALG@thistlm}
\newtheorem{definition}{Definition}
\newtheorem{proposition}{Proposition}
\newtheorem{corollary}{Corollary}
  {
      \theoremstyle{plain}
      \newtheorem{assumption}{Assumption}
  }
\numberwithin{equation}{section}
\newcommand{\mme}[0]{\mathbb{E}}
\newcommand{\mmp}[0]{\mathbb{P}}
\newcommand{\mmr}[0]{\mathbb{R}}
\newcommand{\mmn}[0]{\mathbb{N}}
\newcommand{\bone}[0]{\mathbbm{1}}
\providecommand{\argmin}{\mathop{\rm argmin}}
\newcommand{\indic}[1]{\bone\left\{#1\right\}} 
\DeclarePairedDelimiterX{\inp}[2]{\langle}{\rangle}{#1, #2}
\DeclareRobustCommand{\abbrevcrefs}{%
\Crefname{theorem}{Thm.}{Thms.}%
\Crefname{example}{Ex.}{Exs.}%
\crefname{equation}{eqn.}{eqns.}%
}
\DeclareRobustCommand{\Cshref}[1]{{\abbrevcrefs\Cref{#1}}}
\definecolor{LightDarkRed}{RGB}{235,170,170}
\definecolor{LightDarkBlue}{RGB}{150,180,235}
\definecolor{LightDarkOrange}{RGB}{255,200,150}
\definecolor{newOrange}{RGB}{255,130,0}
\definecolor{newBlue}{RGB}{135,206,250}
\newcommand{\footremember}[2]{%
    \footnote{#2}
    \newcounter{#1}
    \setcounter{#1}{\value{footnote}}%
}
\newcommand{\footrecall}[1]{%
    \footnotemark[\value{#1}]%
}
\title{Conformal Prediction with Conditional Guarantees}
\author{Isaac Gibbs\footremember{stats}{Department of Statistics, Stanford University.}\footremember{address}{Address for correspondence: Sequoia Hall, Stanford University, 390 Serra Mall, Stanford, CA, 94305, USA.
 Emails: \href{mailto:igibbs@stanford.edu}{igibbs@stanford.edu}, \href{mailto:jcherian@stanford.edu}{jcherian@stanford.edu}.} \and John J. Cherian\footrecall{stats}{} \and Emmanuel J. Cand\`{e}s\footremember{mathstats}{Departments of Mathematics and Statistics, Stanford University.}}
\date{}
\providecommand{\keywords}[1]
{
  \small	
  \textbf{\textit{Keywords---}} #1
}
\begin{document}

\maketitle

\begin{abstract}
We consider the problem of constructing distribution-free prediction sets with finite-sample conditional guarantees. Prior work has shown that it is impossible to provide exact conditional coverage universally in finite samples. Thus, most popular methods only guarantee marginal coverage over the covariates or are restricted to a limited set of conditional targets, e.g. coverage over a finite set of pre-specified subgroups. This paper bridges this gap by defining a spectrum of problems that interpolate between marginal and conditional validity. We motivate these problems by reformulating conditional coverage as coverage over a class of covariate shifts. When the target class of shifts is finite-dimensional, we show how to simultaneously obtain exact finite-sample coverage over all possible shifts. For example, given a collection of subgroups, our prediction sets guarantee coverage over each group. For more flexible, infinite-dimensional classes where exact coverage is impossible, we provide a procedure for quantifying the coverage errors of our algorithm. Moreover, by tuning interpretable hyperparameters, we allow the practitioner to control the size of these errors across shifts of interest. Our methods can be incorporated into existing split conformal inference pipelines, and thus can be used to quantify the uncertainty of modern black-box algorithms without distributional assumptions.
\end{abstract}

\keywords{conditional coverage, conformal inference, covariate shift, distribution-free prediction, prediction sets, black-box uncertainty quantification.}

\section{Introduction}\label{sec:intro}
Consider a training dataset $\{(X_i,Y_i)\}_{i=1}^{n}$, and a test point $(X_{n + 1}, Y_{n + 1})$, all drawn i.i.d.~from an unknown, arbitrary distribution $P$. We study the problem of using the observed data $\{(X_i,Y_i)\}_{i=1}^{n} \cup \{X_{n+1}\}$ to construct a prediction set $\hat{C}(X_{n + 1})$ that includes $Y_{n + 1}$ with probability $1 - \alpha$ over the randomness in the training and test points. 

Ideally, we would like this prediction set to meet three competing goals: it should (1) make no assumptions on the underlying data-generating mechanism, (2) be valid in finite samples, and (3) satisfy the \emph{conditional coverage} guarantee, $\mmp(Y_{n + 1} \in \hat{C}(X_{n + 1}) \mid X_{n + 1} = x) = 1 - \alpha$. Unfortunately, prior work has shown that is impossible to obtain all three of these conditions simultaneously (\cite{Vovk2012, Barber2020}). Perhaps the closest method to achieving these goals is conformal prediction, which relaxes the third criterion to the \emph{marginal coverage} guarantee, $\mmp(Y_{n + 1} \in \hat{C}(X_{n + 1})) = 1 - \alpha$. 

\begin{figure}
\centering
\begin{tikzpicture}[
    node distance=2.5cm,
    startstop/.style={
        rectangle, rounded corners, text width=3cm, minimum height=1.5cm, draw=black, align=center,
        blur shadow, thick, font=\small
    },
    arrow/.style={thick,->,>=Stealth, shorten >=2pt, shorten <=2pt},
    downarrow/.style={thick,->,>=Stealth, shorten >=1pt, shorten <=1pt},
]

\node (mlmodel) [startstop, fill=black!80, text=white] {ML model};
\node (score) [startstop, right=of mlmodel, fill=gray!30, text=black] {Conformity score};
\node (conformal) [startstop, right=of score, yshift=1.7cm, fill=LightDarkBlue, text=black] {Split conformal calibration};
\node (ourmethod) [startstop, right=of score, yshift=-1.7cm, fill=LightDarkOrange, text=black] {Conditional calibration};

\node[below=0.5cm of mlmodel, text width=3.5cm, align=center, font=\footnotesize] (mlmodelannotation) {Accurate point prediction};
\node[below=0.5cm of score, text width=3.5cm, align=center, font=\footnotesize] (scoreannotation) {Prediction errors on hold-out set};
\node[below=0.5cm of conformal, text width=3cm, align=center, font=\footnotesize] (conformalannotation) {Marginal coverage \\ \Cshref{thm:split_marginal_cov}};
\node[below=0.5cm of ourmethod, text width=3cm, align=center, font=\footnotesize] (ourmethodannotation) {Conditional coverage \\\Cshref{thm:finite_dim_result,thm:infinite_dim_result}};

\draw [arrow] (mlmodel) -- (score);
\draw [arrow] (score) -- (conformal);
\draw [arrow] (score) -- (ourmethod);

\draw [downarrow] (mlmodel) -- (mlmodelannotation);
\draw [downarrow] (score) -- (scoreannotation);
\draw [downarrow] (conformal) -- (conformalannotation);
\draw [downarrow] (ourmethod) -- (ourmethodannotation);

\end{tikzpicture}
\caption{Predicting with finite-sample guarantees: our conditionally valid pipeline vs. split conformal prediction.}
\label{fig:conformal_pipeline}
\end{figure}

The gap between conditional and marginal coverage can be extremely consequential in high-stakes decision-making. Marginal validity does not preclude substantial variation in coverage among relevant subpopulations. For example, a conformal prediction set for predicting a drug candidate's binding affinity achieves marginal $1 - \alpha$ coverage even if it underestimates the predictive uncertainty over the most promising subset of compounds. In human-centered applications, marginally valid prediction sets can be untrustworthy for certain legally protected groups (e.g., those defined by sensitive attributes such as race, gender, age, etc.) (\cite{Romano2020}). 

Achieving practical, finite-sample results requires weakening our desideratum from exact conditional coverage. In this article, we pursue a goal that is motivated by the following equivalence:
\begin{gather*}
    \mmp(Y_{n + 1} \in \hat{C}(X_{n + 1}) \mid X_{n + 1} = x) = 1 - \alpha, \quad \text{for all $x$}\\
    \iff \\
    \mme [ f(X_{n + 1})  ( \mathbf{1} \{Y_{n + 1} \in \hat{C}(X_{n + 1})\} - (1 - \alpha) ) ] = 0, \quad \text{for all measurable $f$}.
\end{gather*}
In particular, we define a relaxed coverage objective by replacing ``all measurable $f$'' with ``all $f$ belonging to some (potentially infinite) class $\mathcal{F}$.'' At the least complex end, taking $\mathcal{F} = \{x \mapsto 1\}$ recovers marginal validity, while more intermediate choices interpolate between marginal and conditional coverage. 

This generic approach to relaxing conditional validity was first popularized by the ``conditional-to-marginal'' moment testing literature (\cite{andrews2013inference}). Our relaxation is also referred to as a ``multi-accuracy'' objective in theoretical computer science (\cite{hebert2018multicalibration, kim2019multiaccuracy}). We remark that \citet{Deng2023} have concurrently proposed the same objective for the conditional coverage problem. However, their results focus on the infinite data regime, where the distribution of $(X,Y)$ is known exactly, and their algorithm requires access to an unspecified black-box optimization oracle.

By contrast, our proposed method preserves the attractive assumption-free and computationally efficient properties of split conformal prediction. Emulating split conformal, we design our procedure as a wrapper that takes any black-box machine learning model as input. We then compute conformity scores that measure the accuracy of this model's predictions on new test points. Finally, by calibrating bounds for these scores, we obtain prediction sets with finite-sample conditional guarantees. \Cref{fig:conformal_pipeline} displays our workflow.

Unlike split conformal, our approach adaptively compensates for poor initial modeling decisions. In particular, if the prediction rule and conformity scores were well-designed at the outset, our procedure may only make small adjustments. This could happen, for instance, if the scores were derived from a well-specified parametric model. More often, however, the user will begin with an inaccurate or incomplete model that fails to fully capture the distribution of $Y \mid X$. In these cases, our procedure will improve on split conformal by recalibrating the conformity score to provide exact conditional guarantees. In the predictive inference literature, conformal inference is often described as a protective layer that lies on top of a black-box machine learning model and transforms its point predictions into valid prediction sets. With this in mind, one might view our method as an additional protective layer that lies on top of a conformal method and transforms its (potentially poor) marginally valid outputs into richer, conditionally valid, prediction sets.

A number of prior works have also considered either modifying the split conformal calibration step (\cite{Lei2014, Leying2022, Barber2023}) or the initial prediction rule (\cite{Romano2019, Romano2021, Chernozhukov2021DistConf}) to better model the distribution of $Y \mid X$. Crucially, despite heuristic improvements in the quality of the resulting prediction sets, all of the aforementioned approaches obtain at most weak asymptotic guarantees that rely on slow, non-parametric convergence rates. 

Perhaps the only procedures to obtain a practical coverage guarantee are those of \citet{Barber2020}, \citet{Vovk2003}, and \citet{Jung2023}. All of these methods guarantee a form of group-conditional coverage, i.e. $\mmp(Y_{n+1} \in \hat{C}(X_{n+1}) \mid X_{n+1} \in G) \geq 1-\alpha$ for all sets $G$ in some pre-specified class $\mathcal{G}$.  However, the approach of \citet{Barber2020} can be computationally infeasible and severely conservative, yielding wide intervals with coverage probability far above the target level. On the other hand, the method of \citet{Vovk2003}, Mondrian conformal prediction, provides exact coverage in finite samples, but does not allow the groups in $\mathcal{G}$ to overlap. Finally, \citet{Jung2023} propose running quantile regression over the linear function class $\{\sum_{G \in \mathcal{G}} \beta_G\bone\{x \in G\} : \beta \in \mathbb{R}^{|G|}\}$. This method is both practical and allows for overlapping groups. In this work, we propose a new method for achieving conditional coverage that improves upon the method of \citet{Jung2023} in three ways; 1) by providing tighter finite-sample coverage, 2) by requiring no assumptions on the data-generating distribution (in particular, unlike \citet{Jung2023} we allow for discrete outcomes), and 3) by providing conditional coverage guarantees far beyond the group setting. 

\subsection{Preview of contributions}

To motivate and summarize the main contributions of our paper, we preview some applications. In particular, we show how our method can be used to satisfy two popular coverage desiderata: group-conditional coverage and coverage under covariate shift. Additionally, we demonstrate the improved finite sample performance of our method compared to previous approaches.

\subsubsection{Group-conditional coverage}
\begin{figure}[ht!]
    \centering
    \includegraphics[width=\textwidth]{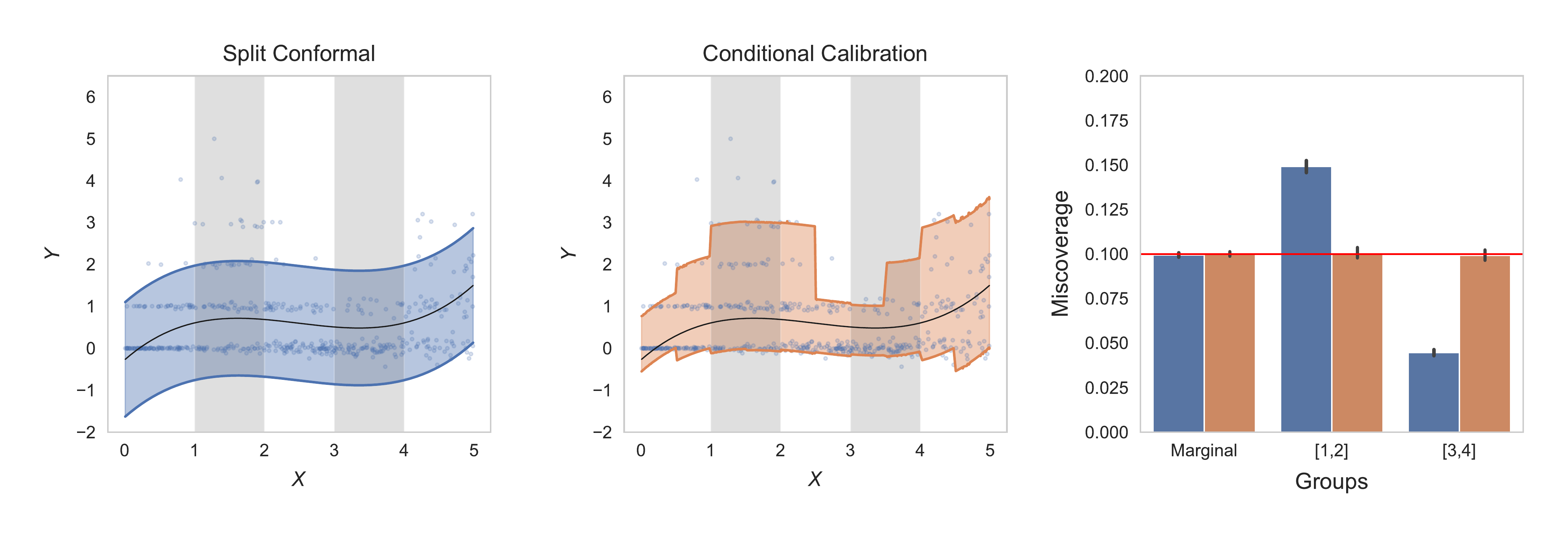}
    \caption{Comparison of split conformal prediction (blue, left-most panel) and the randomized implementation of our method (orange, center panel) on a simulated dataset first considered by \citet{Romano2019}. Black curves denote an estimate of the conditional mean, while the blue and orange shaded regions indicate the fitted prediction intervals. For this experiment, our method is implemented using the procedure outlined in \Cref{sec:finite_dim} with $\mathcal{F} := \{\sum_{G \in \mathcal{G}} \beta_G \bone\{x \in G\}: \beta \in \mmr^{|\mathcal{G}|}\}$. The rightmost panel shows the miscoverage of the two methods marginally over the x-axis and conditionally on x falling in the two grey shaded bands; the red line indicates the target level of $\alpha = 0.1$.}
    \label{fig:cqr_data}
\end{figure}
\emph{Group-conditional coverage} requires that $\hat{C}(\cdot)$ satisfy $\mmp (Y_{n + 1} \in \hat{C}(X_{n + 1}) \mid X_{n + 1} \in G) = 1 - \alpha$ for all $G$ belonging to some collection of pre-specified (potentially overlapping) groups $\mathcal{G} \subseteq 2^{\text{Domain}(X)}$ (\cite{Barber2020}). This corresponds to a special case of our guarantee in which $\mathcal{F} = \{\sum_{G \in \mathcal{G}} \beta_G \bone\{x \in G\}: \beta \in \mmr^{|\mathcal{G}|}\}$.

\Cref{fig:cqr_data} illustrates the coverage guarantee on a simulated dataset. Here, $x$ is univariate and we have taken the groups $\mathcal{G}$ to be the collection of all sub-intervals with endpoints belonging to $\{0,0.5,1,\dots,5\}$. Two of these sub-intervals, $[1,2]$ and $[3,4]$, are shaded in grey. We compare two procedures, split conformal prediction and our conditional calibration method. As is standard, we implement split conformal using conformity score $S(x,y) := |y - \hat{\mu}(x)|$ where $\hat{\mu}(x)$ is an estimate of the conditional mean $\mme[Y \mid X]$, while for our method, we take a two-sided approach in which upper and lower bounds on $y - \hat{\mu}(x)$ are computed separately. We see that while split conformal prediction only provides marginal validity, our method returns prediction sets that are adaptive to the shape of the data and thus obtain exact coverage over all subgroups.

\begin{figure}[ht!]
    \centering
    \includegraphics[width=\textwidth]{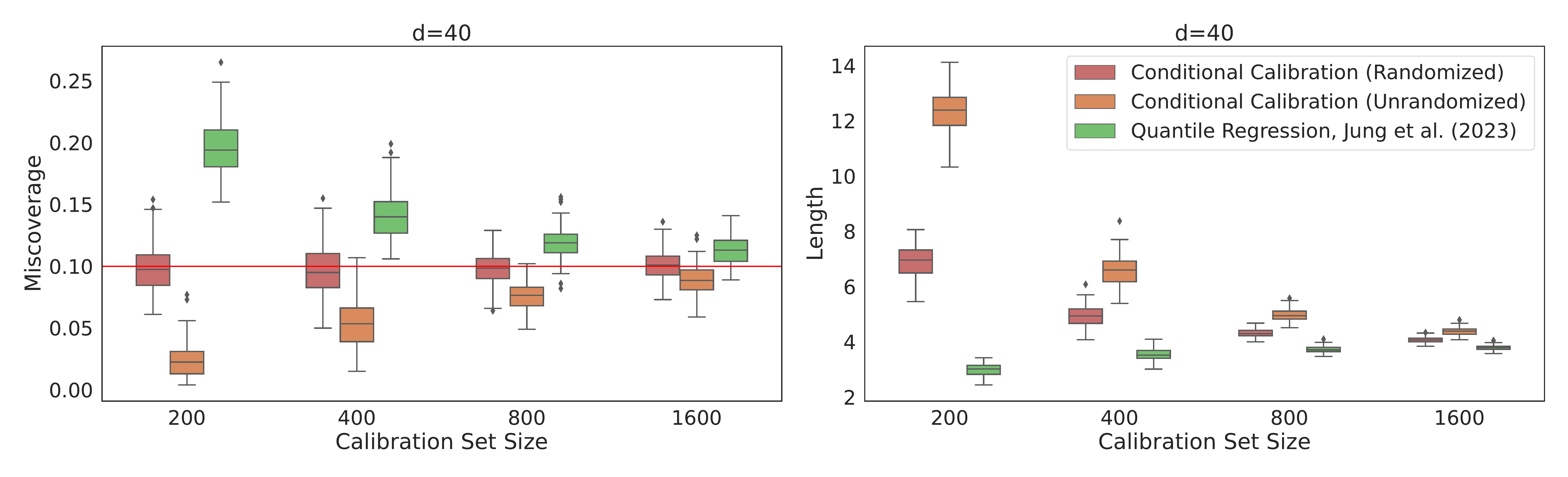}
    \caption{Marginal calibration-conditional miscoverage (left panel) and length (right panel) of quantile regression (green) and the randomized (red) and unrandomized (orange) implementations of our conditional-calibration method on a simulated dataset. All methods are implemented in their two-sided form with conformity score $S(x,y) = y$, i.e. we estimate the $\alpha/2$ and $1-\alpha/2$ quantiles of $Y \mid X$ separately and define the prediction set to be the values of $y$ that fall between the two bounds (see \Cref{sec:app_two-sided} for details). Data for this simulation are generated i.i.d. from $Y_i = X_i^\top w + \epsilon_i$ where $X_i \sim \mathcal{N}(0,I_d)$, $\epsilon_i \sim \mathcal{N}(0,1)$, and $w \sim \text{Unif}(\mathcal{S}^{d-1})$. We implement both vanilla quantile regression \citep{Jung2023} and our conditional-calibration methods on the function class $\mathcal{F} := \{\beta_0 + \sum_{i=1}^d \beta_i \bone\{x_i > 0\} : \beta \in \mmr^d\}$. Boxplots show empirical estimates obtained by averaging over 1000 test points for each of 100 calibration datasets. The red line in the left panel indicates the target coverage level of $1-\alpha = 0.9$.}
    \label{fig:cc_cov_and_length}
\end{figure}

In this paper, we improve upon existing group-conditional coverage results in two crucial aspects: (1) we obtain tighter finite sample coverage and (2) we make no assumptions on the distribution of $(X_i, Y_i)$ or the overlap of the groups $\mathcal{G}$. Concretely, given an arbitrary finite collection of groups our randomized conditional calibration method guarantees exact coverage, 
\[
\mmp(Y_{n+1} \in \hat{C}(X_{n+1}) \mid X_{n+1} \in G) = 1-\alpha,\ \forall G \in \mathcal{G},
\]
while our unrandomized procedure obeys the inequalities, 
\[
1-\alpha \leq \mmp(Y_{n+1} \in \hat{C}(X_{n+1}) \mid X_{n+1} \in G) \leq 1-\alpha + \frac{|\mathcal{G}|}{(n+1)\mmp(X \in G)},\ \forall G \in \mathcal{G}.
\]
\Cref{fig:cc_cov_and_length} shows the improved finite sample coverage of these method. For simplicity, this plot only displays the marginal miscoverage. Boxplots in the figure show the estimated distributions of the calibration-conditional miscoverage and length, i.e. the quantities 
\[
\mmp(Y_{n+1} \notin \hat{C}(X_{n+1}) \mid \{(X_i,Y_i)\}_{i=1}^n) \qquad \text{and} \qquad \mme[\, \text{length}(\hat{C}(X_{n+1})) \mid \{(X_i,Y_i)\}_{i=1}^n],
\] 
as the sample size varies. In agreement with our theory, we find that the unrandomized version of our procedure guarantees conservative coverage, while our randomized variant offers exact coverage regardless of the sample size. On the other hand, the method of \citet{Jung2023}, i.e. linear quantile regression over the set of subgroup-inclusion indicator functions, can severely undercover even at what might be considered large sample sizes.

\subsubsection{Coverage under covariate shift}

Given an appropriate choice of $\mathcal{F}$, our prediction set also achieves \emph{coverage under covariate shift}. To define this objective, fix any non-negative function $f$ and let $\mmp_{f}$ denote the setting in which $\{(X_i,Y_i)\}_{i=1}^n$ is sampled i.i.d.~from $P$, while $(X_{n+1},Y_{n+1})$ is sampled independently from the distribution in which $P_X$ is ``tilted'' by $f$, i.e., 
\[
X_{n+1} \sim \frac{f(x)}{\mme_P[f(X)]}\cdot dP_X(x),\quad Y_{n+1}\mid X_{n+1} \sim P_{Y\mid X}.
\]
Then, our method guarantees coverage under $\mmp_{f}$ so long as $f \in \mathcal{F}$. For example, when $\mathcal{F}$ is a finite-dimensional linear function class, our prediction set satisfies
\begin{align*}
   \mmp_{f} ({Y}_{n + 1} \in \hat{C}({X}_{n + 1})) = 1 - \alpha, \qquad \text{for all non-negative functions $f \in \mathcal{F}$}.
\end{align*}

There have been a number of previous works in this area that also establish coverage under covariate shift. However, these works assume that there is a \emph{single} covariate shift of interest that is either known a-priori (\cite{Tibs2019}), or estimated from unlabeled data (\cite{Qiu2022, yang2022doubly}). Our method captures this setting as a special case in which $\mathcal{F}$ is chosen to be a singleton. On the other hand, when $\mathcal{F}$ is non-singleton, our guarantee is more general and ensures coverage over all shifts $f \in \mathcal{F}$, simultaneously.

\begin{figure}[ht]
    \centering
    \includegraphics[width=\textwidth]{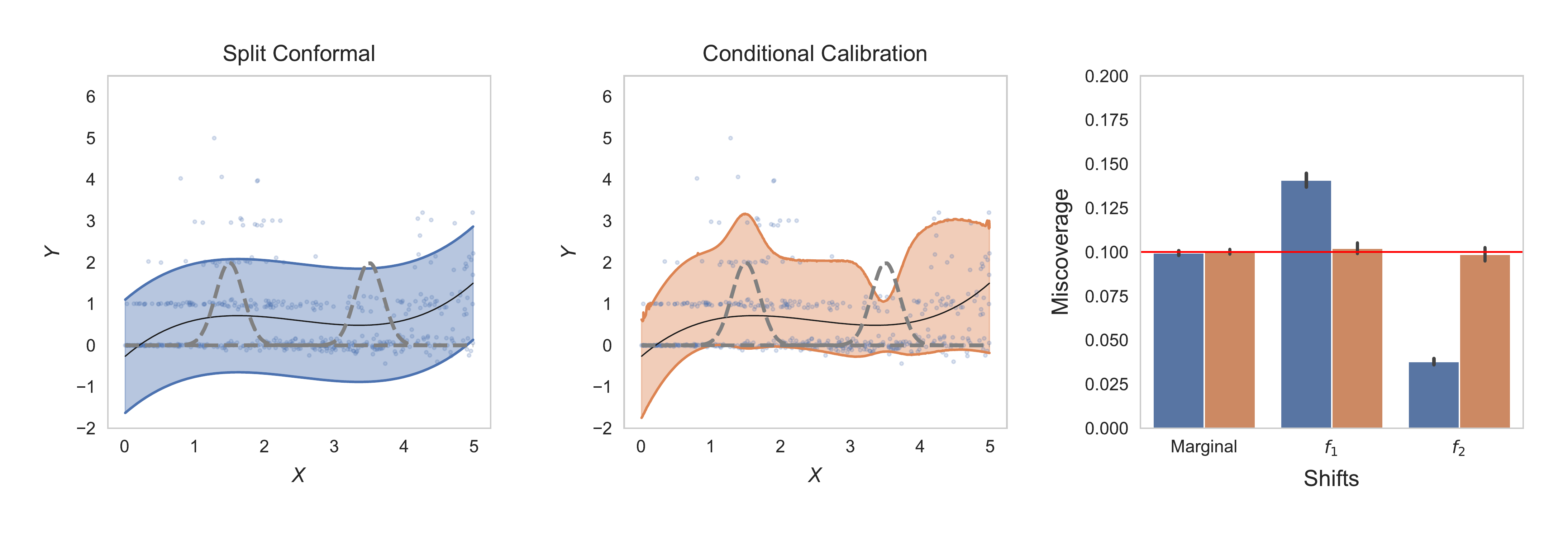}
    \caption{Comparison of split conformal prediction (blue, left-most panel) and the randomized implementation of our method (orange, center panel) on a simulated dataset first considered by \citet{Romano2019}. Black curves denote an estimate of the conditional mean, while the blue and orange shaded regions indicate the fitted prediction intervals. We consider coverage under three scenarios; marginally over the whole x-axis and locally under two Gaussian tilts (denoted by $f_1$ and $f_2$ and plotted as grey dotted lines). For this experiment, our method is implemented using the procedure outlined in \Cref{sec:finite_dim} with $\mathcal{F} := \{\beta_0 + \sum_{i = 1}^5 \beta_i w_i(x) : \beta \in \mmr^6\}$, where the $w_i$ corresponds to the Gaussian tilts with parameters $(\mu,\sigma) \in \{(0.5,1),(1.5,0.2),(2.5,1),(3.5,0.2),(4.5,1)\}$. The rightmost panel indicates the miscoverage of both methods under all three settings with a red line denoting the target level of $\alpha = 0.1$.}
    \label{fig:cqr_data_2}
\end{figure}

\Cref{fig:cqr_data_2} illustrates a simple example of this guarantee on a synthetic dataset. Once again, the covariate $x$ is a scalar and the conformity score is taken to be $S(x,y) = |\hat{\mu}(x) - y|$; following the previous example, we implement the two-sided version of our method. We consider five covariate shifts in which $P_X$ is tilted by the Gaussian density $f_{\mu,\sigma}(x) = \exp(-\frac{1}{2\sigma^2}(x-\mu)^2)$ for $(\mu,\sigma) \in \{(0.5,1),(1.5,0.2),(2.5,1),(3.5,0.2),(4.5,1)\}$ . The shifts centered at $1.5$ and $3.5$ are plotted in grey and denoted as $f_1$ and $f_2$ in the figure. As the left-most panels show, split conformal gives a constant width prediction band over the entire x-axis, while our method, adapts to the shape of the data around the covariate shifts. The right-most panel of the figure validates that this correction is sufficient to expand the marginal coverage guarantee of split conformal inference to exact coverage under all three scenarios: no shift, shift by $f_1$, and shift by $f_2$.

\begin{figure}[H]
    \centering
    \includegraphics[width=0.8\textwidth]{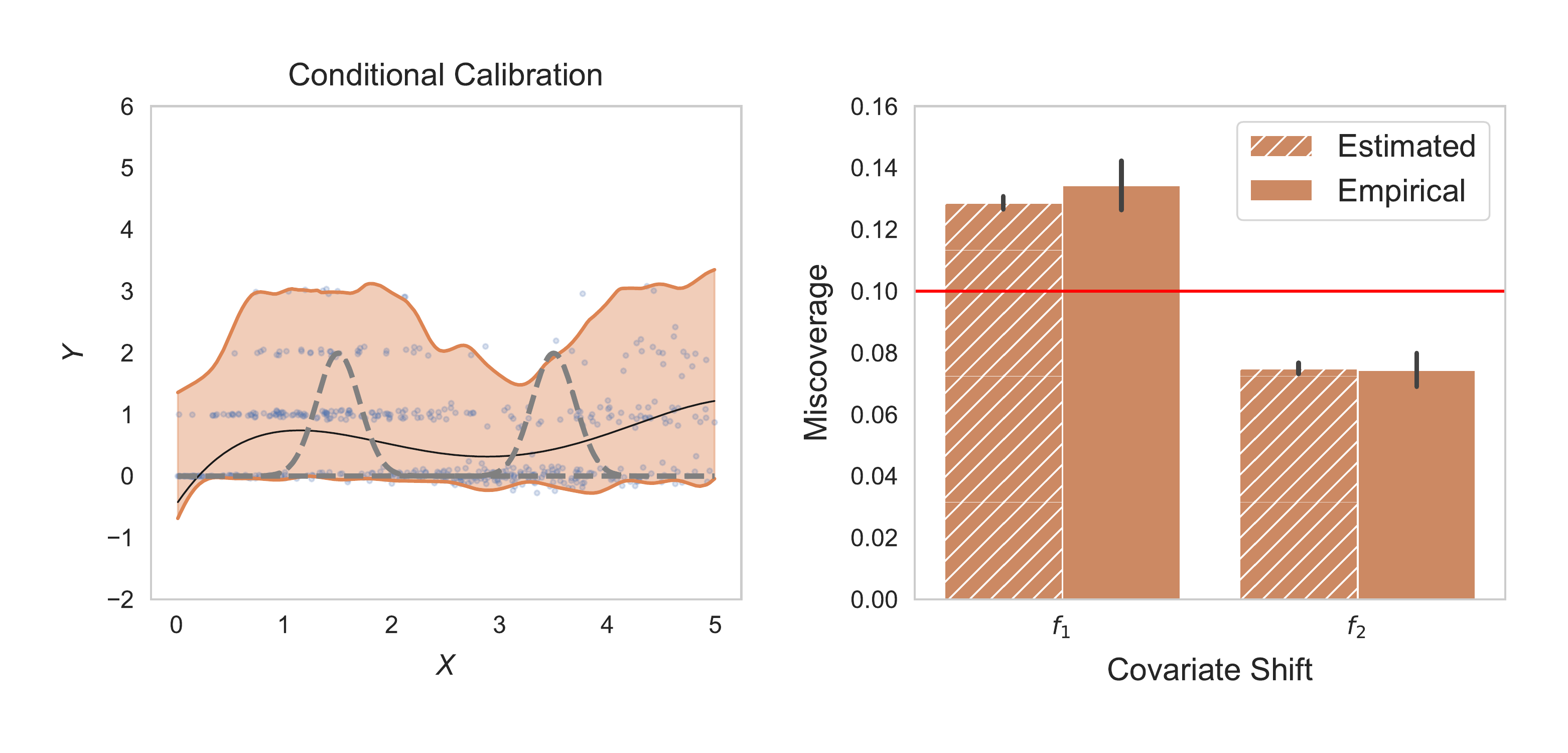}
    \caption{Demonstration of the unrandomized implementation of our shift-agnostic method on a simulated dataset first considered by \citet{Romano2019}. The orange shaded region in the left panel depicts the prediction interval output by our method when $\mathcal{F}$ is chosen to be the Gaussian reproducing kernel Hilbert space given by kernel $K(x,y) = \exp(-12.5|x-y|^2)$ and the hyperparameter $\lambda$ is set equal to $0.005$ (see \Cref{sec:infinite_dim} for details). Hatched and solid bars in the right panel show the estimated coverage returned by our method and the true realized empirical coverage, respectively. Finally, the red line indicates the target level of $\alpha = 0.1$.}
    \label{fig:cqr_data_3}
\end{figure}

Extending even further beyond existing approaches, our method can also provide guarantees even when no prior information is known about the shift. By fitting with a so-called ``universal'' function class, e.g., $\mathcal{F}$ is a suitable reproducing kernel Hilbert space, we provide coverage guarantees under \emph{any} covariate shift. Due to the complexity of these classes, our coverage guarantee is no longer exactly $1 - \alpha$ for all tilts $f$. Instead, in its place, we obtain an exact estimate of the (improved) finite-sample coverage of our method. For example, as seen in \Cref{fig:cqr_data_3}, if we run our shift-agnostic method for the two plotted Gaussian tilts, our estimated coverage precisely matches the empirically observed values. Thus, even though we cannot guarantee exact $1-\alpha$ coverage for all shifts in this example, we are still able to accurately report the true performance that users can expect from our method. For more information on how these estimates are computed we refer the interested reader to \Cref{sec:infinite_dim} and, in particular, to \eqref{eq:rkhs_cov} and \Cref{prop:rkhs_inner_prod_est}.

\subsection{Outline}

The remainder of this article is structured as follows. In \Cref{sec:finite_dim}, we introduce our method and give coverage results for the case in which $\mathcal{F}$ is finite dimensional. These results are expanded on in \Cref{sec:infinite_dim}, where we consider infinite dimensional classes. Computational difficulties that arise in both the finite and infinite dimensional cases are addressed in \Cref{sec:computation}, and an efficient implementation of our method is given. In \Cref{sec:real_data}, we apply our method to two datasets and show that our approach attains tighter finite-sample conditional coverage and more predictable failure modes than competing alternatives. Finally, we conclude in Section \ref{sec:fun_choice} with a discussion of considerations that arise when choosing the function class (and associated regularization) to use in our method.

A Python package, \textsf{conditionalconformal}, implementing our methods is available on PyPI, and notebooks reproducing the experimental results in this paper can be found at \href{https://www.github.com/jjcherian/conditional-conformal}{github.com/jjcherian/conditional-conformal}.

\textbf{Notation}: In this paper, we consider two settings. In the first, we take $\{(X_i, Y_i)\}_{i = 1}^{n + 1} \stackrel{i.i.d.}{\sim} P$. In the second, $\{(X_i, Y_i)\}_{i = 1}^{n} \stackrel{i.i.d.}{\sim} P$, while $(X_{n + 1}, Y_{n + 1})$ is sampled independently from the tilted distribution $X_{n+1} \sim (f(x)/\mme_P[f(X)]) \cdot dP_X(x)$ and $Y_{n+1} \mid X_{n+1} \sim P_{Y \mid X}$. We write $\mmp$ and $\mme$ with no subscript when referring to the first scenario, while we use the subscript $f$ to denote the second. Additionally, note that throughout this article we use $(\mathcal{X},\mathcal{Y})$ to denote the domain of the $(X,Y)$ pairs.
 
\section{Protection against finite dimensional shifts}\label{sec:finite_dim}
\subsection{Warm-up: marginal coverage} \label{sec:warm-up}
As a starting point to motivate our approach, we show that split conformal prediction is a special case of our method. Before explaining this result in detail, it is useful to first review the details of the split conformal algorithm. 

Recall that the conformity score function, $S : \mathcal{X} \times \mathcal{Y} \to \mmr$ measures how well the prediction of some model at $X$ ``conforms'' to the target $Y$. For instance, given an estimate $\hat{\mu}(\cdot)$ of $\mme[Y \mid X]$, we may take $S(\cdot,\cdot)$ to be the absolute residual $S(x,y) = |y - \hat{\mu}(x) |$. In a typical implementation of split conformal, we would need to split the training data $\{(X_i,Y_i)\}_{i=1}^n$ into two parts, using one part to train $\hat{\mu}$ and reserving the second part as the calibration set. Because our method provides the same coverage guarantees regardless of the initial choice of $S(\cdot,\cdot)$, we will not discuss this first step in detail. Instead, we will assume that the conformity score function is fixed, and we are free to use the entire dataset $\{(X_i,Y_i)\}_{i=1}^n$ to calibrate the scores. In practice, the initial step of fitting the conformity score can be critical for getting a good baseline predictor. For example, in our experiment in \Cref{sec:cell_data} , we use a neural network to obtain an initial prediction of the genetic treatment given to cells based on fluorescent microscopy images.

Given a conformity score function and calibration set, the split conformal algorithm outputs the set of values $y$ for which $S(X_{n+1},y)$ is sufficiently small, i.e., the set of values $y$ that conform with the prediction at $X_{n+1}$. The threshold for this prediction set, which we denote by $S^*$, is set to be the $(\lceil (n + 1) \cdot (1 - \alpha) \rceil/n)$-quantile of the conformity scores evaluated on the calibration set. In summary, the split conformal prediction set is formally defined as 
\begin{equation}\label{eq:split_conformal_set}
\hat{C}_{\text{split}}(X_{n + 1}) = \{y : S(X_{n + 1}, y) \leq S^*\}.
\end{equation}

\sloppy The standard method for proving the marginal coverage of $\hat{C}_{\text{split}}(\cdot)$ is to appeal to the exchangeability of the conformity scores. Namely, let $S_1,\dots,S_{n+1}$ denote the scores $S(X_1,Y_1),\dots,S(X_{n+1},Y_{n+1})$. Since the $(n + 1)$-th conformity score is drawn i.i.d.~from the same distribution as the first $n$ scores, the location of $S_{n + 1}$ among the order statistics of $(S_1,\dots,S_{n+1})$ is drawn uniformly at random from each of the $n + 1$ possible indices. So, recalling that $S^*$ is chosen to be the the $(\lceil (n + 1) \cdot (1 - \alpha) \rceil/n)$-quantile, i.e., the smallest order statistic satisfying $\mmp(S_{n + 1} \leq S^*) \geq 1 - \alpha$, we arrive at the coverage guarantee $\mmp(Y_{n+1} \in \hat{C}_{\text{split}}(X_{n + 1})) = \mmp(S_{n + 1} \leq S^*) \geq 1 - \alpha$. The following theorem summarizes the formal consequences of these observations.

\begin{theorem}[\citet{Romano2019}, Theorem 1, see also \citet{VovkBook}]\label{thm:split_marginal_cov}
Assume that $\{(X_i,Y_i)\}_{i=1}^{n+1}$ are independent and identically distributed. Then, the split conformal prediction set \eqref{eq:split_conformal_set} satisfies,
\[
\mmp(Y_{n+1} \in \hat{C}_{\textup{split}}(X_{n+1})) \geq 1-\alpha.
\]
If $S(X_{n+1},Y_{n+1})$ has a continuous distribution, it also holds that
\[
\mmp(Y_{n+1} \in \hat{C}_{\textup{split}}(X_{n+1})) \leq 1-\alpha + \frac{1}{n + 1}.
\]
\end{theorem}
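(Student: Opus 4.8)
The plan is to formalize the exchangeability heuristic given just before the theorem statement. Write $S_i := S(X_i,Y_i)$ for $i=1,\dots,n+1$; since the pairs $(X_i,Y_i)$ are i.i.d., the vector $(S_1,\dots,S_{n+1})$ is exchangeable. The first step is to express membership in the prediction set purely in terms of these scores: directly from \eqref{eq:split_conformal_set}, $Y_{n+1}\in\hat{C}_{\textup{split}}(X_{n+1})$ if and only if $S_{n+1}\le S^*$, where $S^*$ is the $(k/n)$-empirical quantile of $S_1,\dots,S_n$ with $k:=\lceil(n+1)(1-\alpha)\rceil$. It is convenient to rewrite $S^*$ as the $k$-th smallest element of the augmented multiset $\{S_1,\dots,S_n\}\cup\{+\infty\}$; this also handles uniformly the edge case $\alpha<1/(n+1)$, where $S^*=+\infty$ and coverage is trivially $1$.

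Next I would relate the event $\{S_{n+1}\le S^*\}$ to the rank of $S_{n+1}$ among all $n+1$ scores. A direct count shows $S_{n+1}\le S^*$ exactly when $\#\{i\le n:S_i<S_{n+1}\}\le k-1$, i.e.\ when the rank $R:=1+\#\{i\le n:S_i<S_{n+1}\}$ of $S_{n+1}$ (ties resolved in its favour) satisfies $R\le k$. By exchangeability, the position of $S_{n+1}$ among the order statistics of the full sample is uniform on $\{1,\dots,n+1\}$ up to ties; comparing $R$ with a uniformly-tie-broken rank $R'$ (which is exactly uniform) and using $R\le R'$ yields $\mmp(R\le k)\ge\mmp(R'\le k)=k/(n+1)$, with equality whenever ties occur with probability zero.

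Chaining the two steps gives the lower bound,
\[
\mmp(Y_{n+1}\in\hat{C}_{\textup{split}}(X_{n+1}))=\mmp(S_{n+1}\le S^*)=\mmp(R\le k)\ge\frac{k}{n+1}\ge 1-\alpha,
\]
where the last inequality is just $k=\lceil(n+1)(1-\alpha)\rceil\ge(n+1)(1-\alpha)$. For the upper bound, add the continuity hypothesis on $S(X_{n+1},Y_{n+1})$; since the $S_i$ are identically distributed this makes all of them continuous, so ties among $S_1,\dots,S_{n+1}$ have probability $0$, hence $R$ is exactly uniform on $\{1,\dots,n+1\}$ and $\mmp(R\le k)=k/(n+1)\le\bigl((n+1)(1-\alpha)+1\bigr)/(n+1)=1-\alpha+\tfrac{1}{n+1}$, using $\lceil t\rceil<t+1$.

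The only genuinely delicate part is the tie bookkeeping: checking the exact correspondence between $\{S_{n+1}\le S^*\}$ and $\{R\le k\}$ under the fixed tie-breaking rule, verifying that this rule only makes $R$ stochastically smaller than the uniform rank (so the lower bound is unaffected by ties), and confirming that $\{R\le k\}=\{R'\le k\}$ almost surely under the continuity assumption. Everything else follows immediately from exchangeability of i.i.d.\ variables and the resulting uniform distribution of the rank.
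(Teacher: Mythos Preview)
Your proposal is correct and follows exactly the exchangeability argument the paper sketches immediately before stating the theorem (and attributes to \citet{Romano2019} and \citet{VovkBook}); the paper does not give its own formal proof beyond that sketch. Your careful handling of ties via the comparison $R\le R'$ and the rewriting of $S^*$ as the $k$-th order statistic of the augmented sample $\{S_1,\dots,S_n,+\infty\}$ are clean ways to make that heuristic rigorous.
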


Our first insight is that this prediction set and marginal coverage guarantee can also be obtained by re-interpreting split conformal as an intercept-only quantile regression. Recall the definition of the ``pinball'' loss,
\begin{align*}
\ell_{\alpha}(\theta,S) := \begin{cases}
(1-\alpha)(S - \theta)\ \text{ if } S \geq \theta,\\
\alpha(\theta - S) \ \text{ if } S < \theta.
\end{cases}
\end{align*}
It is well-known that minimizing $\ell_{\alpha}$ over $\theta$ will produce a $(1-\alpha)$-quantile of the training data, i.e., $\theta^* = \argmin_{\theta \in \mathbb{R}} \sum_{i = 1}^n \ell_\alpha(\theta, S_i)$ is a $(1-\alpha)$-quantile of $\{S_i\}_{i = 1}^n$ (\cite{koenker1978regression}). 

In our exchangeability proof, recall that we upper bounded $S_{n + 1}$ not by the $(1 - \alpha)$-quantile of $\{S_i\}_{i = 1}^n$, but by the $(\lceil (n + 1) \cdot (1 - \alpha) \rceil/n)$-quantile. The latter value was obtained by considering an augmented dataset that included all of the scores $S_1,\dots,S_{n+1}$. To similarly account for the (unobserved) conformity score in a quantile regression, we will now fit $\theta^*$ using a dataset that includes a guess for $S_{n + 1}$. Namely, let $\hat{\theta}_S$ be a solution to the quantile regression problem in which we impute $S$ for the unknown conformity score, i.e., $$\hat{\theta}_{S} := \argmin_{\theta \in \mathbb{R}} \frac{1}{n + 1} \sum_{i = 1}^n \ell_{\alpha}(\theta, S_i) + \frac{1}{n + 1}  \ell_\alpha(\theta, S).$$
Then, one can verify that 
\begin{equation}\label{eq:split_as_quant}
    \hat{C}_{\text{split}}(X_{n + 1}) = \{y : S_{n + 1}(X_{n + 1}, y) \leq \hat{\theta}_{S_{n + 1}(X_{n + 1}, y)}\},
\end{equation}
or said more informally, $\hat{C}_{\text{split}}(X_{n + 1})$ includes any $y$ such that $S(X_{n + 1}, y)$ is smaller than the $(1 - \alpha)$-quantile of the augmented calibration set $\{S_i\}_{i = 1}^n \cup \{S(X_{n + 1}, y)\}$. As an aside, we note there is some small subtlety here due to the non-uniqueness of $\hat{\theta}_S$. To get exact equality in \eqref{eq:split_as_quant}, one should choose $\hat{\theta}_S$ to be the largest minimizer of the quantile regression. Readers familiar with conformal inference will also recognize this method of imputing a guess for the missing $(n+1)$-th datapoint as a type of full conformal prediction (\cite{VovkBook}).

Having established that split conformal prediction can be derived via quantile regression, our generalization of this procedure to richer function classes naturally follows. Namely, we will replace the single score threshold $\theta$ with a function $f(X)$ that estimates the conditional quantiles of $Y \mid X$. We then prove a generalization of \Cref{thm:split_marginal_cov} showing that the resulting prediction set attains a conditional coverage guarantee. 
\subsection{Finite dimensional classes} \label{sec:finite_dim_main_result}

Recall our objective:
\begin{equation}\label{eq:cov_goal}
\mme[f(X_{n+1})(\bone\{Y_{n+1} \in \hat{C}(X_{n+1})\} - (1-\alpha))] = 0,\ \forall f \in \mathcal{F}.
\end{equation}
In the previous section, we constructed a prediction set with marginal coverage, i.e., \eqref{eq:cov_goal} for $\mathcal{F} = \{\theta : \theta \in \mathbb{R}\}$, by fitting an augmented quantile regression over \textit{the same} function class $\mathcal{F} = \{\theta : \theta \in \mathbb{R}\}$. Here, we generalize this observation to any finite-dimensional linear class. 

To formally define our method, let $\mathcal{F} = \{\Phi(\cdot)^\top\beta : \beta \in \mmr^d\}$ denote the class of linear functions over the basis $\Phi: \mathcal{X} \to \mathbb{R}^d$. Our goal is to construct  a $\hat{C}$ satisfying \eqref{eq:cov_goal} for this choice of $\mathcal{F}$. Imitating our re-derivation of split conformal prediction, we define the augmented quantile regression estimate $\hat{g}_S$ as 
\begin{equation}\label{eq:finite_dim_reg}
\hat{g}_S := \argmin_{g \in \mathcal{F}} \frac{1}{n+1}\sum_{i=1}^n \ell_{\alpha}(g(X_i),S_i) + \frac{1}{n+1}  \ell_{\alpha}(g(X_{n+1}),S).
\end{equation}
Then, we take our prediction set to be 
\begin{equation}\label{eq:finite_dim_pred_set}
\hat{C}(X_{n+1}) := \{y : S(X_{n+1},y) \leq \hat{g}_{S(X_{n+1},y)}(X_{n+1})\}.
\end{equation}
Critically, we emphasize that $\hat{g}_S$ is fit using the same function class $\mathcal{F}$ that appears in our coverage target. This fact will be crucial to the theoretical results that follow. To keep our notation clear under this recycling of $\mathcal{F}$, we will always use $g$ to denote quantile estimates and $f$ to denote re-weightings.

Before discussing the coverage properties of this method there are two technical issues that we must address. First, astute readers may have noticed that \eqref{eq:finite_dim_pred_set} appears to be intractable. Indeed, a naive computation of $\hat{C}(X_{n+1})$ would require us to compute $\hat{g}_S$ for all $S \in \mmr$. In \Cref{sec:computation}, we will give an efficient algorithm for computing the prediction set that overcomes this naive approach. To ease exposition we defer the details of this method for now. The second issue that we must address is the non-uniqueness of the estimate $\hat{g}_S$. In all subsequent results of this article, we will assume that $\hat{g}_S$ is computed using an algorithm that is invariant under re-orderings of the input data. This assumption is relevant because quantile regression can admit multiple optima; in theory, the selected optimum might systematically depend on the indices of the scores. In practice, this assumption is inconsequential because any commonly used algorithm, e.g., an interior point solver, satisfies this invariance condition.

With these issues out of the way, we are now ready to state the main result of this section, \Cref{thm:finite_dim_result}, which summarizes the coverage properties of \eqref{eq:finite_dim_pred_set}. When interpreting this theorem it may be useful to recall that for non-negative $f$, $\mmp_f(\cdot)$ denotes the setting in which $(X_1,Y_1),\dots,(X_{n},Y_{n}) \stackrel{i.i.d.}{\sim} P$, while $(X_{n + 1}, Y_{n + 1})$ is sampled independently from $X_{n+1} \sim \frac{f(x)}{\mme_P[f(X)]}dP_X(x)$ and $Y_{n+1} \mid X_{n+1} \sim P_{Y \mid X}$.

\begin{theorem}\label{thm:finite_dim_result}
    Let $\mathcal{F} = \{\Phi(\cdot)^\top\beta : \beta \in \mmr^d\}$ denote the class of linear functions over the basis $\Phi: \mathcal{X} \to \mathbb{R}^d$. Then, for any non-negative $f \in \mathcal{F}$ with $\mme_P[f(X)] > 0$, the prediction set given by \eqref{eq:finite_dim_pred_set} satisfies
    \begin{equation}\label{eq:finite_dim_pos_func_cov}
    \mmp_f(Y_{n+1} \in \hat{C}(X_{n+1})) \geq 1-\alpha.
    \end{equation}
    On the other hand, if $(X_1, Y_1), \dots, (X_{n + 1}, Y_{n + 1}) \stackrel{i.i.d.}{\sim} P$ and the distribution of $S \mid X$ is continuous, then for all $f \in \mathcal{F}$, we additionally have the two-sided bound,
    \[
    \left|\mme[f(X_{n+1})(\bone\{Y_{n+1} \in \hat{C}(X_{n+1})\} - (1-\alpha))] \right| \leq \frac{d}{n+1}\mme \left [\max_{1 \leq i \leq n+1}|f(X_{i})| \right ].
    \]   
\end{theorem}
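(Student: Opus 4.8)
The starting point is the identity $\{Y_{n+1}\in\hat C(X_{n+1})\}=\{S_{n+1}\le\hat g_{S_{n+1}}(X_{n+1})\}$, obtained by substituting $y=Y_{n+1}$ into \eqref{eq:finite_dim_pred_set}. Writing $\hat g:=\hat g_{S_{n+1}}$, this is precisely the augmented quantile regression \eqref{eq:finite_dim_reg} run on all $n+1$ pairs $\{(X_i,S_i)\}_{i=1}^{n+1}$, so by the re-ordering-invariance assumption $\hat g$ is a symmetric function of these pairs. This reduces both claims to controlling $\mme_f[f(X_{n+1})(\bone\{S_{n+1}\le\hat g(X_{n+1})\}-(1-\alpha))]$.

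The next step is symmetrization. In the i.i.d.\ setting this is immediate: exchangeability of $\{(X_i,S_i)\}_{i=1}^{n+1}$ and symmetry of $\hat g$ give $\mme[f(X_{n+1})\bone\{S_{n+1}\le\hat g(X_{n+1})\}]=\tfrac1{n+1}\sum_{j=1}^{n+1}\mme[f(X_j)\bone\{S_j\le\hat g(X_j)\}]$, and likewise for the $(1-\alpha)$ term. In the tilted setting I would use the weighted-exchangeability change of measure of \citet{Tibs2019}: the law of $(X_{1:n+1},Y_{1:n+1})$ under $\mmp_f$ has density $f(X_{n+1})/\mme_P[f(X)]$ relative to the i.i.d.\ law, so averaging over a uniform random relabeling of the $n+1$ indices yields
\[
\mmp_f(Y_{n+1}\in\hat C(X_{n+1}))=\frac{1}{(n+1)\,\mme_P[f(X)]}\,\mme\!\left[\sum_{j=1}^{n+1} f(X_j)\,\bone\{S_j\le\hat g(X_j)\}\right],
\]
with the right-hand expectation under the i.i.d.\ law.

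The third step extracts the stationarity (KKT) conditions of the quantile regression. Since $\hat g=\Phi(\cdot)^\top\hat\beta$ minimizes $\beta\mapsto\sum_{i=1}^{n+1}\ell_\alpha(\Phi(X_i)^\top\beta,S_i)$, there are subgradients $v_i\in\partial_\theta\ell_\alpha(\hat g(X_i),S_i)$ — so $v_i=-(1-\alpha)$ when $S_i>\hat g(X_i)$, $v_i=\alpha$ when $S_i<\hat g(X_i)$, and $v_i=\xi_i\in[-(1-\alpha),\alpha]$ when $S_i=\hat g(X_i)$ — with $\sum_{i=1}^{n+1}\Phi(X_i)v_i=0$. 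Writing $f=\Phi(\cdot)^\top\gamma$ and dotting with $\gamma$ annihilates this sum, leaving the key identity
\[
\sum_{i=1}^{n+1} f(X_i)\bigl(\bone\{S_i\le\hat g(X_i)\}-(1-\alpha)\bigr)=\sum_{i:\,S_i=\hat g(X_i)} f(X_i)\,(\alpha-\xi_i),
\]
where each coefficient $\alpha-\xi_i$ lies in $[0,1]$ (one checks $\bone\{S_i\le\hat g(X_i)\}-(1-\alpha)=v_i$ for the untied indices). For nonnegative $f$ the right-hand side is $\ge0$, hence $\sum_i f(X_i)\bone\{S_i\le\hat g(X_i)\}\ge(1-\alpha)\sum_i f(X_i)$; substituting into the display above and using $\mme[\sum_i f(X_i)]=(n+1)\mme_P[f(X)]$ gives \eqref{eq:finite_dim_pos_func_cov}. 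For the two-sided bound, bound the right-hand side of the identity in absolute value by $(\#\{i:S_i=\hat g(X_i)\})\cdot\max_i|f(X_i)|$ and combine with the i.i.d.\ symmetrized expression.

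It remains to show that, almost surely, at most $d$ indices satisfy $S_i=\hat g(X_i)$; this is where continuity of $S\mid X$ enters, and is, I expect, the only genuinely delicate point. The claim is in fact that a.s.\ no $d+1$ points are interpolated by any common function in $\mathcal F$: fixing indices $i_0,\dots,i_d$ and conditioning on $X_{1:n+1}$, the coefficient matrix with rows $\Phi(X_{i_0}),\dots,\Phi(X_{i_d})$ has rank at most $d$, hence a nonzero left-null vector $c$, so consistency of $\Phi(X_{i_j})^\top\beta=S_{i_j}$ forces $\sum_j c_j S_{i_j}=0$; since $S_{i_0},\dots,S_{i_d}$ are conditionally independent with continuous conditional marginals given the covariates, this event has probability zero, and a union bound over the finitely many tuples finishes it. (When the $\Phi(X_i)$ are degenerate — e.g.\ the group-indicator basis — the conclusion holds a fortiori.) Plugging $\#\{i:S_i=\hat g(X_i)\}\le d$ into the previous paragraph gives the stated inequality.
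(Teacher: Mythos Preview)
Your proposal is correct and follows essentially the same route as the paper: the paper proves the general \Cref{thm:infinite_dim_result} via the first-order (subgradient) conditions of the augmented quantile regression plus exchangeability, and then specializes to $\mathcal R\equiv 0$ and bounds the number of interpolated points by $d$ using exactly the linear-algebra/continuity argument you give. One small slip: in your first paragraph you write $\mme_f[\,f(X_{n+1})(\cdots)\,]$ where you mean the i.i.d.\ expectation $\mme[\cdot]$ (the weighting by $f$ is already inside the bracket), but the subsequent change-of-measure and symmetrization steps make clear what you intend and the argument goes through.
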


This type of two-part result is typical in conformal inference. Namely, while the assumption that the distribution of $S \mid X$ is continuous may seem overly restrictive, it is standard in conformal inference that upper bounds require a mild continuity assumption, while lower bounds are fully distribution-free. For example, the canonical coverage guarantee for split conformal described in \Cref{thm:split_marginal_cov} also gives separate upper and lower bounds for continuous and discrete data. Notably, in the case of split conformal inference this two-part result can be avoided and replaced by an exact $1 - \alpha$ coverage guarantee by randomizing the prediction set. We will show in \Cref{sec:computation} that an analogous result also holds for our method: without any assumptions on the continuity of $S \mid X$, we show that randomizing $\hat{C}(X_{n+1})$ yields $\mme[f(X_{n+1})(\bone\{Y_{n+1} \in \hat{C}(X_{n+1})\} - (1-\alpha))]  = 0$  for all $f \in \mathcal{F}$.  Because the randomization scheme we employ leverages the  algorithms developed in \Cref{sec:computation}, we defer a precise statement of this result for now.


Our next result, \Cref{cor:group_coverage} relates the more abstract guarantee of \Cref{thm:finite_dim_result} to the group-conditional coverage example previewed in the introduction.

\begin{corollary}\label{cor:group_coverage}
    Suppose $\{(X_i,Y_i)\}_{i=1}^{n+1}$ are independent and identically distributed and the prediction set given by \eqref{eq:finite_dim_pred_set} is implemented with $\mathcal{F} = \{x \mapsto \sum_{G \in \mathcal{G}}\beta_G\bone\{x \in G\} : \beta_G \in \mmr,\ \forall G \in \mathcal{G}\}$ for some finite collection of groups $\mathcal{G} \subseteq 2^{\mathcal{X}}$. Then, for any $G \in \mathcal{G}$,
    \[
    \mmp(Y_{n+1} \in \hat{C}(X_{n+1}) \mid X_{n+1} \in G) \geq 1-\alpha.
    \]
    If the distribution of $S \mid X$ is continuous, then we have the matching upper bound,
    \[
    \mmp(Y_{n+1} \in \hat{C}(X_{n+1}) \mid X_{n+1} \in G) \leq 1-\alpha + \frac{|\mathcal{G}|}{(n+1) \cdot \mathbb{P}(X_{n + 1} \in G)}.
    \]
\end{corollary}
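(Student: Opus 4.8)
The plan is to deduce this corollary as a direct specialization of \Cref{thm:finite_dim_result}, applied separately for each fixed $G \in \mathcal{G}$ to the single indicator $f_G := \bone\{\cdot \in G\}$. First I would note that the class $\mathcal{F}$ here is the linear span of the basis $\Phi(x) = (\bone\{x \in G'\})_{G' \in \mathcal{G}}$, so that $d = |\mathcal{G}|$, and that $f_G$ is a non-negative member of $\mathcal{F}$ (take the coefficient on $G$ equal to $1$ and all others zero). The one genuinely substantive step is a change-of-measure identification: with $f = f_G$, the tilted law $\mmp_{f_G}$ appearing in \Cref{thm:finite_dim_result} coincides with the i.i.d.\ law $\mmp$ conditioned on $\{X_{n+1} \in G\}$, because $f_G(x)/\mme_P[f_G(X)] = \bone\{x \in G\}/\mmp(X \in G)$ is exactly the density of $P_X(\cdot \mid X \in G)$ relative to $P_X$, and under both descriptions the calibration points are i.i.d.\ from $P$ and independent of $\{X_{n+1} \in G\}$. (This presumes $\mmp(X \in G) > 0$, which is no loss since otherwise the conditional probability in the statement is undefined.)

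Granting this identification, the lower bound is immediate: \eqref{eq:finite_dim_pos_func_cov} for $f_G$ states $\mmp_{f_G}(Y_{n+1} \in \hat{C}(X_{n+1})) \geq 1-\alpha$, which is precisely $\mmp(Y_{n+1} \in \hat{C}(X_{n+1}) \mid X_{n+1} \in G) \geq 1-\alpha$. For the upper bound I would instead invoke the two-sided inequality of \Cref{thm:finite_dim_result} (legitimate because $S \mid X$ is assumed continuous). Expanding the expectation at $f = f_G$ and conditioning on $\{X_{n+1} \in G\}$ gives
\[
\mme\bigl[f_G(X_{n+1})(\bone\{Y_{n+1} \in \hat{C}(X_{n+1})\} - (1-\alpha))\bigr] = \mmp(X_{n+1} \in G)\bigl(\mmp(Y_{n+1} \in \hat{C}(X_{n+1}) \mid X_{n+1} \in G) - (1-\alpha)\bigr),
\]
while the right-hand side of the theorem's bound is at most $|\mathcal{G}|/(n+1)$, since $d = |\mathcal{G}|$ and $\max_{1 \le i \le n+1}|f_G(X_i)| \le 1$. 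Dividing by $\mmp(X_{n+1} \in G)$ and combining with the lower bound yields the claimed sandwich $1-\alpha \le \mmp(Y_{n+1} \in \hat{C}(X_{n+1}) \mid X_{n+1} \in G) \le 1-\alpha + |\mathcal{G}|/((n+1)\mmp(X_{n+1} \in G))$.

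I do not anticipate any real obstacle; the content is entirely in recognizing that group-conditional coverage over $G$ is the covariate-shift guarantee under the tilt $f_G$, and in checking that the hypotheses of \Cref{thm:finite_dim_result} (non-negativity, membership in $\mathcal{F}$, positive mass, and --- for the upper bound --- continuity of $S \mid X$) transfer verbatim. The remainder is arithmetic with the constant $d = |\mathcal{G}|$ and the trivial bound $\bone\{\cdot \in G\} \le 1$.
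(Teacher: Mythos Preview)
Your proposal is correct and matches the paper's approach exactly: the paper's proof is a single line stating that the corollary follows immediately by applying \Cref{thm:finite_dim_result} to the indicator function class, and you have simply unpacked the details of that application (the identification of $\mmp_{f_G}$ with the conditional law, the substitution $d = |\mathcal{G}|$, and the bound $\max_i |f_G(X_i)| \le 1$).
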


The methods described above only estimate the upper $(1-\alpha)$-quantile of the conformity score. If desired, our procedure can also be generalized to give both lower and upper bounds on $S(X_{n+1},Y_{n+1})$. In particular, letting $\hat{g}_S^\tau(\cdot)$ denote our estimate of the $\tau$-th quantile, we can define the two-sided prediction set 
\begin{equation}\label{eq:two-sided_set}
\hat{C}_{\text{two-sid.}}(X_{n+1}) := \{y : \hat{g}_{S_{n+1}(X_{n+1},y)}^{\alpha/2}(X_{n + 1}) \leq S_{n+1}(X_{n+1},y) \leq \hat{g}_{S_{n+1}(X_{n+1},y)}^{1-\alpha/2}(X_{n + 1})\}.
\end{equation}
As an example of this, \Cref{fig:cqr_data,fig:cqr_data_2,fig:cqr_data_3} show results from an implementation of our method in which we fit the lower and upper quantiles of $y - \hat{\mu}(x)$ separately. Because these two-sided prediction sets have identical coverage properties to their one-sided analogues we will for simplicity focus in the remainder of this article on the one-sided version. Readers interested in the two-sided instantiation of our approach should see \Cref{sec:app_two-sided} for additional information about the implementation and formal coverage guarantees of these methods. 

We conclude this section by giving a brief proof sketch of \Cref{thm:finite_dim_result}, leaving formal details to the Appendix. The main idea is to examine the first order conditions of the quantile regression \eqref{eq:finite_dim_reg} and then exploit the fact that this regression treats the test point identically to the calibration data. This connection between the derivative of the pinball loss and coverage was first made by \citet{Jung2023}.

\begin{proof}[Proof sketch of \Cref{thm:finite_dim_result}]
We examine the first order conditions of \eqref{eq:finite_dim_reg}. By a direct computation we have that for any $f \in \mathcal{F}$,
\[
\frac{d}{d\epsilon} \ell_{\alpha}(\hat{g}_S(X) + \epsilon f(X_i),S_i) \bigg|_{\epsilon = 0} = \begin{cases}
-(1-\alpha)f(X_i), \text{ if } S_i > \hat{g}_S(X_i),\\
\alpha f(X_i), \text{ if } S_i < \hat{g}_S(X_i),\\
\text{undefined, if } S_i = \hat{g}_S(X_i).
\end{cases}
\]
For simplicity, suppose that for all $i$, $S_i \neq \hat{g}_S(X_i)$. This assumption does not hold in general and by adding it here we will obtain the stronger result $\mme[f(X_{n+1})(\bone\{Y_{n+1} \in \hat{C}(X_{n+1})\} - (1-\alpha))] = 0$. In the full proof of \Cref{thm:finite_dim_result}, we remove this simplification and incur an additional error term. 

For now, making this assumption gives the first order condition
\begin{align*}
& \frac{1}{n+1} \sum_{i =1}^{n+1} \alpha f(X_i)\bone\{S_i \leq \hat{g}_{S_{n+1}}(X_i)\} - (1-\alpha) f(X_i)\bone\{S_i > \hat{g}_{S_{n+1}}(X_i)\} = 0\\
& \iff \frac{1}{n+1} \sum_{i =1}^{n+1}  f(X_i)(\bone\{S_i \leq \hat{g}_{S_{n+1}}(X_i)\} - (1-\alpha)) = 0.
\end{align*}
Taking expectations, we arrive at our coverage guarantee
\begin{align*}
\mme[f(X_{n+1})(\bone\{Y_{n+1} \in \hat{C}(X_{n+1})\} - (1-\alpha))] & = \mme[f(X_{n+1})(\bone\{S_{n+1} \leq \hat{g}_{S_{n+1}}(X_{n+1})\} - (1-\alpha))]\\
& = \mme[\frac{1}{n+1} \sum_{i=1}^{n+1} f(X_{i})(\bone\{S_{i} \leq \hat{g}_{S_{n+1}}(X_{i})\} - (1-\alpha))]\\
& = 0,
\end{align*}
where the first equality uses the definition of $\hat{C}(X_{n+1})$ and the second equality applies the fact that the triples $(X_1,S_1,\hat{g}_{S_{n+1}}(X_{1})),\dots,(X_{n+1},S_{n+1},\hat{g}_{S_{n+1}}(X_{n+1}))$ are exchangeable.

\end{proof}

\subsection{Related work}

The method proposed above can be briefly summarized as a modified quantile regression procedure in which the new test point is incorporated into the fit. Given the popularity of vanilla quantile regression, one might reasonably ask how this compares to the more standard approach in which one fits
\[
\hat{g}_{\text{qr}} := \argmin_{g \in \mathcal{F}} \frac{1}{n} \sum_{i=1}^n \ell_{\alpha}(g(X_i),S_i),
\]
on the training data and then forms the prediction set,
\[
\hat{C}_{\text{qr}}(X_{n+1}) := \{y : S(X_{n+1},y) \leq \hat{g}_{\text{qr}}(X_{n+1})\}.
\]

\citet{Jung2023} analyze this approach in the case where $\mathcal{F} := \{\sum_{i=1}^d \beta_i \bone\{x \in G_i\} : \beta \in \mmr^d\}$ is the space of linear combinations of subgroup indicator functions. Under appropriate assumptions on the distribution of $(X_i,S(X_i,Y_i))$, they show that for all groups, $1 \leq j \leq d$ and constants $\delta > 0$ this prediction set satisfies the PAC coverage guarantee,
\[
\mmp\left(\left| \mmp\left(Y_{n+1} \in \hat{C}_{\text{qr}}(X_{n+1}) \mid X_{n+1} \in G_j,\{(X_i,Y_i)\}_{i=1}^n \right) - (1-\alpha)  \right| \leq O\left( \left(\frac{\log(1/\delta) + d\log(n)}{n \mmp(X_{n+1} \in G_j)^2} \right)^{1/4}\right) \right) \geq 1-\delta.
\]
On the other hand, in the same setting, \Cref{cor:group_coverage} states the following guarantee for our method,
\[
1-\alpha \leq \mmp(Y_{n+1} \in \hat{C}(X_{n+1}) \mid X_{n+1} \in G_j) \leq 1-\alpha + \frac{d}{(n+1)\mmp(X_{n+1}\in G_j)}.
\]
The proofs of both of these results are based on the first order conditions of quantile regression which, as we showed above, can be exploited to guarantee conditional coverage. That said, the approaches differ substantially and the final results are not directly comparable. At a high level, we find that the former result targets a stronger notion of coverage, ensuring concentration conditional on the calibration data, while the second result has a much faster convergence rate ($d/n$ versus $(d\log(n)/n)^{1/4}$). 

While these methods are difficult to compare theoretically, a much clearer picture emerges on simulated data. In particular, the results in \Cref{fig:cc_cov_and_length} show that our method is far more robust to small sample-sizes or large dimensions. Although we do not provide a PAC guarantee, we find that the coverage of our procedure concentrates tightly at the target level across a wide range of values for $d/n$. On the other hand, the vanilla quantile regression approach taken in \citet{Jung2023} shows notable undercoverage at moderate dimensions, e.g., $d/n \in \{0.05, 0.1\}$.

\section{Extension to infinite dimensional classes}\label{sec:infinite_dim}

Turning back to our method, we now consider settings in which we do not have a small, finite-dimensional function class of interest. In particular, if we view the coverage target \eqref{eq:cov_goal} as an interpolation between marginal and conditional coverage, then it is natural to ask what guarantees can be provided when $\mathcal{F}$ is a rich, and potentially even infinite dimensional, function class. We know from previous work that exact coverage over an arbitrary infinite dimensional class is impossible (\cite{Vovk2012, Barber2020}). Thus, just as we relaxed the definition of conditional coverage above, here we will construct prediction sets that satisfy a relaxed version of \eqref{eq:cov_goal}.

First, note that we cannot directly implement our method over an infinite dimensional class. Indeed, running quantile regression in dimension $d \geq n+1$ will simply interpolate the input data. In our context, this means that every value $S \in \mmr$ will satisfy $S = \hat{g}_S(X_{n+1})$ and our method will always output $\hat{C}(X_{n+1}) = \mmr$. To circumvent this issue and obtain informative prediction sets, we must add regularization. This leads us to the definition
\begin{equation}\label{eq:general_method}
\hat{g}_S := \argmin_{g\in\mathcal{F}} \frac{1}{n+1} \sum_{i=1}^{n} \ell_{\alpha}(g(X_i),S_i) + \frac{1}{n+1}  \ell_{\alpha}(g(X_{n+1}),S) +  \mathcal{R}(g),
\end{equation}
for some appropriately chosen penalty $\mathcal{R}(\cdot)$. Having made this adjustment, we may now proceed identically to the previous section. Namely, we set 
\begin{equation}\label{eq:infinite_dim_set}
\hat{C}(X_{n+1}) := \{y : S(X_{n+1},y) \leq \hat{g}_{S(X_{n+1},y)}(X_{n+1})\}, 
\end{equation}
and by examining the first order conditions of \eqref{eq:general_method}, we obtain the following generalization of \Cref{thm:finite_dim_result}.
\begin{theorem}\label{thm:infinite_dim_result}
    Let $\mathcal{F}$ be any vector space, and assume that for all $f,g \in \mathcal{F}$, the derivative of $\epsilon \mapsto \mathcal{R}(g + \epsilon f)$ exists. If $f$ is non-negative with $\mme_P[f(X)] > 0$, then the prediction set given by \eqref{eq:infinite_dim_set} satisfies the lower bound
    \[
    \mmp_f(Y_{n+1} \in \hat{C}(X_{n+1})) \geq 1- \alpha  - \frac{1}{\mme_P[f(X)]} \mme\left[  \frac{d}{d\epsilon} \mathcal{R}(\hat{g}_{S_{n+1}} + \epsilon f) \bigg|_{\epsilon = 0} \right]. \]
    On the other hand, suppose $(X_1,Y_1),\dots,(X_{n+1},Y_{n+1}) \stackrel{i.i.d.}{\sim} P$. Then, for all $f \in \mathcal{F}$, we additionally have the two-sided bound,
    \begin{equation}    \label{eq:infinite_dim_cov}
    \begin{split}
    \mme[f(X_{n+1})(\bone\{Y_{n+1} \in \hat{C}(X_{n+1})\} - (1-\alpha))] = - \mme\left[ \frac{d}{d\epsilon} \mathcal{R}(\hat{g}_{S_{n+1}} + \epsilon f) \bigg|_{\epsilon = 0} \right] + \epsilon_{\textup{int}},
    \end{split}
    \end{equation}
    where $\epsilon_{\textup{int}}$ is an interpolation error term satisfying $|\epsilon_{\textup{int}}| \leq \mme[|f(X_i)| \bone\{S_i = \hat{g}_{S_{n+1}}(X_i)\}]$.
\end{theorem}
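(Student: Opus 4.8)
The plan is to follow the same route as the proof sketch of \Cref{thm:finite_dim_result}: extract a first-order optimality condition from \eqref{eq:general_method}, recast it as a statement about the empirical coverage on the augmented sample $\{(X_i,S_i)\}_{i=1}^{n+1}$ (with $S_{n+1}$ imputed as the missing score), and then pass to expectations via exchangeability. The two differences from the finite-dimensional argument are that the regularizer contributes an extra directional-derivative term and that we must now keep track of the ``interpolation set'' $\mathcal{E} := \{i : S_i = \hat{g}_S(X_i)\}$. Since $\hat{g}_S$ is a (global, hence local) minimizer of the objective in \eqref{eq:general_method}, for every $f \in \mcf$ the one-sided derivative of the objective at $\hat{g}_S$ along $f$ is nonnegative. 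Differentiating the pinball-loss sum term by term — the $i$-th term contributing $b_i f(X_i)$ with $b_i := \alpha \indic{S_i < \hat{g}_S(X_i)} - (1-\alpha)\indic{S_i > \hat{g}_S(X_i)}$ when $i \notin \mathcal{E}$, and $\max\{\alpha f(X_i), -(1-\alpha)f(X_i)\}$ when $i \in \mathcal{E}$ — and adding the regularizer's derivative $\frac{d}{d\epsilon}\mathcal{R}(\hat{g}_S + \epsilon f)\big|_{\epsilon=0}$ (which exists by hypothesis) yields, for all $f \in \mcf$,
\[
\frac{1}{n+1}\Bigg(\sum_{i\notin\mathcal{E}} b_i f(X_i) + \sum_{i\in\mathcal{E}} \max\{\alpha f(X_i), -(1-\alpha)f(X_i)\}\Bigg) + \frac{d}{d\epsilon}\mathcal{R}(\hat{g}_S + \epsilon f)\Big|_{\epsilon=0} \ \ge\ 0.
\]

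The next step is to recognize the bracketed sum as (almost) the empirical coverage discrepancy. Since $\indic{S_i \le \hat{g}_S(X_i)} - (1-\alpha)$ equals $b_i$ when $i \notin \mathcal{E}$ and equals $\alpha$ when $i \in \mathcal{E}$, the left-hand side above differs from $\frac{1}{n+1}\sum_{i=1}^{n+1} f(X_i)\bigl(\indic{S_i \le \hat{g}_S(X_i)} - (1-\alpha)\bigr) + \frac{d}{d\epsilon}\mathcal{R}(\hat{g}_S + \epsilon f)\big|_{\epsilon=0}$ only through the $\mathcal{E}$-terms, and the difference collapses — via $\alpha v - \max\{\alpha v, -(1-\alpha)v\} = \min\{0,v\}$ — to $\frac{1}{n+1}\sum_{i \in \mathcal{E}}\min\{0, f(X_i)\}$. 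For the lower bound I would apply this with a nonnegative $f$, so that the difference vanishes and (taking $S = S_{n+1}$) we get the deterministic inequality $\frac{1}{n+1}\sum_{i=1}^{n+1} f(X_i)\bigl(\indic{S_i \le \hat{g}_{S_{n+1}}(X_i)} - (1-\alpha)\bigr) \ge -\frac{d}{d\epsilon}\mathcal{R}(\hat{g}_{S_{n+1}} + \epsilon f)\big|_{\epsilon=0}$. To obtain coverage under the tilt, use the likelihood-ratio identity $\mme_f[\phi] = \mme_P[f(X)]^{-1}\,\mme[f(X_{n+1})\,\phi]$ for the augmented sample, together with the fact that $\hat{g}_{S_{n+1}}$ — computed by a reordering-invariant solver on the multiset $\{(X_i,S_i)\}_{i=1}^{n+1}$ — is a symmetric function of the data points, so that under the swap of the $k$-th and $(n+1)$-th points the quantity $f(X_{n+1})\indic{S_{n+1} \le \hat{g}_{S_{n+1}}(X_{n+1})}$ becomes the same expression with $n+1$ replaced by $k$; exchangeability of the i.i.d.\ sample then replaces it by its average over $k$. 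Taking $\mme$ of the deterministic inequality and dividing by $\mme_P[f(X)]$ gives precisely the asserted lower bound on $\mmp_f(Y_{n+1} \in \hat{C}(X_{n+1}))$.

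For the two-sided bound (now in the i.i.d.\ setting, still with $S = S_{n+1}$), I would invoke the optimality condition in both directions $f$ and $-f$. Since $\mathcal{R}$ is two-sidedly differentiable, the direction $-f$ flips the sign of its derivative term, and the two inequalities together sandwich $\frac{1}{n+1}\sum_i f(X_i)\bigl(\indic{S_i \le \hat{g}_{S_{n+1}}(X_i)} - (1-\alpha)\bigr) + \frac{d}{d\epsilon}\mathcal{R}(\hat{g}_{S_{n+1}} + \epsilon f)\big|_{\epsilon=0}$ between $\frac{1}{n+1}\sum_{i\in\mathcal{E}}\min\{0, f(X_i)\}$ and $\frac{1}{n+1}\sum_{i\in\mathcal{E}}\max\{0, f(X_i)\}$ (the upper side using $\max\{-\alpha v, (1-\alpha)v\} + \alpha v = \max\{0,v\}$), hence in absolute value by $\frac{1}{n+1}\sum_{i=1}^{n+1}|f(X_i)|\indic{S_i = \hat{g}_{S_{n+1}}(X_i)}$. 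Taking expectations and using exchangeability of the triples $(X_i, S_i, \hat{g}_{S_{n+1}}(X_i))$ — exactly as in the sketch of \Cref{thm:finite_dim_result} — turns $\mme\bigl[\frac{1}{n+1}\sum_i f(X_i)(\indic{S_i \le \hat{g}_{S_{n+1}}(X_i)} - (1-\alpha))\bigr]$ into $\mme[f(X_{n+1})(\indic{Y_{n+1} \in \hat{C}(X_{n+1})} - (1-\alpha))]$ and bounds the residual by $\mme[|f(X_i)|\indic{S_i = \hat{g}_{S_{n+1}}(X_i)}]$, which is exactly \eqref{eq:infinite_dim_cov}. (Throughout I assume the relevant expectations exist.)

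I expect the main obstacle to be the bookkeeping around $\mathcal{E}$: first-order optimality supplies only one-sided derivative information at interpolation points, and recovering the sharp constant $1$ in $\epsilon_{\textup{int}}$ (rather than something looser) relies precisely on the cancellations $\alpha v - \max\{\alpha v, -(1-\alpha)v\} = \min\{0,v\}$ and $\max\{-\alpha v, (1-\alpha)v\} + \alpha v = \max\{0,v\}$. A secondary subtlety is the tilting step: one must verify that $\hat{g}_{S_{n+1}}$ really is a symmetric function of the augmented sample — which rests on the reordering-invariance assumption imposed in \Cref{sec:finite_dim_main_result} — so that the likelihood reweighting can be combined with the symmetrization over coordinates.
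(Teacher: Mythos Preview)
Your proposal is correct and follows essentially the same route as the paper's proof: both extract the first-order optimality condition of \eqref{eq:general_method}, isolate the contribution from interpolation points, and pass to expectations via exchangeability of $\{(X_i,S_i,\hat{g}_{S_{n+1}}(X_i))\}_{i=1}^{n+1}$. The only cosmetic difference is that the paper phrases optimality through the subgradient inclusion $0\in\partial$---introducing $s_i^*\in[\alpha-1,\alpha]$ on the interpolation set and using $\alpha-s_i^*\in[0,1]$ to bound $\epsilon_{\textup{int}}$---whereas you obtain the same two-sided sandwich by taking one-sided directional derivatives along $f$ and $-f$ and invoking the identities $\alpha v-\max\{\alpha v,-(1-\alpha)v\}=\min\{0,v\}$ and $\alpha v+\max\{-\alpha v,(1-\alpha)v\}=\max\{0,v\}$; these are equivalent formulations of the same optimality condition.
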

Similar to our results in the previous section, the interpolation term $\epsilon_{\text{int}}$ can be removed if we allow the prediction set to be randomized (see \Cref{sec:computation} for a precise statement).

To more accurately interpret \Cref{thm:infinite_dim_result}, we will need to develop additional understanding of the two quantities appearing on the right-hand side of \eqref{eq:infinite_dim_cov}. The following two sections are devoted to this task for two different choices of $\mathcal{F}$. At a high level, the results in these sections will show that the interpolation error $\epsilon_{\text{int}}$ is of negligible size and thus the coverage properties of our method are primarily governed by the derivative term $- \mme[ \frac{d}{d\epsilon} \mathcal{R}(\hat{g}_{S_{n+1}} + \epsilon f) |_{\epsilon = 0} ]$. Informally, we interpret this derivative as providing a quantitative estimate of the difficulty of achieving conditional coverage in the direction $f$. More practically, we will see that the derivative can be used to obtain accurate estimates of the coverage properties of our method. Critically, these estimates adapt to both the specific choice of tilt $f$ and the distribution of $(X,Y)$. Thus, they determine the difficulty of obtaining conditional coverage in a way that is specific to the dataset at hand.

\subsection{Specialization to functions in a reproducing kernel Hilbert space} \label{sec:rkhs_specialization}

Our first specialization of \Cref{thm:infinite_dim_result} is to the case where $\mathcal{F}$ is constructed using functions from a reproducing kernel Hilbert space (RKHS). More precisely, let $K : \mathcal{X} \times \mathcal{X} \to \mmr$ be a positive definite kernel and $\mathcal{F}_K$ denote the associated RKHS with inner product $\langle \cdot, \cdot \rangle_K$ and norm $\|\cdot\|_K$. Let $\Phi : \mathcal{X} \to \mmr^d$ denote any finite dimensional feature representation of $\mathcal{X}$. Then, we consider implementing our method with function class $\mathcal{F} = \{f_K(\cdot) + \Phi(\cdot)^\top\beta : f_K \in \mathcal{F}_K,\ \beta \in \mmr^d\}$ and penalty $\mathcal{R}(f_K(\cdot) + \Phi(\cdot)^\top\beta) = \lambda \|f_K\|_K^2$. Here, $\lambda > 0$ is a hyperparameter that controls the flexibility of the fit. For now, we take this hyperparameter to be fixed, although later in our practical experiments in \Cref{sec:candc_data}, we will choose it by cross-validation. 

Some examples of RKHSes that may be of interest include the space of radial basis functions given by $K(x,y) = \exp(-\gamma \|x - y\|_2^2)$, which allows us to give coverage guarantees over localizations of the covariates, and the polynomial kernel $K(x,y) = (x^\top y + c)^m$ for $m \in \mmn$, $c \geq 0$, which allows us to investigate coverage over smooth polynomial re-weightings. Additional examples and background material on reproducing kernel Hilbert spaces can be found in \citet{Paulsen2016}.

To obtain a coverage guarantee for $\mathcal{F}$, we must understand the two terms appearing on the right-hand side of \eqref{eq:infinite_dim_cov}. Let $f(\cdot) =  f_K(\cdot) + \Phi(\cdot)^\top\beta$ denote the re-weighting of interest and $\hat{g}_{S_{n+1}}(\cdot) =  \hat{g}_{S_{n+1},K}(\cdot) + \Phi(\cdot)^\top\hat{\beta}_{S_{n+1}}$ denote the fitted quantile estimate. Then, a short calculation shows that $\mme[\frac{d}{d\epsilon} \mathcal{R} (\hat{g}_{S_{n+1}} + \epsilon f)|_{\epsilon = 0} ]  = 2\lambda \mme[\langle \hat{g}_{S_{n+1},K}, f_K \rangle]$. So, applying \Cref{thm:infinite_dim_result}, we find that for all non-negative $f \in \mathcal{F}$ with $\mme_P[f(X)]>0$,
\begin{equation}\label{eq:rkhs_cov}
\begin{split}
& \mmp_f( Y_{n+1} \in \hat{C}(X_{n+1}) ) \geq 1-\alpha- 2\lambda\frac{ \mme[\langle \hat{g}_{S_{n+1},K}, f_K \rangle_K] }{\mme_P[f(X)]},\\
\text{and } & \mmp_f( Y_{n+1} \in \hat{C}(X_{n+1}) ) \leq  1-\alpha- 2\lambda \frac{\mme[\langle \hat{g}_{S_{n+1},K}, f_K \rangle_K]}{\mme_P[f(X)]} +\frac{\epsilon_{\text{int}}}{\mme_P[f(X)]}. 
\end{split}
\end{equation}
Controlling the interpolation error is more challenging and is done by the following proposition.
\begin{proposition}\label{prop:rkhs_bounds}
    Assume that $(X_1,Y_1),\dots,(X_{n+1},Y_{n+1}) \stackrel{i.i.d.}{\sim} P$ and that $K$ is uniformly bounded.
    Furthermore, suppose $(\Phi(X),S)$ has uniformly upper and lower bounded first three moments (Assumption~\ref{ass:moment_conditions} in the Appendix) and that the distribution of $S \mid X$ is continuous with a uniformly bounded density. Then, for any $f \in \mathcal{F}$,
    \[
    \frac{|\epsilon_{\textup{int}}|}{\mme_P[|f(X)|]} \leq   O\left( \frac{d\log(n)}{\lambda n} \right)\frac{\mme \left [\max_{1 \leq i \leq n+1}|f(X_{i})| \right ]}{\mme_P[|f(X)|]}.
    \]
\end{proposition}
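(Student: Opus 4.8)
The plan is to control the interpolation term through a leave-one-out stability argument combined with the continuity of $S\mid X$. Starting from \Cref{thm:infinite_dim_result} and using exchangeability of the $n+1$ samples, one has
\[
|\epsilon_{\textup{int}}| \;\le\; \frac{1}{n+1}\,\mme\!\left[\sum_{i=1}^{n+1}|f(X_i)|\,\bone\{S_i = \hat g_{S_{n+1}}(X_i)\}\right],
\]
so it suffices to bound the expected $|f|$-weighted number of calibration points lying exactly on the fitted quantile surface. For each $i$, let $\hat g^{(-i)}$ be the minimizer of \eqref{eq:general_method} with the $i$-th summand removed; by construction $\hat g^{(-i)}$ is independent of $(X_i,S_i)$. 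Whenever $S_i = \hat g_{S_{n+1}}(X_i)$ we have $|S_i - \hat g^{(-i)}(X_i)| \le \|\hat g_{S_{n+1}} - \hat g^{(-i)}\|_\infty$, so the proof reduces to two tasks: (a) a high-probability, uniform-in-$i$ stability bound $\|\hat g_{S_{n+1}} - \hat g^{(-i)}\|_\infty \le \Delta_n$ on an event $\mathcal A$ determined by the covariates, with $\Delta_n = O(d\log(n)/(\lambda n))$ deterministic; and (b) converting this into a probability estimate via the density of $S\mid X$.

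Granting (a), task (b) is routine. Conditioning on $X_{1:n+1}$ and $\{S_j\}_{j\ne i}$ fixes $\hat g^{(-i)}(X_i)$, $|f(X_i)|$, $\Delta_n$ and the event $\mathcal A$; since $S_i\mid X_i$ has density at most $\bar\phi$ and is independent of everything else, on $\mathcal A$ one gets $\mmp(|S_i - \hat g^{(-i)}(X_i)| \le \Delta_n \mid \cdots) \le 2\bar\phi\Delta_n$. Hence, summing over $i$,
\[
\frac{1}{n+1}\,\mme\!\left[\bone_{\mathcal A}\sum_{i=1}^{n+1}|f(X_i)|\,\bone\{S_i=\hat g_{S_{n+1}}(X_i)\}\right] \;\le\; 2\bar\phi\,\Delta_n\,\mme_P[|f(X)|] \;=\; O\!\left(\tfrac{d\log(n)}{\lambda n}\right)\mme_P[|f(X)|].
\]
On the complement $\mathcal A^{c}$ we bound the interpolation count crudely by $n+1$ and each $|f(X_i)|$ by $\max_j|f(X_j)|$; taking the concentration tolerances in (a) small enough makes $\mmp(\mathcal A^{c}) = O(d\log(n)/(\lambda n))$, so this contributes $O(d\log(n)/(\lambda n))\,\mme[\max_j|f(X_j)|]$ (using the $\{X_j\}$-measurability of $\mathcal A^c$, or Cauchy--Schwarz with the moment bound on $|f(X)|$, to decouple the weights). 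Adding the two contributions and dividing by $\mme_P[|f(X)|]$ — noting $\mme[\max_j|f(X_j)|]\ge\mme_P[|f(X)|]$ so the bound is nonvacuous — yields the claim.

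The crux, and the main obstacle, is (a): the leave-one-out sup-norm stability $\|\hat g_{S_{n+1}} - \hat g^{(-i)}\|_\infty = O(d\log(n)/(\lambda n))$, uniformly over the $n+1$ deletions. Decompose $\hat g = \hat g_K + \Phi(\cdot)^\top\hat\beta$. For the penalized component this is standard: the penalty $\lambda\|\cdot\|_K^2$ makes the objective $2\lambda$-strongly convex in $g_K$, the pinball loss is $1$-Lipschitz in its first argument uniformly in $S_i$, and $K$ is bounded, so removing one summand moves $\hat g_K$ by $O(1/(\lambda n))$ in $\|\cdot\|_K$ and hence in $\|\cdot\|_\infty$. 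The unpenalized linear part $\Phi(\cdot)^\top\hat\beta$ is the delicate one: the pinball loss is only piecewise linear, so there is no strong convexity in $\beta$ and deleting a point may pivot $\hat\beta$ to an adjacent vertex of the quantile program. To bound how far, one localizes around the population minimizer and uses that the population risk has quadratic curvature in the $\beta$-direction — guaranteed by the positive, bounded conditional density of $S\mid X$ — while matrix concentration for the empirical design $\tfrac1{n+1}\sum_j\Phi(X_j)\Phi(X_j)^\top$, valid under the uniformly bounded first three moments of $(\Phi(X),S)$ (Assumption~\ref{ass:moment_conditions}), transfers this curvature to the sample. The single power of $d$ arises because $\|\Phi(X_i)\|^2\asymp d$ while the inverse design has $O(1)$ operator norm, so the per-sample influence on $\hat\beta$ is $O(\sqrt d/n)$ and its effect on $\hat g(X_j)=\Phi(X_j)^\top\hat\beta+\cdots$ is $O(d/n)$; the $\log n$ comes from the union bound over the $n+1$ deletions (and the possible pivots) needed for all of this to hold simultaneously with the small failure probability demanded in the previous paragraph. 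Combining the penalized and linear contributions gives $\Delta_n = O(d\log(n)/(\lambda n))$, as required.
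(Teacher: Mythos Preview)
Your reduction via exchangeability and the density of $S\mid X$ is fine, and the decomposition into a good event plus a small-probability bad event is the right shape. The real issue is step~(a): the uniform leave-one-out stability bound $\|\hat g_{S_{n+1}}-\hat g^{(-i)}\|_\infty=O(d\log n/(\lambda n))$ for the \emph{full} fit is not established, and the sketch you give for the linear part does not go through. Your argument for the unpenalized component is that population-level quadratic curvature in $\beta$ (from the bounded density of $S\mid X$) together with concentration of $\tfrac1{n+1}\sum_j\Phi(X_j)\Phi(X_j)^\top$ ``transfers this curvature to the sample.'' But the empirical pinball objective is piecewise \emph{linear} in $\beta$: its Hessian is identically zero away from the kinks, and the empirical Gram matrix has no direct bearing on the geometry of the quantile LP. Population curvature gives you $\|\hat\beta-\beta^*\|=O_P(\sqrt{d/n})$ for both the full and leave-one-out fits, but that only yields $\|\hat\beta-\hat\beta^{(-i)}\|=O_P(\sqrt{d/n})$ by the triangle inequality, not the $O(\sqrt d/n)$ you need. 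Getting a genuine per-sample influence bound for quantile regression requires controlling how far the LP solution jumps when one constraint is removed, which is governed by the local vertex geometry of the polytope and is not implied by design-matrix concentration. The paper flags exactly this: ``quantile regression with both kernel and linear components satisfies neither [finite-dimensionality nor stability],'' so your direct stability route is precisely what the authors say fails.

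The paper's proof sidesteps leave-one-out stability of $\hat\beta$ entirely. Instead it (i) lays an $\epsilon$-net over the ball $\{\|\beta\|\le c_\beta/\sqrt\lambda\}$; (ii) for each \emph{fixed} $\beta$ on the net, uses the clean leave-$k$-out stability of the penalized RKHS fit (\Cref{lem:stab_in_data}) together with stability of the RKHS fit in $\beta$ (\Cref{lem:stab_in_beta}) to reduce the interpolation event to $|S_i-\hat g_\beta(X_i)-\Phi(X_i)^\top\beta|\le O(k/(\lambda n))$; and (iii) controls the supremum over the net by raising the interpolation count to an $m$-th power, summing over the $\exp(O(d\log n))$ net points, and choosing $m\asymp d\log n$ so that $|\mathcal N|^{1/m}=O(1)$. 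The $d\log n$ in the final rate comes from this moment/net tradeoff, not from a union bound over deletions as you suggest. If you want to make your route work, you would need an honest finite-sample leave-one-out bound for $\hat\beta$ in the mixed RKHS-plus-linear quantile problem; absent that, the paper's discretize-then-stabilize argument is what closes the gap.
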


Critically, the interpolation error term decays to zero at a faster-than-parametric rate. As a result, for even moderately large $n$, we expect this term to be of small size. In support of this intuition, we will show an experiment in \Cref{sec:candc_data} in which with sample size $n = 650$, linear dimension $d = 5$, and non-linear hyperparameter $\lambda \cong 1$, we observe interpolation error $\frac{\epsilon_{\textup{int}}}{\mme_P[|f(X)|]} \cong 0.005$. Thus, at appropriate sample sizes, the interpolation error has minimal effect on the coverage. 

We remark that achieving this faster-than-parametric rate requires technical insights beyond existing tools. Two standard ways to establish interpolation bounds are to either to exploit the finite-dimensional character of the model class $\mathcal{F}$ or the algorithmic stability.  Unfortunately, quantile regression with both kernel and linear components satisfies neither property. To get around this problem, we give a three-part argument in which we first separate out the linear component of the fit by discretizing over $\beta$. Then, with $\beta$ fixed, we are able to exploit known stability results to control the kernel component of the fit (\cite{Bousquet2002}). Finally, we combine the two previous steps by giving a smoothing argument that shows that the discretization can be extended to the entire function class. This result may be useful in other applications. For instance, the interpolation error determines the derivative of the loss at the empirical minimizer and, thus, may play a key role in central limit theorems for quantile regressors of this type.

Moving away from these technical issues and returning to our coverage guarantee, \eqref{eq:rkhs_cov}, we find that once the interpolation error is removed, the conditional coverage is completely dictated by the inner product between $f_K$ and $\hat{g}_{S_{n+1}}$. Critically, this implies that in the special case where the target re-weighting $f$ lies completely in the unpenalized part of the function class (i.e. when $f_K = 0$) we have $\mme[\langle \hat{g}_{S_{n+1},K}, f_K \rangle_K] = 0$ and thus $\hat{C}(X_{n+1})$ obtains (nearly) exact coverage under $f$. On the other hand, our next proposition shows that when $f_K \neq 0$, we can use a plug-in estimate to accurately estimate $\mme[\langle \hat{g}_{S_{n+1}}, f_K \rangle_K] $. Thus, even when exact coverage is impossible, a simple examination of the quantile regression fit is sufficient to determine the degradation in coverage under any re-weighting of interest.
\begin{proposition}\label{prop:rkhs_inner_prod_est}
    Assume that $(X_1,S_1),\dots,(X_{n+1},S_{n+1}) \stackrel{i.i.d.}{\sim} P$ and that  $K$ is uniformly bounded. Suppose further that the population loss is locally strongly convex near its minimizer (Assumption \ref{ass:pop_strong_convex} in the Appendix) and $(\Phi(X_i),S_i)$ has uniformly bounded upper and lower first and second order moments (Assumption \ref{ass:moment_cond_for_quant_conv} in the Appendix). Define the $n$-sample quantile regression estimate
    \[
    (\hat{g}_{n,K},\hat{\beta}_n) := \argmin_{g_K \in \mathcal{F}_K,\ \beta \in \mmr^d} \frac{1}{n} \sum_{i=1}^n \ell_{\alpha}(g_K(X_i) + \Phi(X_i)^{\top}\beta,S_i) + \lambda \|g_K\|^2_K,
    \]
    and for any $\delta > 0$, let $\mathcal{F}_{\delta} := \{f(\cdot) = f_K(\cdot) + \Phi(\cdot)^\top\beta   \in \mathcal{F} : \|f\|_K + \|\beta\|_2  \leq 1,\ \mme_P[|f(X)|] \geq \delta\}$. Then,
    \begin{equation}\label{eq:inner_prod_est_bound}
    \sup_{f \in \mathcal{F}_{\delta}} \left| 2\lambda \frac{\langle \hat{g}_{n,K}, f_K \rangle_K}{\frac{1}{n} \sum_{i=1}^{n}|f(X_i)|} - 2\lambda \frac{\mme[\langle \hat{g}_{S_{n+1},K}, f_K \rangle_K]}{\mme_P[|f(X)|]} \right| \leq O\left(\sqrt{\frac{d\log(n)}{n}}\right).
    \end{equation}
\end{proposition}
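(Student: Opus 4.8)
\textbf{Proof plan for \Cref{prop:rkhs_inner_prod_est}.} The idea is to route both the computable $n$-sample estimate $\hat g_{n,K}$ and the theoretical augmented object $\hat g_{S_{n+1},K}$ through a common population limit, and to replace the empirical denominator $\frac1n\sum_i|f(X_i)|$ by $\mme_P[|f(X)|]$ via a uniform law of large numbers over $\mathcal{F}_\delta$. Let $(g^*_K,\beta^*)$ be the (unique, by \Cref{ass:pop_strong_convex}) minimizer of the regularized population risk $(g_K,\beta)\mapsto \mme[\ell_{\alpha}(g_K(X)+\Phi(X)^\top\beta,S)]+\lambda\|g_K\|_K^2$; note that since the empirical loss is normalized by the sample size, this limit is nontrivial for fixed $\lambda$. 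The first and central step is to prove the M-estimation rate
\[
\|\hat g_{n,K}-g^*_K\|_K+\|\hat\beta_n-\beta^*\|_2 = O\!\left(\sqrt{\tfrac{d\log n}{n}}\right),
\]
with high probability, together with the identical bound for $\hat g_{S_{n+1},K}$ (which, since $(X_i,Y_i)_{i\le n+1}$ are i.i.d., is just an ordinary $(n+1)$-sample kernel quantile regression with the same penalty and hence the same population limit). Because both estimates converge to $(g^*_K,\beta^*)$ at this rate, $\|\hat g_{n,K}-\hat g_{S_{n+1},K}\|_K=O(\sqrt{d\log n/n})$ by the triangle inequality, so no separate leave-one-out stability bound is required.

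\textbf{From the rate to the inner products.} By Cauchy--Schwarz, for every admissible $f$ (so $\|f_K\|_K\le 1$) we get $|\langle \hat g_{n,K}-g^*_K,f_K\rangle_K|\le \|\hat g_{n,K}-g^*_K\|_K$, and likewise for $\hat g_{S_{n+1},K}$, each $O(\sqrt{d\log n/n})$ on the good event. To replace $\langle\hat g_{S_{n+1},K},f_K\rangle_K$ by its expectation I would use the deterministic a priori bound $\lambda\|\hat g_{S_{n+1},K}\|_K^2\le\frac1{n+1}\sum_{i=1}^{n+1}\ell_{\alpha}(0,S_i)\le\frac1{n+1}\sum_{i=1}^{n+1}|S_i|$, which under the moment assumptions gives $\mme\|\hat g_{S_{n+1},K}\|_K^2=O(1)$; hence the $O(n^{-1})$-probability bad event contributes only $O(n^{-1/2})$ to $\mme[\langle\hat g_{S_{n+1},K},f_K\rangle_K]$, and we conclude $\bigl|\langle\hat g_{n,K},f_K\rangle_K-\mme[\langle\hat g_{S_{n+1},K},f_K\rangle_K]\bigr|=O(\sqrt{d\log n/n})$ uniformly over $\mathcal{F}_\delta$, with $|\mme[\langle\hat g_{S_{n+1},K},f_K\rangle_K]|=O(1)$.

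\textbf{Denominators and assembly.} The class $\{x\mapsto|f(x)|:\|f_K\|_K+\|\beta\|_2\le 1\}$ is uniformly bounded (since $K$ is bounded and $\Phi$ has bounded moments) and, by the contraction principle, has Rademacher complexity $O(\sqrt{1/n}+\sqrt{d/n})$ --- the $1/\sqrt n$ from the bounded RKHS ball and $\sqrt{d/n}$ from the $d$-dimensional linear part --- so a Talagrand/bounded-differences concentration argument gives $\sup_{f\in\mathcal{F}_\delta}\bigl|\frac1n\sum_{i=1}^n|f(X_i)|-\mme_P[|f(X)|]\bigr|=O(\sqrt{d\log n/n})$ with high probability. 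On $\mathcal{F}_\delta$ the population denominator is $\ge\delta$, so the empirical one is $\ge\delta/2$ for large $n$. Decomposing the quantity inside the supremum as $2\lambda$ times $\bigl(\langle\hat g_{n,K},f_K\rangle_K-\mme[\langle\hat g_{S_{n+1},K},f_K\rangle_K]\bigr)\big/\bigl(\frac1n\sum_i|f(X_i)|\bigr)$ plus $\mme[\langle\hat g_{S_{n+1},K},f_K\rangle_K]$ times the difference of reciprocals of the two denominators, the first piece is $O(\delta^{-1}\sqrt{d\log n/n})$ by the numerator bound and the denominator lower bound, and the second is $O(\delta^{-2}\sqrt{d\log n/n})$ by the denominator concentration and the $O(1)$ prefactor; a union bound over the two good events and treating $\lambda,\delta$ as fixed constants finishes the argument.

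\textbf{Main obstacle.} The hard part is the M-estimation rate in the first step: the pinball loss is piecewise linear, so the empirical objective is convex but not strongly convex, and the curvature needed to turn ``small subgradient'' into ``small parameter error'' must be borrowed from the population risk near its minimizer via a localization argument --- all while the model carries an infinite-dimensional kernel component for which finite-dimensional covering arguments fail. I would follow the three-part scheme the authors use for \Cref{prop:rkhs_bounds}: discretize $\beta$ over a $\mathrm{poly}(n)$-point cover of a $\sqrt{d\log n/n}$-neighborhood of $\beta^*$ (this is where the factor $d\log n$ enters), for each fixed $\beta$ invoke known uniform-convergence and algorithmic-stability bounds for kernel quantile regression over the bounded RKHS ball (as in \citet{Bousquet2002}), and then use the Lipschitzness of the loss in $\beta$ to smooth from the cover to the whole neighborhood. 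This yields the claimed $O(\sqrt{d\log n/n})$ rate for both $\hat g_{n,K}$ and $\hat g_{S_{n+1},K}$, after which the remaining steps are routine.
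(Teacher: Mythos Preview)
Your overall decomposition --- route both $\hat g_{n,K}$ and $\hat g_{S_{n+1},K}$ through a common population minimizer, use Cauchy--Schwarz to pass from inner products to RKHS norms, control the denominators via a uniform LLN over $\mathcal{F}_\delta$, and then assemble --- is exactly what the paper does (its Steps~1--3). The only substantive difference is in how you propose to obtain the M-estimation rate. You suggest importing the discretize-$\beta$-then-invoke-Bousquet-stability scheme from the paper's proof of \Cref{prop:rkhs_bounds}. The paper does \emph{not} do this here; instead it uses a direct localized empirical-process argument. After a crude preliminary step showing $d(\hat\beta_n,\hat g_{n,K})\le\delta_M$ with high probability (using Lemmas~\ref{lem:f_is_bounded} and~\ref{lem:beta_is_bounded}), \Cref{lem:localized_rad_of_loss} bounds the localized excess-risk process $L_n-L_\infty$ via symmetrization, contraction, and the standard Rademacher bounds $O(\delta_1\sqrt{d/n})$ for the linear part and $O(\delta_2/\sqrt{n})$ for the RKHS ball --- the two components are handled together, with no discretization of $\beta$ and no stability argument. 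Local strong convexity (\Cref{ass:pop_strong_convex}) then converts this into a parameter rate via a standard peeling over dyadic shells $A_j$. The $\log n$ factor enters not from a cover of $\beta$, as you anticipated, but from integrating the in-probability tail bound to obtain the expectation bound needed for $\mme[\|\hat g_{S_{n+1},K}-g^*_K\|_K]$.

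The discretize-$\beta$ plus stability route is the right tool for \Cref{prop:rkhs_bounds} because there one needs to control the number of \emph{exactly} interpolated scores, which is a highly non-smooth functional for which uniform concentration over the joint class would be hopeless; Bousquet stability is what makes that tractable. For an ordinary M-estimation rate as needed here, localized Rademacher complexity handles the infinite-dimensional kernel component directly, so the machinery you borrowed is heavier than required. One small further point: you treat the population minimizer as unique, but \Cref{ass:pop_strong_convex} only imposes strong convexity in the distance to the \emph{set} $B^*$ of minimizing $\beta$'s (the kernel part $g_K^*$ is unique by strict convexity of the penalty); the paper accordingly works with the projection $\Pi_{B^*}$ throughout the peeling argument.
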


\subsection{Specialization to the class of Lipschitz functions} 

As a second specialization of \Cref{thm:infinite_dim_result}, we now aim to provide valid coverage over all sufficiently smooth re-weightings of the data. We will do this by examining the set of all Lipschitz functions on $\mathcal{X}$. Namely, suppose $\mathcal{X} \subseteq \mmr^p$ and define the Lipschitz norm of functions $f : \mathcal{X} \to \mmr$ as
\[
\text{Lip}(f) := \sup_{x,y \in \mathcal{X},\ x \neq y} \frac{|f(x) - f(y)|}{\|x - y\|_2}.
\]
Analogous to the previous section, let $\mathcal{F}_L := \{f : \text{Lip}(f) < \infty\}$, $\Phi : \mathcal{X} \to \mmr^d$ be any finite dimensional feature representation of $\mathcal{X}$, and consider implementing our method with the function class $\mathcal{F} = \{f_L(\cdot) + \Phi(\cdot)^\top\beta : f_L \in \mathcal{F}_L,\ \beta \in \mmr^d\}$ and penalty $\mathcal{R}(f_L(\cdot) + \Phi(\cdot)^\top\beta) = \lambda\text{Lip}(f_L)$.

The astute reader may notice that the Lipschitz norm is not differentiable and thus \Cref{thm:infinite_dim_result} is not directly applicable to this setting. Nevertheless, it is not difficult to show that \Cref{thm:infinite_dim_result} can be extended by replacing $\frac{d}{d\epsilon}\mathcal{R}(\hat{g}_{S_{n+1}} + \epsilon f)$ with a subgradient. So, after observing that $|\partial_{\epsilon} \mathcal{R} (\hat{g}_{S_{n+1}} + \epsilon f)|_{\epsilon = 0} |  \leq \lambda \text{Lip}(f)$, we can apply this analogue of \Cref{thm:infinite_dim_result} to find that for any non-negative $f \in \mathcal{F}$ with $\mme_P[f(X)]>0$,
\[
1-\alpha- \lambda\frac{ \text{Lip}(f) }{\mme_P[f(X)]}
 \leq \mmp_f( Y_{n+1} \in \hat{C}(X_{n+1}) ) \leq  1-\alpha+ \lambda \frac{\text{Lip}(f)}{\mme_P[f(X)]} +\frac{\epsilon_{\text{int}}}{\mme_P[f(X)]}. 
\]
Control of the interpolation error is handled in the following proposition.
\begin{proposition} \label{prop:lip_bounds}
    Assume that $(X_1,Y_1),\dots,(X_{n+1},Y_{n+1}) \in \mmr^{p} \times \mmr$ are i.i.d.~and that $X$, $\Phi(X)$, and $S$ have bounded domains and uniformly upper and lower bounded first and second moments (Assumption \ref{ass:lip_tech_conditions} in the Appendix). Furthermore, assume that the distribution of $S \mid X$ is continuous with a uniformly bounded density and that $\Phi(\cdot)$ contains an intercept term. Then for any $f \in \mathcal{F}$,
    \[
    \frac{|\epsilon_{\textup{int}}|}{\mme_P[|f(X)|]} \leq O\left( \left(\frac{p\log(n)}{\lambda n^{\min\{1/2,1/p\}}}\right)^{\frac{1}{2}}  + \left(\frac{dp}{\lambda^2 n} \right)^{\frac{1}{4}} \right) .
    \]
\end{proposition}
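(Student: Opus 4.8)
The plan is to follow the three-stage template behind \Cref{prop:rkhs_bounds}: reduce the interpolation error to a count of exactly-interpolated calibration points, discretize the finite-dimensional linear part, and control the Lipschitz component --- the new difficulty being that the Lipschitz seminorm is neither finite-dimensional nor strongly convex. By the bound on $\epsilon_{\textup{int}}$ in \Cref{thm:infinite_dim_result} and the exchangeability of the triples $(X_i,S_i,\hat g_{S_{n+1}}(X_i))$, it suffices to bound $\mme[|f(X_i)|\,\bone\{S_i=\hat g_{S_{n+1}}(X_i)\}]/\mme_P[|f(X)|]$ for a generic $i\le n$; after showing $\mmp(S_i=\hat g_{S_{n+1}}(X_i)\mid X_i)\le\rho_n$ on a high-probability event, $\mme[\max_{i\le n+1}|f(X_i)|]$ can be factored out crudely and absorbed. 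Write $\hat g_{S_{n+1}}=\hat g_{L,S_{n+1}}+\Phi(\cdot)^\top\hat\beta_{S_{n+1}}$. Comparing the objective of \eqref{eq:general_method} at $\hat g_{S_{n+1}}$ with its value at a constant fit equal to the empirical $(1-\alpha)$-quantile of the (bounded) scores gives $\lambda\,\textup{Lip}(\hat g_{L,S_{n+1}})=O(1)$, hence $\textup{Lip}(\hat g_{L,S_{n+1}})=O(1/\lambda)$; non-degeneracy of $\Phi(X)$ from Assumption~\ref{ass:lip_tech_conditions}, the presence of an intercept in $\Phi$, the bounded domains, and the same bound on the empirical pinball loss then pin down $\hat\beta_{S_{n+1}}$ and give $\|\hat g_{S_{n+1}}\|_\infty=O(1/\lambda)$ as well.

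On this event $\hat g_{S_{n+1}}$ lies in the class of functions that split as a Lipschitz part of constant $O(1/\lambda)$ plus a linear part with $\|\beta\|=O(1)$, uniformly bounded by $O(1/\lambda)$ on $\mathcal{X}\subseteq[0,R]^p$. I would then (i) cover the $\beta$-ball by an $\eta_\beta$-net of cardinality $(C/\eta_\beta)^d$; (ii) approximate the Lipschitz component by a piecewise-affine function on a grid of spacing $h$ --- a space of dimension $N\asymp(R/h)^p$ --- with uniform error $O(h/\lambda)$, and cover its bounded coefficient vector by an $\eta_L$-net; and (iii) show via a smoothing/continuity argument that the optimal completion depends Lipschitz-continuously on $\beta$ and on the grid values, so that the product net together with the grid approximation produces a deterministic $g^{(j)}$ with $\|g^{(j)}-\hat g_{S_{n+1}}\|_\infty\le\rho:=O(\eta_\beta+\eta_L+h/\lambda)$. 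Since each $g^{(j)}$ is independent of $S_i$ given $X_i$ and $S\mid X$ has a uniformly bounded density, $\mmp(|S_i-g^{(j)}(X_i)|\le\rho\mid X_i)\le C\rho$, and a union bound over the product net controls $\mmp(S_i=\hat g_{S_{n+1}}(X_i))$. A naive union bound over a net of log-cardinality $\asymp(R/h)^p\log(1/\eta_L)+d\log(1/\eta_\beta)$ is far too lossy, so --- exactly as for the kernel component in \Cref{prop:rkhs_bounds} --- with $\beta$ and the grid held fixed I would import a leave-one-out stability estimate for the grid-restricted, Lipschitz-penalized quantile regression, if necessary after an infinitesimal strongly-convex perturbation in the sense of \citet{Bousquet2002}, showing $\hat g_{S_{n+1}}$ is within $O(1/(\lambda n))$ of a fit that omits point $i$; this collapses the effective radius enough to beat the residual entropy. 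Balancing $h$, $\eta_\beta$, $\eta_L$ against $N\asymp(R/h)^p$ then produces the two stated terms: $\bigl(p\log(n)/(\lambda n^{\min\{1/2,1/p\}})\bigr)^{1/2}$ is the price of the $p$-dimensional grid (approximation bias versus entropy of the Lipschitz ball), while $\bigl(dp/(\lambda^2 n)\bigr)^{1/4}$ comes from the combined finite-dimensional ($\beta$-plus-grid) part of the fit. Dividing by $\mme_P[|f(X)|]$ and tracking $\textup{Lip}(f)$ and the domain constants through the argument finishes the bound.

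The main obstacle is precisely the Lipschitz component in steps (ii)--(iii). Unlike the RKHS penalty $\lambda\|f_K\|_K^2$, the seminorm $\lambda\,\textup{Lip}(f_L)$ is not strongly convex --- it vanishes on constants --- so the stability argument of \citet{Bousquet2002} used for the kernel term does not apply to it directly; and the Lipschitz ball on a $p$-dimensional domain has metric entropy of order $h^{-p}$ at scale $h$, which is far too large for a pure covering argument to close. Threading between these two failure modes --- discretizing the Lipschitz part coarsely enough that its entropy is manageable, approximating finely enough that the $O(h/\lambda)$ bias stays controlled, and then restoring stability on the discretized subproblem to defeat the leftover union bound --- is the delicate step, and the exponent $\min\{1/2,1/p\}$ in the final rate is the footprint of that $p$-dimensional approximation-versus-entropy trade-off.
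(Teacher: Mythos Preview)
Your proposal has a genuine gap at precisely the point you flag in your last paragraph. You try to port the RKHS template: net over $\beta$, then invoke leave-one-out stability for the nonparametric component with $\beta$ held fixed. For the RKHS this works because $g_K\mapsto\lambda\|g_K\|_K^2$ is $2\lambda$-strongly convex, which is what drives the $O(1/(\lambda n))$ stability bound of \citet{Bousquet2002}. The Lipschitz penalty $g_L\mapsto\lambda\,\textup{Lip}(g_L)$ is homogeneous of degree one and has no second-order curvature whatsoever; an ``infinitesimal strongly-convex perturbation'' yields a stability constant scaling with the inverse of that infinitesimal, hence a vacuous bound. Discretizing to a grid does not help: the grid-restricted problem is still a non-strongly-convex penalized $\ell_1$-type problem, so no $O(1/(\lambda n))$ stability is available there either. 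Without stability, you are left with a raw union bound over a net of log-cardinality $\asymp h^{-p}$, which cannot be balanced against the $O(h/\lambda)$ approximation error to produce the claimed rate.

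The paper abandons stability entirely for the Lipschitz case and instead takes a direct empirical-process route. It replaces the hard indicator $\bone\{S_i=\hat g_{S_{n+1}}(X_i)\}$ by a tent function $h_\delta(S_i-\hat g_{S_{n+1}}(X_i))$ of width $\delta$, which is $\delta^{-1}$-Lipschitz. After symmetrization and the Ledoux--Talagrand contraction inequality, this reduces to bounding $\delta^{-1}$ times the Rademacher complexity of $\{|f(\cdot)|\,g(\cdot):g\in\mathcal{G}\}$, where $\mathcal{G}$ is the bounded-Lipschitz-plus-linear ball to which $\hat g_{S_{n+1}}$ is confined on the good event. The linear part contributes $O(\sqrt{dp/(\lambda^2 n)})$; the Lipschitz part is handled by Dudley's entropy integral using the standard $\epsilon^{-p}$ metric entropy of Lipschitz balls, giving $O(p\log(n)/(\lambda\, n^{\min\{1/2,1/p\}}))$. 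The bias from the tent smoothing is $O(\delta)$ by the bounded density of $S\mid X$. Optimizing over $\delta$ balances $\delta^{-1}\cdot(\text{complexity})$ against $\delta$, producing the square roots in the stated bound. This avoids any need for stability of the Lipschitz fit.
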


This result is considerably weaker than our RKHS bound. While our proof for RKHS function classes made careful use of the stability of RKHS fitting \citep{Bousquet2002}, here we take a more brute force approach and directly examine the uniform concentration properties of the number of interpolated points, $\frac{1}{n+1} \sum_{i=1}^{n+1} \bone\{S_i = g_L(X_i) + \phi(X_i)^\top\beta\}$. We defer a detailed description of this approach to the Appendix. For now, we simply remark that we do not believe this proof technique yields a tight bound and it is possible that significant improvements could be made to \Cref{prop:lip_bounds} with more careful arguments. 

Regardless of the tightness of the bound, we still find that when $X$ is low-dimensional, the interpolation error will be small and the miscoverage of $\hat{C}(X_{n+1})$ under $f$ will be primarily driven by its Lipschitz norm. In light of the impossibility of exact conditional coverage, this result gives a natural interpolation between marginal coverage (in which $\text{Lip}(f) = \text{Lip}(1) = 0)$) and conditional coverage (in which  $\text{Lip}(f)$ can be arbitrarily large). On the other hand, in moderate to high dimensions the interpolation error term will not be negligible and the coverage can be highly conservative. For this reason, in our real data examples, we will prefer to use RKHS functions for which we have much faster convergence rates.

\section{Computing the prediction set}\label{sec:computation}

In order to practically implement any of the methods discussed above, we need to be able to efficiently compute $\hat{C}(X_{n+1}) = \{y : S(X_{n+1},y) \leq \hat{g}_{S(X_{n+1},y)}(X_{n+1})\}$. Naively, this recursive definition requires us to fit $\hat{g}_S$ for all possible values of $S \in \mmr$. We will now show that by exploiting the monotonicity properties of quantile regression, this naive computation can be overcome and a valid prediction set can be computed efficiently using only a small number of fits.

The main subtlety that we will have to contend with is that $\hat{g}_S$ may not be uniquely defined. For example, consider computing the median of the dataset $\{S_1,S_2,S_3,S_4\} = \{1,2,3,4\}$. It is easy to show that any value in the interval $[2,3]$ is a valid solution to the median quantile regression $\text{minimize}_{\theta} \sum_{i=1}^4 \ell_{1/2}(\theta,i)$. Critically, this means that it is ambiguous whether or not $3$ lies below or above the median. More generally, in our context, it can be ambiguous whether or not $S \leq \hat{g}_S$. In the earlier sections of this article we have elided such non-uniqueness in the definition of $\hat{g}_S$. We do this because the choice of $\hat{g}_S$ is not critical to the theory and, in particular, all sensible definitions will give the same coverage properties (recall that all of our theoretical results go through so long as $\hat{g}_{S}$ is computed using an algorithm that is invariant under re-orderings of the input data). However, while not theoretically relevant, this ambiguity can cause practical issues to arise in the computation.

The main insight of this section is that these technical issues can be resolved by re-defining $\hat{C}(X_{n+1})$ in terms of the dual formulation of the quantile regression. This will give us a new prediction set, $\hat{C}_{\text{dual}}(X_{n+1})$, that can be computed efficiently and satisfies the same coverage guarantees as $\hat{C}(X_{n+1})$. At a high level, $\hat{C}_{\text{dual}}(X_{n+1})$ is obtained from $\hat{C}(X_{n+1})$ by removing a small portion of the points $y$ that lie on the interpolation boundary $\{y : S(X_{n+1},y) = \hat{g}_{S(X_{n+1},y)}(X_{n+1})\}$. Thus, one should simply think of $\hat{C}_{\text{dual}}(X_{n+1})$ as a trimming of the original prediction set that removes some extraneous edge cases. 

To define our dual optimization more formally, recall that throughout this article we have considered quantile regressions of the form,
\[
\underset{{g \in \mathcal{F}}}{\text{minimize}}\ \frac{1}{n+1}\sum_{i=1}^{n} \ell_{\alpha}(g(X_i),S_i) + \frac{1}{n+1}\ell_{\alpha}(g(X_{n+1}),S) + \mathcal{R}(g).
\]
Instead of directly computing the dual of this program, we first re-formulate this optimization into the identical procedure,
\begin{equation}\label{eq:generic_convex_opt}
    \begin{split}
          \underset{p,q \in \mmr^{n+1},\ g\in \mathcal{F}}{\text{minimize}} \quad &  \sum_{i = 1}^{n+1} (1-\alpha)p_i + \alpha q_i +  (n + 1) \cdot \mathcal{R}(g),\\
          \text{subject to} \quad &  S_i - g(X_i) - p_i + q_i = 0,\\
        &  S - g(X_{n+1}) - p_{n+1} + q_{n+1} = 0,\\
        &  p_i,q_i \geq 0. 
    \end{split}
    \end{equation}
Then, after some standard calculations, this yields the desired dual formulation,
\begin{equation}\label{eq:generic_dual}
\begin{split}
    \underset{\eta \in \mmr^{n+1}}{\text{maximize}} \quad & \sum_{i = 1}^n \eta_i S_i + \eta_{n+1} S - \mathcal{R}^*\left( \eta \right)   \\
    \text{subject to} \quad & -\alpha \leq \eta_i \leq 1 - \alpha,
 \end{split}
\end{equation}
\sloppy where $\mathcal{R}^* (\cdot)$ denotes the function $\mathcal{R}^*(\eta) := -\min_{g \in \mathcal{F}} \,\{(n+1)\mathcal{R}(g) - \sum_{i=1}^{n+1}\eta_i g(X_i)\}$; heuristically, we can think of $\mathcal{R}^*(\cdot)$ as the convex conjugate for $\mathcal{R}(\cdot)$. 

Crucially, the KKT conditions for \eqref{eq:generic_convex_opt} allow for a more tractable definition of our prediction set. Letting $\eta^S$ denote any solution to \eqref{eq:generic_dual} and applying the complementary slackness conditions of this primal-dual pair we find that
\begin{align*}
    \eta^S_{n + 1} \in \begin{cases}
        - \alpha &\text{if $S < \hat{g}_S(X_{n + 1}),$} \\
        [-\alpha, 1 - \alpha] & \text{if $S = \hat{g}_S(X_{n + 1}),$} \\
        1 - \alpha &\text{if $S > \hat{g}_S(X_{n + 1}).$}
    \end{cases}
\end{align*}
As a consequence, checking whether $\eta_{n+1}^S < 1-\alpha$ is nearly equivalent to checking that $S \leq \hat{g}_S(X_{n + 1})$, albeit with a minor discrepancy on the interpolation boundary. This enables us to define the efficiently computable prediction set,
\begin{equation}
\hat{C}_{\text{dual}}(X_{n+1}) := \{y :  \eta_{n+1}^{S(X_{n+1},y)} < 1 - \alpha\}. \label{eq:dual_pred_set}
\end{equation}
In practice, $\hat{C}_{\text{dual}}(X_{n+1})$ can be mildly conservative. If we are willing to allow the prediction set to be randomized then exact coverage can be obtained using the prediction set,
\[
\hat{C}_{\text{dual, rand.}} := \{y : \eta_{n+1}^{S(X_{n+1},y)} < U\},
\]
where $U \sim \text{Unif}([-\alpha,1-\alpha])$ is drawn independent of the data.

Our first result verifies that $\hat{C}_{\text{dual}}(X_{n + 1})$ obtains the same coverage guarantees as our non-randomized primal set $\hat{C}(X_{n + 1})$, while $\hat{C}_{\text{dual, rand.}}$ realizes exact, non-conservative coverage.
\begin{proposition}\label{prop:cov_of_dual}
    Assume that the primal-dual pair \eqref{eq:generic_convex_opt}-\eqref{eq:generic_dual} satisfies strong duality and the dual solutions $\{\eta^S\}_{S \in \mmr}$ are computed using an algorithm that is symmetric in the input data. Then, 
    \begin{enumerate}
    \item The statement of \Cref{thm:infinite_dim_result} is valid with $\hat{C}(X_{n+1})$ replaced by $\hat{C}_{\textup{dual}}(X_{n+1})$,
    \item The statement of \Cref{thm:infinite_dim_result} is valid with $\hat{C}(X_{n+1})$ replaced by $\hat{C}_{\textup{dual, rand.}}(X_{n+1})$ and $\epsilon_{\text{int}}$ set to $0$.
    \end{enumerate}
\end{proposition}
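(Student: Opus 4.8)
The plan is to rerun the proof of \Cref{thm:infinite_dim_result} almost verbatim, but with the first-order conditions of the quantile regression \eqref{eq:general_method} replaced by the KKT conditions of the primal--dual pair \eqref{eq:generic_convex_opt}--\eqref{eq:generic_dual}. Fix a non-negative $f\in\mathcal{F}$ and write $\eta^{S}$ for a dual optimum when the imputed score is $S$. First I would record the two consequences of strong duality that I need. Complementary slackness (stated in the excerpt for the $(n+1)$-th coordinate, and holding identically for every $i$) gives $\eta^{S}_i\in[-\alpha,1-\alpha]$ with $\eta^{S}_i=1-\alpha$ when $S_i>\hat{g}_S(X_i)$ and $\eta^{S}_i=-\alpha$ when $S_i<\hat{g}_S(X_i)$; in particular $\eta^S_i\notin\{-\alpha,1-\alpha\}$ can only occur when $S_i=\hat{g}_S(X_i)$. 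Stationarity of \eqref{eq:generic_convex_opt} in $g$ gives the dual analogue of the primal first-order condition, namely $\tfrac{1}{n+1}\sum_{i=1}^{n+1}\eta^{S}_i f(X_i)=\tfrac{d}{d\epsilon}\mathcal{R}(\hat{g}_S+\epsilon f)\big|_{\epsilon=0}$ for every direction $f\in\mathcal{F}$.

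Next I would replace the indicator $\bone\{S_i\le\hat{g}_S(X_i)\}$ appearing in the primal proof by its dual counterpart. Since $\eta^{S}_i\in[-\alpha,1-\alpha]$,
\[
\bone\{\eta^{S}_i<1-\alpha\}=\bigl((1-\alpha)-\eta^{S}_i\bigr)+r^{S}_i,\qquad r^{S}_i:=(\alpha+\eta^{S}_i)\,\bone\{\eta^{S}_i<1-\alpha\},
\]
and the first step shows $0\le r^{S}_i\le\bone\{S_i=\hat{g}_S(X_i)\}$. Evaluating at $S=S_{n+1}$, $i=n+1$ yields $\bone\{Y_{n+1}\in\hat{C}_{\text{dual}}(X_{n+1})\}=(1-\alpha)-\eta^{S_{n+1}}_{n+1}+r^{S_{n+1}}_{n+1}$, with the correction term nonnegative and supported on the interpolation boundary.

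The third step is exchangeability. Once the imputed score is $S_{n+1}$, the dual program \eqref{eq:generic_dual}---whose objective is $\sum_{i=1}^{n+1}\eta_i S_i-\mathcal{R}^*(\eta)$ over the box $-\alpha\le\eta_i\le1-\alpha$, and whose $\mathcal{R}^*(\eta)=-\min_{g}\{(n+1)\mathcal{R}(g)-\sum_{i=1}^{n+1}\eta_i g(X_i)\}$ depends on the sample only through the symmetric pairing $\sum_i\eta_i g(X_i)$---is a symmetric functional of the full data $\{(X_i,S_i)\}_{i=1}^{n+1}$. Hence a symmetric solver returns a permutation-equivariant dual vector, so the triples $(X_i,S_i,\eta^{S_{n+1}}_i)_{i=1}^{n+1}$ are exchangeable in the i.i.d.\ model, exactly as $(X_i,S_i,\hat{g}_{S_{n+1}}(X_i))_i$ were in the proof of \Cref{thm:infinite_dim_result}. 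Now I combine. In the i.i.d.\ case, exchangeability and the second step give $\mme[f(X_{n+1})\bone\{Y_{n+1}\in\hat{C}_{\text{dual}}(X_{n+1})\}]=\tfrac{1}{n+1}\mme[\sum_i f(X_i)((1-\alpha)-\eta^{S_{n+1}}_i+r^{S_{n+1}}_i)]$; the $\eta$-sum collapses via the first step to $-\mme[\tfrac{d}{d\epsilon}\mathcal{R}(\hat{g}_{S_{n+1}}+\epsilon f)|_{\epsilon=0}]$, and the $r$-sum is the interpolation term, bounded in absolute value by $\mme[|f(X_{n+1})|\bone\{S_{n+1}=\hat{g}_{S_{n+1}}(X_{n+1})\}]$, reproducing \eqref{eq:infinite_dim_cov}. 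For the $\mmp_f$ lower bound I drop the nonnegative $r$-term, change measure via $\mme_f[\phi]=\mme_P[\phi\,f(X_{n+1})]/\mme_P[f(X)]$ (valid since $f\ge0$, $\mme_P[f(X)]>0$), and apply the same exchangeability and collapse under $\mme_P$ to recover the lower bound of \Cref{thm:infinite_dim_result}. For $\hat{C}_{\text{dual, rand.}}$, conditioning on the data gives $\mme[\bone\{\eta^{S_{n+1}}_{n+1}<U\}\mid\text{data}]=(1-\alpha)-\eta^{S_{n+1}}_{n+1}$ since $U\sim\text{Unif}([-\alpha,1-\alpha])$ has unit-length range; this is the second step with the $r$-term deleted, so the identical computation gives \eqref{eq:infinite_dim_cov} with $\epsilon_{\text{int}}=0$ and the lower bound with equality.

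I expect the step needing the most care---rather than a genuinely deep idea---to be the exchangeability claim of the third step: one must check that imputing $S=S_{n+1}$ turns \eqref{eq:generic_dual} (including the conjugate $\mathcal{R}^*$) into a bona fide symmetric functional of all $n+1$ points, so that the ``symmetric algorithm'' hypothesis upgrades from equivariance of the last coordinate to equivariance of the entire dual vector. Once that is established, the rest is a mechanical translation of the proof of \Cref{thm:infinite_dim_result} under the dictionary ``primal first-order condition $\leftrightarrow$ dual stationarity'' and ``$\bone\{S_i\le\hat{g}_{S_{n+1}}(X_i)\}\leftrightarrow(1-\alpha)-\eta^{S_{n+1}}_i$ up to a boundary correction''; the randomized case is then immediate because the uniform draw replaces that boundary correction by its exact conditional mean.
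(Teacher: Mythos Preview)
Your proposal is correct and follows essentially the same approach as the paper: derive the stationarity identity $\tfrac{1}{n+1}\sum_i\eta^{S}_i f(X_i)=\tfrac{d}{d\epsilon}\mathcal{R}(\hat g_S+\epsilon f)|_{\epsilon=0}$ from the KKT conditions, use complementary slackness to relate $\eta_i$ to the interpolation boundary, and then apply exchangeability of $(X_i,S_i,\eta^{S_{n+1}}_i)_{i=1}^{n+1}$. Your algebraic decomposition $\bone\{\eta_i<1-\alpha\}=(1-\alpha)-\eta_i+r_i$ with $0\le r_i\le\bone\{S_i=\hat g(X_i)\}$ is a slightly tidier repackaging of the paper's explicit case split on $\{S_i<\hat g(X_i)\},\{S_i>\hat g(X_i)\},\{S_i=\hat g(X_i)\}$, but the content is identical, and the randomized case is handled in the same way in both arguments.
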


The assumption that the primal-dual pair satisfies strong duality is very minor and we verify in \Cref{sec:app_comp_set-up} that it holds for all the function classes and penalties considered in this article. Moreover, we also verify in \Cref{sec:app_comp_set-up} that for all the function classes and penalties considered in this article, $\mathcal{R}^*(\cdot)$ is tractable and thus solutions to \eqref{eq:generic_dual} can be computed efficiently.
    
The main result of this section is \Cref{thm:pred_set_is_mon}, which states that $S \mapsto \eta^S_{n+1} $ is non-decreasing. Critically, this implies that membership in the dual prediction set \eqref{eq:dual_pred_set} is monotone in the imputed score $S(X_{n + 1}, y)$.

\begin{theorem}\label{thm:pred_set_is_mon}
     For all maximizers $\{\eta^S_{n+1}\}_{S \in \mmr}$ of \eqref{eq:generic_dual}, $S \mapsto \eta^S_{n+1}$ is non-decreasing in $S$.
\end{theorem}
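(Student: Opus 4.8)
The plan is to exploit the fact that the imputed score $S$ enters the dual objective \eqref{eq:generic_dual} only through the single bilinear term $\eta_{n+1}S$. This additive separability makes the monotonicity a textbook monotone–comparative–statics (revealed-preference) statement, requiring neither convexity of $\mathcal{R}^*$ nor uniqueness of the maximizer. I would not attempt to use the KKT/complementary-slackness characterization of $\eta^S_{n+1}$ for this, since that characterization is only "up to the interpolation boundary"; the swapping argument below is cleaner and handles arbitrary optima directly.

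Concretely, write the dual objective as $\Psi(\eta,S) := A(\eta) + \eta_{n+1}S$, where $A(\eta) := \sum_{i=1}^n \eta_i S_i - \mathcal{R}^*(\eta)$ does not depend on $S$, and let $B := \{\eta \in \mmr^{n+1} : -\alpha \le \eta_i \le 1-\alpha\}$ be the (fixed) feasible set. Fix any $S^{(1)} < S^{(2)}$ and let $\eta := \eta^{S^{(1)}}$ and $\eta' := \eta^{S^{(2)}}$ be arbitrary maximizers of $\Psi(\cdot,S^{(1)})$ and $\Psi(\cdot,S^{(2)})$ over $B$. Since $\eta' \in B$ and $\eta \in B$, optimality gives the two inequalities
\[
A(\eta) + \eta_{n+1}S^{(1)} \ \ge\ A(\eta') + \eta'_{n+1}S^{(1)}, \qquad A(\eta') + \eta'_{n+1}S^{(2)} \ \ge\ A(\eta) + \eta_{n+1}S^{(2)}.
\]
Adding these and cancelling the common finite terms $A(\eta) + A(\eta')$ yields $\eta_{n+1}S^{(1)} + \eta'_{n+1}S^{(2)} \ge \eta'_{n+1}S^{(1)} + \eta_{n+1}S^{(2)}$, i.e. $(\eta_{n+1} - \eta'_{n+1})(S^{(1)} - S^{(2)}) \ge 0$. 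As $S^{(1)} - S^{(2)} < 0$, we conclude $\eta^{S^{(1)}}_{n+1} \le \eta^{S^{(2)}}_{n+1}$. Since $S^{(1)} < S^{(2)}$ and the two maximizers were arbitrary, every selection $S \mapsto \eta^S_{n+1}$ is non-decreasing, which is the claim.

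There is no real obstacle here beyond recognizing the separable structure; the argument uses nothing about the form of $A$ or the convexity of $\mathcal{R}^*$. The one bookkeeping point worth a sentence is that the cancellation of $A(\eta) + A(\eta')$ is legitimate because these quantities are finite: $\mathcal{R}^*(\eta) > -\infty$ always (the inner minimization defining $\mathcal{R}^*$ is bounded above by $(n+1)\mathcal{R}(0)$), and any $\eta$ with $\mathcal{R}^*(\eta) = +\infty$ has $\Psi(\eta,\cdot) = -\infty$ and hence is not a maximizer once the standing strong-duality assumption of \Cref{prop:cov_of_dual} guarantees the dual optimum is finite. So $\Psi(\eta,S^{(1)})$ and $\Psi(\eta',S^{(2)})$, hence $A(\eta)$ and $A(\eta')$, are finite, and the subtraction is valid.
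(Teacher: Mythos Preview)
Your proof is correct and follows essentially the same revealed-preference argument as the paper: both isolate the bilinear term $\eta_{n+1}S$, write the rest of the objective as a function $h(\eta)$ (your $A(\eta)$) independent of $S$, and compare optimality at two values of $S$. The only cosmetic difference is that you add the two optimality inequalities directly to obtain $(\eta_{n+1}-\eta'_{n+1})(S^{(1)}-S^{(2)})\ge 0$, whereas the paper phrases the same computation as a proof by contradiction.
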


Leveraging \Cref{thm:pred_set_is_mon}, we may compute $\hat{C}_{\text{dual}}(X_{n + 1})$ (or $\hat{C}_{\text{dual, rand.}}(X_{n + 1})$) using the following two-step procedure. First, we identify the largest value of $S$ such that $\eta^S_{n + 1} < 1 - \alpha$ (or $\eta^S_{n+1} < U$). Second, denoting this upper bound by $S^*_{n + 1}$, we output all $y$ such that $S(X_{n + 1}, y) \leq S^*_{n + 1}$. The second step is straightforward for all commonly used conformity scores. For example, if $S(X_{n+1},y) = |\hat{\mu}(X_{n+1}) - y|$, the prediction set becomes $\hat{\mu}(X_{n+1}) \pm S^*_{n + 1}$. 

The monotonicity of the dual variable in $S$ allows for more than one approach to the first step of this procedure. If we have no additional information about the structure of the optimization problem over $\eta$, it is always possible to run a binary search over $S$ to find the largest value such that $\eta^S_{n + 1}$ is less than the targeted cutoff.  On the other hand, for the typical use-case of this method, i.e., when we fit an unregularized quantile regression over a finite-dimensional function class, it is considerably more efficient to compute this cutoff for $S$ by applying standard tools from linear program sensitivity analysis. We defer a detailed description of our approach to \Cref{alg:lp_sensitivity}. 

\begin{figure}[ht]
    \centering
    \includegraphics[width=\textwidth]{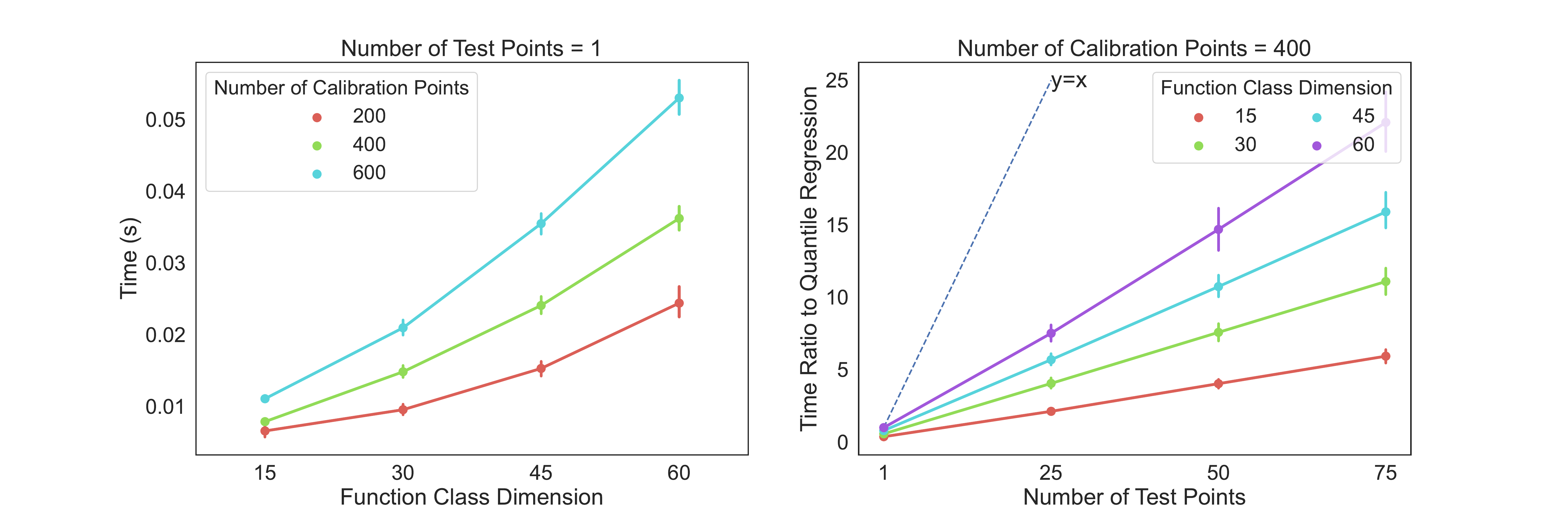}
    \caption{Evaluation of the computational efficiency of our prediction set construction. The left panel displays the time taken to fit a single test point, while the right panel shows the ratio between the time taken by our method versus the cost of fitting a single quantile regression per test point. Both plots display results for the unrandomized version of our conditional calibration method implemented in its one-sided form with conformity score $S(x, y) = y$ and $1-\alpha=0.9$, i.e., we estimate the $0.9$-quantile of $Y \mid X$. Data for this simulation is generated from the Gaussian linear model, $Y_i = X_i^\top w + \epsilon_i$ where $X_i \sim \mathcal{N}(0, I_d)$, $\epsilon_i \sim \mathcal{N}(0, 1)$, and $w \sim \text{Unif}(\mathcal{S}^{d - 1})$, and our method is implemented using the linear function class $\mathcal{F} := \{\beta_0 + X^\top \beta_1: \beta_0 \in \mmr,\ \beta_1 \in \mathbb{R}^d\}$. Dots and error bars show means and confidence intervals from 25 trials.}
    \label{fig:comp_comparison}
\end{figure}

\Cref{fig:comp_comparison} displays the computational efficiency of this sensitivity analysis method on a 2020 MacBook Pro with an Intel Core i5 processor and 16GB of RAM. The left panel of the figure displays  the estimated time to construct one prediction set over varying function classes and calibration set sizes. We find that our method is computationally efficient, and that its time complexity depends much more on the dimension than the sample size. In general, the bulk of the computational cost consists of fitting a single quantile regression on the calibration set. From there, predictions for new test points are obtained by an efficient procedure for updating the quantile fit (see \Cref{alg:lp_sensitivity} in the Appendix for details). The computational benefit of this approach is displayed in the right panel of \Cref{fig:comp_comparison}, which compares the cost of running our procedure against the time it would take to run a single quantile regression for each test point. We see that our updating procedure is substantially faster across a wide range of calibration set sizes and function class dimensions.

\section{Real data experiments}\label{sec:real_data}

\subsection{Communities and crime data}\label{sec:candc_data}

We now illustrate our methods on two real datasets. For our first experiment, we consider the Communities and Crime dataset (\cite{Dua2019, Redmond2002}). In this task, the goal is to use the demographic features of a community to predict its per capita violent crime rate. To make our analysis as interpretable as possible, we use only a small subset of the covariates in this dataset: population size, unemployment rate, median income, racial make-up (given as four columns indicating the percentage of the population that is Black, White, Hispanic, and Asian), and age demographics (given as two columns indicating the percentage of people in the 12-21 and 65+ age brackets). 

Our goal in this section is not to give the best possible prediction intervals. Instead, we perform a simple expository analysis that is designed to demonstrate some possible use cases of our method. For this purpose, we assume that the practitioner's primary concern is that standard prediction sets will provide unequal coverage across communities with differing racial make-ups. Additionally, we suppose that the practitioner does not have any particular predilections for achieving coverage conditional on age, unemployment rate, or median income. We encode these preferences as follows: let $\Phi(X_i)$ denote the length five vector consisting of an intercept term and the four racial features and define $\mathcal{F}_K$ to be the RKHS given by the Gaussian kernel $K(X_i,X_j) = \exp(-4\|X_i - X_j\|_2^2)$ (note that since all variables in this dataset have been previously normalized to lie in $[0,1]$, we do not make any further modifications before computing the kernel). Then, proceeding exactly as in \Cref{sec:rkhs_specialization} we run our method with the function class $\mathcal{F} := \{f_K(\cdot) + \Phi(\cdot)^\top\beta : f_K \in \mathcal{F}_K,\ \beta \in \mmr^5\} $ and penalty $\mathcal{R}(f_K(\cdot) + \Phi(\cdot)^\top\beta) := \lambda \|f_K\|^2_K$. To get $\lambda$, we run cross-validation on the calibration set. While this does not strictly follow the theory of \Cref{sec:infinite_dim}, our results indicate that choosing $\lambda$ in this manner does not negatively impact the coverage properties of our method in practice (see Figures \ref{fig:cov_heatmap} and \ref{fig:cross_val_cov_comp} below). Finally, we set the conformity scores to be the absolute residuals of a linear regression of $Y$ on $X$ and the target coverage level to be $1-\alpha = 0.9$. The sizes of the training, calibration, and test sets are taken to be 650, 650, and 694, respectively and we use the unrandomized prediction set, $\hat{C}_{\text{dual}}$ throughout.

\begin{figure}[ht]
    \centering
    \includegraphics[scale=0.35]{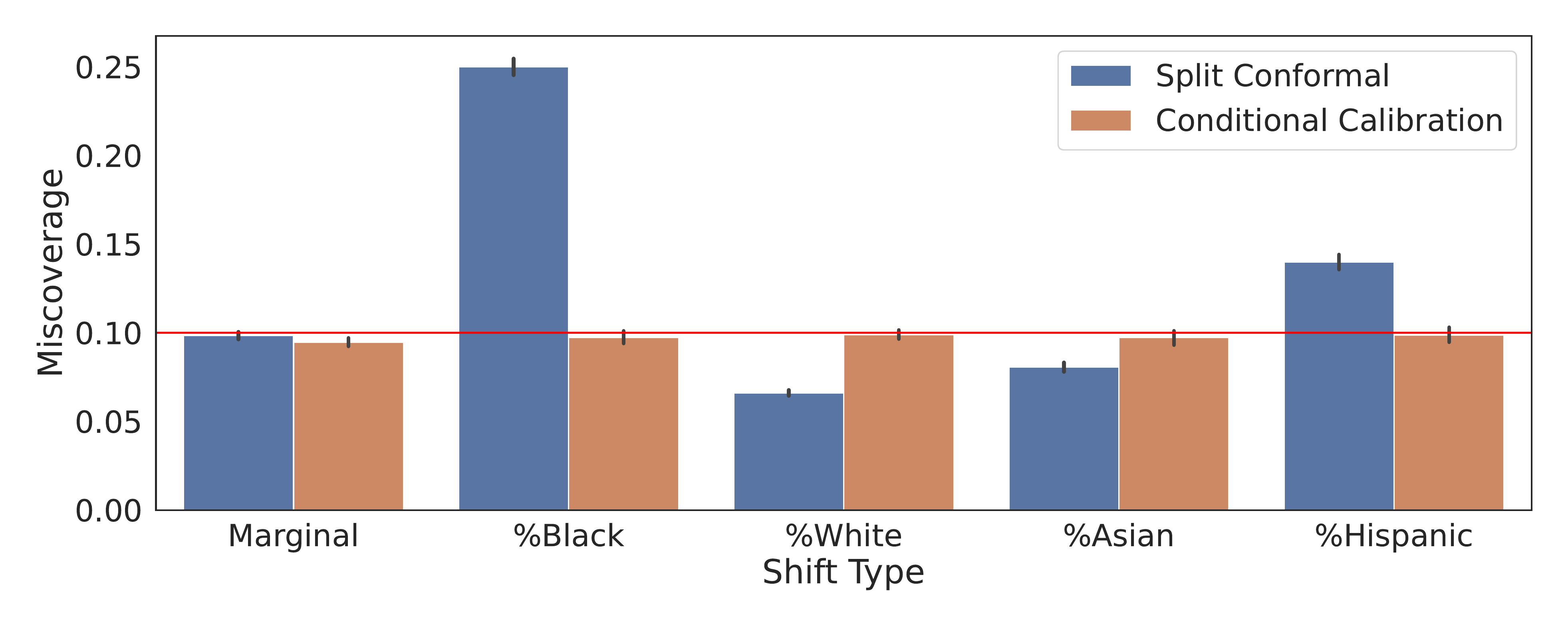}
    \caption{Empirical miscoverage achieved by split conformal inference (blue) and the unrandomized version of our method (orange) both marginally over all data points and across linear re-weightings of the four racial categories. The red line shows the target level of $\alpha = 0.1$ and the error bars indicate 95\% confidence intervals for the value of $\mmp(Y_{n+1} \notin \hat{C}(X_{n+1}))$ obtained by averaging over 200 train-calibration-test splits.}
    \label{fig:racial_cov}
\end{figure}

\Cref{fig:racial_cov} shows the empirical miscoverages obtained by our algorithm and split conformal prediction under the linearly re-weighted distributions $\mmp_f$, for $f \in \{x \mapsto 1,\ x \mapsto x_{\rm \%Black},\ x \mapsto x_{\rm \%White},\ x \mapsto x_{\rm \%Hispanic},\ x \mapsto x_{\rm \%Asian}\}$. As expected, our method obtains the desired coverage level under all five settings, while split conformal is only able to deliver marginal validity. 

\begin{figure}[!b]
    \centering
    \includegraphics[scale=0.35]{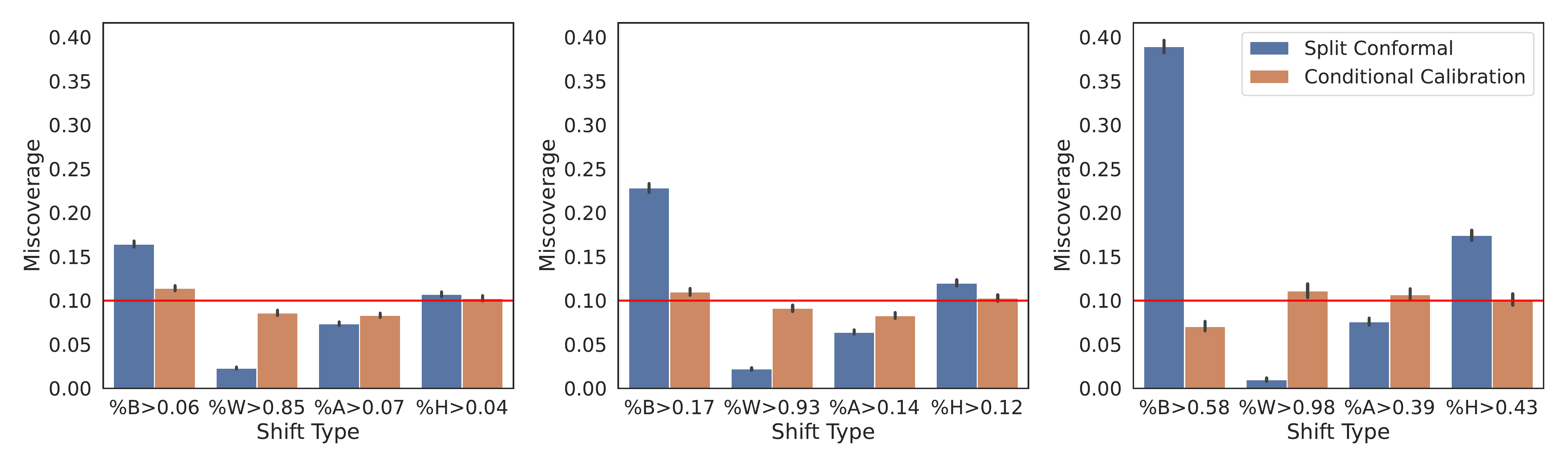}
    \caption{Empirical miscoverage achieved by split conformal prediction (blue) and the unrandomized version of our method (orange) across high racial representation subgroups. Panels from left to right show the miscoverage when the percentage of the community that falls into a racial group is in the top $p$-percentile, for $p \in \{50, 70, 90\}$. To conserve space, the short forms B, W, A, and H are used to indicate the racial categories Black, White, Asian, and Hispanic, respectively. Red lines shows the target level of $\alpha = 0.1$ and the black error bars indicate 95\% confidence intervals for the value of $\mmp(Y_{n+1} \notin \hat{C}(X_{n+1}))$ obtained by averaging over 200 train-calibration-test splits.}
    \label{fig:racial_cutoff_cov}
\end{figure}

In practice, the user is unlikely to only care about the performance under linear re-weightings. For example, they may also want to have predictions sets that are accurate on the communities with the highest racial bias, e.g.~the communities whose percentage of the population that is black is in the top $90$th percentile. Strictly speaking, our method will only provide a guarantee on these high racial representation groups if the corresponding subgroup indicators are included in $\mathcal{F}$. However, intuitively, even without explicit indicators, linear re-weightings should already push the method to accurately cover communities with high racial bias. Thus, we may expect that even without adjusting $\mathcal{F}$, the method implemented above will already perform well on these groups. To investigate this, \Cref{fig:racial_cutoff_cov} shows the miscoverage of our method across high racial representation subgroups. Formally, we say that a community has high representation of a particular racial group if the percentage of the community that falls in that group is in the top $p$-percentile. The three panels of the figure then show results for $p= 50$, $70$, and $90$, respectively. We find that even without any explicit indicators for the subgroups, our method is able to correct the errors of split conformal prediction and provide improved coverage in all settings.

\begin{figure}[ht]
  \centering
  \begin{tabular}{@{}c@{}}
    \includegraphics[width=0.45\textwidth]{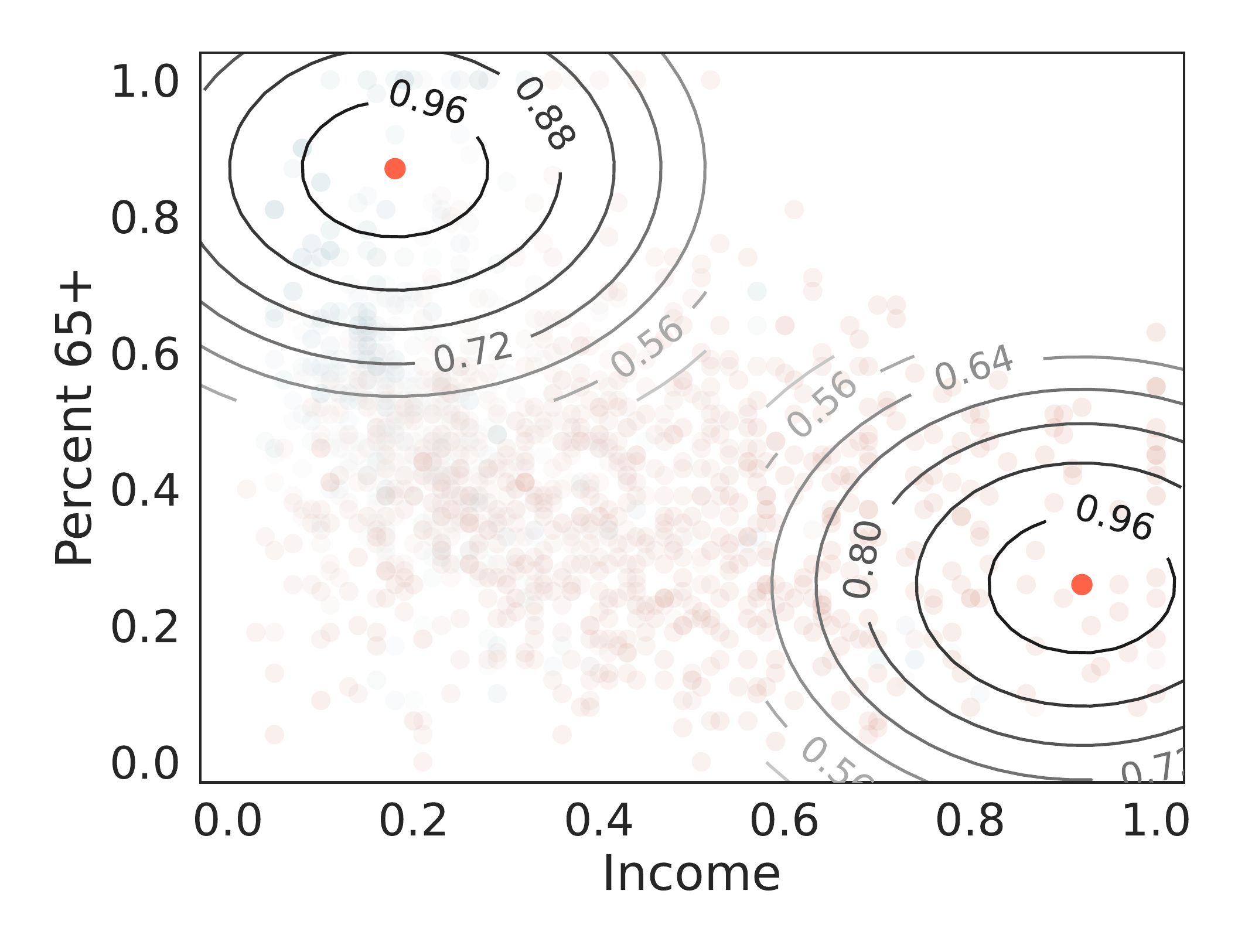}
  \end{tabular}
  \vspace{\floatsep}
  \begin{tabular}{@{}c@{}}
    \includegraphics[width=\textwidth]{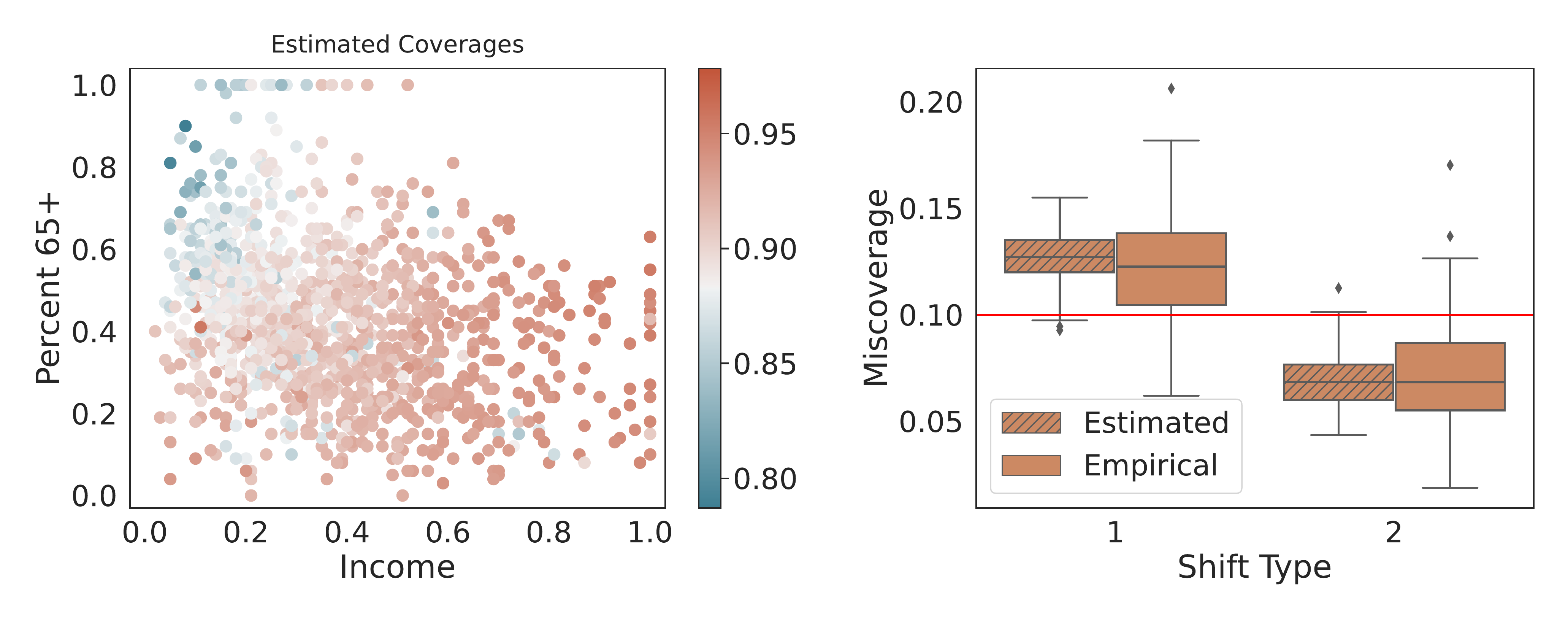}
  \end{tabular}
    \caption{Estimated coverages for re-weightings of the data according to a Gaussian kernel centered at each of the plotted data points. The bottom-left panel shows the estimated coverages output by our method, while the top panel gives level curves of the kernel at two specific points highlighted in red. Finally, the bottom-right panel compares boxplots of the empirical (solid) and estimated (striped) values of $\mmp_{f^x}(Y_{n+1} \notin \hat{C}(X_{n+1}))$ obtained over 200 train--calibration-test splits for the two highlighted values of $x$. The red line denotes the target level of $\alpha = 0.1$.}
        \label{fig:cov_heatmap}
\end{figure}

In addition to providing exact coverage over racial re-weightings, our method also provides a simple procedure for evaluating the coverage across the other, more flexibly fit, covariates. More precisely, let 
\[
(\hat{\beta}_n,\hat{g}_{n,K}) = \argmin_{\beta \in \mmr^d, g_K \in \mathcal{F}_K} \frac{1}{n} \sum_{i=1}^n \ell_{\alpha}(g_K(X_i) + \Phi(X_i)^\top\beta, S_i)  + \lambda \|g_K\|^2_K,
\]
denote the function fit on the calibration data and recall that by the results of \Cref{sec:infinite_dim} we expect that for any non-negative re-weighting $f(\cdot) =  f_K(\cdot) + \Phi(\cdot)^\top\beta $,
\begin{equation}\label{eq:cov_approximation_for_RKHS}
\mmp_f(Y_{n+1} \in \hat{C}(X_{n+1})) \cong 1-\alpha - 2\lambda \frac{\langle \hat{g}_{n,K} , f_K\rangle}{\frac{1}{n}\sum_{i=1}^n f(X_i)}.
\end{equation}
 For the Gaussian RKHS, a natural set of non-negative weight functions are the local re-weightings $f^x(y) := K(x,y) = \exp(-4\|x - y \|^2)$, which emphasize coverage in a neighbourhood around the fixed point $x$. 
 
 The bottom-left panel of \Cref{fig:cov_heatmap} plots the values of $1-\alpha - 2\lambda \frac{\langle \hat{g}_{n,K} , f^x\rangle}{\frac{1}{n}\sum_{i=1}^n f^x(X_i)}$ for all points $x$ appearing in the training and calibration sets. We see immediately that our prediction sets will undercover older communities and overcover communities with high median incomes. To aid in the interpretation of this result, the top panel of the figure indicates the level curves of $K(x,\cdot)$ for two specific choice of $x$. Finally, the bottom-right panel compares the estimates \eqref{eq:cov_approximation_for_RKHS} to the realized empirical coverages
\[
\sum_{i=1}^{694} \frac{f^x(\tilde{X}_i)}{ \frac{1}{694} \sum_{j=1}^{694} f^x(\tilde{X}_j)} \bone\{\tilde{Y}_{i} \in \hat{C}(\tilde{X}_i)\},
\]
for the same two values of $x$. Here, $\{(\tilde{X}_i,\tilde{Y}_i)\}_{i=1}^{694}$ denotes the test set. We see that as expected \eqref{eq:cov_approximation_for_RKHS} is a highly accurate estimate of the true coverage at both values of $x$. To further understand the degree of localization in these plots, it may be useful to note that this re-weighting yields an effective sample size of $\frac{(\sum_{i=1}^{694} f^x(\tilde{X}_i))^2}{  \sum_{i=1}^{694} f^x(\tilde{X}_i)^2} \cong 275$ at the two red points.

Overall, we find that our procedure provides the user with a highly accurate picture of the coverage properties of their prediction sets. In many practical settings, plots like the bottom-left panel of \Cref{fig:cov_heatmap} may prompt practitioners to adjust the quantile regression to protect against observed directions of miscoverage. While such an adjustment will not strictly follow the theory of \Cref{sec:finite_dim,sec:infinite_dim}, so long as the user is careful not to run the procedure so many times as to induce direct over-fitting to the observed miscoverage, small adjustments will likely be permissible. In practice, this type of exploratory analysis may allow practitioners to discover important patterns in their data and tune the prediction sets to reflect their coverage needs.

\subsection{Comparison against existing methods}

A variety of alternative methods for obtaining conditional coverage in conformal inference have been proposed in the literature. Many of these methods are designed to asymptotically achieve exact conditional coverage, i.e., $\mmp(Y_{n+1} \in \hat{C}(X_{n+1}) \mid X_{n+1}) \stackrel{\mmp}{\longrightarrow} 1 - \alpha$. Here, we compare against two such approaches and demonstrate why the more precise finite-sample guarantees of our method may be preferable. 

\subsubsection{Comparison against conformalized quantile regression}

Our first comparison is to the conformalized quantile regression (CQR) method of \citet{Romano2019}. They consider a version of split conformal inference in which the training set is used to fit estimates of the conditional quantiles of $Y \mid X$ and the calibration set is used to adjust these estimates to guarantee marginal coverage. Here, we implement a two-sided version of their procedures in which the upper and lower quantiles are calibrated separately. In particular, let $\hat{q}_{\tau}(X)$ denote an estimate of the $\tau$-quantile of $Y \mid X$. Let $S_{\tau}(X,Y) := Y - \hat{q}_{\tau}(X)$ be a conformity score measuring the distance between $Y$ and the quantile estimate and $c_{\tau}$ denote the $\tau$-quantile of the calibration scores $\{S_{\tau}(X,Y)\}_{i=1}^n$. Then, we define the two-sided CQR prediction set as
\begin{equation}\label{eq:cqr_set}
\hat{C}_{CQR}(X_{n+1}) := \{y : S_{\alpha/2}(X_{n+1},y) > c_{\alpha/2},\ S_{1-\alpha/2}(X_{n+1},y) \leq c_{1-\alpha/2} \}.
\end{equation}
As alluded to above, if $\hat{q}_{\tau}(X)$ is a consistent estimate of the true conditional quantile function of $Y \mid X$, then this set will asymptotically achieve exact conditional coverage \citep{Sesia2020}.

It is important to note that our approach is not in direct competition with CQR and both methods can be used in combination. Namely, here we will consider an implementation of our procedure in which the quantiles, $c_{\alpha/2}$ and $c_{1-\alpha/2}$ are replaced by adaptive estimates from our method. Implementing two-sided fitting requires requires minor modifications of the dual formulation of our prediction set and we refer the reader to \Cref{sec:app_two-sided} for details.

To compare these methods we once again employ the Communities and Crimes dataset. We consider three baseline approaches, 1) split conformal prediction with linear regression and the residual conformity score, $S(X,Y) := Y - \hat{\gamma}_0 - X^\top\hat{\gamma}_1$ for $(\hat{\gamma}_0,\hat{\gamma}_1)$ obtained by ordinary least squares, 2) split conformal prediction implemented using the CQR procedure of \citet{Romano2019} where the estimates $\hat{q}_{\tau}(X)$ are obtained using linear quantile regression, 3) the same procedure but with $\hat{q}_{\tau}(X)$ obtained using a quantile random forest. In all three cases we calibrate the upper and lower quantiles separately using either the formulation in \eqref{eq:cqr_set} or, in the case of ordinary least squares, by estimating the upper and lower quantiles of the residuals separately. We compare each of these baseline approaches against implementations of our methods where the split conformal calibration step is replaced by our method with function class, $\mathcal{F}:=\{\beta_0 + X^\top\beta_1 : \beta_0 \in \mmr,\ \beta_1 \in \mmr^d\}$ defined as the set of affine functions over the covariates from the previous section (i.e. population size, unemployment rate, median income, as well a set of race and age-based features). 

\begin{figure}
    \centering
    \includegraphics[width=\textwidth]{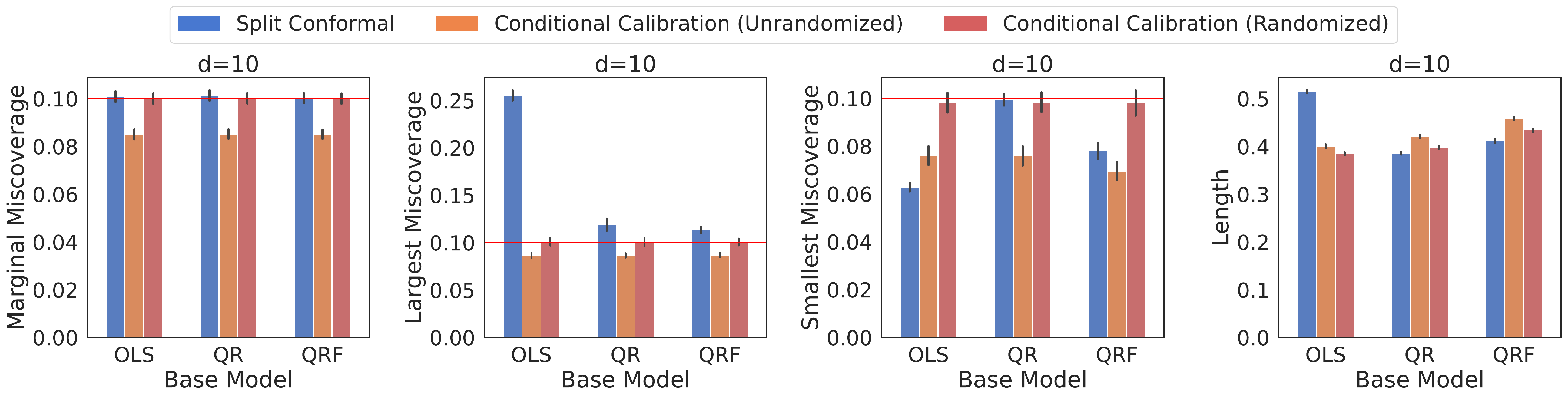}
    \caption{Comparison of split conformal inference (blue) against the unrandomized (orange) and randomized (red) implementations of our conditional calibration method. All three procedures are implemented with three different conformity scores derived from ordinary least squares (OLS), linear quantile regression (QR), and quantile random forests (QRF). Panels from left to right show the average marginal miscoverage, largest and smallest miscoverages over linear re-weightings, and prediction set lengths. Bars in the figure display averages over 200 trials, where in each trial the points in the Communities and Crimes dataset are divided uniformly at random into a training set of size 650, a calibration set of size 650, and a test set of size 694.}
    \label{fig:candc_small}
\end{figure}

\Cref{fig:candc_small} displays the results of this experiment. We find that all implementations of both split conformal and our randomized method achieve exact marginal covearge, while our unrandomized variant can slightly overcover (left panel). To evaluate the conditional coverage, the center two panels of the figure display estimates of the largest over and undercoverages obtained across linear re-weightings of the covariate space. Namely, letting $Z=(1,X)$ denote the augmented covariates, we estimate the smallest and largest values of
\[
\frac{\mme[Z_{n+1,j}\bone\{Y_{n+1} \notin \hat{C}(X_{n+1})\}]}{\mme[Z_{n+1,j}]},
\]
over all features $j$ (note that here all the features are non-negative). As expected from our theory, the unrandomized variant of our method always slightly overcovers, while the randomized version obtains exact coverage throughout. On the other hand, while CQR asymptotically achieves exact coverage in this setting, our simulations show deviations from the target level using both linear quantile regression and quantile forests. Importantly, we note that while the coverage deviations of linear quantile regression are relatively small here, its performance can worsen significantly in higher dimensions (see \Cref{fig:cc_cov_and_length}). Finally, the rightmost panel of the figure shows the lengths of the prediction sets returned by each method. We find that all of the prediction sets are of a similar size with the exception of the OLS implementation of split conformal prediction, whose predictions sets are relatively wide.

\subsubsection{Comparison against localized conformal prediction}

Our second comparison is to the localized conformal prediction method of \citet{Leying2022}. In this procedure, a localized kernel is used to re-weight the calibration data and focus on the data whose covariates are most similar to the test point. The exact method for accomplishing this is somewhat technical and we refer the reader to Theorem 3.2 of \citet{Leying2022} for details. Similar to conformalized quantile regression, localized conformal prediction guarantees exact marginal coverage in finite samples and asymptotic conditional coverage under appropriate choices of the kernel.

To compare localized conformal prediction against our approach, we once again employ the Communities and Crime dataset. We consider the Gaussian kernel $K(x_1,x_2) := \exp(-4\|x_1 - x_2\|_2^2)$ and take the conformity score $S(X,Y) := |Y - \hat{\gamma}_0 - X^\top\hat{\gamma}|$ to be the absolute values of the residuals from a linear regression. We then compare two methods, localized conformal prediction, and our method implemented with the function class given by the same kernel. More specifically, we implement our method with function class $\mathcal{F} := \{\beta_0 + f_k(X) : \beta_0 \in \mmr, f_k \in \mathcal{F}_K\}$ and penalty $\mathcal{R}(f_K) = \lambda\|f\|_K^2$, where $\lambda$ is chosen by cross-validation. 

\begin{figure}
    \centering
    \includegraphics[width=\textwidth]{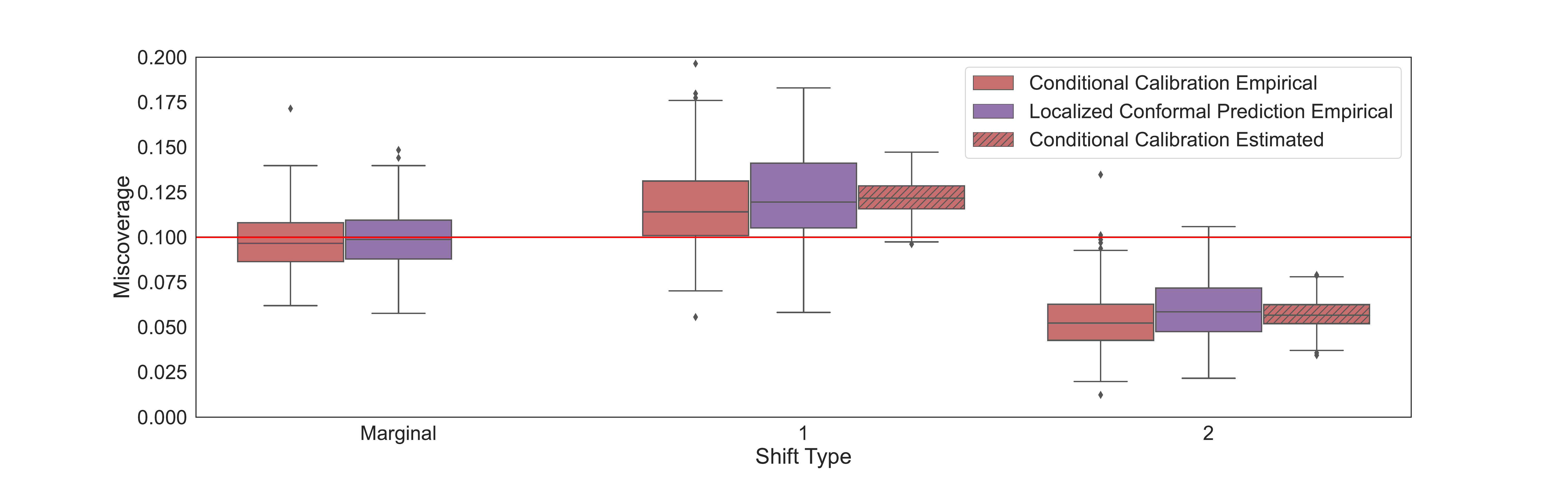}
    \caption{Comparison of the coverage properties of localized conformal prediction (purple) and our randomized conditional calibration method (red). The solid boxplots from left to right show the distributions of the calibration-conditional marginal coverage and kernel re-weighted coverage according to the shifts considered in \Cref{fig:cov_heatmap}, while hatched boxes display the corresponding coverage estimates obtained using the method of \Cref{prop:rkhs_inner_prod_est}. The feature space employed here is the same as in the previous sections and boxplots in the figure once again show results from 200 trials each containing 650 training points, 650 calibration points, and 694 test points.}
    \label{fig:lcp_comp}
\end{figure}

The results of this experiment are shown in \Cref{fig:lcp_comp}. We find that, as expected, both localized conformal prediction and the randomized version of our method achieve exact marginal. To evaluate their conditional properties, we compare the coverages obtained under the two kernel re-weightings from the previous section (namely the shifts considered in \Cref{fig:cov_heatmap}). We find that, as expected, the coverage of our method deviates from the target level, but critically this deviation is predictable and well approximated by the estimation procedure given in \Cref{prop:rkhs_inner_prod_est}. On the other hand, although localized conformal prediction obtains similar empirical results to our method, it does not offer estimates of its coverage deviation. Thus, \textit{a priori} based on the theory of \citet{Leying2022}, one may expect localized conformal prediction to give exact coverage under these shifts, while in reality, it shows notable deviations away from the target level.

\subsection{Rxrx1 data}\label{sec:cell_data}

Our next experiment examines the RxRx1 dataset (\cite{Taylor2019}) from the WILDS repository (\cite{Koh2021}). This repository contains a collection of commonly used datasets for benchmarking performance under distribution shift. In the RxRx1 task, we are given images of cells obtained using fluorescent microscopy and we must predict which one of the 1339 genetic treatments the cells received. These images come from 51 different experiments run across four different cell types. It is well known that even in theoretically identical experiments, minor variations in the execution and environmental conditions can induce observable variation in the final data. Thus, we expect to see heterogeneity in the quality of the predictions across both experiments and cell types. 

Perhaps the most obvious method for correcting for this heterogeneity would be to treat the experiments and cell types as known categories and apply the group-conditional coverage method outlined in \Cref{sec:finite_dim}. However, if we did this, we would be unable to make predictions for new unlabeled images. Thus, here we take a more data driven approach and attempt to learn a good feature representation directly from the training set. 

To predict the genetic treatments of the cells we use a ResNet50 architecture trained on 37 of the 51 experiments by the original authors of the WILDS repository. We then uniformly divide the samples from the remaining 14 experiments into a training set and a test set. The training set is further split into two halves; one for learning a feature representation, and one to be used as the calibration set. To construct the feature representation, we take the feature map from the pre-trained neural network as input and run a $\ell_2$-regularized multinomial linear regression that predicts which experiment each image comes from. We then define our features to be the predicted probabilities of experiment membership output by this model and construct prediction sets using the linear quantile regression method of \Cref{sec:finite_dim}. To define the conformity scores for this experiment, let $\{\hat{w}_i(x)\}_{i=1}^{1339}$ denote the weights output by the neural network at input $x$. Typically, we would use these weights to compute the predicted probabilities of class membership, $\hat{\pi}_i(x) := \exp(\hat{w}_i(x))/(\sum_{j} \exp(\hat{w}_j(x)))$. Here, we add an extra step in which we use multinomial logisitic regression and the calibration data to fit a parameter $T$ that re-scales the weights. This procedure is known as temperature scaling and it has been found to increase the accuracy of probabilities output by neural networks (\cite{Angelopoulos2021, Guo2017}). After running this regression, we set $\hat{\pi}_i(x) := \exp(T\hat{w}_i(x))/(\sum_{j} \exp(T\hat{w}_j(x)))$ and following \citet{Romano2020b} define the conformity score function to be
\[
S(x,y) := \sum_{i : \hat{\pi}_{i}(x) > \hat{\pi}_{y}}\hat{\pi}_{i}(x).
\]

\begin{figure}[ht]
    \centering
    \includegraphics[width=\linewidth]{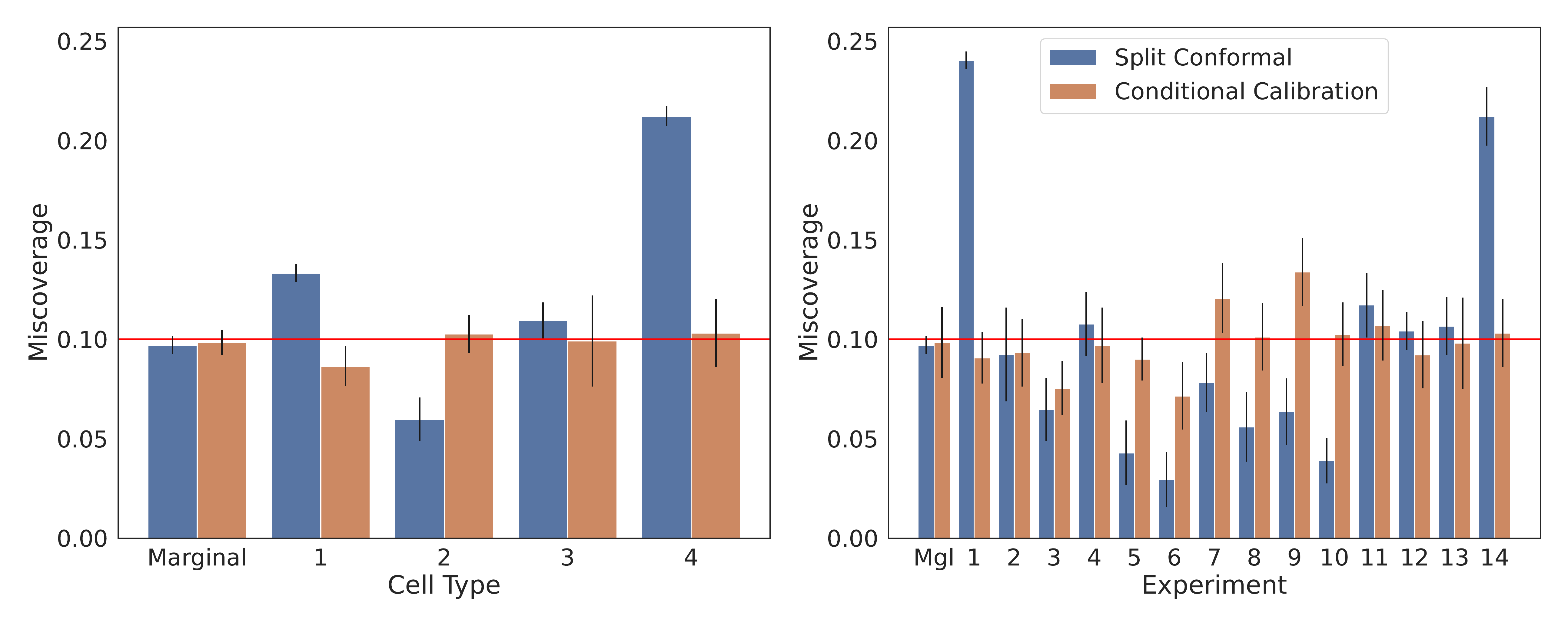}
    \caption{Empirical conditional miscoverage of the unrandomized version of our method (orange) and split conformal (blue) across cell types and experiments. Red lines indicate the target level of $\alpha = 0.1$ and black error bars show 95\% binomial confidence intervals for the calibration-conditional miscoverage $\mmp(Y_{n+1} \notin \hat{C}(X_{n+1}) | D_{\text{train}}, D_{\text{cal}})$, where $D_{\text{train}}$ and $D_{\text{cal}}$ denote the training dataset used to learn the feature representation and the calibration dataset used to implement our method, respectively.}
    \label{fig:rx_in_sample_results}
\end{figure}

The coverage properties of our unrandomized prediction sets are outlined in \Cref{fig:rx_in_sample_results}. We see that while split conformal prediction has very heterogeneous coverage across experiments and cell types, our approach performs well on all groups. Thus, the learned feature representation successfully captures the batch effects in the data and thereby enables our method to provide the desired group-conditional coverage. 

The method described above is hardly the only way to construct a feature representation for this dataset. In \Cref{sec:app_rx1_extra_exp}, we consider an alternative approach in which we implement our method using the top principal components of the feature layer of the neural network as input. The idea here is that the batch effects (i.e. the cell types and experiment memberships) should induce large variations in the images that are visible on the top principal components. Thus, correcting the coverage along these directions will provide the desired conditional validity. In agreement with this hypothesis, we find that this procedure produces nearly identical results to those seen above, i.e., good coverage across all groups.

\section{Choosing the function class and regularizer}\label{sec:fun_choice}

In order to implement our method in practice, users must choose both the function class, $\mathcal{F}$, and regularizer, $\mathcal{R}(\cdot)$. In the real data examples above we have considered some sample choices for these quantities that were designed to illustrate the guarantees of our method. A practitioner may reasonably disagree with our selections, and through their choice of $\mathcal{F}$ and $\mathcal{R}(\cdot)$, look to prioritize different conditional targets. As in many statistical estimation problems, we expect the best performance of our method to be obtained when the choices of $\mathcal{F}$ and $\mathcal{R}$ are guided by domain-specific knowledge of the most important features to the prediction task at hand. To help practitioners in making these choices, we highlight a few considerations that may arise.

First, we remark that although our theory requires fixed choices of $\mathcal{F}$ and $\mathcal{R}$, we find empirically that running cross-validation on the training set does not significantly harm the coverage. To demonstrate this, we consider two different penalized implementations of our method. In the first, we run ridge-regularized linear quantile regression with $\mathcal{F} := \{\beta_0 + X^\top \beta_1 : \beta_0 \in \mmr,\ \beta_1 \in \mmr^d\}$ and $\mathcal{R}(\beta) := \lambda \|\beta\|^2$, while in the second we run kernel quantile regression with Gaussian kernel, $K(x,y) := \exp(-0.025\, \|x-y\|_2^2)$ and corresponding kernel-norm regularizer, $\mathcal{R}(g) := \lambda \|g\|_{K}^2$. \Cref{fig:cross_val_cov_comp} shows a comparison of the marginal coverage of our method obtained when $\lambda$ is either fixed in advance, or estimated using cross-validation. We see that although our theory requires $\lambda$ to be fixed, both approaches give near exact coverage empirically. 

\begin{figure}[ht]
    \centering
    \includegraphics[width = \textwidth]{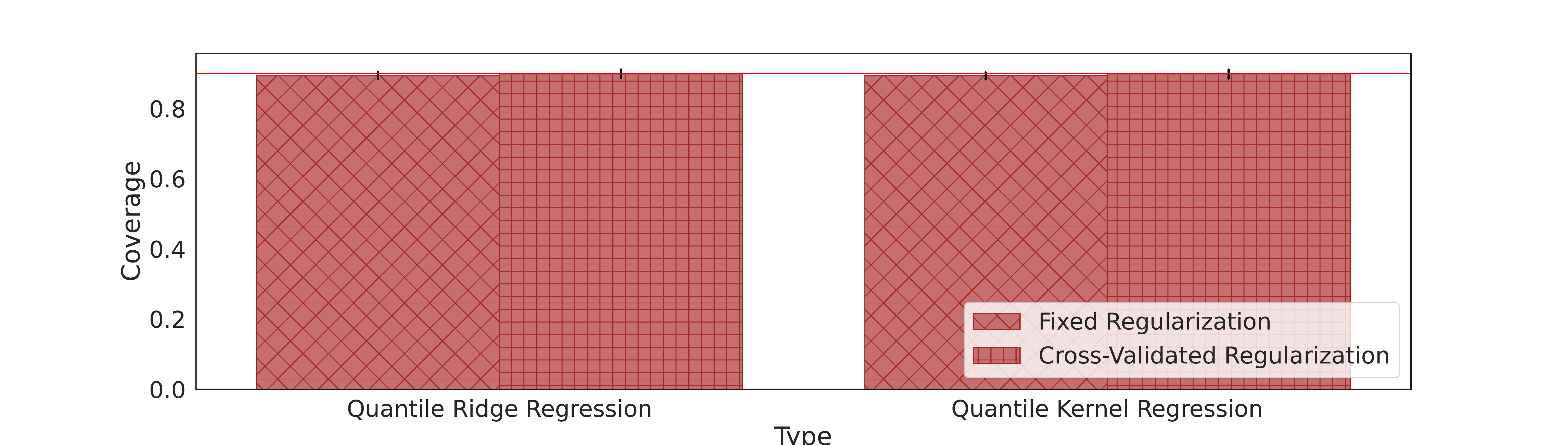}
    \caption{Marginal coverage obtained with (vertical and horizontal hatches) and without (diagonal hatches) cross-validation for the randomized version of our method. The red line shows the target level of $\alpha = 0.1$. Data for this experiment were generated from the Gaussian linear model in which $(X_i, \epsilon_i) \sim N(0,I_d)$, $Y_i = X_i^\top w + \epsilon_i$, and $w$ is a unit vector sampled uniformly at random. The conformity score is taken to be simply equal to $Y$ so that all of the data is used in the calibration set. Barplots show averages from $20$ trials where in each trial we sample $n=200$ calibration points and $100$ test points in dimension $d=50$.}
    \label{fig:cross_val_cov_comp}
\end{figure}

Perhaps even more critical than the regularization level, is the choice of function class $\mathcal{F}$. Of particular interest is the trade-off that occurs as the dimension of $\mathcal{F}$ increases. Indeed, while larger function spaces allow for a more rich set of conditional guarantees, we find empirically that they also lead to larger prediction sets. Thus, adding spurious features to $\mathcal{F}$ can harm the efficiency of our predictions.

\begin{figure}[ht]
    \centering
    \includegraphics[width = \textwidth]{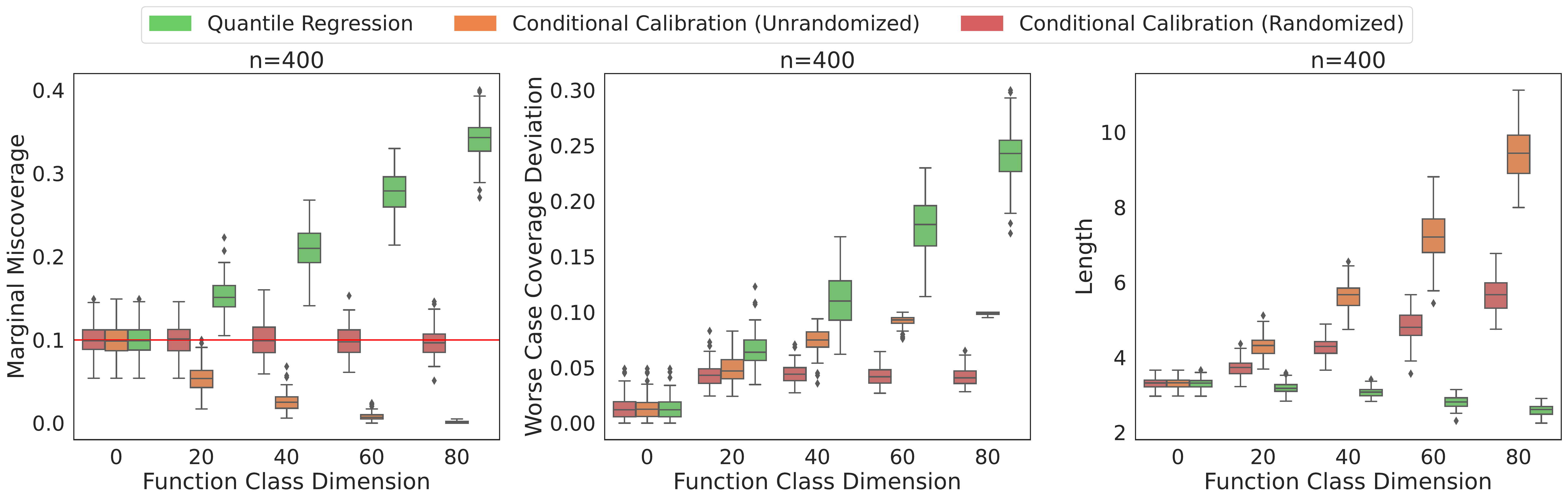}
    \caption{Estimated distributions of the calibration-conditional marginal miscoverage, worst-case conditional coverage deviation, and prediction set length of quantile regression (green), as well as the unrandomized (orange) and randomized (red) variants of our conditional calibration method. Boxplots show results from 100 trials, where in each trial the coverages and lengths are evaluated empirically on 1000 test points.}
    \label{fig:large_dim_tradeoffs}
\end{figure}

To demonstrate this phenomenon, we consider a dataset with many irrelevant features that have no relationship with the response. In particular, we work with a Gaussian model in which $X_i \sim N(0,I_d)$, $Y_i \sim N(0,1)$, and $Y_i$ is independent of $X_i$. We implement our method with conformity score $S(X,Y) = Y$ and use linear quantile regression with $\mathcal{F} = \{\beta_0 + X^\top \beta_1 : \beta_0 \in \mmr,\ \beta_1 \in \mmr^d\}$, $\mathcal{R}(g) = 0$. \Cref{fig:large_dim_tradeoffs} plots estimated distributions of the calibration-conditional marginal miscoverage, worst-case conditional coverage deviation, and length of a two-sided implementation of our method in which the lower $\alpha/2$ and upper $1-\alpha/2$ quantiles are estimated separately (see \Cref{sec:app_two-sided}), i.e. the quantities,
\begin{equation*}
  \begin{gathered}
\mmp(Y_{n+1} \notin \hat{C}(X_{n+1}) \mid  \{(X_i,Y_i)\}_{i=1}^n), \ \sup_{1 \leq j \leq d+1} \left|\frac{\mme[(1,X_{n+1})_j(\bone\{Y_{n+1}\not\in\hat{C}(X_{n+1})\} - \alpha) \mid  \{(X_i,Y_i)\}_{i=1}^n]}{\mme[|(1,X_{n+1})_j|]} \right|,\\ \text{ and }
\mme[\text{length}(\hat{C}(X_{n+1})) \mid \{(X_i,Y_i)\}_{i=1}^n] .
  \end{gathered}
\end{equation*}
As a baseline, the figure also displays results for vanilla quantile regression. We find that while the marginal coverage of our method is robust to large function classes, both the worst-case coverage deviation and the average length of the prediction sets increase as the dimension grows. Recall that all the covariates here are spurious; therefore, the increase in these quantities is purely a result of over-fitting in high dimensions and not a reflection of the underlying complexity of the relationship between $X$ and $Y$. Finally, we remark that the growth in the worst-case coverage error is substantially worse for quantile regression compared to our method. Thus, while we cannot guarantee exact calibration-conditional validity over all $f \in \mathcal{F}$ in high dimensions, our method is practically superior to sensible alternatives.

\section{Acknowledgments}

E.J.C. was supported by the Office of Naval Research grant N00014-20-1-2157, the National Science Foundation grant DMS-2032014, the Simons Foundation under award 814641, and the ARO grant 2003514594. I.G. was also supported by the Office of Naval Research grant N00014-20-1-2157 and the Simons Foundation award 814641, as well as additionally by the Overdeck Fellowship Fund. J.J.C. was supported by the John and Fannie Hertz Foundation. The authors are grateful to Kevin Guo and Tim Morrison for helpful discussion on this work.

\newpage

\bibliographystyle{rss}
\bibliography{CondCovPaper}

\newpage

\appendix

\section{Appendix}

\subsection{Computational details for Section~\ref{sec:computation}}\label{sec:app_comp_set-up}

\subsubsection{Verification of the conditions of Section~\ref{sec:computation}}

In this section we verify that all of the quantile regression problems discussed in this paper satisfy the conditions of \Cref{sec:computation}. To do this we will check that 1) each quantile regression admits an equivalent finite dimensional representation 2) all of these finite dimensional representations (and thus also their infinite dimensional counterparts) satisfy strong duality.

The fact that the linear quantile regression of \Cref{sec:finite_dim} satisfies 1) and 2) is clear. For RKHS functions, let $K \in \mmr^{n+1 \times n+1}$ denote the kernel matrix with entries $K_{ij} = K(X_i,X_j)$. Let $K_i$ denote the $i_{\text{th}}$ row (equivalently column) of $K$. Then, the primal problem \eqref{eq:general_method} can be fit by solving the equivalent convex optimization program
\begin{align*}
  &  \underset{\gamma \in \mmr^{n+1},\ \beta \in \mmr^d}{\text{minimize}}\ \frac{1}{n+1} \sum_{i=1}^n \ell_{\alpha}(K_i^\top\gamma + \Phi(X_i)^\top\beta,S_i) + \frac{1}{n+1}\ell_{\alpha}(K_{n+1}^\top\gamma + \Phi(X_{n+1})^\top\beta,S) + \lambda \gamma^\top K \gamma,
\end{align*}
and setting $\hat{g}_S(\cdot) = \sum_{i=1}^{n+1}\hat{\gamma}_i K(X_i,\cdot) + \Phi(\cdot)^{\top} \hat{\beta}$, for $(\hat{\gamma},\hat{\beta})$ any optimal solutions. With this finite-dimensional representation in hand, it now follows that the primal-dual pair must satisfy strong-duality by Slater's condition. Moreover, it is also easy to see that in this context 
\begin{align*}
\mathcal{R}^*(\eta) & = - \min_{g_K \in \mathcal{F}_K,\ \beta \in \mmr^d} \lambda\|g_K\|_K^2 - \sum_{i=1}^{n+1} \eta_i (g_K(X_i) + \Phi(X_i)^\top \beta)\\
& = - \min_{\gamma \in\mmr^{n+1},\ \beta \in \mmr^d} \lambda \gamma^\top K \gamma - \sum_{i=1}^{n+1} \eta_i (K_i^\top\gamma + \Phi(X_i)^\top \beta),
\end{align*}
which is a tractable function that we can compute.

Finally, to fit Lipschitz functions we can solve the optimization program,
\begin{align*}
& \underset{\gamma \in \mmr^{n+1},\ \beta \in \mmr^d}{\text{minimize}}\ \frac{1}{n+1} \sum_{i=1}^n \ell_{\alpha}(\gamma_i + \Phi(X_i)^\top \beta,S_i) + \frac{1}{n+1} \ell_{\alpha}(\gamma_{n+1} + \Phi(X_{n+1})^\top \beta,S) + \lambda \max_{i \neq j}\frac{|\gamma_i - \gamma_j|}{\|X_i - X_j\|_2}.
\end{align*}
The idea here is that ${\gamma}_1,\dots,{\gamma}_{n+1}$ act as proxies for the values of ${f}_L(X_1),\dots,{f}_L(X_{n+1})$. These proxies can always be extended to a complete function on all of $\mathcal{X}$ using the methods of \citet{McShane1934, Whitney1934}. Once again, with this finite-dimensional representation in hand it is now easy to check that the primal-dual pair satisfies strong-duality by Slater's condition. Finally, in this context we have 
\begin{align*}
\mathcal{R}^*(\eta) & = - \min_{g_L \in \mathcal{F}_L,\ \beta \in \mmr^d} \lambda \text{Lip}(g_L) - \sum_{i=1}^{n+1} \eta_i (g_L(X_i) + \Phi(X_i)^\top \beta)\\
& = - \min_{\gamma \in\mmr^{n+1},\ \beta \in \mmr^d} \lambda \max_{i \neq j}\frac{|\gamma_i - \gamma_j|}{\|X_i - X_j\|_2} - \sum_{i=1}^{n+1} \eta_i \gamma_i,
\end{align*}
which is a tractable function.

\subsubsection{Additional algorithmic details}

The results below elucidate various approaches to computing the prediction set.  \Cref{alg:binary_search} describes how we might use a binary search algorithm to discover the critical $S$ at which $\eta^S_{n + 1}$ equals the target cutoff. 
\begin{algorithm}
 \KwData{Observed data $\{(X_1,S_1),\dots,(X_n,S_n)\} \cup \{X_{n+1}\}$, numerical error tolerance $\epsilon$, target cutoff $C \in \{1-\alpha,U\}$, range $[a,b]$ for $S_{n+1}$ ($a, b$ = \texttt{None} indicates that no bounds are known for $S_{n+1}$).}
 \If{$b =$ \texttt{None}}{
    $b = \max\{\max_{1 \leq i \leq n} S_i,1\}$\;
    \While{$\eta^b_{n+1} < C$}{
        $b=2b$;
    }
}
 \If{$a =$ \texttt{None}}{
    $a = \min\{\min_{1 \leq i \leq n} S_i, -1\}$\;
    \While{$\eta^a_{n+1} \geq C$}{
        $a=2a$;
    }
}
 \While{$b-a < \epsilon$}{
        \If{$\eta^{(a+b)/2}_{n+1} < C$}{
            $a = (a+b)/2$\;
        }\Else{
            $b = (a+b)/2$\;
        }
 }
 \Return $(a+b)/2$.
 \caption{\textit{Binary Search Computation of $S^*_{n + 1}$}}
 \label{alg:binary_search}
\end{algorithm}

As discussed in the main text, \Cref{alg:binary_search} may be inefficient if the optimization problem possesses additional structure. In particular, a common implementation of our method is to fit an unregularized quantile regression over a finite-dimensional function class. In this case, it is possible to exactly compute $\inf \{S : \eta^S_{n + 1} < C\}$. Our approach to this problem, which leverages standard tools from LP sensitivity analysis, relies on the following observation. 
\begin{proposition}\label{prop:zero_sol}
    Assume strong duality holds for \eqref{eq:generic_convex_opt}. Let $(\hat{g}_n, \hat{\eta}^n) \in \mathcal{F} \times \mmr^n$ denote a valid primal-dual solution of \eqref{eq:generic_convex_opt},  \eqref{eq:generic_dual} and $\hat{S}_{n + 1} := \hat{g}_n(X_{n + 1})$. Then, 
    $(\hat{\eta}^n,0)$ is a valid dual solution of the dual program with data $\{(X_1,S_1)\}_{i=1}^n \cup \{(X_{n+1}, \hat{S}_{n+1})\}$.
\end{proposition}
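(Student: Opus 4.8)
The plan is to exploit the fact that imputing the fitted value $\hat{S}_{n+1}=\hat{g}_n(X_{n+1})$ makes the new observation ``free'': it contributes nothing to the augmented quantile regression at the incumbent solution, so that solution stays optimal and the dual coordinate attached to the new point may be taken to be $0$.

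First I would eliminate the slack variables in \eqref{eq:generic_convex_opt}: minimizing over $p_i,q_i\ge 0$ subject to the equality constraints turns the primal objective into $g\mapsto \sum_{i=1}^n\ell_\alpha(g(X_i),S_i)+\ell_\alpha(g(X_{n+1}),\hat{S}_{n+1})+(\text{penalty})$. Since $\ell_\alpha\ge 0$ with equality exactly on the diagonal, $\ell_\alpha(\hat{g}_n(X_{n+1}),\hat{S}_{n+1})=0$, and (assuming the penalty coefficient agrees between the $n$-point and $(n{+}1)$-point programs, which is automatic when $\mathcal{R}\equiv 0$) the augmented objective at any $g$ is at least the $n$-point objective at $g$, which is at least the $n$-point objective at $\hat{g}_n$, which equals the augmented objective at $\hat{g}_n$. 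Hence $\hat{g}_n$ is a primal optimum of the augmented program, with the first $n$ slacks equal to those of the $n$-point solution and $p_{n+1}=q_{n+1}=0$ (forced, since the $(n{+}1)$-th equality reads $-p_{n+1}+q_{n+1}=0$ with both variables nonnegative); moreover the $n$-point and augmented optimal values coincide.

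Next I would verify that $(\hat{\eta}^n,0)$ is a dual optimum of \eqref{eq:generic_dual} for the augmented data. Feasibility is immediate since $0\in[-\alpha,1-\alpha]$. For optimality I would check that $(\hat{\eta}^n,0)$, paired with the augmented primal optimum of the previous paragraph, satisfies the KKT/complementary-slackness conditions of \eqref{eq:generic_convex_opt}--\eqref{eq:generic_dual}: the stationarity condition in $g$ is identical to the $n$-point one because the additional term $\eta_{n+1}g(X_{n+1})$ carries the coefficient $\eta_{n+1}=0$; the slackness conditions for indices $i\le n$ are inherited from $(\hat{g}_n,\hat{\eta}^n)$; and those for index $n+1$ hold trivially because $p_{n+1}=q_{n+1}=0$. (Alternatively, evaluate the augmented dual objective at $(\hat{\eta}^n,0)$: the term $\eta_{n+1}\hat{S}_{n+1}$ vanishes and $\mathcal{R}^*(\hat{\eta}^n,0)=\mathcal{R}^*(\hat{\eta}^n)$, since appending a zero to $\eta$ and a point $X_{n+1}$ to the data leaves the defining minimization unchanged, so this value equals the $n$-point dual value, which by strong duality equals the $n$-point primal value, which equals the augmented primal value by the previous paragraph. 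Strong duality for the augmented pair is the standing assumption.) Either route shows $(\hat{\eta}^n,0)$ attains the augmented primal optimum, hence is a valid dual solution.

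This proposition is essentially a verification, and the one place that needs care is the penalty bookkeeping: the chain of inequalities in the second paragraph is exact only when the penalty coefficients of the two programs match, which holds in the unregularized case $\mathcal{R}\equiv 0$ — the setting in which this result is actually invoked for LP sensitivity analysis — and otherwise just requires carrying the scaling along explicitly. Non-uniqueness of $\hat{g}_n$ and of the dual optimum is harmless, since the claim is only that the specific point $(\hat{\eta}^n,0)$ is a valid dual solution, and that is exactly what the construction produces. The payoff is that the last dual coordinate equals $0<1-\alpha$, placing $\hat{S}_{n+1}$ (weakly) inside the cutoff $\{S:\eta^S_{n+1}<1-\alpha\}$; this is the warm start from which the monotonicity of $S\mapsto\eta^S_{n+1}$ (\Cref{thm:pred_set_is_mon}) lets one sweep $S$ upward to locate $S^*_{n+1}$.
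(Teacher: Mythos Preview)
Your proposal is correct, and your parenthetical alternative is exactly the paper's one-line proof: feasibility of $(\hat{\eta}^n,0)$ is immediate, and the duality gap between $\hat{g}_n$ and $(\hat{\eta}^n,0)$ in the augmented program vanishes because the extra primal term $\ell_\alpha(\hat{g}_n(X_{n+1}),\hat{S}_{n+1})$ and the extra dual term $\eta_{n+1}\hat{S}_{n+1}$ are both zero, reducing both values to those of the $n$-point pair, which agree by strong duality. Your primary route via explicit KKT verification is more laborious but equivalent, and your caveat about the penalty coefficient is a legitimate bookkeeping point that the paper glosses over; it is harmless here since the proposition is only invoked for the unregularized LP case.
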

\begin{proof}
This result follows immediately from the fact that $(\hat{\eta}^n,0)$ is feasible and the duality gap of $\hat{g}_n$ and $(\hat{\eta}^n,0)$ is zero.
\end{proof}

We now recall some basic linear programming (LP) facts. First, assuming without loss of generality that $\Phi$ is full-rank, it follows immediately from standard LP theory that there exists a solution in which all but $p$ indices in $\eta$ line at the extremes $\{\alpha, 1 - \alpha\}$ (\cite{dantzig2003linear}). Moreover, the remaining non-trivial coordinates of $\eta$ are piecewise linear in $S$, and the slope of the function given by $S \mapsto \eta^S_{n + 1}$ can be obtained by solving a linear equation (\cite{dantzig2003linear}). Thus, obtaining the critical value of $S$ at which $\eta^S_{n + 1}$ exceeds our cutoff amounts to tracing a piecewise linear function from $\hat{S}_{n + 1}$ until the cutoff is reached. \Cref{alg:lp_sensitivity} provides an explicit description of this procedure, which closely resembles the simplex method. Note that the algorithm is initialized at $\hat{S}_{n + 1}$ since the dual solution at that point is known to be zero (\Cref{prop:zero_sol}).

\begin{algorithm}
 \KwData{Observed data $\{(X_1,S_1),\dots,(X_n,S_n)\} \cup \{X_{n+1}\}$, target cutoff $C$, quantile $q$.}
 $\hat{g}_n(\cdot) = \text{LPSolver}(\{(X_i, S_i)\}_{i = 1}^n)$ \;
 $S^*_{n + 1} = \hat{g}_n(X_{n + 1})$, $\eta_{n + 1} = 0$\;
 $i_c = n + 1$ \tcc*[r]{candidate to enter basis}
 \While{$\eta_{n + 1} \neq C$}{
    $B = \text{ActiveSet}(\{1,\dots,n + 1\})$\;
    $d = -(\Phi_B^\top)^{-1} \Phi_{i^*}^\top$\;
    \If{$C > \eta_{n + 1}$} {
        $\Delta_{1:|B|} = \max \left(\frac{q - \eta_B}{d}, \frac{q - 1 - \eta_B}{d} \right)$ \tcc*[r]{element-wise division and maximum}
        $i^* = \arg \min_{i \in [|B|]} (\Delta_i)$ \tcc*[r]{candidate to exit basis}
        $\delta = \Delta_{i^*}$\;
    }
    \Else {
        $\Delta_{1:|B|} = \min \left(\frac{q - \eta_B}{d}, \frac{q - 1 - \eta_B}{d} \right)$ \tcc*[r]{element-wise division and minimum}
        $i^* = \arg \max_{i \in [|B|]} (\Delta_i)$ \tcc*[r]{candidate to exit basis}
        $\delta = \Delta_{i^*}$ \;
    }
    $\delta_c = \max(\min(\delta, q - 1 - \eta_{i_c}), q - \eta_{i_c})$\;
    $\eta_B = \eta_B + \delta_c  d$\;
    $\eta_{i_c} = \eta_{i_c} + \delta_c$\;
    \If {$\delta_c = \delta$} { 
        $B = B \setminus i^*$ \tcc*[r]{update basis} 
        $B = B \cup \{i_c\}$ \tcc*[r]{update basis} 
    }
    $A = (\Phi_B^\top)^{-1} \Phi_{B^c}^\top$\;
    $c = S_{B^c}^\top - S_B^\top A$\;
    $\Delta^S = c / A_{-1}$ \tcc*[r]{element-wise division by last row of $A$} 
    $B^* = \{i \in B \mid c_i \neq 0\}$ \;
    \If {$|B^*| = 0$} {
        $S_{n + 1} = \infty$ \;
        $\eta_{n + 1} = C$\;
    }
    \Else {
        \If {$C > 0$} {
            $i_c = \arg \min_{i \in B^*} \Delta^S_i $ \tcc*[r]{candidate to enter basis}
            $S^*_{n + 1} = S^*_{n + 1} + \Delta^S_{i_c}$ \;
        } \Else {
            $i_c = \arg \max_{i \in B^*} \Delta^S_i $ \tcc*[r]{candidate to enter basis}
            $S^*_{n + 1} = S^*_{n + 1} + \Delta^S_{i_c}$ \;
        }
    }
 }
 \Return $S^*_{n + 1}$.
 \caption{\textit{LP Sensitivity Computation of $S^*_{n + 1}$}}
 \label{alg:lp_sensitivity}
\end{algorithm}

We conclude this section with one final method for computing a conservative alternative to our prediction set that requires only a single quantile regression fit. For this, suppose we know an upper bound $M$ for $S_{n+1}$. Then, we may consider the conservative prediction set $\hat{C}_{\text{con}}(X_{n+1},M) := \{y : S(X_{n+1},y) \leq \hat{g}_M(X_{n+1})\}$. Our final proposition shows that whenever $M$ is a valid upper bound on $S_{n+1}$, $\hat{C}_{\text{con}}(X_{n+1},M)$ provides a conservative coverage guarantee.  To avoid subtle issues here related to the non-uniqueness of the optimal quantile function, we will assume here that $\hat{g}_M(X_{n + 1})$ is a min-norm solution of \eqref{eq:generic_convex_opt} (the choice of norm is not important).

\begin{proposition}\label{prop:conservative_computation}
Assume that $\mathcal{F}$ admits a strictly-convex norm $\|\cdot\|_{\mathcal{F}}$ and let $\{\hat{g}_S\}_{S \in \mmr}$ denote the corresponding min-norm solutions to \eqref{eq:generic_convex_opt}. Then, for all $M>0$, 
\[
\{y : S(X_{n+1},y) \leq \hat{g}_{S(X_{n+1},y)} (X_{n+1}),\ S \leq M\} \subseteq \{y : S(X_{n+1},y) \leq \hat{g}_{M} (X_{n+1})\}.
\]
\end{proposition}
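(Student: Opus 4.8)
The plan is to deduce the set inclusion from the fact that $S \mapsto \hat g_S(X_{n+1})$ is non-decreasing, and to prove that monotonicity directly from the optimality conditions of \eqref{eq:generic_convex_opt} rather than via the dual characterization of \Cref{thm:pred_set_is_mon}. Fix a $y$ in the left-hand set and write $s := S(X_{n+1},y)$, so that by hypothesis $s \le M$ and $s \le \hat g_s(X_{n+1})$; the goal is to show $s \le \hat g_M(X_{n+1})$. The case $s = M$ is immediate, since $\hat g_s$ and $\hat g_M$ then solve the same program, so I would assume $s < M$ and argue by contradiction: suppose $\hat g_M(X_{n+1}) < s$. Together with $s \le \hat g_s(X_{n+1})$ this gives the strict inequality $\hat g_s(X_{n+1}) > \hat g_M(X_{n+1})$, which is what I want to rule out.

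Writing $L_S(\cdot)$ for the objective of \eqref{eq:generic_convex_opt} with imputed $(n+1)$-th score $S$, the next step is to add the two optimality inequalities $L_s(\hat g_s) \le L_s(\hat g_M)$ and $L_M(\hat g_M) \le L_M(\hat g_s)$. Every term other than the two pinball contributions of the test point cancels, leaving an inequality of the form $\psi(\hat g_s(X_{n+1})) \le \psi(\hat g_M(X_{n+1}))$ where $\psi(\theta) := \ell_\alpha(\theta,s) - \ell_\alpha(\theta,M)$. A one-line computation of the pinball derivative gives $\psi'(\theta) = \bone\{s \le \theta < M\} \ge 0$, so $\psi$ is non-decreasing; combined with $\hat g_s(X_{n+1}) > \hat g_M(X_{n+1})$ this forces $\psi$ to agree at the two points, and hence both summed inequalities to be equalities. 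I would then read off that $\hat g_s$ and $\hat g_M$ are \emph{both} optimal for $L_s$ \emph{and} for $L_M$.

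The strict convexity of $\|\cdot\|_{\mathcal F}$ enters only at this last point. Since $\hat g_s$ is a min-$\|\cdot\|_{\mathcal F}$-norm optimum of $L_s$ and $\hat g_M$ is one of $L_M$, and each is optimal (in particular feasible) for the other program, they have the same norm $r$. If they were distinct, then — using that $\mathcal R$ is convex, so the minimizer set of $L_s$ is convex — their midpoint would also be optimal for $L_s$, and strict convexity of the norm would give it norm strictly below $r$ (note $r > 0$, as $r=0$ forces $\hat g_s = \hat g_M = 0$), contradicting the minimality of $\|\hat g_s\|_{\mathcal F}$. Hence $\hat g_s = \hat g_M$, contradicting $\hat g_s(X_{n+1}) > \hat g_M(X_{n+1})$. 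This yields $\hat g_M(X_{n+1}) \ge s$, so $y \in \{y : S(X_{n+1},y) \le \hat g_M(X_{n+1})\}$, as desired.

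I expect the cancellation-plus-monotonicity step of the second paragraph to be the crux: one must check carefully that summing the two optimality inequalities isolates exactly the $(n+1)$-th pinball terms, and that the resulting scalar function $\psi$ is monotone in precisely the direction needed to produce a contradiction — this is the quantile-regression analogue of comonotonicity of the fit with the response, and the imputation structure of \eqref{eq:generic_convex_opt} is what makes it available. The remaining ingredients (the $s=M$ reduction, convexity of the minimizer sets, and the strict-convexity squeeze) are routine, and the argument uses neither continuity of $S \mid X$ nor any distributional assumption.
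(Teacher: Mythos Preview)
Your proof is correct and follows essentially the same route as the paper's: assume the inclusion fails, establish the pinball inequality $L_S(\hat g_M)-L_S(\hat g_S)\le L_M(\hat g_M)-L_M(\hat g_S)$ (you via the monotonicity of $\psi$, the paper via a two-case computation), deduce that $\hat g_S$ and $\hat g_M$ are each optimal for the other's problem, and finish with the strict-convexity-of-norm contradiction. The only cosmetic difference is that you add the two optimality inequalities first and force both to be equalities in one stroke, whereas the paper applies the pinball inequality twice to obtain the two strict norm comparisons separately; both arguments implicitly use convexity of $\mathcal R$ so that the minimizer set is convex and the min-norm element is unique.
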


\begin{proof}[Proof of \Cref{prop:conservative_computation}]
 Let
\[
L_{n}(g) :=  \sum_{i=1}^{n} \ell_{\alpha}(g(X_i),S_i) \ \ \text{ and } \ \ L_{S}(g) = L_n(g) + \ell_{\alpha}(g(X_{n+1}),S).
\]
Assume for the sake of contradiction that $S < M$, $S \leq \hat{g}_S(X_{n+1})$, and $\hat{g}_M(X_{n + 1}) < S$. To obtain a contradiction, we claim that it is sufficient to prove that
\begin{align}
    L_S(\hat{g}_M) - L_S(\hat{g}_S) \leq L_M(\hat{g}_M) - L_M(\hat{g}_S). \label{eq:suff_cond}
\end{align}
To see why, note that since $\hat{g}_M$ is a global optimum of $g \ \mapsto L_M(g) + (n+1)\mathcal{R}(g)$, we must have that
\begin{align*}
    L_M(\hat{g}_M) + (n+1)\mathcal{R}(\hat{g}_M) &\leq L_M(\hat{g}_S) + (n+1)\mathcal{R}(\hat{g}_S).
\end{align*}
Rearranging this and applying \eqref{eq:suff_cond} gives the inequality
\begin{align*}
    (n+1)\mathcal{R}(\hat{g}_S) - (n+1)\mathcal{R}(\hat{g}_M) &\geq L_M(\hat{g}_M) - L_M(\hat{g}_S) \\
    & \geq 
    L_S(\hat{g}_M) - L_S(\hat{g}_S),
\end{align*}
or equivalently,
\begin{align*}
    L_S(\hat{g}_M) + (n+1)\mathcal{R}(\hat{g}_M) & \leq  L_S(\hat{g}_S) + (n+1)\mathcal{R}(\hat{g}_S).
\end{align*}
Since $\hat{g}_S$ is the unique min-norm minimizer of $g \mapsto L_S(g) + (n+1)\mathcal{R}(g)$ this implies that $\|\hat{g}_S\|_{\mathcal{F}} < \hat{g}_M\|_{\mathcal{F}}$. 

Now, by a completely symmetric argument reversing the roles of $\hat{g}_M$ and $\hat{g}_S$ we also have that
\[
 L_M(\hat{g}_S) + (n+1)\mathcal{R}(\hat{g}_S)  \leq  L_M(\hat{g}_M) + (n+1)\mathcal{R}(\hat{g}_M),
\]
which by identical reasoning implies that $\|\hat{g}_M\|_{\mathcal{F}} < \|\hat{g}_S\|_{\mathcal{F}}$. Thus, we have arrived at our desired contradiction.

To prove \eqref{eq:suff_cond} we break into two cases.
\paragraph{Case 1:} $\hat{g}_M(X_{n + 1}) < S \leq \hat{g}_S(X_{n + 1}) \leq M$. 
\begin{align*}
    L_S(\hat{g}_M) - L_S(\hat{g}_S) &= (1 - \alpha) (S - \hat{g}_M(X_{n + 1})) - \alpha (\hat{g}_S(X_{n + 1}) - S) + L_n(\hat{g}_M) - L_n(\hat{g}_S)  \\
    &= (1-\alpha)(\hat{g}_S(X_{n + 1}) - \hat{g}_M(X_{n + 1})) + S - \hat{g}_S(X_{n + 1}) + L_n(\hat{g}_M) - L_n(\hat{g}_S) \\
    & \leq (1-\alpha) (\hat{g}_S(X_{n + 1}) - \hat{g}_M(X_{n + 1})) + L_n(\hat{g}_M) - L_n(\hat{g}_S) \\
    &= L_M(\hat{g}_M) - L_M(\hat{g}_S)
\end{align*}

\paragraph{Case 2:} $\hat{g}_M(X_{n + 1}) < S < M \leq \hat{g}_S(X_{n + 1})$.
\begin{align*}
    L_M(\hat{g}_M) - L_M(\hat{g}_S) &= (1-\alpha)(M - \hat{g}_M(X_{n+1})) - \alpha(\hat{g}_S(X_{n+1}) - M) + L_n(\hat{g}_M) - L_n(\hat{g}_S)\\
    & = \alpha (\hat{g}_M(X_{n + 1}) - \hat{g}_S(X_{n + 1})) + M - \hat{g}_M(X_{n + 1}) + L_n(\hat{g}_M) - L_n(\hat{g}_S) \\
    &> \alpha (\hat{g}_M(X_{n + 1}) - \hat{g}_S(X_{n + 1})) + S - \hat{g}_M(X_{n + 1}) + L_n(\hat{g}_M) - L_n(\hat{g}_S) \\
    &= (1 - \alpha)(S - \hat{g}_M(X_{n + 1})) - \alpha (\hat{g}_S(X_{n + 1}) - S) + L_n(\hat{g}_M) - L_n(\hat{g}_S) \\
    &= L_S(\hat{g}_M) - L_S(\hat{g}_S).
\end{align*}

\end{proof}

\subsection{Additional experiments on the Rxrx1 data}\label{sec:app_rx1_extra_exp}

As an alternative to estimating the probabilities of experimental membership, here we consider constructing a feature representation for the Rxrx1 data using principal component analysis. Namely, we implement the linear quantile regression method of \Cref{sec:finite_dim} using the top principal components of the feature layer of the neural network as input. We choose the number of principal components for this analysis to be 70 based off of a visual inspection of a scree plot (\Cref{fig:scree_plot}). All other steps of this experiment are kept identical to the procedure described in \Cref{sec:cell_data}. 

Similar to the results of \Cref{sec:cell_data}, we find that the empirical coverage of this method is close to the target level across all cell types and experiments (\Cref{fig:rx1_pca_cov}).

\begin{figure}[ht]
    \centering
    \includegraphics[scale=0.29]{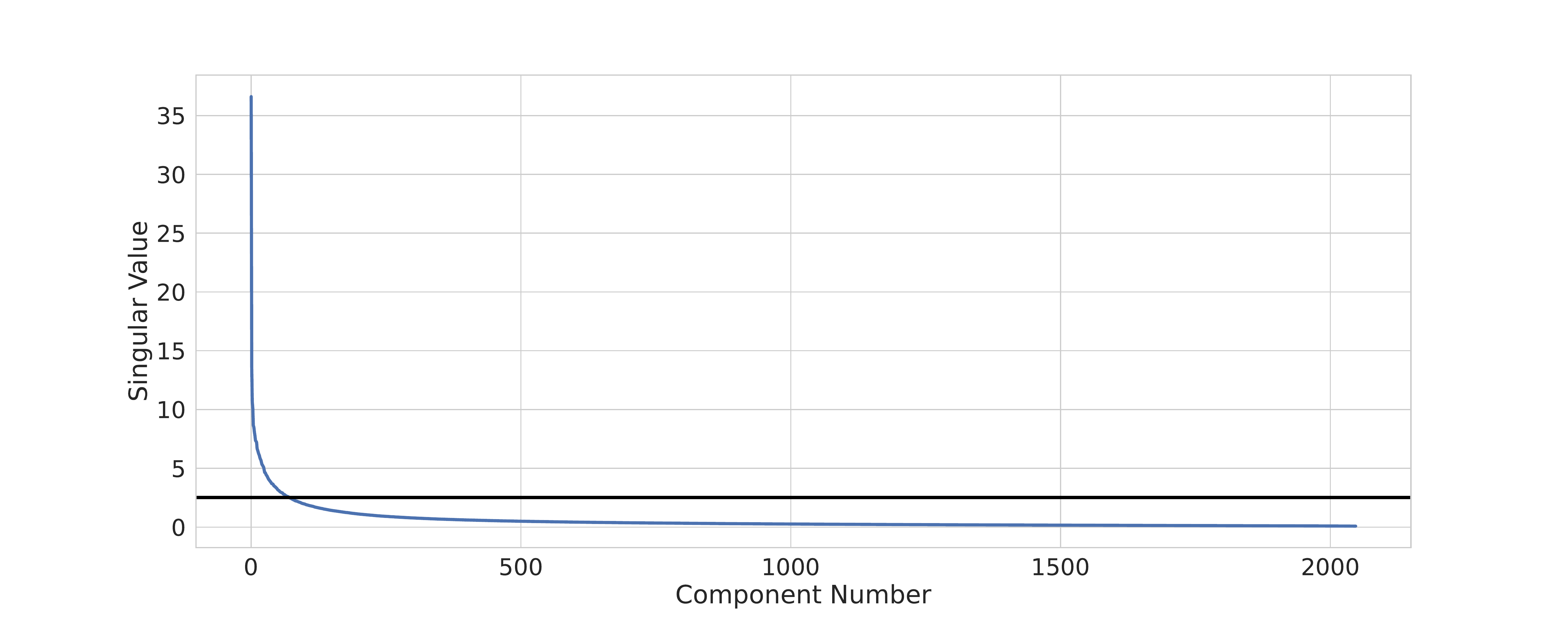}
    \caption{Scree plot for the singular value decomposition of the feature layer of the pretrained neural network. The horizontal line indicates the value of the 70th largest component.}
    \label{fig:scree_plot}
\end{figure}

\begin{figure}[ht!]
    \centering
    \includegraphics[scale=0.29]{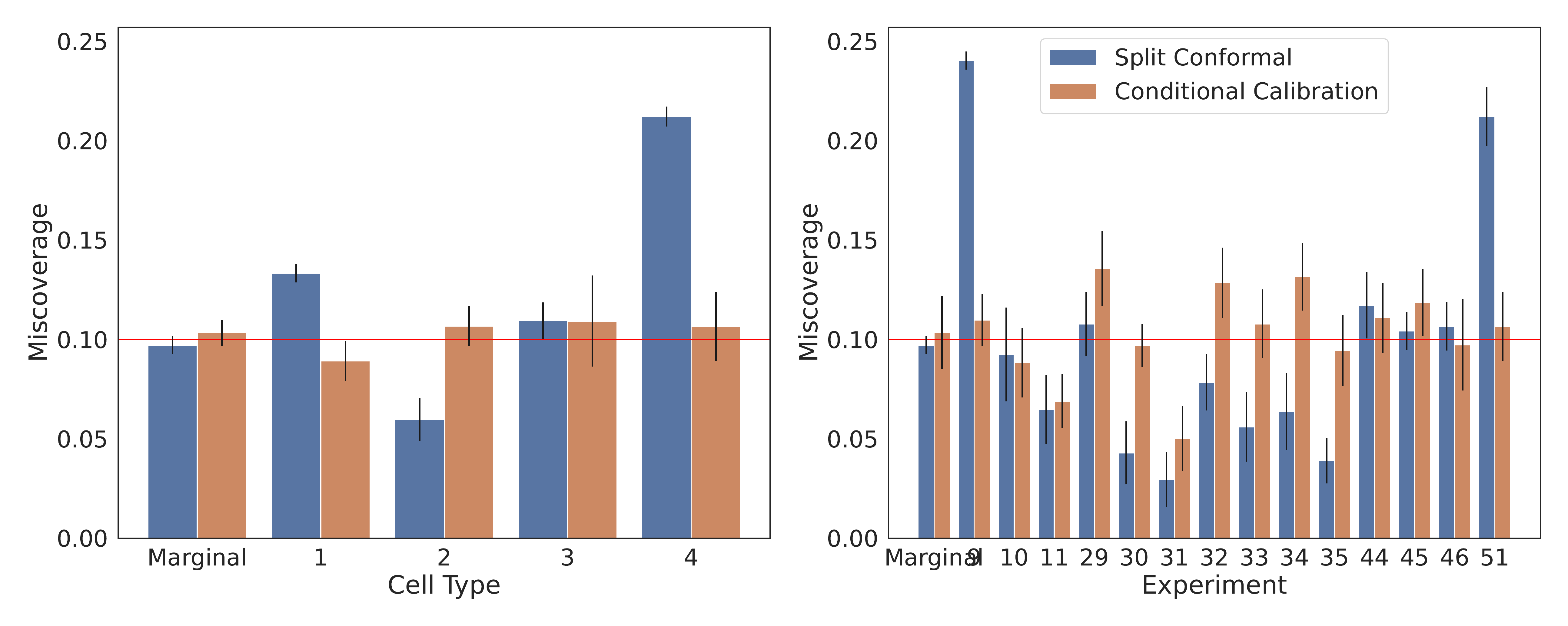}
    \caption{Empirical conditional miscoverage of the unrandomized version of our method (orange) and split conformal (blue) across cell types and experiments when principal component analysis is used to construct the features. Red lines indicate the target level of $\alpha = 0.1$ and black error bars show 95\% binomial confidence intervals for the calibration-conditional miscoverage $\mmp(Y_{n+1} \notin \hat{C}(X_{n+1}) | D_{\text{train}}, D_{\text{cal}})$, where $D_{\text{train}}$ and $D_{\text{cal}}$ denote the training dataset used to learn the feature representation and the calibration dataset used to implement our method, respectively.}
    \label{fig:rx1_pca_cov}
\end{figure}

\subsection{Proofs of the main coverage guarantees}

In this section we prove the top-level coverage guarantees of our method. We begin by proving our most general result, \Cref{thm:infinite_dim_result}, which considers a generic function class $\mathcal{F}$ and penalty $\mathcal{R}$. Then, by restricting the choices of $\mathcal{F}$ and $\mathcal{R}$, we obtain \Cref{thm:finite_dim_result} and \Cref{cor:group_coverage} as special cases.

\begin{proof}[Proof of \Cref{thm:infinite_dim_result}]
We begin by examining the first order conditions of the convex optimization problem \eqref{eq:general_method}. Namely, since $\hat{g}_{S_{n+1}}$ is a minimizer of 
\[
g \mapsto \frac{1}{n+1} \sum_{i=1}^{n+1} \ell_{\alpha}(g(X_i),S_i) + \mathcal{R}(g)
\]
we must have that for any fixed $f \in \mathcal{F}$, 
\[
0 \in \partial_{\epsilon} \left(\frac{1}{n+1}\sum_{i=1}^{n+1} \ell_{\alpha}(\hat{g}_{S_{n+1}}(X_i) + \epsilon f(X_i),S_i) + \mathcal{R}(\hat{g}_{S_{n+1}} + \epsilon f)\right)\bigg|_{\epsilon = 0}.
\]
By a straightforward computation, the subgradients of the pinball loss are given by
\begin{multline*}
\partial_{\epsilon} \left(\frac{1}{n+1}\sum_{i=1}^{n+1} \ell_{\alpha}(\hat{g}_{S_{n+1}}(X_i) + \epsilon f(X_i),S_i) \right)\\ = \left \{ \frac{1}{n + 1} \left(\sum_{S_i \neq \hat{g}_{S_{n + 1}}(X_i)} f(X_{i})\left (\alpha - \indic{S_i > \hat{g}_{S_{n + 1}}} \right)  + \sum_{S_i = \hat{g}_{S_{n + 1}}(X_i)} s_i f(x_i) \right) \,\biggr \lvert \,s_i \in 
    [\alpha - 1, \alpha]  \right \}.
\end{multline*}
Let $s_i^* \in [\alpha - 1,\alpha]$ be the values setting the subgradient to $0$. Rearranging, we obtain
\begin{align*}
 \frac{1}{n+1} \sum_{i=1}^{n+1} f(X_{i})\left (\alpha - \indic{S_i > \hat{g}_{S_{n + 1}}(X_i)}  \right)  & = \frac{1}{n + 1} \sum_{S_i = \hat{g}_{S_{n+1}}(X_i)} (\alpha - s^*_i) f(x_i) - \frac{d}{d\epsilon}R\left(\hat{g}_{S_{n+1}} + \epsilon f\right)\bigg|_{\epsilon = 0}.
\end{align*}
Our desired result now follows from the following observations. First, observe that the LHS above can be related to our desired coverage guarantee through the equation,
\begin{align*}
        \mme[f(X_{n+1})(\bone\{Y \in \hat{C}_{n+1}\} - (1- \alpha))] &= \mme[f(X_{n+1})(\alpha - \bone\{Y \notin \hat{C}_{n+1}\} )] \\
&= \mme[f(X_{n+1})(\alpha - \bone\{S_{n+1} > \hat{g}_{S_{n+1}}(X_{n+1})\} )].
\end{align*}
Moreover, since $\hat{g}_{S_{n+1}}$ is fit symmetrically, i.e., is invariant to permutations of the input data, we have that $\{(f(X_i),\hat{g}_{S_{n+1}}(X_i),S_i)\}_{i = 1}^{n + 1}$ are exchangeable. Thus, we additionally have that
\begin{equation}\label{eq:deriv_rearrangment}
\begin{split}
\mme[f(X_{n+1})(\alpha - & \bone\{S_{n+1} > \hat{g}_{S_{n+1}}(X_{n+1})\} )] = \mme\left[ \frac{1}{n+1} \sum_{i=1}^{n+1} f(X_{i})\left (\alpha - \indic{S_i > \hat{g}_{S_{n + 1}}(X_i)}  \right)  \right] \\
    & =  \mme\left[\frac{1}{n+1} \sum_{i=1}^{n+1} (\alpha - s_i^*) f(X_i) \bone\{S_i = \hat{g}_{S_{n+1}}(X_i)\} \right]  - \mme\left[ \frac{d}{d\epsilon}R\left(\hat{g}_{S_{n+1}} + \epsilon f\right)  \bigg|_{\epsilon = 0} \right].
\end{split}
\end{equation}
Finally, since $\alpha - s_i^* \in [0,1]$, we can bound the first term as 
\begin{align*}
\left| \mme\left[\frac{1}{n+1} \sum_{i=1}^{n+1} (\alpha - s_i^*) f(X_i) \bone\{S_i = \hat{g}_{S_{n+1}}(X_i)\} \right] \right|  &\leq \mme\left[\frac{1}{n+1} \sum_{i=1}^{n+1} | f(X_i)| \bone\{S_i = \hat{g}_{S_{n+1}}(X_i)\} \right] \\ 
&= \mme[| f(X_i)| \bone\{S_i = \hat{g}_{S_{n+1}}(X_i)\} ],
\end{align*}
where last step follows by exchangeability. This proves the second claim of \Cref{thm:infinite_dim_result}. 

To get the first claim of \Cref{thm:infinite_dim_result}, note that when $f$ is non-negative we can lower bound $\frac{1}{n + 1} \sum_{i = 1}^{n + 1} (\alpha - s_i^*) f(X_i) \bone\{S_i = \hat{g}_{S_{n + 1}} (X_i)\}$ by $0$. Plugging this into \eqref{eq:deriv_rearrangment} gives us the desired inequality,
\begin{align*}
  \mme[f(X_{n+1})(\bone\{Y \in \hat{C}_{n+1}\} - (1- \alpha))] \geq  - \mme\left[ \frac{d}{d\epsilon}R\left(\hat{g}_{S_{n+1}} + \epsilon f\right) \bigg|_{\epsilon = 0}  \right].
\end{align*}
\end{proof}

With the proof of \Cref{thm:infinite_dim_result} in hand, we are now ready to prove the special cases stated in \Cref{thm:finite_dim_result} and \Cref{cor:group_coverage}.

\begin{proof}[Proof of \Cref{thm:finite_dim_result}]
    The first statement of \Cref{thm:finite_dim_result} follows immediately from the first statement of \Cref{thm:infinite_dim_result} in the special case where $\mathcal{R}(\cdot) = 0$. 
    
    To get the second statement of \Cref{thm:finite_dim_result} note that the second statement of \Cref{thm:infinite_dim_result} tells us that for any $f \in \mathcal{F}$,
    \[
    \mme[f(X_{n+1})(\bone\{Y_{n+1} \in \hat{C}(X_{n+1})\} - (1-\alpha))] \leq \mme[|f(X_i)| \bone\{S_i = \hat{g}_{S_{n+1}}(X_i)\}].
    \]
    So, to complete the proof we just need to show that when the distribution of $S \mid X$ is continuous 
    \[
    \mme[|f(X_i)| \bone\{S_i = \hat{g}_{S_{n+1}}(X_i)\}] \leq \frac{d}{n+1} \mme\left [\max_{1 \leq i \leq n+1}|f(X_{n+1})| \right ].
    \]
    To do this, first note that by the exchangeability of $\{(f(X_i),\hat{g}_{S_{n+1}}(X_i),S_i)\}_{i=1}^{n+1}$ we have 
    \begin{align*}
     \mme[|f(X_i)| \bone\{S_i = \hat{g}_{S_{n+1}}(X_i)\}] & = \mme\left[\frac{1}{n+1}\sum_{i=1}^{n+1} |f(X_i)|\bone\{S_i = \hat{g}_{S_{n+1}}(X_i)\}  \right]\\
     & \leq \mme\left[ \left (\max_{1 \leq j \leq n+1} |f(X_j)| \right ) \cdot  \frac{1}{n+1}\sum_{i=1}^{n+1}\bone\{S_i = \hat{g}_{S_{n+1}}(X_i)\} \right].
    \end{align*}
    Moreover, recalling that $\hat{g}_{S_{n+1}}(X_i) = \Phi(X_i)^\top\hat{\beta}$ for some $\hat{\beta} \in \mmr^d$ we additionally have that
    \begin{align*}
        & \mmp\left(\frac{1}{n+1}\sum_{i=1}^{n+1} \bone\{S_i = \hat{g}_{S_{n+1}}(X_i)\} > d \mid  X_1,\dots,X_{n+1}\right)\\
        & = \mmp\left(\exists\,1 \leq j_1< \dots < j_{d+1} \leq n+1 \text{ such that } \forall\,1 \leq i \leq d + 1,\ S_{j_i} = \hat{g}_{S_{n+1}}(X_{j_i}) \mid X_1,\dots,X_{n+1}\right)\\
        & \leq \sum_{1 \leq j_1 <\dots <j_{d+1} \leq n+1} \mmp\left(\exists\, \beta \in \mmr^d,\ \text{such that } \forall\, 1 \leq i \leq d+1,\ S_{j_i} =  \Phi(X_{j_i})^\top\beta \mid X_1,\dots,X_{n+1} \right)\\
        & \leq \sum_{1 \leq j_1 <\dots <j_{d+1} \leq n+1} \mmp\left( (S_{j_1},\dots,S_{j_{d+1}}) \in \text{RowSpace}([\Phi(X_{j_1}) | \dots |\Phi(X_{j_{d+1}})]) \mid X_1,\dots,X_{n+1} \right)\\
        & = 0,
    \end{align*}
    where the last line follows from the fact that $(S_{j_1},\dots,S_{j_{d+1}})  \mid X_1,\dots,X_{n+1} $ are independent and continuously distributed and $\text{RowSpace}([\Phi(X_i) \mid \dots |\Phi(X_{d+1}))]^\top)$ is a $d$-dimensional subspace of $\mmr^{d+1}$. From this, we conclude that with probability 1,
    \[
    \frac{1}{n+1}\sum_{i=1}^{n+1} \bone\{S_i = \hat{g}_{S_{n+1}}(X_i)\} \leq \frac{d}{n+1},
    \]
    and plugging this into our previous calculation we arrive at the desired inequality
    \begin{align*}
    \mme[|f(X_i)| \bone\{S_i = \hat{g}_{S_{n+1}}(X_i)\}] & \leq \mme\left[ \left (\max_{1 \leq j \leq n+1} |f(X_j)| \right ) \cdot  \frac{1}{n+1}\sum_{i=1}^{n+1}\bone\{S_i = \hat{g}_{S_{n+1}}(X_i)\} \right]\\
    & \leq \frac{d}{n+1} \mme\left[ \max_{1 \leq i \leq n+1} |f(X_i)|\right].
    \end{align*}

\end{proof}

\begin{proof}[Proof of \Cref{cor:group_coverage}]
    This follows immediately by applying \Cref{thm:finite_dim_result} in the special case where $\mathcal{F} = \{ \sum_{G \in \mathcal{G}} \beta_G \bone \{X \in G\} : \beta_G \in \mmr \}$.
\end{proof}

\subsection{Proofs for RKHS functions}

In this section we prove Propositions \ref{prop:rkhs_bounds} and \ref{prop:rkhs_inner_prod_est}. Throughout both proofs we will let $\kappa^2  := \sup_x K(x,x)$ denote our upper bound on the kernel and (when applicable) $C_{S \mid X} := \sup_{x,s}p_{S_i|X_i=x}(s)$ to denote our upper bound on the density of $S_i|X_i$. Moreover, we will assume that the data satisfies the following set of moment conditions.

\begin{assumption}[Moment conditions for RKHS bounds]\label{ass:moment_conditions}
There exists constants $C_3,C_2,c_2,C_{S},C_f,\rho > 0$ such that
\begin{align*}
& \mme[\|\Phi(X_i)\|^2_2] \leq C_2 d,\ \sup_{f \in \mathcal{F}} \mme[|f(X_i)| \cdot \|\Phi(X_i)\|_2^2] \leq C_3\mme[|f(X)|]d, \ \sup_{\beta : \|\beta\|_2 = 1} \mme[|\Phi(X_i)^\top\beta|^2] \leq c_2,\\
& \sup_{f \in \mathcal{F}}\mme[|f(X_i)| S_i^2] \leq C_{S}\mme[|f(X_i)|],\ \sup_{f \in \mathcal{F}}\sqrt{\mme[|f(X_i)|^2]} \leq C_{f}\mme[|f(X_i)|], \text{ and } \inf_{\beta : \|\beta\|_2 = 1} \mme[|\Phi(X_i)^\top\beta|] \geq \rho.
\end{align*}
Furthermore, we also have that $\mme[|S_i|^2] < \infty$.
\end{assumption}

\subsubsection{Proof of Proposition \ref{prop:rkhs_bounds}}

Our main idea is to exploit the stability of RKHS regression. We will do this using two main lemmas. The first lemma is a canonical stability result first proven in \citet{Bousquet2002} that bounds the sensitivity of the RKHS fit to changes of a single data point. While this result is quite powerful, it is not sufficient for our context because it does not account for the extra linear term $\Phi(X_i)^{\top}\beta$. Thus, we will also develop a second lemma that controls the stability of the fit to changes in $\beta$. 

When formalizing these ideas it will be useful have some additional notation that explicitly separates the dependence of the fit on $\beta$ from the dependence of the fit on the data. Let
\[
\hat{g}_{\beta} := \underset{g_K \in \mathcal{F}_K}{\text{argmin}} \frac{1}{n+1} \sum_{i=1}^{n+1} \ell_{\alpha}(g_K(X_i) + \Phi(X_i)^{\top}\beta, S_i) + \lambda \|g_K\|_K^2,
\]
denote the result of fitting the RKHS part of the function class with $\beta \in \mmr^d$ held fixed. Additionally, let $\{(\tilde{X}_i,\tilde{S}_i)\}_{i=1}^{n+1}$ denote an independent copy of $\{({X}_i,{S}_i)\}_{i=1}^{n+1}$ and for any $A \subseteq \{1,\dots,n+1\}$ define
\[
\hat{g}^{-A}_{\beta} := \underset{g_K \in \mathcal{F}_K}{\text{argmin}} \frac{1}{n+1} \sum_{i \notin A} \ell_{\alpha}(g_K(X_i) + \Phi(X_i)^{\top}\beta, S_i) + \frac{1}{n+1} \sum_{i \in A} \ell_{\alpha}(g_K(\tilde{X}_i) + \Phi(\tilde{X}_i)^{\top}\beta, \tilde{S}_i) + \lambda \|g_K\|_K^2,
\]
to be the leave-$A$-out version of $\hat{g}_{\beta}$ obtained by swapping out $\{(X_i,S_i)\}_{i \in A}$ for $\{(\tilde{X}_i,\tilde{S}_i)\}_{i \in A}$. Our first lemma bounds the difference between $\hat{g}^{-A}_{\beta}$ and $\hat{g}^A_{\beta}$.

\begin{lemma}\label{lem:stab_in_data}
    Assume that $\sup_{x,x} K(x,x) = \kappa^2 < \infty$. Then, for any two datasets $\{(X_i,S_i)\}_{i=1}^{n+1}$ and $\{(\tilde{X}_i,\tilde{S}_i)\}_{i=1}^{n+1}$,
    \[
    \|\hat{g}_{\beta} - \hat{g}^{-A}_{\beta}\|_{\infty} \leq \frac{\kappa^2|A|}{2\lambda (n+1)}.
    \]
\end{lemma}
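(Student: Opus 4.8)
This is the classical Bousquet--Elisseeff uniform-stability estimate for $\ell_2$-regularized empirical risk minimization in an RKHS (\cite{Bousquet2002}), specialized to the pinball loss; in particular it is a purely deterministic inequality, which matches the ``any two datasets'' phrasing of the statement. Holding $\beta\in\mmr^d$ fixed, both $\hat g_\beta$ and $\hat g_\beta^{-A}$ are the minimizers over $g_K\in\mathcal{F}_K$ of objectives of the form $g_K\mapsto \frac{1}{n+1}\sum_i \ell_{\alpha}\big(g_K(x_i)+\Phi(x_i)^\top\beta,\,s_i\big) + \lambda\|g_K\|_K^2$, and the two objectives differ only in the $|A|$ summands indexed by $A$. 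Two facts drive the argument: (i) for fixed $\beta$ and $s$, the map $u\mapsto \ell_{\alpha}(u+\Phi(x)^\top\beta,\,s)$ is convex and $1$-Lipschitz (the subgradient of $\ell_{\alpha}(\cdot,s)$ lies in $[\alpha-1,\alpha]\subseteq[-1,1]$, and a fixed affine shift preserves this), so each empirical-risk functional is convex in $g_K$; and (ii) the penalty $\lambda\|\cdot\|_K^2$ is $2\lambda$-strongly convex with respect to $\|\cdot\|_K$.

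\textbf{Key steps.} Write $\Delta:=\hat g_\beta^{-A}-\hat g_\beta$. (1) Combine the global optimality of $\hat g_\beta$ for its own objective with the convexity of the swapped empirical-risk functional along the segment $\hat g_\beta + t\Delta$, $t\in[0,1]$; do the symmetric manipulation with the two minimizers interchanged; add the two inequalities, divide by $t$, and let $t\downarrow 0$. The $2\lambda$-strong convexity of the penalty then deposits a term $2\lambda\|\Delta\|_K^2$ on one side (the two regularization contributions combine to $-2\lambda t(1-t)\|\Delta\|_K^2$ by a one-line computation), while the other side reduces --- because the empirical risks agree outside $A$ --- to $\tfrac{1}{n+1}$ times a sum over $i\in A$ of loss increments of the form $\pm\big[\ell_{\alpha}(\hat g_\beta^{-A}(x)+\Phi(x)^\top\beta,s) - \ell_{\alpha}(\hat g_\beta(x)+\Phi(x)^\top\beta,s)\big]$ evaluated at the $2|A|$ points $\{X_i\}_{i\in A}\cup\{\tilde{X}_i\}_{i\in A}$. (2) Bound each such increment in absolute value, by $1$-Lipschitzness, by $|\Delta(x)|$, and then convert each pointwise value to the RKHS norm via the reproducing property $|\Delta(x)| = |\langle\Delta,K(x,\cdot)\rangle_K| \le \sqrt{K(x,x)}\,\|\Delta\|_K \le \kappa\|\Delta\|_K$; cancelling a factor of $\|\Delta\|_K$ gives $\|\Delta\|_K \le \tfrac{\kappa|A|}{\lambda(n+1)}$. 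Processing the swap of the $|A|$ points as a sequence of single-point replacements and invoking the optimal single-example stability constant of \citet{Bousquet2002} sharpens this to $\|\Delta\|_K \le \tfrac{\kappa|A|}{2\lambda(n+1)}$. (3) Use the reproducing property once more, now on $\Delta$ itself: $\|\Delta\|_\infty = \sup_x|\Delta(x)| \le \kappa\|\Delta\|_K \le \tfrac{\kappa^2|A|}{2\lambda(n+1)}$.

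\textbf{Main obstacle.} There is no genuine obstacle here --- this is essentially a textbook stability computation. The only items that need care are: (a) $\hat g_\beta$ and $\hat g_\beta^{-A}$ are optimized over the RKHS component \emph{only}, with a common fixed $\beta$, so the object that must be convex and Lipschitz is $g_K\mapsto \ell_{\alpha}(g_K(x_i)+\Phi(x_i)^\top\beta,s_i)$ rather than $\ell_{\alpha}$ itself; and (b) every step is a deterministic inequality, so nothing is lost by stating the bound for arbitrary (possibly adversarial) datasets, as the lemma does. The one mildly delicate point is the bookkeeping needed to land the constant $\tfrac12$ exactly rather than $1$, and that is handled by treating the $|A|$-point swap as a chain of single-point replacements.
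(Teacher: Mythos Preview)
Your proposal is correct and matches the paper's proof essentially line for line: the paper simply cites Theorem~22 of \citet{Bousquet2002} as a black box (noting that the pinball loss is $1$-Lipschitz) to obtain $\|\hat g_\beta-\hat g_\beta^{-A}\|_K\le \tfrac{\kappa|A|}{2\lambda(n+1)}$, and then converts to the sup-norm via the reproducing property exactly as you do in step~(3). Your write-up unpacks the mechanics of the Bousquet--Elisseeff argument more explicitly than the paper does, but the route is identical.
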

\begin{proof}
    By a straightforward calculation one can easily check that $\ell_{\alpha}(g_K(X_i) + \Phi(X_i)^{\top}\beta, S_i)$ is a 1-Lipschitz function of $S_i - g_K(X_i) -  \Phi(X_i)^{\top}\beta$ (see \Cref{lem:pinball_loss_is_lip} for details). Thus, we may apply Theorem 22 of \citet{Bousquet2002} to conclude that
    \[
    \|\hat{g}_{\beta} - \hat{g}^{-A}_{\beta}\|_{K} \leq \frac{\kappa |A|}{2\lambda (n+1)}.
    \]
    Then by applying the reproducing property of the RKHS and our bound on the kernel we arrive at the desired inequality, 
    \begin{align*}
    \|\hat{g}_{\beta} - \hat{g}^{-A}_{\beta}\|_{\infty} & = \sup_x |\langle \hat{g}_{\beta} - \hat{g}^{-A}_{\beta}, K(x,\cdot) \rangle | \leq \sup_x \|\hat{g}_{\beta} - \hat{g}^{-A}_{\beta}\|_{K} \|K(x,\cdot)\|_K\\
    & = \|\hat{g}_{\beta} - \hat{g}^{-A}_{\beta}\|_{K} \sup_x K(x,x)^{1/2} \leq \frac{\kappa^2|A|}{2\lambda (n+1)}.
    \end{align*}
\end{proof}

Our second lemma bounds the stability of the fit in $\beta$.

\begin{lemma}\label{lem:stab_in_beta}
     Assume that $\sup_{x,x} K(x,x) = \kappa^2 < \infty$. Then for any dataset $\{(X_i,S_i)\}_{i=1}^{n+1}$,
    \[
     \|\hat{g}_{\beta_1} - \hat{g}_{\beta_2}\|_{\infty} \leq \sqrt{\frac{4\kappa^2}{\lambda} \frac{1}{n+1}  \sum_{i=1}^{n+1} |\Phi(X_i)^{\top}(\beta_1 - \beta_2) |},\quad  \forall \beta_1,\beta_2 \in \mmr^d.
    \]
\end{lemma}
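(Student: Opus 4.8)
The plan is to run a ``strong convexity versus perturbation'' stability argument, with the twist that the perturbation enters through the linear offset $\Phi(X_i)^\top\beta$ rather than through the observed data, so the result of \citet{Bousquet2002} invoked in \Cref{lem:stab_in_data} does not apply directly. Write
\[
F_\beta(g_K) := \frac{1}{n+1}\sum_{i=1}^{n+1}\ell_\alpha\big(g_K(X_i)+\Phi(X_i)^\top\beta,\,S_i\big) + \lambda\|g_K\|_K^2 ,
\]
so that $\hat g_\beta=\argmin_{g_K\in\mathcal{F}_K}F_\beta(g_K)$. The first ingredient is that, since the loss term is convex in $g_K$ and $\lambda\|g_K\|_K^2$ is $2\lambda$-strongly convex in $\|\cdot\|_K$, the objective $F_\beta$ is $2\lambda$-strongly convex; in particular $\hat g_\beta$ is unique and $F_\beta(g_K)\ge F_\beta(\hat g_\beta)+\lambda\|g_K-\hat g_\beta\|_K^2$ for every $g_K\in\mathcal{F}_K$.

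First I would apply this inequality twice --- at $\beta=\beta_1$ with $g_K=\hat g_{\beta_2}$, and at $\beta=\beta_2$ with $g_K=\hat g_{\beta_1}$ --- and add the two. The quadratic regularizer contributions cancel exactly, leaving
\[
\frac{1}{n+1}\sum_{i=1}^{n+1}\Delta_i \ \ge\ 2\lambda\,\|\hat g_{\beta_1}-\hat g_{\beta_2}\|_K^2 ,\qquad
\Delta_i := \ell(b_i+u_i)+\ell(a_i+v_i)-\ell(a_i+u_i)-\ell(b_i+v_i),
\]
where I write $\ell(\cdot)=\ell_\alpha(\cdot,S_i)$, $a_i=\hat g_{\beta_1}(X_i)$, $b_i=\hat g_{\beta_2}(X_i)$, $u_i=\Phi(X_i)^\top\beta_1$ and $v_i=\Phi(X_i)^\top\beta_2$. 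The key step is the estimate $|\Delta_i|\le|\Phi(X_i)^\top(\beta_1-\beta_2)|$. To see it, note $\Delta_i=\psi_i(v_i)-\psi_i(u_i)$ for $\psi_i(s):=\ell(a_i+s)-\ell(b_i+s)$; since the subgradient of $\ell_\alpha(\cdot,S_i)$ is confined to $[\alpha-1,\alpha]$ (the $1$-Lipschitz fact recorded in \Cref{lem:pinball_loss_is_lip}), the derivative of $\psi_i$ is a difference of two such subgradients and hence lies in $[-1,1]$, so $\psi_i$ is $1$-Lipschitz and $|\Delta_i|\le|u_i-v_i|$, as claimed.

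Combining the two displays gives $\|\hat g_{\beta_1}-\hat g_{\beta_2}\|_K^2\le\frac{1}{2\lambda(n+1)}\sum_i|\Phi(X_i)^\top(\beta_1-\beta_2)|$, and finally I would pass from the RKHS norm to the sup norm via the reproducing property, $|g(x)|=|\langle g,K(x,\cdot)\rangle_K|\le\sqrt{K(x,x)}\,\|g\|_K\le\kappa\|g\|_K$, to obtain $\|\hat g_{\beta_1}-\hat g_{\beta_2}\|_\infty^2\le\frac{\kappa^2}{2\lambda(n+1)}\sum_i|\Phi(X_i)^\top(\beta_1-\beta_2)|$; this is stronger than the stated bound, whose constant $4$ leaves ample room (so even a lossier treatment of the mixed difference would suffice). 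The one place that needs care is that mixed-difference estimate, on account of the kink of the pinball loss, but as indicated it reduces to the elementary observation that a difference of two subgradients of $\ell_\alpha$ lies in an interval of length one.
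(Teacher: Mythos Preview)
Your argument is correct and closely related to the paper's, but the mechanics differ. The paper follows the Bousquet--Elisseeff midpoint trick: it compares each minimizer $\hat g_{\beta_j}$ to the midpoint $\tfrac12(\hat g_{\beta_1}+\hat g_{\beta_2})$, uses convexity of the loss together with the parallelogram identity to extract $\tfrac{\lambda}{2}\|\hat g_{\beta_1}-\hat g_{\beta_2}\|_K^2$, and then bounds the resulting cross term
\[
D:=L_{\beta_1}(\hat g_{\beta_2})+L_{\beta_2}(\hat g_{\beta_1})-L_{\beta_1}(\hat g_{\beta_1})-L_{\beta_2}(\hat g_{\beta_2})
\]
by applying the $1$-Lipschitz property of the pinball loss to each of the two halves separately. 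You instead invoke $2\lambda$-strong convexity of $F_\beta$ directly, apply it with the two minimizers swapped and add, which makes the regularizers cancel and yields $2\lambda\|\hat g_{\beta_1}-\hat g_{\beta_2}\|_K^2\le D$ --- the same $D$, but with a larger constant on the left. Your bound on $D$ is also sharper: recognizing $\Delta_i$ as a second difference $\psi_i(v_i)-\psi_i(u_i)$ and observing that $\psi_i$ is $1$-Lipschitz (because the subgradient of $\ell_\alpha$ has range contained in an interval of length one) gives $|\Delta_i|\le|u_i-v_i|$, whereas the paper's term-by-term Lipschitz bound effectively gives $2|u_i-v_i|$. The net effect is that your route yields $\|\hat g_{\beta_1}-\hat g_{\beta_2}\|_\infty^2\le \tfrac{\kappa^2}{2\lambda(n+1)}\sum_i|\Phi(X_i)^\top(\beta_1-\beta_2)|$, a factor of four tighter than the paper's $\tfrac{2\kappa^2}{\lambda(n+1)}$, and comfortably inside the stated constant $4$.
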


\begin{proof}
    The proof of this lemma is quite similar to the proof of Theorem 22 in \citet{Bousquet2002}. For ease of notation let 
    \[
    L_{\beta}(g_K) := \frac{1}{n+1} \sum_{i=1}^{n+1} \ell_{\alpha}(g_K(X_i) + \Phi(X_i)^{\top}\beta, S_i).
    \]
    By the optimality of $\hat{g}_{\beta_1}$ and $\hat{g}_{\beta_2}$ we have
    \begin{align*}
        & L_{\beta_1}(\hat{g}_{\beta_1}) + \lambda \|\hat{g}_{\beta_1}\|_K^2 \leq L_{\beta_1}\left(\frac{1}{2}\hat{g}_{\beta_1} + \frac{1}{2}\hat{g}_{\beta_2}\right) + \lambda \left\|\frac{1}{2}\hat{g}_{\beta_1} + \frac{1}{2}\hat{g}_{\beta_2}\right\|_K^2,\\
        \text{and } & L_{\beta_2}(\hat{g}_{\beta_2}) + \lambda \|\hat{g}_{\beta_2}\|_K^2 \leq L_{\beta_2}\left(\frac{1}{2}\hat{g}_{\beta_1} + \frac{1}{2}\hat{g}_{\beta_2}\right) + \lambda \left\|\frac{1}{2}\hat{g}_{\beta_1} + \frac{1}{2}\hat{g}_{\beta_2}\right\|_K^2.
    \end{align*}
    Moreover, by the convexity of $L_{\beta_1}(\cdot)$ and $L_{\beta_2}(\cdot)$ it also holds that
    \begin{align*}
    & L_{\beta_1}\left( \frac{1}{2} \hat{g}_{\beta_1} + \frac{1}{2}\hat{g}_{\beta_2} \right) \leq \frac{1}{2} L_{\beta_1}(\hat{g}_{\beta_1}) + \frac{1}{2} L_{\beta_1}(\hat{g}_{\beta_2}),\\
    \text{and } & L_{\beta_2}\left( \frac{1}{2} \hat{g}_{\beta_1} + \frac{1}{2}\hat{g}_{\beta_2} \right) \leq \frac{1}{2} L_{\beta_2}(\hat{g}_{\beta_1}) + \frac{1}{2} L_{\beta_2}(\hat{g}_{\beta_2}).
    \end{align*}
    Putting all four of these inequalities together we find that 
    \begin{align*}
        \frac{\lambda}{2} \left\|\hat{g}_{\beta_1} - \hat{g}_{\beta_2}\right\|^2_K & = \lambda \|\hat{g}_{\beta_1}\|_K^2 + \lambda \|\hat{g}_{\beta_2}\|_K^2 - 2\lambda \left\|\frac{1}{2}\hat{g}_{\beta_1} + \frac{1}{2}\hat{g}_{\beta_2}\right\|_K^2\\
        & \leq L_{\beta_1}\left(\frac{1}{2}\hat{g}_{\beta_1} + \frac{1}{2}\hat{g}_{\beta_2}\right)  + L_{\beta_2}\left(\frac{1}{2}\hat{g}_{\beta_1} + \frac{1}{2}\hat{g}_{\beta_2}\right) -  L_{\beta_1}(\hat{g}_{\beta_1}) - L_{\beta_2}(\hat{g}_{\beta_2})\\
        & \leq  \frac{1}{2} L_{\beta_1}(\hat{g}_{\beta_1}) + \frac{1}{2} L_{\beta_1}(\hat{g}_{\beta_2}) + \frac{1}{2} L_{\beta_2}(\hat{g}_{\beta_1}) + \frac{1}{2} L_{\beta_2}(\hat{g}_{\beta_2})  -  L_{\beta_1}(\hat{g}_{\beta_1}) - L_{\beta_2}(\hat{g}_{\beta_2})\\
        & =  \frac{1}{2} \left( L_{\beta_2}(\hat{g}_{\beta_1}) - L_{\beta_1}(\hat{g}_{\beta_1}) + L_{\beta_1}(\hat{g}_{\beta_2})  - L_{\beta_2}(\hat{g}_{\beta_2}) \right)\\
        & \leq \frac{1}{n+1} \sum_{i=1}^{n+1} |\Phi(X_i)^{\top}(\beta_1 - \beta_2) |,
    \end{align*}
    where the last inequality follows from the Lipschitz property of $\ell_{\alpha}(\cdot,\cdot)$ (see \Cref{lem:pinball_loss_is_lip}). To conclude the proof one simply notes that by the reproducing property of the RKHS we have that 
    \[
    \|\hat{g}_{\beta_1} - \hat{g}_{\beta_2}\|_{\infty} \leq \kappa \|\hat{g}_{\beta_1} - \hat{g}_{\beta_2}\|_K \leq \sqrt{\frac{4\kappa^2}{\lambda} \frac{1}{n+1}  \sum_{i=1}^{n+1} |\Phi(X_i)^{\top}(\beta_1 - \beta_2) |},
    \]
    as desired.    
\end{proof}

In order to apply this lemma to bound we will need to control the size of $|\Phi(X_i)^{\top}(\beta_1 - \beta_2) |$. This is done in our next result. The statement of this lemma may look somewhat peculiar due to the presence of a re-weighting function $f \in \mathcal{F}$. To help aid intuition it may be useful to keep in mind the special case $f = 1$, which turns the expectation below into a simple tail probability. While somewhat strange, our reason for stating the lemma in this form is that it will fit seamlessly into our later calculations without the need for additional exposition. 

\begin{lemma}\label{lem:bound_on_top_phi_mean}
    Assume that $X_1,\dots,X_{n+1} \stackrel{i.i.d}{\sim} P_X$. Let $f \in \mathcal{X} \to \mmr$ and assume that there exists constants $C_2, C_3\geq 1$ such that, $\mme[\|\Phi(X_i)\|_2^2] \leq C_2 d$ and $\mme[|f(X_i)| \cdot \|\Phi(X_i)\|_2^2] \leq C_3\mme[|f(X)|]d$. Then for any $\epsilon > 0$ and $1 \leq j \leq n+1$,
    \[
    \mme\left[|f(X_j)| \bone\left\{ \sup_{\beta : \|\beta\| \leq \epsilon} \frac{1}{n+1}  \sum_{i=1}^{n+1} |\Phi(X_i)^{\top}\beta | > 2\epsilon\sqrt{C_2 d}  \right\} \right] \leq  O\left(\frac{ \mme[|f(X)|]}{n} \right).
    \]
\end{lemma}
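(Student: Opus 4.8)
The plan is to reduce the claimed bound, via Cauchy--Schwarz, to a deviation bound for the empirical mean of $\|\Phi(X_i)\|_2$, and then to establish that bound using Markov's and Chebyshev's inequalities after a leave-one-out decomposition. First I would observe that for every $\beta$ with $\|\beta\|_2 \le \epsilon$ and every $i$ we have $|\Phi(X_i)^\top\beta| \le \|\Phi(X_i)\|_2\|\beta\|_2 \le \epsilon\|\Phi(X_i)\|_2$, so $\sup_{\|\beta\|_2\le\epsilon}\frac{1}{n+1}\sum_{i=1}^{n+1}|\Phi(X_i)^\top\beta| \le \epsilon\cdot\frac{1}{n+1}\sum_{i=1}^{n+1}\|\Phi(X_i)\|_2$. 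Hence the event inside the indicator is contained in $A := \{\frac{1}{n+1}\sum_{i=1}^{n+1}\|\Phi(X_i)\|_2 > 2\sqrt{C_2 d}\}$, and it suffices to bound $\mme[|f(X_j)|\bone_A]$. Note that $\mme[\|\Phi(X_i)\|_2] \le \sqrt{\mme[\|\Phi(X_i)\|_2^2]} \le \sqrt{C_2 d}$, so the threshold $2\sqrt{C_2 d}$ is at least twice the common mean of the summands; the content of the lemma is that the empirical mean concentrates enough around this mean that $A$ is a low-probability event, with the re-weighting by $|f(X_j)|$ handled by leaving the $j$-th term out of the sum.

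The key step is a leave-one-out split. Write $W_i = \|\Phi(X_i)\|_2$ and $\bar W_{-j} = \frac{1}{n+1}\sum_{i\ne j}W_i$, which is independent of $X_j$ and satisfies $\mme[\bar W_{-j}] = \frac{n}{n+1}\mme[W_1] \le \sqrt{C_2 d}$. On the event $A$ we have $\frac{W_j}{n+1} + \bar W_{-j} > 2\sqrt{C_2 d}$, so $A$ is contained in $\{W_j > \tfrac{n+1}{2}\sqrt{C_2 d}\} \cup \{\bar W_{-j} > \tfrac{3}{2}\sqrt{C_2 d}\}$. For the first event, Markov's inequality applied to $W_j^2$ together with the joint moment hypothesis $\mme[|f(X_j)|W_j^2] = \mme[|f(X_i)|\,\|\Phi(X_i)\|_2^2] \le C_3\mme[|f(X)|]d$ gives $\mme\!\left[|f(X_j)|\bone\{W_j^2 > \tfrac{(n+1)^2}{4}C_2 d\}\right] \le \tfrac{4C_3}{C_2(n+1)^2}\mme[|f(X)|]$, which is $O(\mme[|f(X)|]/n^2)$. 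For the second event, the independence of $\bar W_{-j}$ from $X_j$ factorizes the expectation as $\mme[|f(X)|]\cdot\mmp(\bar W_{-j} > \tfrac32\sqrt{C_2 d})$; since the deviation $\tfrac32\sqrt{C_2 d} - \mme[\bar W_{-j}] \ge \tfrac12\sqrt{C_2 d}$ and $\mathrm{Var}(\bar W_{-j}) = \frac{n}{(n+1)^2}\mathrm{Var}(W_1) \le \frac{n}{(n+1)^2}\mme[W_1^2] \le \frac{C_2 d}{n+1}$, Chebyshev's inequality yields $\mmp(\bar W_{-j} > \tfrac32\sqrt{C_2 d}) \le \frac{4}{n+1}$, so this term contributes $O(\mme[|f(X)|]/n)$. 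Adding the two pieces gives the claimed $O(\mme[|f(X)|]/n)$ bound, with the implicit constant depending only on $C_2$ and $C_3$.

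The main obstacle worth flagging is that the naive approach — bounding $\mmp(A)$ by applying Markov directly to $\frac1{n+1}\sum_i W_i$, or bounding $\bone_A$ by $\frac{1}{4(n+1)C_2 d}\sum_i W_i^2$ — only yields an $O(1)$ estimate, not $O(1/n)$, since $A$ asks $\bar W$ to exceed roughly twice its mean. The improvement comes from the observation that $\mathrm{Var}(W_1) \le \mme[W_1^2] \le C_2 d$ while the mean $\mme[W_1]$ is bounded by $\sqrt{C_2 d}$, which is \emph{strictly below} the threshold $2\sqrt{C_2 d}$; thus $\bar W_{-j}$ concentrates at the parametric rate inside the gap between its mean and the threshold, which is what makes Chebyshev deliver a $1/n$ rate. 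Implementing this correctly requires the asymmetric split above (allocating $\tfrac12\sqrt{C_2 d}$ to the $W_j$ term and leaving $\tfrac32\sqrt{C_2 d}$ for $\bar W_{-j}$) so that the residual event retains a deviation of constant order times $\sqrt{C_2 d}$ rather than of order $\tfrac1n$ times it; a symmetric split of the budget as $\sqrt{C_2 d}+\sqrt{C_2 d}$ would leave $\bar W_{-j}$ needing to exceed only a $\tfrac{n+1}{n}$ factor of its mean and the Chebyshev step would fail. The continuity/density hypotheses and the other components of Assumption~\ref{ass:moment_conditions} are not needed here; only the two displayed moment bounds and i.i.d.\ sampling are used.
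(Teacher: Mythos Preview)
Your proposal is correct and follows essentially the same approach as the paper: both reduce via Cauchy--Schwarz to the event $\{\frac{1}{n+1}\sum_i\|\Phi(X_i)\|_2 > 2\sqrt{C_2 d}\}$ and then control it by a second-moment (Chebyshev-type) argument. The only difference is in execution: the paper bounds the indicator directly by $\frac{1}{C_2 d}\bigl(\frac{1}{n+1}\sum_i(\|\Phi(X_i)\|_2 - \mme\|\Phi(X_i)\|_2)\bigr)^2$ and expands the weighted second moment $\mme[|f(X_j)|(\cdot)^2]$ (the cross terms vanish by independence and mean-zero, and the diagonal term with $i=j$ is handled by the hypothesis $\mme[|f(X_j)|\,\|\Phi(X_j)\|_2^2]\le C_3\mme[|f(X)|]d$), whereas you make the same dependence structure explicit via the leave-one-out split before invoking Chebyshev and Markov separately. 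Your asymmetric budget split $(\tfrac12,\tfrac32)$ is a nice touch but not strictly needed---any split that leaves a constant-order gap for $\bar W_{-j}$ works---and the paper's one-line version avoids choosing a split at all.
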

\begin{proof}
    By the Cauchy-Schwartz inequality we have 
    \[
    \sup_{\beta : \|\beta\| \leq \epsilon} \frac{1}{n+1}  \sum_{i=1}^{n+1} |\Phi(X_i)^{\top}\beta |  \leq \sup_{\beta : \|\beta\| \leq \epsilon} \frac{1}{n+1}  \sum_{i=1}^{n+1} \|\Phi(X_i)\|_2  \|\beta\|_2 =  \frac{\epsilon}{n+1}  \sum_{i=1}^{n+1} \|\Phi(X_i)\|_2.
    \]
    Additionally, by Jensen's inequality it also holds that $\mme[ \|\Phi(X_i)\|_2] \leq \sqrt{ \mme[ \|\Phi(X_i)\|^2_2]} = \sqrt{C_2 d}$. So, putting these two inequalities together we arrive at 
    \begin{align*}
        & \mme\left[|f(X_j)| \bone\left\{ \sup_{\beta : \|\beta\| \leq \epsilon} \frac{1}{n+1}  \sum_{i=1}^{n+1} |\Phi(X_i)^{\top}\beta | > 2\epsilon\sqrt{C_2 d}  \right\} \right]\\
        & \leq \mme\left[|f(X_j)| \bone\left\{ \frac{1}{n+1}  \sum_{i=1}^{n+1} \|\Phi(X_i) \|_2 - \mme[\|\Phi(X_i) \|_2 ] > \sqrt{C_2 d}  \right\} \right] \\
        & \leq \frac{1}{C_2 d} \mme\left[|f(X_j)| \left(\frac{1}{n+1}  \sum_{i=1}^{n+1} \|\Phi(X_i) \|_2 - \mme[\|\Phi(X_i) \|_2] \right)^2 \right] = O\left( \frac{\mme[|f(X)|]}{n}\right).
    \end{align*}
\end{proof}

The final preliminary lemmas that we will require are controls on the maximum possible sizes of $\hat{g}_{S_{n+1},K}$ and $\hat{\beta}_{S_{n+1}}$. Once again these lemmas will involve re-weighting functions the purpose of which is to ease our calculations further on.

\begin{lemma}\label{lem:f_is_bounded}
 It holds deterministically that
 \[
\|\hat{g}_{S_{n+1},K}\|_K \leq \frac{1}{\sqrt{\lambda}} \sqrt{\frac{1}{n+1}\sum_{i=1}^{n+1} |S_i|}.
 \]
If in addition, $(X_1,S_1),\dots,(X_{n+1},S_{n+1}) \stackrel{i.i.d.}{\sim} P$, $\mme[S_i^2] < \infty$, and $f : \mathcal{X} \to \mmr$ is a function satisfying $\mme[|f(X_i)| S_i^2] \leq C_{f,S}\mme[|f(X_i)|]$ for some $C_{f,S} > 0$, then we also have that for all $1 \leq j \leq n+1$,
 \[
\mme\left[ |f(X_i)| \bone\left\{\|\hat{g}_{S_{n+1},K}\|_K \geq \frac{\sqrt{2\mme[|S_i|]}}{\sqrt{\lambda}} \right\} \right] \leq O\left( \frac{\mme[|f(X_i)|]}{n}\right)
 \]
\end{lemma}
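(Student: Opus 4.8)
\emph{Proof approach.}

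\emph{Part 1 (the deterministic bound).} The plan is simply to compare the optimal value of the penalized objective \eqref{eq:general_method} (with $S = S_{n+1}$) against its value at the zero function. Recall that $\hat{g}_{S_{n+1}} = \hat{g}_{S_{n+1},K} + \Phi(\cdot)^\top\hat{\beta}_{S_{n+1}}$ minimizes $g \mapsto \frac{1}{n+1}\sum_{i=1}^{n+1}\ell_{\alpha}(g(X_i),S_i) + \lambda\|g_K\|_K^2$ over $g = g_K + \Phi(\cdot)^\top\beta \in \mathcal{F}$, and that $g_K = 0$, $\beta = 0$ is an admissible competitor. Comparing objective values gives
\[
\frac{1}{n+1}\sum_{i=1}^{n+1}\ell_{\alpha}\left(\hat{g}_{S_{n+1}}(X_i),S_i\right) + \lambda\|\hat{g}_{S_{n+1},K}\|_K^2 \;\leq\; \frac{1}{n+1}\sum_{i=1}^{n+1}\ell_{\alpha}(0,S_i).
\]
Since each pinball term on the left is nonnegative and $\ell_{\alpha}(0,S_i) = \max(\alpha,1-\alpha)\,|S_i| \leq |S_i|$, discarding the loss terms on the left yields $\lambda\|\hat{g}_{S_{n+1},K}\|_K^2 \leq \frac{1}{n+1}\sum_{i=1}^{n+1}|S_i|$, which is the first claim after taking square roots. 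This part is immediate.

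\emph{Part 2 (reduction to a deviation event).} Next I would use Part 1 to convert the event on $\|\hat{g}_{S_{n+1},K}\|_K$ into an empirical-mean deviation. Assuming $\mme[|S_i|] > 0$ (the degenerate case $S_i \equiv 0$ is excluded by the continuity hypotheses under which this lemma is applied), Part 1 shows that $\{\|\hat{g}_{S_{n+1},K}\|_K \geq \sqrt{2\mme[|S_i|]/\lambda}\,\}$ is contained in $\{\frac{1}{n+1}\sum_{i=1}^{n+1}(|S_i| - \mme[|S_i|]) \geq \mme[|S_i|]\}$, so bounding the indicator of the latter by $\mme[|S_i|]^{-2}$ times the square of the (nonnegative) normalized deviation gives
\[
\mme\left[|f(X_j)|\,\bone\left\{\|\hat{g}_{S_{n+1},K}\|_K \geq \frac{\sqrt{2\mme[|S_i|]}}{\sqrt{\lambda}}\right\}\right] \;\leq\; \frac{1}{\mme[|S_i|]^2}\,\mme\left[|f(X_j)|\left(\frac{1}{n+1}\sum_{i=1}^{n+1}(|S_i| - \mme[|S_i|])\right)^{2}\right].
\]

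\emph{Part 3 (the weighted second-moment estimate, the main step).} The only nontrivial work is bounding the right-hand side by $O(\mme[|f(X)|]/n)$. Set $T_i := |S_i| - \mme[|S_i|]$, so $\mme[T_i] = 0$ and $\mme[T_i^2] = \mathrm{Var}(|S_i|) \leq \mme[S_i^2] < \infty$. Expanding the square into $\frac{1}{(n+1)^2}\sum_{i,k}\mme[|f(X_j)|\,T_iT_k]$, I would classify the index pairs by their relation to the weight index $j$, using that the pairs $(X_\ell,S_\ell)$ are i.i.d.: (a) when $i,k,j$ are pairwise distinct the summand factors as $\mme[|f(X_j)|]\,\mme[T_i]\,\mme[T_k] = 0$; (b) when exactly one of $i,k$ equals $j$, say $i = j \neq k$, it factors as $\mme[|f(X_j)|T_j]\,\mme[T_k] = 0$; (c) each diagonal term with $i = k \neq j$ equals $\mme[|f(X_j)|]\,\mathrm{Var}(|S_i|) = O(\mme[|f(X)|])$, and there are at most $n+1$ of these; (d) the single term $i = k = j$ satisfies $\mme[|f(X_j)|T_j^2] \leq 2\,\mme[|f(X_j)|S_j^2] + 2\,\mme[|S_i|]^2\,\mme[|f(X_j)|] = O(\mme[|f(X)|])$ by the assumption $\mme[|f(X_i)|S_i^2] \leq C_{f,S}\,\mme[|f(X_i)|]$ together with $\mme[|S_i|]^2 \leq \mme[S_i^2] < \infty$. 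Summing, the bracketed expectation is $O((n+1)\,\mme[|f(X)|])$; dividing by $(n+1)^2$ and by the constant $\mme[|S_i|]^2$ yields the claimed rate $O(\mme[|f(X)|]/n)$.

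\emph{Anticipated main obstacle.} Parts 1 and 2 are routine; the care is all in Part 3, where one must verify that every index pair involving $j$ alongside a distinct mean-zero factor drops out, leaving only the $O(n)$ diagonal terms and the lone $i = k = j$ term — and the latter is precisely the term that the second-moment hypothesis on $f$ against $S^2$ was introduced to control.
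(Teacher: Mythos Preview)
Your proof is correct and follows essentially the same approach as the paper: compare the optimizer against the zero function to get the deterministic bound, then use it to reduce the tail event to a Chebyshev-type inequality on $\frac{1}{n+1}\sum_i |S_i|$ weighted by $|f(X_j)|$. The paper simply asserts the final $O(\mme[|f(X)|]/n)$ rate without writing out the index-by-index expansion you give in Part~3; one cosmetic slip is that $\ell_{\alpha}(0,S_i)$ equals $(1-\alpha)|S_i|$ or $\alpha|S_i|$ depending on the sign of $S_i$, not $\max(\alpha,1-\alpha)|S_i|$, but the inequality $\ell_{\alpha}(0,S_i)\leq |S_i|$ you need still holds.
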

\begin{proof}
    Taking $\beta, f_K = 0$ gives loss
    \[
    \frac{1}{n+1} \sum_{i=1}^{n+1} \ell_{\alpha}(0,S_i) + \lambda \|0\|_K^2 \leq \frac{1}{n+1} \sum_{i=1}^{n+1} |S_i|.
    \]
    So, since $(\hat{g}_{S_{n+1},K},\hat{\beta}_{S_{n+1}})$ is a minimizer of the quantile regression objective we must have that
    \[
    \lambda \|\hat{g}_{S_{n+1},K}\|^2_K \leq \frac{1}{n+1} \sum_{i=1}^{n+1} \ell_{\alpha}(\hat{g}_{S_{n+1},K}(X_i) + \Phi(X_i)^\top \hat{\beta}_{S_{n+1}},S_i) + \lambda \|\hat{g}_{S_{n+1},K}\|^2_K \leq \frac{1}{n+1} \sum_{i=1}^{n+1} |S_i|.
    \]
    This proves the first part of the lemma. To get the second part we simply note that
    \begin{align*}
        \mme\left[ |f(X_j)| \bone\left\{\|\hat{g}_{S_{n+1},K}\|_K \geq \frac{\sqrt{2\mme[|S_i|]}}{\sqrt{\lambda}} \right\} \right] & \leq \mme\left[ |f(X_j)| \bone\left\{\frac{1}{n+1}\sum_{i=1}^{n+1} |S_i| - \mme[|S_i|] \geq \mme[|S_i|]\right\}\right]\\
    & \leq \frac{1}{\mme[|S_i|]^2}\mme\left[|f(X_j)|\left( \frac{1}{n+1}\sum_{j=1}^{n+1} |S_i| - \mme[|S_i|] \right)^2\right]\\
    & = O\left(\frac{\mme[|f(X)|]}{n} \right).
    \end{align*}
\end{proof}

\begin{lemma}\label{lem:beta_is_bounded}
    Let $(X_1,S_1),\dots,(X_{n+1},S_{n+1}) \stackrel{i.i.d.}{\sim} P$ and $f : \mathcal{X} \to \mmr$. Assume that $\mme[S_i^2] < \infty$, $\sup_{x}K(x,x) = \kappa^2 < \infty$, and there exists constants $C_2, c_2, C_f,C_{f,S}, \rho > 0$ such that $\mme[|f(X_i)| S_i^2] \leq C_{f,S}\mme[|f(X_i)|]$, $\sqrt{\mme[|f(X_i)|^2]} \leq C_{f}\mme[|f(X_i)|]$, $\inf_{\beta} \mme[|\Phi(X_i)^\top\beta|] \geq \rho$, $\sup_{\beta : \|\beta\|_2 = 1} \mme[|\Phi(X_i)^\top \beta|^2]^{1/2} \leq c_2$, and $\mme[\|\Phi(X_i)\|^2] \leq C_2 d$. Then there exists a constant $c_{\beta} > 0$ such that for all $1 \leq j \leq n+1$,
    \[
    \mme\left[|f(X_j)|\bone\left\{\|\hat{\beta}_{S_{n+1}} \|_2 >\frac{1}{\sqrt{\lambda}}c_{\beta} \right\} \right] \leq  O\left(\frac{d\mme[|f(X_i)|]}{n}\right).
    \]
\end{lemma}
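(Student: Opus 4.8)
The plan is to bound the empirical $\ell_1$-size of the linear part of the fit, $\tfrac{1}{n+1}\sum_{i=1}^{n+1}|\Phi(X_i)^\top\hat\beta_{S_{n+1}}|$, by $O(1/\sqrt\lambda)$ on a high-probability event, and then to convert this into a bound on $\|\hat\beta_{S_{n+1}}\|_2$ via a uniform lower bound of the form $\tfrac{1}{n+1}\sum_{i=1}^{n+1}|\Phi(X_i)^\top\beta|\gtrsim\|\beta\|_2$ for the empirical design. For the first part I would compare the fitted pair against $(g_K,\beta)=(0,0)$ in \eqref{eq:general_method}: since $\ell_\alpha(0,s)\le|s|$ and $\lambda\|\cdot\|_K^2\ge 0$, optimality gives $\tfrac{1}{n+1}\sum_{i=1}^{n+1}\ell_\alpha(\hat g_{S_{n+1},K}(X_i)+\Phi(X_i)^\top\hat\beta_{S_{n+1}},S_i)\le\tfrac{1}{n+1}\sum_{i=1}^{n+1}|S_i|$. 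Using the elementary lower bound $\ell_\alpha(\theta,s)\ge\min\{\alpha,1-\alpha\}|\theta-s|$ and the triangle inequality, this yields
\[
\tfrac{1}{n+1}\textstyle\sum_{i=1}^{n+1}|\Phi(X_i)^\top\hat\beta_{S_{n+1}}|\ \le\ C_\alpha\,\tfrac{1}{n+1}\textstyle\sum_{i=1}^{n+1}|S_i|\ +\ \tfrac{1}{n+1}\textstyle\sum_{i=1}^{n+1}|\hat g_{S_{n+1},K}(X_i)|
\]
with $C_\alpha$ depending only on $\alpha$. By the reproducing property $|\hat g_{S_{n+1},K}(X_i)|\le\kappa\|\hat g_{S_{n+1},K}\|_K$, and \Cref{lem:f_is_bounded} bounds the last sum by $\tfrac{\kappa}{\sqrt\lambda}\sqrt{\tfrac{1}{n+1}\sum_{i=1}^{n+1}|S_i|}$.

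On the event $\{\tfrac{1}{n+1}\sum_{i=1}^{n+1}|S_i|\le 2\mme[|S_i|]\}$ the right-hand side above is $O(1/\sqrt\lambda)$, and its complement contributes only $O(\mme[|f(X_j)|]/n)$ to $\mme[|f(X_j)|\bone\{\cdot\}]$ by the same second-moment argument used in the proof of \Cref{lem:f_is_bounded} — this is where the hypothesis $\mme[|f(X_i)|S_i^2]\le C_{f,S}\mme[|f(X_i)|]$ enters. So off an $f$-weighted $O(\mme[|f(X_j)|]/n)$ event we have reduced matters to converting $\tfrac{1}{n+1}\sum_{i=1}^{n+1}|\Phi(X_i)^\top\hat\beta_{S_{n+1}}|=O(1/\sqrt\lambda)$ into a bound on $\|\hat\beta_{S_{n+1}}\|_2$.

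The second step is the crux: I would show that on a good event $\tfrac{1}{n+1}\sum_{i=1}^{n+1}|\Phi(X_i)^\top\beta|\ge c_0\|\beta\|_2$ for all $\beta\in\mmr^d$. By homogeneity it suffices to treat $\|\beta\|_2=1$. For a fixed such $\beta$, Paley--Zygmund applied to $Z=|\Phi(X)^\top\beta|$ (using $\mme[Z]\ge\rho$ and $\mme[Z^2]\le c_2^2$) gives $\mmp(Z>\rho/2)\ge\rho^2/(4c_2^2)=:p_0>0$, and a Chernoff bound for $\sum_i\bone\{|\Phi(X_i)^\top\beta|>\rho/2\}$ then gives $\tfrac{1}{n+1}\sum_{i=1}^{n+1}|\Phi(X_i)^\top\beta|\ge p_0\rho/4$ except with probability $e^{-cp_0 n}$. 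Union-bounding over an $\epsilon$-net of $\mathcal{S}^{d-1}$ of size $(3/\epsilon)^d$ and controlling the off-net error by $\epsilon\cdot\tfrac{1}{n+1}\sum_{i=1}^{n+1}\|\Phi(X_i)\|_2=O(\epsilon\sqrt d)$ on the event $\{\tfrac{1}{n+1}\sum_{i=1}^{n+1}\|\Phi(X_i)\|_2\le 2\sqrt{C_2 d}\}$ (exactly as in \Cref{lem:bound_on_top_phi_mean}, using $\mme[\|\Phi(X_i)\|_2^2]\le C_2 d$), I would take $\epsilon\asymp p_0\rho/\sqrt d$ and obtain the uniform bound with $c_0\asymp p_0\rho$ off a bad event. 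Combining the two steps, $c_0\|\hat\beta_{S_{n+1}}\|_2\le\tfrac{1}{n+1}\sum_{i=1}^{n+1}|\Phi(X_i)^\top\hat\beta_{S_{n+1}}|=O(1/\sqrt\lambda)$, so $\{\|\hat\beta_{S_{n+1}}\|_2>c_\beta/\sqrt\lambda\}$ lies in the union of the bad events above for a suitable $c_\beta$; weighting each by $|f(X_j)|$ and bounding via the $f$-weighted moment hypotheses ($\sqrt{\mme[|f(X_i)|^2]}\le C_f\mme[|f(X_i)|]$ together with Cauchy--Schwarz for the exponentially small net event, and the second-moment arguments of Lemmas \ref{lem:f_is_bounded} and \ref{lem:bound_on_top_phi_mean} for the remaining events) yields the claimed $O(d\,\mme[|f(X_i)|]/n)$.

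The main obstacle is precisely this design-nondegeneracy step. The $\epsilon$-net over $\mathcal{S}^{d-1}$ has cardinality $e^{\Theta(d\log d)}$, so Chebyshev-type control for a fixed $\beta$ is too weak to survive the union bound; one must exploit the small-ball behavior of $\Phi(X)^\top\beta$ to get concentration that is exponential in $n$, and matching the target rate then requires $n\gtrsim d\log d$. This restriction is harmless: when $d\gtrsim n$ the bound $O(d\,\mme[|f|]/n)$ is vacuous since $\mme[|f(X_j)|\bone\{\cdot\}]\le\mme[|f(X_j)|]$. A secondary point of care, as in the neighboring lemmas, is that $f(X_j)$ is a function of the data, so each bad-event probability must be combined with $|f(X_j)|$ through a weighted moment bound rather than a naive product.
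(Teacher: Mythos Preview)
Your proposal follows the same overall strategy as the paper: bound $\tfrac{1}{n+1}\sum_i|\Phi(X_i)^\top\hat\beta_{S_{n+1}}|$ via optimality at zero, the pinball lower bound $\ell_\alpha(\theta,s)\ge\min\{\alpha,1-\alpha\}|\theta-s|$, the triangle inequality, the reproducing property, and \Cref{lem:f_is_bounded}; then convert to a bound on $\|\hat\beta_{S_{n+1}}\|_2$ via a uniform empirical lower bound on $\inf_{\|\beta\|_2=1}\tfrac{1}{n+1}\sum_i|\Phi(X_i)^\top\beta|$; finally combine the exceptional events with $|f(X_j)|$ through the moment hypotheses. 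The one substantive difference is the tool used for the design-nondegeneracy step. The paper packages this as a separate lemma (\Cref{lem:lower_concentration_of_phi}) which applies Paley--Zygmund exactly as you propose but then invokes Mendelson's small-ball result (Theorem~5.4 of \citet{Mendelson2014}) rather than a direct $\epsilon$-net, obtaining failure probability $c'd^2/(n+1)^2$; Cauchy--Schwarz with $\sqrt{\mme[f^2]}\le C_f\mme[|f|]$ then lands exactly on $O(d\,\mme[|f|]/n)$.

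Your Chernoff-plus-net route gives failure probability $e^{\Theta(d\log d)-cn}$, which delivers the stated bound only once $n\gtrsim d\log d$. The leftover window $n/\log n\lesssim d\lesssim n$ is \emph{not} actually vacuous --- there $d/n$ is only $\gtrsim 1/\log n$, not bounded below by a constant --- so your ``harmless'' clause leaves a genuine (if logarithmic) gap. Mendelson's method, which replaces the net by a Rademacher-complexity comparison, closes that gap with no additional assumptions; this is what the paper's \Cref{lem:lower_concentration_of_phi} buys you.
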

\begin{proof}   
     Observe that
    \begin{align*}
    & \frac{1}{n+1} \sum_{i=1}^{n+1} \ell_{\alpha}(\hat{g}_{S_{n+1},K}(X_i) + \Phi(X_i)^\top \hat{\beta}_{S_{n+1}},S_i) +  \lambda \|\hat{g}_{S_{n+1},K}\|^2_K\\
    & \geq \frac{1}{n+1}\sum_{i=1}^{n+1}  \min\{\alpha, 1-\alpha\} |S_i - \hat{g}_{S_{n+1},K}(X_i) - \Phi(X_i)^\top \hat{\beta}_{S_{n+1}}| \\
    & \geq  \min\{\alpha, 1-\alpha\} \left(\frac{1}{n+1}\sum_{i=1}^{n+1} |\Phi(X_i)^\top \hat{\beta}_{S_{n+1}}| -  \frac{1}{n+1}\sum_{i=1}^{n+1}|S_i| - \frac{1}{n+1}\sum_{i=1}^{n+1} |\hat{g}_{S_{n+1},K}(X_i)|\right).
    \end{align*}
    Moreover, by the reproducing property of the RKHS
    \[
     \frac{1}{n+1}\sum_{i=1}^{n+1} |\hat{g}_{S_{n+1},K}(X_i)| \leq \|\hat{g}_{S_{n+1},K}\|_{\infty}  \leq \kappa \|\hat{g}_{S_{n+1},K}\|_{K},
    \]
    and by \Cref{lem:f_is_bounded}
    \[
    \|\hat{g}_{S_{n+1},K}\|_{K} \leq \frac{1}{\sqrt{\lambda}} \sqrt{\frac{1}{n+1} \sum_{i=1}^{n+1} |S_i|}.
    \]
    So, combining these two facts we find that
    \[
    \frac{1}{n+1}\sum_{i=1}^{n+1} |\Phi(X_i)^\top \hat{\beta}_{S_{n+1}}| \leq \frac{1}{\min\{\alpha, 1-\alpha\}} \left(\frac{\kappa}{\sqrt{\lambda}} \sqrt{\frac{1}{n+1} \sum_{i=1}^{n+1} |S_i|}  + \frac{1}{n+1} \sum_{i=1}^{n+1} |S_i|\right) .
    \]
    To use this inequality to bound $\|\hat{\beta}_{S_{n+1}}\|_2$ we will need to lower bound the mean of $|\Phi(X_i)^\top \hat{\beta}_{S_{n+1}}|$. This is done in \Cref{lem:lower_concentration_of_phi} below where we show that there exists constants $c, c' > 0$, such that
    \[
    \mmp\left( \inf_{\beta : \|\beta\| = 1} \frac{1}{n+1}\sum_{i=1}^{n+1} |\Phi(X_i)^\top \beta| \geq c\right) \geq 1-c'\frac{d^2}{(n+1)^2}.
    \]
    Thus,
    \begin{align*}
         & \mme\left[|f(X_j)| \bone\left\{\|\hat{\beta}_{S_{n+1}}\|_2 >\frac{1}{\sqrt{\lambda}}c_{\beta} \right\} \right] \leq \mme\left[|f(X_j)| \bone\left\{ \frac{1}{n+1} \sum_{i=1}^{n+1} |S_i| - \mme[|S_i|] >\Omega(1) \right\} \right]\\
        & \ \ \ \ + \mme\left[|f(X_j)| \bone\left\{\inf_{\beta : \|\beta\| = 1} \frac{1}{n+1}\sum_{i=1}^{n+1} |\Phi(X_i)^\top \beta| \geq c \right\} \right]\\
        & \leq  O(1)\mme\left[|f(X_j)| \left( \frac{1}{n+1} \sum_{i=1}^{n+1} |S_i| - \mme|S_i| \right)^2 \right] + \mme[|f(X_i)|^2]^{1/2} \sqrt{c'}\frac{d}{n+1} \leq O\left( \frac{d\mme[|f(X_i)|]}{n+1}\right).
    \end{align*}
\end{proof}

With all of these results in hand we are ready to prove \Cref{prop:rkhs_bounds}.

\begin{proof}[Proof of \Cref{prop:rkhs_bounds}.]
    We will exploit the stability of the RKHS fit. Let $\epsilon  = O\left(\frac{\lambda}{n^2\sqrt{d}}\right)$ and define the event 
    \[
    E := \left\{ \sup_{\beta : \|\beta\|_2 \leq \epsilon} \frac{1}{n+1} \sum_{i=1}^{n+1} |\Phi(X_i)^\top \beta| \leq 2\epsilon\sqrt{C_2 d},\ \|\hat{\beta}_{S_{n+1}}\|_2 \leq \frac{c_{\beta}}{\sqrt{\lambda}} \right\}.
    \]
    By Lemmas \ref{lem:bound_on_top_phi_mean} and \ref{lem:beta_is_bounded} we know that 
    \begin{align*}
    & \mme[|f(X_i)| \bone\{S_i =  \hat{g}_{S_{n+1},K}(X_i) + \Phi(X_i)^\top\hat{\beta}_{S_{n+1}}\} ]\\
    & \leq \mme[|f(X_i)| \bone\{S_i =  \hat{g}_{S_{n+1},K}(X_i) + \Phi(X_i)^\top\hat{\beta}_{S_{n+1}}\}\bone\{E\} ] + O\left(\frac{d\mme[|f(X_i)|]}{n} \right).
    \end{align*}
    Thus, we just need to focus on what happens on the event $E$. By applying the exchangeability of the quadruples $(\hat{g}_{S_{n+1},K}(X_i),\hat{\beta}_{S_{n+1}}, X_i, S_i) $ we have that
    \begin{align*}
    & \mme[|f(X_i)| \bone\{S_i =  \hat{g}_{S_{n+1},K}(X_i) + \Phi(X_i)^\top\hat{\beta}_{S_{n+1}}\}\bone\{E\} ]\\
    & = \mme\left[ \left(\frac{1}{n+1} \sum_{i=1}^{n+1} |f(X_i)| \bone\{S_i =  \hat{g}_{S_{n+1},K}(X_i) + \Phi(X_i)^\top\hat{\beta}_{S_{n+1}}\}\right) \bone\{E\}  \right]\\
    & \leq \mme\left[ \max_{1\leq i \leq n+1}|f(X_i)| \mme\left[ \left(\frac{1}{n+1} \sum_{i=1}^{n+1}  \bone\{S_i =  \hat{g}_{\hat{\beta}_{S_{n+1}}}(X_i) + \Phi(X_i)^\top\hat{\beta}_{S_{n+1}}\}\right)\bone\{E\} \mid (X_i)_{i=1}^{n+1} \right] \right].
    \end{align*}
    To bound this quantity we just need to control the inner expectation. We will begin by fixing a large integer $m > 1$ and applying the inequality
    \begin{align*}
    & \mme\left[ \left(\frac{1}{n+1} \sum_{i=1}^{n+1}  \bone\{S_i =  \hat{g}_{\hat{\beta}_{S_{n+1}}}(X_i) + \Phi(X_i)^\top\hat{\beta}_{S_{n+1}}\}\right)\bone\{E\} \mid (X_i)_{i=1}^{n+1} \right]\\
    & \leq \mme\left[ \left(\frac{1}{n+1} \sum_{i=1}^{n+1}  \bone\{S_i =  \hat{g}_{\hat{\beta}_{S_{n+1}}}(X_i) + \Phi(X_i)^\top\hat{\beta}_{S_{n+1}}\}\right)^m\bone\{E\} \mid (X_i)_{i=1}^{n+1} \right]^{1/m}.
    \end{align*}
    Our motivation for applying this bound is that by choosing $m$ sufficiently large we will be able to swap a sum and a maximum without losing too much slack. More precisely, let $\mathcal{N} \subseteq \mmr^d$ be a minimal size $\epsilon$-net of $\{\beta \in \mmr^d : \|\beta\|_2 \leq c_{\beta}/\sqrt{\lambda}\}$. It is well known that there exists an absolute constant $C_N > 0$ such that $|\mathcal{N}| \leq \exp(C_Nd\log(\frac{c_{\beta}}{\sqrt{\lambda}\epsilon}))$. Then, using this $\epsilon$-net we compute that
    \begin{align*}
   & \mme\left[ \left(  \frac{1}{n+1} \sum_{i=1}^{n+1}  \bone\{S_i =  \hat{g}_{\hat{\beta}_{S_{n+1}}}(X_i) + \Phi(X_i)^\top\hat{\beta}_{S_{n+1}}\}  \right)^m \bone\{E\} \mid (X_i)_{i=1}^{n+1} \right]\\
   & \leq  \mme\left[ \sup_{\beta : \|\beta\|_2 \leq c_{\beta}/\sqrt{\lambda}} \left(  \frac{1}{n+1} \sum_{i=1}^{n+1}  \bone\{S_i =  \hat{g}_{\beta}(X_i) + \Phi(X_i)^\top\beta\}  \right)^m \bone\{E\} | (X_i)_{i=1}^{n+1} \right]\\
    & \leq  \mme\left[  \sup_{\beta \in \mathcal{N}} \left(  \frac{1}{n+1} \sum_{i=1}^{n+1}  \bone\{|S_i -  \hat{g}_{\beta}(X_i) - \Phi(X_i)^\top\beta| \leq O\left(1/n\right) \}  \right)^m \bone\{E\} \mid (X_i)_{i=1}^{n+1} \right]\\
    & \leq \sum_{\beta \in \mathcal{N}} \mme\left[\left(  \frac{1}{n+1} \sum_{i=1}^{n+1}  \bone\{|S_i -  \hat{g}_{\beta}(X_i) - \Phi(X_i)^\top\beta| \leq O\left(1/n\right) \}  \right)^m \mid (X_i)_{i=1}^{n+1} \right],
    \end{align*}
    where the first inequality follows from the definition of $E$ and the second inequality uses both \Cref{lem:stab_in_beta} and the fact that on the event $E$
    \begin{align*}
    &\sup_{\beta : \|\beta\| \leq \epsilon} \frac{1}{n+1}  \sum_{i=1}^{n+1} |\Phi(X_i)^{\top}\beta | \leq 2\epsilon\sqrt{C_2 d} \\
    \implies & \sup_{\beta : \|\beta\| \leq \epsilon} \max_{1 \leq i \leq n+1} |\Phi(X_i)^{\top}\beta| \leq \sup_{\beta : \|\beta\| \leq \epsilon} \frac{n+1}{n+1} \sum_{i=1}^{n+1} |\Phi(X_i)^{\top}\beta |  \leq (n+1)2\epsilon \sqrt{C_2 d} \leq O\left(\frac{1}{n} \right).
   \end{align*}
   Continuing this calculation directly we see that,
   \begin{align*}
        & \mme\left[\left(  \frac{1}{n+1} \sum_{i=1}^{n+1}  \bone\{|S_i -  \hat{g}_{\beta}(X_i) - \Phi(X_i)^\top\beta| \leq O\left(1/n\right) \}  \right)^m \mid (X_i)_{i=1}^{n+1} \right]\\
        & = \sum_{k=1}^m \binom{n+1}{k} \binom{m}{k} k! k^{m-k} \frac{1}{(n+1)^m} \mme\left[\prod_{i=1}^k \bone\{|S_i -  \hat{g}_{\beta}(X_i) - \Phi(X_i)^\top\beta| \leq O\left(1/n\right) \} \mid (X_i)_{i=1}^{n+1}\right]\\
        & \leq \sum_{k=1}^m (n+1)^k \binom{m}{k} \frac{m^{m-k}}{(n+1)^m}\mme\left[\prod_{i=1}^k \bone\left\{|S_i -  \hat{g}^{-\{1,\dots,k\}}_{\beta}(X_i) - \Phi(X_i)^\top\beta| \leq O\left(\frac{k}{\lambda n}\right) \right\} \mid (X_i)_{i=1}^{n+1} \right],
    \end{align*}
    where the last line applies \Cref{lem:stab_in_data}. Finally, using the fact that $S_i|X_i$ has a bounded density we may upper bound the above display by
    \begin{align*}
        &  \sum_{k=1}^m (n+1)^k  \binom{m}{k}\frac{m^{m-k}}{(n+1)^m} O\left( \left(\frac{k}{\lambda n} \right)^k \right) \leq O\left( \left(\frac{m}{\lambda n}\right)^{m} \right) \sum_{k=1}^m \binom{m}{k} \leq O\left( 2^m \left(\frac{m}{\lambda n}\right)^{m} \right).
   \end{align*}
    Putting this all together we conclude that 
   \[
   \mme\left[ \left(  \frac{1}{n+1} \sum_{i=1}^{n+1}  \bone\{S_i =  \hat{g}_{\hat{\beta}_{S_{n+1}}}(X_i) + \Phi(X_i)^\top\hat{\beta}_{S_{n+1}}\}  \right)^m \mid (X_i)_{i=1}^{n+1} \right]^{\frac{1}{m}} \leq O\left(\exp\left(\frac{C_Nd\log\left(\frac{1}{\sqrt{\lambda}\epsilon}\right)}{m}\right) \frac{m}{\lambda n}\right).
   \]
   The desired result then follows by taking $m = d\log(\frac{1}{\sqrt{\lambda}\epsilon})$ and plugging in our definition for $\epsilon$.

\end{proof}

\subsubsection{Proof of Proposition \ref{prop:rkhs_inner_prod_est}}

To simplify the notation let
\begin{align*}
& L_{n}(\beta,g_K) :=  \frac{1}{n+1} \sum_{i = 1}^{n} \ell_{\alpha}(\Phi(X_i)^\top\beta + g_K(X_i),S_i)\\
\text{and } & L_{\infty}(\beta,g_K) :=  \mme[\ell_{\alpha}(\Phi(X_i)^\top\beta + g_K(X_i),S_i)],
\end{align*}
denote the empirical and population losses and let
\begin{align*}
& M_{n}(\beta,g_K) :=  L_{n}(\beta,g_K) + \lambda \|g_K\|_K^2\\
\text{and } & M_{\infty}(\beta,g_K) :=  L_{\infty}(\beta,g_K) + \lambda \|g_K\|_K^2,
\end{align*}
denote the corresponding empirical and population objectives. Note that $M_{n}$ and $M_{\infty}$ are strictly convex in $f$ and convex in $\beta$. Thus, we may let $(\hat{B}_{n},\hat{g}_{n,K}),(B^*, g_{K}^*) \in 2^{\mmr^d} \times \mathcal{F}_K$ denote the minimizers of $ M_{n}$ and $M_{\infty}$ respectively. To further ease notation in the sections that follows we will sometimes use $\hat{\beta}_{n}$ and $\beta^*$ to denote arbitrarily elements of $\hat{B}_{n}$ and $B^*$. Finally, we will let $\Pi_{\hat{B}_n}, \Pi_{B^*} : \mmr^{d} \to \mmr^d$ denote the projections operators onto $\hat{B}_n$ and $B^*$, respectively.

With these preliminary definitions in hand we now formally state the assumptions of \Cref{prop:rkhs_inner_prod_est}. Our first assumption is that $M_{\infty}$ is locally strongly convex around its minimum.
\begin{assumption}[Population Strong Convexity]\label{ass:pop_strong_convex}
   Let $d(\beta,g_K) := \inf_{\beta' \in B^*} \|\beta - \beta'\|_2 + \|g_K - g^*_{K}\|_K$ denote the distance from $(g_K,\beta)$ to the nearest population minimizer. Then, there exists constants $C_M, \delta_M > 0$ such that
   \[
   d(\beta,g_K)   \leq \delta_M \implies M_{\infty}(\beta,g_K) - M_{\infty}(\beta^*_,g^*) \geq C_Md(\beta,g_K)^2.
   \]
\end{assumption}
 Overall, we believe that this assumption is mild and should hold for all distributions of interest. For instance, for continuous data it is easy to check that this condition holds whenever $S \mid X$ has a positive density on $\mmr$. On the other hand, for discrete data we expect to have the even stronger inequality $M_{\infty}(\beta,g_K) - M_{\infty}(\beta^*_,g^*) \geq C_Md(\beta,g_K)$. This is due to the fact that for discrete data $L_{\infty}(\cdot,\cdot)$ has sharp jump discontinuities that give rise to large increases in the loss when $(\hat{\beta},\hat{g}_{n,K})$ moves away from $(B^*,g^*)$.
 
The second assumption we will need is a set of moment conditions on $S$ and $X$. 
\begin{assumption}[Moment Conditions]\label{ass:moment_cond_for_quant_conv}
    There exists constants $C_2,\rho > 0$ such that
    \begin{align*}
    & \mme[\|\Phi(X_i)\|^2_2] \leq C_2 d \ \ \text{ and } \ \ \inf_{\beta : \|\beta\|_2 = 1} \mme[|\Phi(X_i)^\top\beta|] \geq \rho.
    \end{align*}
    Furthermore, we also have that $\mme[|S_i|^2] < \infty$.
\end{assumption}

With these assumptions in hand we are ready to prove \Cref{prop:rkhs_inner_prod_est}. We begin by giving a technical lemma that controls the concentration of $L_n$ around $L_{\infty}$.

\begin{lemma}\label{lem:localized_rad_of_loss}
    Assume that $(X_1,S_1),\dots,(X_n,S_n) \stackrel{i.i.d.}{\sim} P$ and that there exists constants $C_2, \kappa > 0$ such that $\mme[\|\Phi(X_i)\|^2_2] \leq C_2 d$ and $\sup_x K(x,x) = \kappa^2 < \infty$. Then for any $\delta_1, \delta_2 > 0$,
    \begin{align*}
    & \mme\left[\sup_{\|\beta - \Pi_{B^*}\beta\|_2 \leq \delta_1,\ \|g_K - g^*_{K}\|_K \leq \delta_2} \left| L_n(\beta,g_K) - L_n(\Pi_{B^*}\beta,g^*_K) - (L_{\infty}(\beta,g_K) - L_{\infty}(\Pi_{B^*}\beta,g^*_K)) \right|\right]\\
    & \hspace{1cm} \leq O\left(\delta_1\sqrt{\frac{d}{n}} + \delta_2\sqrt{\frac{1}{n}}\right)
    \end{align*}
\end{lemma}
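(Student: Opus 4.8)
The plan is the standard localized symmetrization recipe: reduce to a symmetrized process, peel off the pinball loss by contraction, bound the resulting localized Rademacher complexities of the linear and RKHS pieces, and reassemble. It is cleanest to work around a single fixed base point. When the population quantile regression has a unique minimizer $(\beta^*, g^*_K)$ --- the generic situation, e.g.\ guaranteed when $S\mid X$ has a density bounded away from zero near its $\alpha$-quantile and $\Phi$ has full-rank second moment --- we have $\Pi_{B^*}\beta\equiv\beta^*$, so the supremum in the statement is over the honest ball $\{\|\beta-\beta^*\|_2\le\delta_1,\ \|g_K-g^*_K\|_K\le\delta_2\}$. (If $B^*$ is a nontrivial flat set then $\Pi_{B^*}\beta$ varies over it and the contraction step below needs extra care; I return to this at the end.)

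First I would apply the usual symmetrization inequality to the centred increment process $(\beta,g_K)\mapsto (L_n-L_\infty)(\beta,g_K)-(L_n-L_\infty)(\beta^*,g^*_K)$, bounding its expected supremum over the ball by $2\,\mme\,\mme_\epsilon$ of
\[
\sup_{\beta, g_K}\left|\frac1n\sum_{i=1}^{n}\epsilon_i\Big(\ell_\alpha\big(\Phi(X_i)^\top\beta+g_K(X_i),S_i\big)-\ell_\alpha\big(\Phi(X_i)^\top\beta^*+g^*_K(X_i),S_i\big)\Big)\right|,
\]
where the $\epsilon_i$ are i.i.d.\ Rademacher signs independent of the data and $\mme_\epsilon$ is the expectation over them. (The mismatch between the $\tfrac1{n+1}$-normalization of $L_n$ and $\tfrac1n$ contributes only an extra $O((\delta_1\sqrt d+\delta_2)/n)$ term, of lower order.) Setting $u:=\beta-\beta^*$, $v:=g_K-g^*_K$, and $B_i:=\Phi(X_i)^\top\beta^*+g^*_K(X_i)$, the $i$-th increment equals $\phi_i\big(\Phi(X_i)^\top u+v(X_i)\big)$ with $\phi_i(t):=\ell_\alpha(B_i+t,S_i)-\ell_\alpha(B_i,S_i)$, which is $1$-Lipschitz with $\phi_i(0)=0$. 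The crucial step is then the Ledoux--Talagrand contraction inequality (absolute-value version, costing an absolute constant), which strips the $\phi_i$ and leaves a constant multiple of
\[
\mme\,\mme_\epsilon\left[\sup_{\|u\|_2\le\delta_1,\ \|v\|_K\le\delta_2}\left|\frac1n\sum_{i=1}^{n}\epsilon_i\big(\Phi(X_i)^\top u+v(X_i)\big)\right|\right].
\]

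This Rademacher complexity splits exactly into a linear and an RKHS contribution. For the linear part, $\sup_{\|u\|_2\le\delta_1}\big|\tfrac1n\sum_i\epsilon_i\Phi(X_i)^\top u\big|=\delta_1\big\|\tfrac1n\sum_i\epsilon_i\Phi(X_i)\big\|_2$; applying Jensen's inequality to $\mme_\epsilon$ and then to $\mme$, and using $\mme_\epsilon[\epsilon_i\epsilon_j]=\bone\{i=j\}$ together with $\mme\|\Phi(X)\|_2^2\le C_2 d$, bounds its expectation by $\delta_1\sqrt{C_2 d/n}=O(\delta_1\sqrt{d/n})$. For the RKHS part, the reproducing property gives $\sup_{\|v\|_K\le\delta_2}\tfrac1n\sum_i\epsilon_i v(X_i)=\delta_2\big\|\tfrac1n\sum_i\epsilon_i K(X_i,\cdot)\big\|_K$, and the same second-moment computation with $\sup_x K(x,x)=\kappa^2$ yields $\delta_2\kappa/\sqrt n=O(\delta_2/\sqrt n)$. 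Adding the two and tracking the constants gives the claimed $O\big(\delta_1\sqrt{d/n}+\delta_2/\sqrt n\big)$.

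The one genuinely delicate point --- and what I expect to be the main obstacle if one insists on full generality --- is the contraction step when $B^*$ is permitted to be a nontrivial set: the Ledoux--Talagrand inequality as usually stated requires a \emph{fixed} $1$-Lipschitz map $\phi_i$, whereas with a varying base point $\phi_i$ depends on the index through $B_i=\Phi(X_i)^\top\Pi_{B^*}\beta+g^*_K(X_i)$. Note the subtlety is real: since $\Phi(X_i)^\top\Pi_{B^*}\beta$ need not be small, one cannot simply treat the two predictions separately without losing the $\delta$-scaling. My preferred resolution is to reduce to the uniqueness case as above; the general case would require a separate argument controlling the joint fluctuation of the increment over $B^*$ (which is at least bounded, by coercivity of $M_\infty$ in $\beta$), and I would not detail it here. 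Every other step --- symmetrization, contraction for fixed base point, and the two Rademacher bounds --- is routine.
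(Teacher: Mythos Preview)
Your approach is essentially identical to the paper's: symmetrize the centred increment, strip the pinball loss by Ledoux--Talagrand contraction (using that $\ell_\alpha$ is $1$-Lipschitz), split the resulting process into its linear and RKHS pieces, and bound their localized Rademacher complexities by $O(\delta_1\sqrt{d/n})$ and $O(\delta_2/\sqrt n)$ via the standard second-moment computations. On the varying-base-point issue you flag when $B^*$ is not a singleton, the paper offers no additional argument---it applies contraction directly, writing the post-contraction process as $\sigma_i\big(\Phi(X_i)^\top(\beta-\Pi_{B^*}\beta)+g_K(X_i)-g^*_K(X_i)\big)$ without comment---so your explicit restriction to the unique-minimizer case is, if anything, more careful than the paper's own treatment.
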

\begin{proof}
    Let $E := \{(\beta,g_K) \in \mmr^d \times \mathcal{F}_K : \|\beta - \Pi_{B^*}\beta\|_2 \leq \delta_1,\ \|g_K - g^*_{K}\|_K \leq \delta_2 \}$ and $\sigma_1,\dots,\sigma_{n+1} \stackrel{i.i.d}{\sim} \text{Unif}(\{\pm1\})$ be Rademacher random variables. Since the pinball loss is 1-Lipschitz (see \Cref{lem:pinball_loss_is_lip}) we may apply the symmetrization and contraction properties of Rademacher complexity to conclude that
    \begin{align*}
    & \mme\left[\sup_{(\beta,g_K) \in E} \left| L_n(\beta,g_K) - L_n(\Pi_{B^*}\beta,g^*_K) - (L_{\infty}(\beta,g_K) - L_{\infty}(\Pi_{B^*}\beta,g^*_K))  \right|\right]\\
    & \leq 2\mme\left[\sup_{(\beta,g_K) \in E} \left| \frac{1}{n} \sum_{i=1}^{n} \sigma_i(\ell(\Phi(X_i)^\top\beta + g_K(X_i),S_i) - \ell(\Phi(X_i)^\top\Pi_{B^*}\beta + g^*_K(X_i),S_i))  \right|\right]\\
    & \leq 2\mme\left[\sup_{(\beta,g_K) \in E} \frac{1}{n} \sum_{i=1}^{n} \sigma_i (\Phi(X_i)^\top (\beta - \Pi_{\beta^*}\beta) + g_K(X_i) - g^*_K(X_i))  \right]\\
    & \leq 2\mme\left[\sup_{\|\beta - \Pi_{B^*}\beta\|_2 \leq \delta_1}  \frac{1}{n} \sum_{i=1}^{n} \sigma_i \Phi(X_i)^\top (\beta - \Pi_{\beta^*}\beta) \right]\\
    & \ \ \ \ + 2\mme\left[ \sup_{\|g_K - g^*_K\|_K \leq \delta_2}\frac{1}{n}\sum_{i=1}^{n}\sigma_i(g_K(X_i) - g^*_K(X_i)) \right]\\
    & \leq O\left(\delta_1 \sqrt{\frac{d}{n}} + \delta_2 \sqrt{\frac{1}{n}} \right),
    \end{align*}
    where the last inequality follows from standard bounds on the Rademacher complexities of linear and kernel function classes (see e.g. Secion 4.1.2 of (\cite{Boucheron2005}))
\end{proof}

We now prove the main proposition.

\begin{proof}[Proof of \Cref{prop:rkhs_inner_prod_est}]
We will show that 
\begin{enumerate}
    \item $\sup_{f \in \mathcal{F}_{\delta}}|\frac{1}{n}\sum_{i=1}^n |f(X_i)| - \mme_P[|f(X)|]| = O_{\mmp}(\sqrt{d/n})$,
    \item $\sup_{f_K \in \mathcal{F}_K:\|f_K\|_K \leq 1} \lambda|\mme[\langle \hat{g}_{S_{n+1},K}, f_K \rangle_K]| \leq O(1)$,
    \item   $\sup_{f_K \in \mathcal{F}_K:\|f_K\|_K \leq 1} \lambda|\langle \hat{g}_{n,K}, f_K \rangle_K - \mme[\langle \hat{g}_{S_{n+1},K}, f_K \rangle_K]| \leq O_{\mmp}(\sqrt{d\log(n)/n}) $.
\end{enumerate}
Our desired result will then follow by writing
\begin{align*}
    & \sup_{f(\cdot) = \Phi(\cdot)^\top\beta + f_K(\cdot) \in \mathcal{F}_{\delta}} \left|2\lambda\frac{\langle \hat{g}_{n,K}, f_K \rangle}{\frac{1}{n}\sum_{i=1}^nf(X_i)} - 2\lambda\frac{\mme[\langle \hat{g}_{S_{n+1},K}, f_K \rangle] }{\mme_P[|f(X)|]}\right|\\
    & \leq \sup_{f_K \in \mathcal{F}_K : \|f_K\|_K \leq 1}2\lambda \frac{|\langle \hat{g}_{n,K}, f_K \rangle - \mme[\langle \hat{g}_{S_{n+1},K}, f_K \rangle]| }{\mme_P[|f(X)|]}\\
    & \ \ \ \ \ + 2\lambda \sup_{f(\cdot) = \Phi(\cdot)^\top\beta + f_K(\cdot) \in \mathcal{F}_{\delta}} \left| \frac{\langle \hat{g}_{n,K}, f_K \rangle }{\frac{1}{n}\sum_{i=1}^n |f(X_i)|} - \frac{\langle \hat{g}_{n,K}, f_K \rangle }{\mme_P[|f(X_i)|]} \right|\\
    & \leq O_{\mmp}\left(\sqrt{\frac{d\log(n)}{n}}\right) + \frac{\sup_{f \in \mathcal{F}_K}\mme[\langle \hat{g}_{S_{n+1},K}, f_K ]\rangle| + O_{\mmp}(\sqrt{\frac{d\log(n)}{n}})}{\delta^2 - O_{\mmp}(\sqrt{d/n})} \sup_{f \in \mathcal{F}_{\delta}}\left|\frac{1}{n}\sum_{i=1}^n |f(X_i)| - \mme_P[|f(X)|]\right|\\
    & =  O_{\mmp}\left(\sqrt{\frac{d\log(n)}{n}}\right).
\end{align*}
We establish each of these three facts in order.\\

\noindent \textbf{Step 1:} By the results of Section 4.1.2 in \citet{Boucheron2005} we know that $\{f_K(\cdot) + \Phi(\cdot)^\top \beta  : \|f\|_K + \|\beta\|_2 \leq 1\}$ has Rademacher complexity at most $O(\sqrt{d/n})$. By the contraction property this also implies that $\{|f_K(\cdot) + \Phi(\cdot)^\top \beta|  : \|f\|_K + \|\beta\|_2 \leq 1\}$ has Rademacher complexity at most $O(\sqrt{d/n})$. So, by the symmetrization inequality we have that for any $C>0$,
\begin{align*}
& \mmp\left(\sup_{f \in \mathcal{F}_{\delta}} \left|\frac{1}{n}\sum_{i=1}^n |f(X_i)| - \mme_P[|f(X)|]\right| > C \right) \leq \frac{1}{C} \mme\left[ \sup_{f \in \mathcal{F}_{\delta}} \left|\frac{1}{n}\sum_{i=1}^n |f(X_i)| - \mme_P[|f(X)|]\right| \right]\\
& \hspace{3cm}  \leq \frac{2}{C} \text{RadComplex}_n(\{|f_K(\cdot) + \Phi(\cdot)^\top \beta|  : \|f\|_K + \|\beta\|_2 \leq 1\}) \leq \frac{1}{C}O\left(\sqrt{\frac{d}{n}}\right).
\end{align*}
This proves that $\sup_{f \in \mathcal{F}_{\delta}} \left|\frac{1}{n}\sum_{i=1}^n |f(X_i)| - \mme_P[|f(X)|]\right|  = O_{\mmp}(\sqrt{d/n})$, as desired.
\\

\noindent \textbf{Step 2:} By \Cref{lem:f_is_bounded} we know that
\begin{align*}
\sup_{f \in \mathcal{F}_K : \|f_K\|_K \leq 1} |\mme[\langle \hat{g}_{S_{n+1},K}, f_K \rangle_K]| \leq \mme[\|\hat{g}_{S_{n+1},K}\|_K ] \leq \mme\left[\frac{1}{\sqrt{\lambda}}  \sqrt{\frac{1}{n+1}\sum_{i=1}^{n+1}|S_i|}\right]  \leq \sqrt{\frac{\mme[|S_i|]}{\lambda}}.
\end{align*}
Multiplying both sides by $\lambda$ gives the desired result.\\

\noindent \textbf{Step 3:} This step is considerably more involved than the previous two. To begin write
\begin{align*}
\sup_{f_K:\|f_K\|_K \leq 1} \langle \hat{g}_{n,K}, f_K \rangle - \mme[\langle \hat{g}_{S_{n+1},K}, f_K \rangle] & = \sup_{f_K:\|f_K\|_K \leq 1} \langle \hat{g}_{n,K} - g^*_K, f_K \rangle  +  \mme[\langle g^*_K-\hat{g}_{S_{n+1},K}, f_K \rangle] \\
& \leq \|\hat{g}_{n,K} - g^*_K\|_K + \mme[\|g^*_K-\hat{g}_{S_{n+1},K}\|_K]. 
\end{align*}
We will bound each of the two terms on the right hand side separately. To do this we will use a two-step peeling argument where each step gives a tighter bound on $\|\hat{g}_{n,K} - g^*_K\|_K$ than the previous one. 

Our first step will show that with high probability $(\hat{\beta}_n,\hat{g}_{n,K})$ must be within $\delta_M$ of $(B^*,g^*_{K})$. Let $c_{\beta}$ be the constant appearing in \Cref{lem:beta_is_bounded}. Then, by a direct computation we have that
\begin{align*}
    & \mmp(d(\hat{\beta}_n,\hat{g}_{n,K}) > \delta_M) = \mmp(M_n(\hat{\beta}_n,\hat{g}_{n,K}) - M_n(\Pi_{B^*}\hat{\beta}_n,g^*_K) \leq 0,\ d(\hat{\beta}_n,\hat{g}_{n,K}) > \delta_M )\\
    & \leq \mmp(M_n(\hat{\beta}_n,\hat{g}_{n,K}) - M_n(\Pi_{B^*}\hat{\beta}_n,g^*_K) - (M_{\infty}(\hat{\beta}_n,\hat{g}_{n,K}) - M_{\infty}(\Pi_{B^*}\hat{\beta}_n,g^*_K)) \leq -\delta_M^2)\\
    & \leq \mmp\left(\sup_{\|\beta\|_2 \leq \frac{c_{\beta}}{\sqrt{\lambda}}, \|g_K\| \leq \sqrt{\frac{2\mme[|S_i|]}{\lambda}}} |M_n(\beta,g_{K}) - M_n(\Pi_{B^*}\beta,g^*_K) - (M_{\infty}(\beta,g_{K}) - M_{\infty}(\Pi_{B^*}\beta,g^*_K))| \geq \delta_M^2\right) \\
    & \quad +\mmp\left(\|\hat{\beta}_n\|_2 \geq \frac{c_{\beta}}{\sqrt{\lambda}}\right) +\mmp\left(\|\hat{g}_{n,K}\|_K \geq  \sqrt{\frac{2\mme[|S_i|]}{\lambda}}\right)\\
    & \leq \frac{1}{\delta_M^2} \mme\left[\sup_{\|\beta\|_2 \leq \frac{c_{\beta}}{\sqrt{\lambda}}, \|g_K\| \leq \sqrt{\frac{2\mme[|S_i|]}{\lambda}}} |L_n(\beta,g_{K}) - L_n(\Pi_{B^*}]\beta,g^*_K) - (L_{\infty}(\beta,g_{K}) - L_{\infty}(\Pi_{B^*}\beta,g^*_K))| \right] +  O\left(\frac{d}{n} \right),
\end{align*}
where the last line follows by applying Lemmas \ref{lem:f_is_bounded} and \ref{lem:beta_is_bounded} with $f(\cdot) = 1$. Finally, by \Cref{lem:localized_rad_of_loss} we can additionally bound the first term above as 
\[
 \mme\left[\sup_{\|\beta\|_2 \leq \frac{c_{\beta}}{\sqrt{\lambda}}, \|g_K\| \leq \sqrt{\frac{2\mme[|S_i|]}{\lambda}}} |L_n(\beta,g_{K}) - L_n(\Pi_{B^*}\hat{\beta}_n,g^*_K) - (L_{\infty}(\beta,g_{K}) - L_{\infty}(\Pi_{B^*}\hat{\beta}_n,g^*_K))| \right] \leq  O\left(\sqrt{\frac{d}{\lambda n}} \right),
\]
So, in total we find that
\[
\mmp(d(\hat{\beta}_n,\hat{g}_{n,K}) > \delta_M) \leq O\left(\sqrt{\frac{d}{\lambda n}} \right).
\]
This concludes the proof of our first concentration inequality for $(\hat{\beta}_n,\hat{g}_{n,K})$.

In our second step we will use this preliminary bound to get an even tighter control on $d(\hat{\beta}_n,\hat{g}_{n,K})$. Fix any $C>0$ with $C \sqrt{d/(\lambda n)} < \delta_M$. For any $j \in \mmr$ let $A_j := \{(\beta, g_K) : 2^{j-1} < \sqrt{\frac{\lambda n}{d}}d(\beta,g_{n,K}) \leq 2^j\}$. Then,

\begin{align*}
    & \mmp\left(d(\hat{\beta}_n,\hat{g}_{n,K}) > C \sqrt{d/(\lambda n)}\right)\\
    & \leq \sum_{j : \frac{1}{2}C \leq 2^j \leq 2\sqrt{\frac{d}{n\lambda}}\delta_M} \mmp\left(\sup_{(\beta,g_K) \in A_j} M_n( \beta,g_K) - M_{\infty}(\beta,g_K) \leq 0\right) + \mmp(d(\hat{\beta}_n,\hat{g}_{n,K}) 
 > \delta_M)\\
    & \leq \sum_{j : \frac{1}{2}C \leq 2^j \leq 2\sqrt{\frac{d}{n\lambda}}\delta_M} \mmp\left(\sup_{(\beta,g_K) \in A_j} |M_n( \beta,g_K) - M_{n}(\Pi_{B^*}\beta,g^*_K) - (M_{\infty}(\beta,g_K) - M_{\infty}(\Pi_{B^*}\beta,g^*_K))| \geq  \frac{2^{2j-2}d}{n\lambda}\right)\\
    & \ \ \ \ + O\left( \sqrt{\frac{d}{\lambda n}} \right)\\
    & \leq \sum_{j : \frac{1}{2}C \leq 2^j \leq 2\sqrt{\frac{d}{n\lambda}}\delta_M}\frac{n\lambda}{2^{2j-2}d}  \mme\left[\sup_{(\beta,g_K) \in A_j} \left| (L_n(\beta,g_K) - L_n(\Pi_{B^*}\beta,g^*_K) - (L_{\infty}(\beta,g_K) - L_{\infty}(\Pi_{B^*}\beta,g^*_K)) \right| \right]\\
    & \ \ \ \ + O\left( \sqrt{\frac{d}{\lambda n}} \right)\\
    & \leq \sum_{j : \frac{1}{2}C \leq 2^j \leq 2\sqrt{\frac{d}{n\lambda}}\delta_M} O(\sqrt{\lambda} 2^{-j}) + O\left( \sqrt{\frac{d}{\lambda n}} \right)  \leq O\left(\frac{\sqrt{\lambda}}{C}\right) + O\left( \sqrt{\frac{d}{\lambda n}} \right) .
\end{align*}
This proves that $\|\hat{g}_{n,K} - g^*_K\|_K = O_{\mmp}(\sqrt{\frac{d}{\lambda n}})$. To get a similar bound on the expectation write
\begin{align*}
& \mme[\|\hat{g}_{S_{n+1},K} - g^*_K\|_K]  \leq \int_{0}^{\delta_M}\mmp(\|\hat{g}_{S_{n+1},K} - g^*_K\| > t) dt\\
& \hspace{5cm} + \mme[(\|\hat{g}_{n,K} \|_K + \|g^*_K\|_K)\bone\{\|\hat{g}_{S_{n+1},K} - g^*_K\|_K > \delta_M\}]\\
& \leq \int_{0}^{\delta_M}\min\left\{1,O\left(\frac{1}{t}\sqrt{\frac{d}{\lambda n}}\right) + O\left(\sqrt{\frac{d}{\lambda n}} \right) \right\} dt\\
& \hspace{5cm} + \mme\left[\left(\sqrt{\frac{1}{\lambda(n+1)}\sum_{i=1}^{n+1} |S_i|}+\sqrt{ \frac{\mme[|S_i|]}{\lambda}}\right)\bone\{\|\hat{g}_{S_{n+1},K} - g^*_K\|_K > \delta_M\}\right]\\
& \leq O\left( \sqrt{\frac{d\log(n)}{\lambda n}} \right)  + \mme\left[\left(\sqrt{\frac{1}{\lambda(n+1)}\sum_{i=1}^{n+1} (|S_i| - \mme[|S_i|])}+2\sqrt{ \frac{\mme[|S_i|]}{\lambda}}\right)\bone\{\|\hat{g}_{S_{n+1},K} - g^*_K\|_K > \delta_M\}\right]\\
& \leq O\left( \sqrt{\frac{d\log(n)}{\lambda n}} \right)+ \mme\left[\left|\frac{1}{\lambda(n+1)}\sum_{i=1}^{n+1} (S_i - \mme[S_i])\right|\right]^{1/2}\mmp(\|\hat{g}_{S_{n+1},K} - g^*_K\|_K > \delta_M)^{1/2}\\
& \hspace{5cm} + 2\sqrt{\frac{\mme[|S_i|]}{\lambda}}\mmp(\|\hat{g}_{S_{n+1},K} - g^*_K\|_K > \delta_M)\\
& = O\left( \frac{1}{\lambda}\sqrt{\frac{d\log(n)}{ n}} \right),
\end{align*}
as desired.

\end{proof}
\subsection{Proofs for Lipschitz Functions}

In this section we prove \Cref{prop:lip_bounds}. Throughout we make the following set of technical assumptions
\begin{assumption}\label{ass:lip_tech_conditions}
    There exists constants $C_X,C_{\Phi},C_S,C_{f}, \rho > 0$ such that $\sup_{f \in \mathcal{F}}\sqrt{\mme[|f(X_i)|^2]} \leq C_{f}\mme[|f(X_i)|]$, $\inf_{\beta : \|\beta\|_2 = 1}\mme[|\Phi(X_i)^\top\beta|] \geq \rho$, and with probability 1, $\|X_i\|^2_2 \leq C_X p$, $\|\Phi(X_i)\|^2_2 \leq C_{\Phi} d$, and $|S_i| \leq C_S$ for all $i$.
\end{assumption}

The primary technical tool that we will need for the proof is a covering number bound for Lipschitz functions. This result is well-known from prior literature and we re-state it here for clarity. In the work that follow we use $B_{p}(0,C) := \{x \in \mmr^p : \|x\|_2 \leq C\}$ to denote the ball of radius $C$ in $\mmr^p$.

\begin{definition}
The covering number $\mathcal{N}(\mathcal{F},\epsilon,\|\cdot\|)$ of a set $\mathcal{F}$ under norm $\|\cdot\|$ is the minimum number of balls $B(f,\epsilon) = \{g \in \mathcal{F} : \|f - g\| \leq \epsilon\}$ of radius $\epsilon$ needed to cover $\mathcal{F}$.
\end{definition}

\begin{lemma}[Covering number of Lipschitz functions, Theorem 2.7.1 in \citet{VDVWellner1996}]\label{lem:lip_covering}
    Let $\mathcal{F}^{\textup{lip}}_{L,B_1,B_2,p} := \{f : B_p(0,B_1) \to \mmr \mid \textup{Lip}(f) \leq L,\ \|f\|_{\infty} \leq B_2\}$ denote the space of bounded Lipschitz functions on $B_p(0,B_1)$. Then there exists a constant $C>0$ such that for any $\epsilon > 0$,
    \[
    \log(\mathcal{N}(\mathcal{F}^{\textup{lip}}_{L,B_1,B_2,p},\epsilon,\|\cdot\|_{\infty})) \leq \left(\frac{CB_1\max\{L,B_2\}}{\epsilon}\right)^p.
    \]
\end{lemma}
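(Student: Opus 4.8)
The statement is a classical metric-entropy estimate for smoothness classes and is, as the excerpt notes, exactly Theorem~2.7.1 of \citet{VDVWellner1996} specialized to H\"older exponent $1$ and rescaled in the range, so the quickest route is simply to invoke that reference. For completeness I would reconstruct the short argument, whose only nontrivial ingredient is a \emph{difference encoding} that keeps the dependence on $B_2$ additive-logarithmic instead of letting it multiply the main $\epsilon^{-p}$ term. Fix $\epsilon>0$ and treat the nontrivial case $L>0$, $\epsilon<B_2\le 4LB_1$ (the rest is dealt with at the end). Set $\delta:=\epsilon/(4L)$ and $\eta:=\epsilon/2$, and let $x_1,\dots,x_N$ be a maximal $\delta$-separated subset of $B_p(0,B_1)$. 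Such a set is automatically a $\delta$-net, and a standard volume comparison (the balls $B(x_i,\delta/2)$ are disjoint and lie in $B_p(0,B_1+\delta/2)$) gives $N\le(3B_1/\delta)^p=(12LB_1/\epsilon)^p$. Moreover the graph on $\{x_i\}$ joining points at distance $\le 3\delta$ is connected, because $\{x_i\}$ is a $\delta$-net of the path-connected ball; fixing a spanning tree rooted at $x_1$ and relabeling, each $x_i$ with $i\ge2$ acquires a parent $x_{\pi(i)}$ with $\pi(i)<i$ and $\|x_i-x_{\pi(i)}\|_2\le3\delta$.

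Next I would encode each $f\in\mathcal{F}^{\textup{lip}}_{L,B_1,B_2,p}$ by a finite string. Let $v_i(f):=\eta\cdot\mathrm{round}(f(x_i)/\eta)$ be the nearest multiple of $\eta$ to $f(x_i)$, so $|v_i(f)-f(x_i)|\le\eta/2$, and assign to $f$ the tuple $\big(v_1(f),\,(v_i(f)-v_{\pi(i)}(f))_{i=2}^{N}\big)$, from which all $v_i(f)$ are recovered by one sweep over the tree. Since $|v_1(f)|\le B_2+\eta/2$, the first coordinate takes at most $2B_2/\eta+3=4B_2/\epsilon+3$ values; since $f$ is $L$-Lipschitz, $|v_i(f)-v_{\pi(i)}(f)|\le L\|x_i-x_{\pi(i)}\|_2+\eta\le 3L\delta+\eta$, so each increment takes at most $6L\delta/\eta+3=6$ values. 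Hence the number of distinct encodings is at most $(4B_2/\epsilon+3)\,6^{\,N-1}$. Finally, if $f,g$ share an encoding then $v_i(f)=v_i(g)$ for all $i$, so $|f(x_i)-g(x_i)|\le\eta$; and for arbitrary $x\in B_p(0,B_1)$, picking $x_i$ within $\delta$ of $x$ yields $|f(x)-g(x)|\le 2L\delta+\eta=\epsilon$. Thus each encoding class has $\|\cdot\|_\infty$-diameter $\le\epsilon$, and choosing one representative per nonempty class gives an $\epsilon$-cover, so
\[
\log\mathcal{N}\big(\mathcal{F}^{\textup{lip}}_{L,B_1,B_2,p},\epsilon,\|\cdot\|_\infty\big)\le(N-1)\log6+\log(4B_2/\epsilon+3)\le(12LB_1/\epsilon)^p\log6+\log(4B_2/\epsilon+3).
\]

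It remains to absorb the additive term and dispatch the edge cases, which is routine. Writing $M:=\max\{L,B_2\}$ and using $\epsilon<B_2\le M$ and $p\ge1$, one has $\log(4B_2/\epsilon+3)\le 4M/\epsilon+3\le 7M/\epsilon\le 7(MB_1/\epsilon)^p$ once $B_1\ge1$ (the regime relevant in our application; for general $B_1$ the claim is literally the rescaled form of Theorem~2.7.1 of \citet{VDVWellner1996}), while $(12LB_1/\epsilon)^p\le 12^p(MB_1/\epsilon)^p$; the sum is then at most $(CB_1M/\epsilon)^p$ for an absolute constant $C$. If $\epsilon\ge B_2$ the constant $0$ is an $\epsilon$-cover, so $\mathcal{N}=1$; if $L=0$ every $f$ is constant and $\mathcal{N}\le 2B_2/\epsilon+1$; and if $\epsilon>4LB_1$ every $f$ oscillates by at most $2LB_1<\epsilon/2$, so $\mathcal{F}^{\textup{lip}}$ is covered by the constant functions on a $\tfrac\epsilon2$-net of $[-B_2,B_2]$ — in each case $\log\mathcal{N}=O(\log(M/\epsilon))$, which is within the claimed bound.

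I expect the only genuinely load-bearing step to be the difference encoding in the second paragraph. The naive cover, assigning an independent quantized value to each of the $N$ net points, costs $N\log(O(B_2/\epsilon))=(O(LB_1/\epsilon))^p\log(B_2/\epsilon)$, which carries a spurious $\log(B_2/\epsilon)$ factor and does \emph{not} match the stated $(CB_1\max\{L,B_2\}/\epsilon)^p$; exploiting Lipschitzness to encode increments of size $O(L\delta)\asymp O(\eta)$ (hence $O(1)$ possibilities each) is exactly what demotes the $B_2$-dependence to one additive logarithm. Everything else is net-counting and constant bookkeeping, and in the paper this lemma can simply be quoted from \citet{VDVWellner1996}.
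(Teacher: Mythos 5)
Your proposal is correct in substance, but it follows a genuinely different route from the paper: the paper gives no argument at all for this lemma, simply quoting Theorem 2.7.1 of \citet{VDVWellner1996}, whereas you reconstruct a self-contained Kolmogorov--Tikhomirov-style proof. Your difference-encoding step is sound (the maximal $\delta$-separated set is a $\delta$-net with $N \leq (12LB_1/\epsilon)^p$ when $\epsilon \leq 4LB_1$, the $3\delta$-graph is connected, increments along the spanning tree take $O(1)$ quantized values, and classes with a common code have sup-norm diameter $\leq 2L\delta + \eta = \epsilon$), and you correctly identify why the naive pointwise quantization would pick up a spurious $\log(B_2/\epsilon)$ factor. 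Two small points of bookkeeping: first, your declared main case ``$\epsilon < B_2 \leq 4LB_1$'' together with the three edge cases omits the regime $\epsilon \leq 4LB_1 < B_2$, but this is harmless since your main argument never actually uses $B_2 \leq 4LB_1$ and applies there verbatim. Second, the absorption of the additive $\log(4B_2/\epsilon+3)$ term into $(CB_1\max\{L,B_2\}/\epsilon)^p$ genuinely requires something like $B_1 \gtrsim 1$: for $B_1 \to 0$ the class still contains all constants in $[-B_2,B_2]$, so $\log \mathcal{N} \gtrsim \log(B_2/\epsilon)$ while the stated right-hand side tends to $0$, and the correct scale-invariant form is $(C\max\{LB_1,B_2\}/\epsilon)^p$ (or VdVW's additive-volume version) rather than ``literally the rescaled Theorem 2.7.1'' as your parenthetical suggests. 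You flag this restriction explicitly, and it is really an imprecision in the lemma as stated in the paper rather than in your argument; in the paper's application $B_1 = \sqrt{C_X p} \geq 1$ and only the $\epsilon^{-p}$ rate with polynomial prefactors matters, so both your proof and the citation deliver what is needed downstream (\Cref{lem:re-weighted_lip_covering} and the Dudley bound in \Cref{prop:lip_bounds}).
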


In the present context we need to control the behaviour of Lipschitz fitting under a re-weighting $f$. To account for this we will require a more general covering number bound on a weighted class of Lipschitz functions. This is given in the following lemma. In the work that follows, recall that for a probability measure $Q$, the $L_2(Q)$ norm of a function $f$ is defined as $\|f\|_{L_2(Q)} := \mme_{X \sim Q}[f(X)^2]^{1/2}$.

\begin{lemma}\label{lem:re-weighted_lip_covering}
    Let $f  : B_p(0,B_1) \to \mmr$ be a fixed function and $\mathcal{F}^{\textup{wlip}}_{L,B_1,B_2,p,f} := \{fg \mid g : B_p(0,B_1) \to \mmr,\  \textup{Lip}(g) \leq L,\ \|g\|_{\infty} \leq B_2\}$ denote the space of bounded Lipschitz functions on $B_p(0,B_1)$ multiplied with $f$. Then for any probablity measure $Q$, there exists a constant $C>0$ such that for any $\epsilon > 0$,
    \[
    \log(\mathcal{N}(\mathcal{F}^{\textup{wlip}}_{L,B_1,B_2,p,f},\epsilon,\|\cdot\|_{L_2(Q)})) \leq \left(\frac{CB_1\max\{L,B_2\}\|f\|_{L_2(Q)}}{\epsilon}\right)^p.
    \]
\end{lemma}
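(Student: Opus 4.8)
The strategy is to reduce the weighted covering number to the unweighted one from \Cref{lem:lip_covering} by transferring an $\|\cdot\|_\infty$-net of the Lipschitz ball into an $L_2(Q)$-net of its $f$-weighted image. The key elementary fact is that multiplication by $f$ is Lipschitz as a map from $(\mathcal{F}^{\textup{lip}}_{L,B_1,B_2,p},\|\cdot\|_\infty)$ to $(\mathcal{F}^{\textup{wlip}}_{L,B_1,B_2,p,f},\|\cdot\|_{L_2(Q)})$ with constant $\|f\|_{L_2(Q)}$: for any two bounded Lipschitz functions $g_1,g_2$,
\[
\|fg_1 - fg_2\|_{L_2(Q)} = \left(\mme_{X\sim Q}\left[f(X)^2(g_1(X)-g_2(X))^2\right]\right)^{1/2} \le \|g_1 - g_2\|_\infty \, \|f\|_{L_2(Q)}.
\]

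First I would fix $\epsilon > 0$ and set $\epsilon' := \epsilon / \|f\|_{L_2(Q)}$ (treating the degenerate case $\|f\|_{L_2(Q)} = 0$ separately, where the weighted class is $Q$-a.e.\ zero and a single ball suffices). Let $\{g_1,\dots,g_N\}$ be a minimal $\epsilon'$-cover of $\mathcal{F}^{\textup{lip}}_{L,B_1,B_2,p}$ in $\|\cdot\|_\infty$, so $N = \mathcal{N}(\mathcal{F}^{\textup{lip}}_{L,B_1,B_2,p},\epsilon',\|\cdot\|_\infty)$. I then claim $\{fg_1,\dots,fg_N\}$ is an $\epsilon$-cover of $\mathcal{F}^{\textup{wlip}}_{L,B_1,B_2,p,f}$ in $\|\cdot\|_{L_2(Q)}$: given any $fg$ in the weighted class, pick $g_j$ with $\|g - g_j\|_\infty \le \epsilon'$, and the displayed inequality gives $\|fg - fg_j\|_{L_2(Q)} \le \epsilon' \|f\|_{L_2(Q)} = \epsilon$. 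Hence $\mathcal{N}(\mathcal{F}^{\textup{wlip}}_{L,B_1,B_2,p,f},\epsilon,\|\cdot\|_{L_2(Q)}) \le N$.

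Finally I would invoke \Cref{lem:lip_covering} at resolution $\epsilon'$, which yields
\[
\log\mathcal{N}(\mathcal{F}^{\textup{wlip}}_{L,B_1,B_2,p,f},\epsilon,\|\cdot\|_{L_2(Q)}) \le \log\mathcal{N}(\mathcal{F}^{\textup{lip}}_{L,B_1,B_2,p},\epsilon',\|\cdot\|_\infty) \le \left(\frac{CB_1\max\{L,B_2\}}{\epsilon'}\right)^p = \left(\frac{CB_1\max\{L,B_2\}\|f\|_{L_2(Q)}}{\epsilon}\right)^p,
\]
with the same constant $C$ as in \Cref{lem:lip_covering}, which is the claimed bound. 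There is no real obstacle here — the only points requiring a word of care are the degenerate case $\|f\|_{L_2(Q)}=0$ and noting that the cover elements $fg_j$ indeed lie in the weighted class by construction; everything else is a direct substitution.
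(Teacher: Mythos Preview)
Your proposal is correct and follows essentially the same approach as the paper: take an $\epsilon/\|f\|_{L_2(Q)}$ sup-norm cover of the Lipschitz ball, push it forward by multiplication with $f$, observe via the pointwise inequality that the image is an $\epsilon$-cover in $L_2(Q)$, and then invoke \Cref{lem:lip_covering}. Your treatment is in fact slightly more careful than the paper's, since you explicitly address the degenerate case $\|f\|_{L_2(Q)}=0$.
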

\begin{proof}
    Recall that we defined $\mathcal{F}^{\textup{lip}}_{L,B_1,B_2,p} := \{g : B_p(0,B_1) \to \mmr \mid \textup{Lip}(g) \leq L,\ \|g\|_{\infty} \leq B_2\}$. Fix any $\epsilon > 0$ and let $A \subseteq \mathcal{F}^{\textup{lip}}_{L,B_1,B_2,p}$ be a minimal $\|\cdot\|_{\infty}$-norm, $\epsilon/\|f\|_{L_2(Q)}$-covering of $\mathcal{F}^{\textup{lip}}_{L,B_1,B_2,p}$. Fix any $h \in \mathcal{F}^{\textup{lip}}_{L,B_1,B_2,p}$ and let $ h' \in A$ be such that $\|h - h'\|_{\infty} \leq \epsilon/\|f\|_{L_2(Q)}$. Then,
    \[
    \|fh - fh'\|_{L_2(Q)} = \mme_{X \sim Q}[|f(X)|^2\cdot|h(X) - h(X')|^2]^{1/2}  \leq \mme_{X \sim Q}\left[|f(X)|^2 \left(\frac{\epsilon}{\|f\|_{L_2(Q)}}\right)^2\right]^{1/2} = \epsilon.
    \]
    In particular, we find that $\{fh : h \in A\}$ is an $\|\cdot\|_{L_2(Q)}$-norm, $\epsilon$-covering of $\mathcal{F}^{\textup{wlip}}_{L,B_1,B_2,p,f}$. The desired result then immediately follows by applying \Cref{lem:lip_covering} to get a bound on $|A|$.
\end{proof}

The previous two lemmas only apply to bounded Lipschitz functions with maximum Lipshitz norm $L$. To apply these results to our current setting we will need to bound $\text{Lip}(\hat{g}_{S_{n+1},L})$ and $\|\hat{g}_{S_{n+1},L}\|_{\infty}$. Our next lemma does exactly this.

\begin{lemma}\label{lem:lip_part_is_bounded}
    Assume that there exist constants $C_X, C_S > 0$ such that with probability one $\|X_i\|^2_2 \leq pC_X $ and $|S_i| \leq C_S$ for all $i$. Then with probability one, $ \text{Lip}(\hat{g}_{S_{n+1},L}) \leq \frac{C_S}{\lambda}$ and $\|\hat{g}_{S_{n+1},L}\|_{\infty} \leq \frac{\sqrt{C_X p}C_S}{\lambda}$ .
\end{lemma}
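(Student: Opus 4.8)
The plan is to control the penalized objective at the minimizer by comparing it to the trivial competitor $g\equiv 0$; both bounds then fall out at once. First I would note that substituting $f_L=0$, $\beta=0$ into \eqref{eq:general_method} gives objective value $\frac{1}{n+1}\bigl(\sum_{i=1}^n\ell_\alpha(0,S_i)+\ell_\alpha(0,S_{n+1})\bigr)+\lambda\,\textup{Lip}(0)$, which is at most $C_S$ because $\ell_\alpha(0,s)\le|s|$ for every $s$, $\textup{Lip}(0)=0$, and $|S_i|\le C_S$ almost surely for all $i\le n+1$. By optimality of $\hat g_{S_{n+1}}$ and non-negativity of the pinball loss,
\[
\lambda\,\textup{Lip}(\hat g_{S_{n+1},L})\ \le\ \frac{1}{n+1}\sum_{i=1}^{n+1}\ell_\alpha(\hat g_{S_{n+1}}(X_i),S_i)+\lambda\,\textup{Lip}(\hat g_{S_{n+1},L})\ \le\ C_S,
\]
which yields $\textup{Lip}(\hat g_{S_{n+1},L})\le C_S/\lambda$, the first claim.

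For the supremum bound I would exploit the non-uniqueness of the decomposition $g=f_L(\cdot)+\Phi(\cdot)^\top\beta$: since $\Phi$ contains an intercept term (as assumed in \Cref{prop:lip_bounds}), replacing $f_L$ by $f_L-f_L(0)$ and adjusting the intercept coordinate of $\beta$ accordingly produces another valid decomposition of the \emph{same} function $g$ with the same Lipschitz constant. Hence we may take $\hat g_{S_{n+1},L}$ to be the representative with $\hat g_{S_{n+1},L}(0)=0$. Since $\|X_i\|_2^2\le pC_X$ a.s., the domain of $X$ is contained in $B_p(0,\sqrt{pC_X})$, so for every $x$ in it
\[
|\hat g_{S_{n+1},L}(x)|=|\hat g_{S_{n+1},L}(x)-\hat g_{S_{n+1},L}(0)|\le \textup{Lip}(\hat g_{S_{n+1},L})\,\|x\|_2\le \frac{\sqrt{C_X p}\,C_S}{\lambda},
\]
which is the second claim. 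Equivalently, one may work with the finite-dimensional representation of \Cref{sec:app_comp_set-up}, normalizing the proxy values $\gamma_i$ to vanish at the origin before extending via \citet{McShane1934,Whitney1934}.

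The only genuine subtlety — worth stating carefully rather than the arithmetic — is this normalization: the raw optimizer pins down $\hat g_{S_{n+1},L}$ only up to an additive constant, so $\|\hat g_{S_{n+1},L}\|_\infty$ is not even well defined without it, and it is precisely the intercept assumption that lets us select the representative vanishing at the origin and thereby obtain the clean constant $\sqrt{C_X p}\,C_S/\lambda$ (rather than a bound involving the diameter of the domain). Everything else is the standard ``compare against a trivial feasible point'' argument and is routine.
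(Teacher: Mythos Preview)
Your proposal is correct and follows essentially the same route as the paper: compare the optimizer to the zero function to bound $\lambda\,\textup{Lip}(\hat g_{S_{n+1},L})$ by $C_S$, then use the intercept in $\Phi$ to normalize $\hat g_{S_{n+1},L}(0)=0$ and derive the $\|\cdot\|_\infty$ bound from the Lipschitz constant and the domain radius $\sqrt{C_X p}$. Your discussion of the normalization subtlety is in fact more explicit than the paper's one-line ``we may assume without loss of generality,'' but the underlying argument is identical.
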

\begin{proof}
    Since $(\hat{g}_{S_{n+1},L},\hat{\beta}_{S_{n+1}})$ is a minimizer of the quantile regression objective we must have
    \begin{align*}
    \lambda\text{Lip}(\hat{g}_{S_{n+1},L}) & \leq \frac{1}{n+1}\sum_{i=1}^{n+1} \ell_{\alpha}\left (\hat{g}_{S_{n+1},L}(X_i) + \Phi(X_i)^\top\hat{\beta}_{S_{n+1}},S_i \right ) + \lambda\text{Lip}(\hat{g}_{S_{n+1},L})\\
    & \leq \frac{1}{n+1}\sum_{i=1}^{n+1} \ell_{\alpha}(0,S_i) + \lambda\text{Lip}(0) \leq \frac{1}{n+1} \sum_{i=1}^{n+1} |S_i| \leq C_S.
    \end{align*}
    This proves the first part of the proposition. To get the second part, note that since $\Phi(\cdot)$ has an intercept term we may assume without loss of generality that $\hat{g}_{S_{n+1},L}(0) = 0$. Thus,
    \[
    \|\hat{g}_{S_{n+1},L}\|_{\infty} \leq \sqrt{C_X p}\sup_{x \in B_p(0,\sqrt{C_Xp})} \frac{\|\hat{g}_{S_{n+1},L}(x) - \hat{g}_{S_{n+1},L}(0)\|}{\|x\|_2} \leq  \sqrt{C_X p} \text{Lip}(\hat{g}_{S_{n+1},L}) \leq \sqrt{C_X p} \frac{C_S}{\lambda},
    \]
    as desired.
\end{proof}

Our final preliminary lemma gives a control on the norm of the linear part of the fit. Similar to what we had for RKHS functions above, here we state this result under an arbitrary re-weighting $f$. 

\begin{lemma}\label{lem:beta_is_bounded_for_lip}
    Let $f : \mathcal{X} \to \mmr$ and $(X_1,S_1),\dots,(X_{n+1},S_{n+1}) \stackrel{i.i.d.}{\sim} P$. Assume that there exists constants $C_X,C_{\Phi},C_S,C_{f}, \rho > 0$ such that $\sqrt{\mme[|f(X_i)|^2]} \leq C_{f}\mme[|f(X_i)|]$, $\inf_{\beta : \|\beta\|_2 = 1}\mme[|\Phi(X_i)^\top\beta|] \geq \rho$, and with probability 1, $\|X\|_2^2 \leq C_Xp$, $\|\Phi(X_i)\|^2_2 \leq C_{\Phi} d$, and $|S_i| \leq C_S$ for all $i$. Then there exists a constant $c_{\beta} > 0$ such that
    \[
    \mme\left[|f(X_i)|\bone\left\{\|\hat{\beta}_{S_{n+1}} \|_2>c_{\beta}\frac{\sqrt{p}}{\lambda} \right\} \right] \leq  O\left(\frac{d\mme[|f(X_i)|]}{n}\right).
    \]
\end{lemma}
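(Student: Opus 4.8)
The plan is to transcribe the proof of \Cref{lem:beta_is_bounded} into the Lipschitz setting, using \Cref{lem:lip_part_is_bounded} in place of the RKHS norm bounds. A welcome simplification is available because here $|S_i| \leq C_S$ almost surely, so no concentration argument for $\frac{1}{n+1}\sum_i |S_i|$ is needed, and the resulting control on $\hat{\beta}_{S_{n+1}}$ is essentially deterministic.

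First I would lower bound the quantile-regression objective. Using $\ell_{\alpha}(\theta,s) \geq \min\{\alpha,1-\alpha\}|s-\theta|$ and the triangle inequality,
\[
\frac{1}{n+1}\sum_{i=1}^{n+1}\ell_{\alpha}(\hat{g}_{S_{n+1},L}(X_i)+\Phi(X_i)^\top\hat{\beta}_{S_{n+1}},S_i) \geq \min\{\alpha,1-\alpha\}\left(\frac{1}{n+1}\sum_{i=1}^{n+1}|\Phi(X_i)^\top\hat{\beta}_{S_{n+1}}| - \frac{1}{n+1}\sum_{i=1}^{n+1}|S_i| - \|\hat{g}_{S_{n+1},L}\|_{\infty}\right).
\]
Since $(\hat{g}_{S_{n+1},L},\hat{\beta}_{S_{n+1}})$ minimizes the objective, the left-hand side (plus the nonnegative penalty $\lambda\,\text{Lip}(\hat{g}_{S_{n+1},L})$) is at most the objective evaluated at $(0,0)$, which is $\frac{1}{n+1}\sum_i\ell_{\alpha}(0,S_i) \leq \frac{1}{n+1}\sum_i|S_i| \leq C_S$. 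Combining this with $\|\hat{g}_{S_{n+1},L}\|_{\infty} \leq \sqrt{C_X p}\,C_S/\lambda$ from \Cref{lem:lip_part_is_bounded} and $\frac{1}{n+1}\sum_i|S_i| \leq C_S$, and rearranging, yields the deterministic bound $\frac{1}{n+1}\sum_{i=1}^{n+1}|\Phi(X_i)^\top\hat{\beta}_{S_{n+1}}| \leq c_1\left(1 + \frac{\sqrt p}{\lambda}\right)$ for a constant $c_1$ depending only on $C_S,C_X,\alpha$.

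Next I would invoke the uniform lower bound \Cref{lem:lower_concentration_of_phi} already used in the RKHS proof; its hypotheses hold here, since $\|\Phi(X_i)\|_2^2 \leq C_\Phi d$ a.s.\ gives $\mme[\|\Phi(X_i)\|_2^2] \leq C_\Phi d$ and $\inf_{\beta:\|\beta\|_2=1}\mme[|\Phi(X_i)^\top\beta|] \geq \rho$ is assumed, so there are constants $c,c'>0$ with $\mmp\big(\inf_{\beta:\|\beta\|_2=1}\frac{1}{n+1}\sum_{i=1}^{n+1}|\Phi(X_i)^\top\beta| \geq c\big) \geq 1 - c' d^2/(n+1)^2$. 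On this event $c\,\|\hat{\beta}_{S_{n+1}}\|_2 \leq \frac{1}{n+1}\sum_i|\Phi(X_i)^\top\hat{\beta}_{S_{n+1}}| \leq c_1(1+\sqrt p/\lambda)$, so for $c_\beta$ chosen large enough the event $\{\|\hat{\beta}_{S_{n+1}}\|_2 > c_\beta\sqrt p/\lambda\}$ is contained in $\{\inf_{\beta:\|\beta\|_2=1}\frac{1}{n+1}\sum_i|\Phi(X_i)^\top\beta| < c\}$, an event of probability $O(d^2/n^2)$. Finally, Cauchy–Schwarz together with $\sqrt{\mme[|f(X_i)|^2]} \leq C_f\,\mme[|f(X_i)|]$ gives
\[
\mme\left[|f(X_i)|\bone\left\{\|\hat{\beta}_{S_{n+1}}\|_2 > c_\beta\frac{\sqrt p}{\lambda}\right\}\right] \leq C_f\,\mme[|f(X_i)|]\cdot\sqrt{c'}\,\frac{d}{n+1} = O\left(\frac{d\,\mme[|f(X_i)|]}{n}\right),
\]
which is the claim. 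The only point requiring care — and the mild ``main obstacle'' — is, exactly as in \Cref{lem:beta_is_bounded}, matching the deterministic bound $c_1(1+\sqrt p/\lambda)$ to the target threshold $c_\beta\sqrt p/\lambda$ (harmless in the usual regime of bounded $\lambda$); since this is the same bookkeeping as in the RKHS case, no genuine difficulty arises and the argument is essentially a transcription of that proof.
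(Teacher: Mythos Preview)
Your proposal is correct and follows essentially the same approach as the paper: the paper's proof consists of observing that $\|\hat{g}_{S_{n+1},L}\|_\infty \leq \sqrt{C_X p}\,C_S/\lambda$ from \Cref{lem:lip_part_is_bounded} and that $\frac{1}{n+1}\sum_i|S_i| \leq C_S$ deterministically, and then stating that the result follows by repeating the proof of \Cref{lem:beta_is_bounded}. You have spelled out precisely that transcription, including the same simplification (no concentration argument for the score average is needed) and the same invocation of \Cref{lem:lower_concentration_of_phi} followed by Cauchy--Schwarz with the moment condition on $f$.
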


\begin{proof}
    By \Cref{lem:lip_part_is_bounded} we know that without loss of generality $\|\hat{g}_{S_{n+1},L}\|_{\infty} \leq \frac{\sqrt{C_X p}C_S}{\lambda}$. Moreover, by assumption we have that deterministically $\frac{1}{n+1} \sum_{i=1}^{n+1} |S_i| \leq C_S$. With these preliminary facts in hand the desired result follows by repeating the proof of \Cref{lem:beta_is_bounded}.
\end{proof}

With these preliminaries out of the way we are now ready to prove \Cref{prop:lip_bounds}.

\begin{proof}[Proof of \Cref{prop:lip_bounds}]
The main idea of this proof is to show that $\frac{1}{n+1} \sum_{i=1}^{n+1} \bone\{S_i = g_L(X_i) + \Phi(X_i)^\top\beta\}$ concentrates uniformly around its expectation. Since for any fixed $(g_L,\beta)$, $\mme[\bone\{S = g_L(X) + \Phi(X)^\top\beta\}] = 0$ this will imply that
\[
\frac{1}{n+1} \sum_{i=1}^{n+1} \bone\{S_i = \hat{g}_{S_{n+1},L}(X_i) + \Phi(X_i)^\top\hat{\beta}_{S_{n+1}}\} \cong \mme[\bone\{S = \hat{g}_{S_{n+1}}(X) + \Phi(X)^\top\hat{\beta}_{S_{n+1}}\}  ] = 0.
\]
We now formalize this idea. Define the event
\[
    E := \left\{\|\hat{g}_{S_{n+1},L}\|_{\infty} \leq \frac{\sqrt{C_Xp}C_S}{\lambda},\ \text{Lip}(f_L) \leq \frac{C_S}{\lambda},\ \|\hat{\beta}_{S_{n+1}}\|_2 \leq c_{\beta}\frac{\sqrt{p}}{\lambda} \right\}.
    \]
    By Lemmas \ref{lem:lip_part_is_bounded} and  \ref{lem:beta_is_bounded_for_lip} we know that 
    \begin{align*}
    & \mme[|f(X_i)| \bone\{S_i =  \hat{g}_{S_{n+1},L}(X_i) + \Phi(X_i)^\top\hat{\beta}_{S_{n+1}}\} ]\\
    & \leq \mme[|f(X_i)| \bone\{S_i =  \hat{g}_{S_{n+1},L}(X_i) + \Phi(X_i)^\top\hat{\beta}_{S_{n+1}}\}\bone\{E\} ] + O\left(\frac{d\mme[|f(X_i)|]}{n} \right).
    \end{align*}
    Thus, we just need to focus on what happens on the event $E$. By the exchangeability of the quadruples $(\hat{g}_{S_{n+1},L}(X_i),\hat{\beta}_{S_{n+1}}, X_i, S_i) $ we have
    \begin{align*}
    & \mme[|f(X_i)| \bone\{S_i =  \hat{g}_{S_{n+1},L}(X_i) + \Phi(X_i)^\top\hat{\beta}_{S_{n+1}}\}\bone\{E\} ]\\
    & = \mme\left[ \left(\frac{1}{n+1} \sum_{i=1}^{n+1} |f(X_i)| \bone\{S_i =  \hat{g}_{S_{n+1},L}(X_i) + \Phi(X_i)^\top\hat{\beta}_{S_{n+1}}\}\right) \bone\{E\}  \right].
    \end{align*}
    Let $\delta > 0$ be a small constant that we will specify later and $h$ denote the tent function
\[
h(x) = \begin{cases}
0,\  |x| > \delta\\
1-\delta^{-1}|x|,\  |x| \leq \delta.
\end{cases}
\]
Let $\mathcal{G} := \{g : g(\cdot) = g_L(\cdot) + \Phi(\cdot)^\top\beta,\ \|\beta\|_2 \leq \frac{c_{\beta}\sqrt{p}}{\lambda},\ \|g_L\|_{\infty} \leq \frac{\sqrt{C_Xp}C_S}{\lambda},\ \text{Lip}(g_L) \leq \frac{C_S}{\lambda}\}$ and $\sigma_1,\dots,\sigma_n \stackrel{i.i.d.}{\sim} \text{Unif}(\{\pm1\})$. Then, 
\begin{align*}
    & \mme\left[\frac{1}{n+1} \sum_{i=1}^{n+1} |f(X_i)| \bone\{S_i = \hat{g}_{S_{n+1}}(X_i) + \Phi(X_i)^\top\hat{\beta}_{S_{n+1}}\} \bone\{E\}  \right] \\
    & \leq \mme\left[ \frac{1}{n+1} \sum_{i=1}^{n+1} |f(X_i)| h(S_i - \hat{g}_{S_{n+1}}(X_i) - \Phi(X_i)^\top\hat{\beta}_{S_{n+1}}) \bone\{E\}  \right]\\
    & \leq \mme\left[\sup_{g \in \mathcal{G}} \frac{1}{n+1}\sum_{i=1}^{n+1} |f(X_i)| h(S_i - g(X_i)) - \mme[|f(X_1)| h(S_1 - g(X_1)) ] \right]\\
    & \quad + \sup_{g \in \mathcal{G}} \mme[|f(X_1)| h(S_1 - g(X_1)) ]\\
    & \leq 2\mme\left[\sup_{g \in \mathcal{G}} \frac{1}{n+1}\sum_{i=1}^{n+1} \sigma_i |f(X_i)| h(S_i - g(X_i)) \right] + \sup_{g \in \mathcal{G}} \mme[|f(X_1)|\mmp(|S_1 - g(X_1)| \mid X_1\leq \delta) ]\\
    & \leq 2\delta^{-1} \mme\left[\sup_{g \in \mathcal{G}} \frac{1}{n+1} \sum_{i=1}^{n+1} \sigma_i |f(X_i)| (S_i - g(X_i))  \right] + O(\delta\mme[|f(X_i)| ])\\
    & \leq  2\delta^{-1} \mme\left[\left|\frac{1}{n+1} \sum_{i=1}^{n+1} \sigma_i |f(X_i)| S_i \right|  \right]+ 2\delta^{-1} \mme\left[\sup_{\beta : \|\beta\|_2 \leq \frac{c_{\beta}\sqrt{p}}{\lambda}} \left| \frac{1}{n+1} \sum_{i=1}^{n+1} \sigma_i |f(X_i)| \Phi(X_i)^\top\beta \right|   \right]\\
    & \quad +  2\delta^{-1} \mme\left[\sup_{g_L : \|g_L\|_{\infty} \leq \frac{\sqrt{C_Xp}C_S}{\lambda},\ \text{Lip}(g_L) \leq \frac{C_S}{\lambda}} \left| \frac{1}{n+1} \sum_{i=1}^{n+1} \sigma_i |f(X_i)| g_L(X_i) \right|   \right] + O(\delta\mme[|f(X_i)|]),
\end{align*}
where the third inequality follows by symmetrization, and the fourth inequality uses the fact that $S_i|X_i$ has a bounded density, the contraction inequality, and the fact that $h(\cdot)$ is $\delta^{-1}$-Lipschitz. 

To conclude the proof we  bound each of the first three terms appearing on the last line above. We have that
\begin{align*}
\mme\left[\left|\frac{1}{n+1} \sum_{i=1}^{n+1} \sigma_i |f(X_i)| S_i \right| \right] & \leq \sqrt{\text{Var}\left(\frac{1}{n+1} \sum_{i=1}^{n+1} \sigma_i |f(X_i)| S_i  \right)}\\
& \leq \sqrt{\frac{C_S^2 \mme[f(X_i)^2]^{1/2}}{n+1}} = O\left( \sqrt{\frac{\mme[|f(X_i)|]}{n}} \right), \text{ (by Assumption \ref{ass:lip_tech_conditions})},
\end{align*}
while
\begin{align*}
  \mme\left[  \sup_{\beta : \|\beta\|_2 \leq \frac{c_{\beta}\sqrt{p}}{\lambda}} \left| \frac{1}{n+1} \sum_{i=1}^{n+1} \sigma_i |f(X_i)| \Phi(X_i)^\top\beta \right|  \right] & \leq \frac{c_{\beta}\sqrt{p}}{\lambda} \mme\left[   \left\|\frac{1}{n+1} \sum_{i=1}^{n+1} \sigma_i |f(X_i)| \Phi(X_i) \right\|_2 \right] \\ 
  & \leq \frac{c_{\beta}\sqrt{p}}{\lambda}   \mme\left[\left\|\frac{1}{n+1} \sum_{i=1}^{n+1} \sigma_i |f(X_i)| \Phi(X_i) \right\|_2^2  \right]^{1/2}\\
  & = O\left(\sqrt{\frac{dp\mme[|f(X_i)|^2]}{\lambda^2 n}} \right)\\
  & = O\left(\sqrt{\frac{dp}{\lambda^2 n}} \mme[|f(X_i)|] \right), \text{ (by Assumption \ref{ass:lip_tech_conditions})}.
\end{align*}
Finally, by \Cref{lem:re-weighted_lip_covering} we have the covering number bound
\begin{align*}
& \log\left(\mathcal{N}\left(\left\{|f|g_L : B_p(0,\sqrt{C_Xp}) \to \mmr \mid \|g_L\|_{\infty} \leq \frac{\sqrt{C_Xp}C_S}{\lambda},\ \text{Lip}(g_L) \leq \frac{C_S}{\lambda}\right\},\epsilon, \|\cdot\|_{L_2(P_X)} \right)\right)\\
& \leq O\left(\frac{p \mme[f(X_i)^2]^{1/2}}{\lambda \epsilon} \right)^{p} = O\left(\frac{p \mme[|f(X_i)|]}{\lambda \epsilon} \right)^{p},
\end{align*}
and so by Dudley's entropy integral
\begin{align*}
    \mme\left[\sup_{g_L : \|g_L\|_{\infty} \leq \frac{\sqrt{C_Xp}C_S}{\lambda},\ \text{Lip}(g_L) \leq \frac{C_S}{\lambda}} \left| \frac{1}{n+1} \sum_{i=1}^{n+1} \sigma_i |f(X_i)|g_L(X_i) \right|  \right] & \leq O\left(\frac{p\mme[|f(X_i)|]\log(n)}{\lambda n^{\min\{1/2,1/p\}}}  \right).
\end{align*}
Putting all of these results together gives the final bound
\begin{align*}
 & \mme\left[\bone\{S = \hat{g}_{S_{n+1}}(X) + \Phi(X)^\top\hat{\beta}_{S_{n+1}}\}\bone\{E\} \right]\\
 & \leq \delta^{-1}  \left(O\left(\frac{p\mme[|f(X_i)|]\log(n)}{\lambda n^{\min\{1/2,1/p\}}}  \right)+ O\left(\sqrt{\frac{dp}{\lambda^2 n}} \mme[|f(X_i)|] \right) \right)   + O(\delta \mme[|f(X_i)|]).
\end{align*}
The desired result then follows by optimizing over $\delta$.
    
\end{proof}

\subsection{Proofs for Section~\ref{sec:computation}}

In this section we prove the results appearing in \Cref{sec:computation} of the main text. We start with a proof of \Cref{thm:pred_set_is_mon}.

\begin{proof}[Proof of \Cref{thm:pred_set_is_mon}] 
We begin by giving a more careful derivation of the dual program. Recall that our primal optimization problem is 
\begin{equation*}
\begin{aligned}
   \text{minimize}_{g \in \mathcal{F}} \quad & (1 - \alpha)\cdot \mathbf{1}^\top p + \alpha \cdot \mathbf{1}^\top q +  (n + 1) \cdot \mathcal{R}(g) \\
    \text{s.t.}  \quad &S_i -  g(X_i)  - p_i + q_i = 0 \\
     &S -  g(X_{n+1})   - p_{n + 1} + q_{n + 1} = 0 \\
     & p_{i}, q_i \geq 0,\ 1 \leq i \leq n+1.
     \end{aligned}
\end{equation*}
The Lagrangian for this program is
\begin{equation}\label{eq:app_lagrangian}
\begin{split}
    & (1-\alpha) \cdot \mathbf{1}^\top p + \alpha \cdot \mathbf{1}^\top q  +  (n + 1) \cdot \mathcal{R}(g) + \sum_{i = 1}^n \eta_i \left(S_i -  g(X_i)  - p_i + q_i \right) \\
    & \hspace{2cm} + \eta_{n + 1} \left( S -  g(X_{n+1}) - p_{n + 1} + q_{n + 1} \right) 
     - \sum_{i = 1}^{n + 1} (\gamma_i p_i  + \xi_i q_i).
\end{split}
\end{equation}
For ease of notation, let $\mathcal{R}^*(\eta) := -\min_{g \in \mathcal{F}}  (n+1)\mathcal{R}(g) - \sum_{i=1}^{n+1}\eta_i g(X_i) $. Then, minimizing with respect to $g$ gives,
\begin{align*}
    (1 - \alpha)\cdot \mathbf{1}^\top p + \alpha \cdot \mathbf{1}^\top q  -  \mathcal{R}^*\left(\eta\right) + \sum_{i = 1}^n \eta_i S_i +\eta_{n + 1} S  - \eta^\top p + \eta^\top q - \gamma^\top p - \xi^\top q.
\end{align*}
So, taking derivatives of this function with respect to $p$, and $q$, we arrive at the constraints,
\begin{align*}
    \gamma &= (1 - \alpha) \cdot \mathbf{1} - \eta \\
    \xi &= \alpha \cdot \mathbf{1} + \eta.
\end{align*}
Since the only restriction on $\xi$ and $\gamma$ is that they are non-negative, this can be simplified to,
\begin{align*}
    \eta &\leq (1 - \alpha) \cdot \mathbf{1} \\
    \eta &\geq -\alpha \cdot \mathbf{1}.
\end{align*}
Thus, we arrive at the desired dual formulation,
\begin{equation}\label{eq:app_dual_form}
\begin{split}
    & \text{maximize}_{\eta} \quad  \sum_{i = 1}^n \eta_i S_i + \eta_{n+1} S - \mathcal{R}^*\left( \eta \right)   \\
    & \text{subject to } \quad -\alpha \leq \eta_i \leq 1 - \alpha,\ 1 \leq i\leq n+1.
\end{split}
\end{equation}
    
Now, recall that we used the notation $\eta^{S}$ to denote the dual-optimal $\eta$ for a particular choice of $S$. Assume for the sake of contradiction that there exists $\tilde{S} > S$ such that $\eta^{\tilde{S}}_{n + 1} < \eta^{S}_{n + 1}$. Observe that we can write the dual objective as
\begin{align*}
    h(\eta^S) + S \cdot \eta^S_{n + 1},
\end{align*}
where $h$ does not depend on $S$. 
Our assumption implies that
\begin{align*}
    (\tilde{S} - S) \cdot \left( \eta^{\tilde{S}}_{n + 1} - \eta^S_{n + 1} \right) < 0,
\end{align*}
or equivalently,
\begin{align*}
    \tilde{S} \cdot \left( \eta^{\tilde{S}}_{n + 1} - \eta^S_{n + 1} \right) < S \cdot \left( \eta^{\tilde{S}}_{n + 1} - \eta^S_{n + 1} \right).
\end{align*}
On the other hand, by the optimality of $\eta^S$, we have that
\begin{align*}
    h(\eta^{\tilde{S}}) + \tilde{S} \cdot \eta^{\tilde{S}}_{n + 1} \geq h(\eta^{S}) + \tilde{S} \cdot \eta^{S}_{n + 1} \quad \iff \quad \tilde{S} \cdot \left(\eta^{\tilde{S}}_{n + 1} - \eta^{S}_{n + 1} \right) \geq h(\eta^{S}) - h(\eta^{\tilde{S}}).
\end{align*}
Applying our assumption, we conclude that
\begin{align*}
    S \cdot \left( \eta^{\tilde{S}}_{n + 1} - \eta^S_{n + 1} \right) > h(\eta^{S}) - h(\eta^{\tilde{S}}),
\end{align*}
which by rearranging yields the desired contradiction
\begin{align*}
    h(\eta^{\tilde{S}}) + S \cdot \eta^{\tilde{S}}_{n + 1} > h(\eta^{S}) + S \cdot \eta^S_{n + 1} .
\end{align*}
\end{proof}

We now turn to the proof of \Cref{prop:cov_of_dual}, which states that the coverage properties of $\hat{C}_{\text{dual}}(X_{n+1})$ are the same as $\hat{C}(X_{n+1})$. 

\begin{proof}[Proof of  \Cref{prop:cov_of_dual}]
    The proof of this Proposition is nearly identical to the proof of \Cref{thm:infinite_dim_result}, with the exception that now instead of looking at the first order conditions of the primal, we will instead investigate the first order conditions of the Lagrangian \eqref{eq:app_lagrangian} at the optimal dual variables. We keep all the notation the same as in the proof of \Cref{thm:pred_set_is_mon}. 
    
    We begin by proving the first statement pertaining to the coverage properties of $\hat{C}_{\text{dual}}(X_{n+1})$. Let $(\hat{g}_{S_{n+1}},p^{S_{n+1}},q^{S_{n+1}},\eta^{S_{n+1}},\gamma^{S_{n+1}}_i,\xi_i^{S_{n+1}})$ denote an optimal primal-dual solution at the input $S = S_{n+1}$. Recall from the proof of \Cref{thm:pred_set_is_mon} that the Lagrangian for the optimization is
    \begin{align*}
        & (1 - \alpha)\cdot \mathbf{1}^\top p^{S_{n+1}} + \alpha \cdot \mathbf{1}^\top q^{S_{n+1}} + (n+1)\mathcal{R}(\hat{g}_{S_{n+1}}) + \sum_{i=1}^{n+1} \eta_i^{S_{n+1}}(S_i - \hat{g}_{S_{n+1}}(X_i) - p^{S_{n+1}}_i + q^{S_{n+1}}_i)\\
        & \quad \quad - \sum_{i=1}^{n+1} (\gamma^{S_{n+1}} p^{S_{n+1}}_i + \xi_i^{S_{n+1}} q_i^{S_{n+1}}).
    \end{align*}
    Fix any re-weighting function $f \in \mathcal{F}$. By assumption we know that strong duality (and thus the KKT conditions) hold. So, by considering the derivative of the Lagrangian in the direction $f$ and applying the KKT stationarity condition we find that 
    \begin{equation}\label{eq:lang_dir_deriv}
        0 = \frac{d}{d\epsilon} (n+1)\mathcal{R}(\hat{g}_{S_{n+1}} + \epsilon f) \bigg|_{\epsilon = 0} - \sum_{i=1}^{n+1} \eta^{S_{n+1}}_i f(X_i). 
    \end{equation}
    To further unpack this equality, note that complementary slackness in the KKT condition necessitates that $p^{{S_{n+1}}}_i \gamma^{S_{n+1}}_i = 0$ and  $q^{S_{n+1}}_i \xi^{S_{n+1}}_i = 0$. Thus, when $S_i -  \hat{g}_{S_{n+1}}(X_i) > 0$, we must have $\gamma^{S_{n+1}}_i = 0$, or equivalently, $\eta^{S_{n+1}}_i = 1 - \alpha$ and when $S_i -  \hat{g}_{S_{n+1}}(X_i) < 0$, we must have $\eta^{S_{n+1}}_i = -\alpha$. Last, when the residual is exactly $0$, the corresponding $\eta^{S_{n+1}}_i$ can take any value in $\left[-\alpha,  1 - \alpha \right]$. Plugging these observations into \eqref{eq:lang_dir_deriv}, we obtain
    \begin{align*}
    0 & = \frac{d}{d\epsilon} (n + 1) \cdot \mathcal{R}(\hat{g}_{S_{n+1}} + \epsilon f) \bigg|_{\epsilon = 0} + \sum_{i : S_i < \hat{g}_{S_{n+1}}(X_i)} \alpha \cdot f(X_i)\\
    & \quad \quad - \sum_{i : S_i >\hat{g}_{S_{n+1}}(X_i)} (1 - \alpha) f(X_i) -\sum_{i : S_i = \hat{g}_{S_{n+1}}(X_i)} \eta_i^{S_{n+1}} f(X_i)\\
    & = \frac{d}{d\epsilon} (n + 1) \cdot \mathcal{R}(\hat{g}_{S_{n+1}} + \epsilon f) \bigg|_{\epsilon = 0} + \sum_{i : \eta^{S_{n+1}}_i < 1 - \alpha} \alpha f(X_i)\\
    & \quad \quad - \sum_{i : \eta^{S_{n+1}}_i = 1 - \alpha} (1 - \alpha) f(X_i) -\sum_{i : S_i = \hat{g}_{S_{n+1}}(X_i),\ \eta_i^{S_{n+1}} < 1 - \alpha } \left(\eta_i^{S_{n+1}} + \alpha \right)f(X_i)\\
    & = \frac{d}{d\epsilon} (n+1)\mathcal{R}(\hat{g}_{S_{n+1}} + \epsilon f) \bigg|_{\epsilon = 0} + \sum_{i=1}^{n+1}\left(\alpha - \bone\left\{\eta^{S_{n+1}}_i = 1 - \alpha\right\}\right) f(X_i)\\
    & \quad \quad -  \sum_{i : S_i = \hat{g}_{S_{n+1}}(X_i),\ \eta_i^{S_{n+1}} < 1 - \alpha } \left(\eta_i^{S_{n+1}} + \alpha \right)f(X_i).
    \end{align*}
To relate this stationary condition to the coverage note that 
\begin{align*}
    & \mme[f(X_{n+1})(\bone\{Y_{n+1} \in \hat{C}_{\text{dual}}(X_{n+1})\}- (1-\alpha))] = \mme[f(X_{n+1})(\alpha - \bone\{Y_{n+1} \notin \hat{C}_{\text{dual}}(X_{n+1})\})]\\
    & = \mme\left[f(X_{n + 1})\left(\alpha - \bone\left\{\eta^{S_{n+1}}_{n+1} = 1 - \alpha \right\} \right)\right]\\
    & = \mme\left[\frac{1}{n+1} \sum_{i=1}^{n+1} f(X_i)\left(\alpha -  \bone\left\{\eta^{S_{n+1}}_{i} = 1 - \alpha\right\} \right)\right]\\
    & = -\mme \left[ \frac{d}{d\epsilon} \mathcal{R}(\hat{g}_{S_{n+1}} + \epsilon f) \bigg|_{\epsilon = 0}\right] + \mme\left[\frac{1}{n + 1} \sum_{i : S_i = \hat{g}_{S_{n+1}}(X_i),\ \eta_i^{S_{n+1}} < 1 - \alpha } \left(\eta_i^{S_{n+1}} + \alpha \right)f(X_i) \right].
\end{align*}
Finally, since $\eta_i^{S_{n+1}} \in [- \alpha,1 - \alpha]$ the second term above can be bounded as 
\begin{align*}
\left| \mme\left[\frac{1}{n + 1} \sum_{i : S_i = \hat{g}_{S_{n+1}}(X_i),\ \eta_i^{S_{n+1}} < 1 - \alpha } \left(\eta_i^{S_{n+1}} + \alpha \right)f(X_i) \right] \right| & \leq \mme\left[\frac{1}{n+1}\sum_{i=1}^{n+1} \bone\{S_i = \hat{g}_{S_{n+1}}(X_i)\} |f(X_i)|\right]\\
& = \mme[\bone\{S_i = \hat{g}_{S_{n+1}}(X_i)\} |f(X_i)|],
\end{align*}
while when $f \geq 0$, we additionally have the lower bound
\begin{align*}
& \mme\left[\frac{1}{n + 1} \sum_{i : S_i = \hat{g}_{S_{n+1}}(X_i),\ \eta_i^{S_{n+1}} < 1 - \alpha } \left(\eta_i^{S_{n+1}} + \alpha \right)f(X_i) \right]\\
& \geq \mme\left[\frac{1}{n + 1} \sum_{i : S_i = \hat{g}_{S_{n+1}}(X_i),\ \eta_i^{S_{n+1}} < 1 - \alpha } \left(-\alpha + \alpha \right)f(X_i) \right] \geq 0.
\end{align*}
This concludes the proof of the first part of the proposition. For the second part of the proposition one simply notes that for any $f \in \mathcal{F}$,
\begin{align*}
\mme[f(X_{n+1})(\bone\{Y_{n+1} \in \hat{C}_{\text{dual, rand.}}(X_{n+1})\} - (1-\alpha))] & = \mme[f(X_{n+1})(\bone\{\eta^{S_{n+1}}_{n+1} <U\} - (1-\alpha))]\\
& = \mme[f(X_{n+1})(\mme[\bone\{\eta^{S_{n+1}}_{n+1} <U\} \mid X_{n +1}, \eta^{S_{n+1}}_{n+1}] - (1-\alpha))]\\
& = -\mme[f(X_{n+1})\eta^{S_{n+1}}_{n+1} ]\\
& = -\mme\left[\frac{1}{n+1}\sum_{i=1}^{n+1}f(X_{n+1})\eta^{S_{n+1}}_{i}  \right]\\
& = -\mme\left[\frac{d}{d\epsilon} \mathcal{R}(\hat{g}_{S_{n+1}} + \epsilon f) \right],
\end{align*}
where the last equality is simply our first order condition \eqref{eq:lang_dir_deriv}.

\end{proof}

\subsection{Two-sided fitting}\label{sec:app_two-sided}

Recall that in the main text we defined the two-sided prediction set
\[
\hat{C}_{\textup{two-sid.}}(X_{n+1}) = \{y : \hat{g}^{\alpha/2}_{S(X_{n+1},y)}(X_{n+1}) \leq S(X_{n+1},y) \leq \hat{g}^{1-\alpha/2}_{S(X_{n+1},y)}(X_{n+1}) \}.
\]
Analogues to our work on one-sided prediction sets in the main text, in this section we outline the coverage properties and computationally efficient implementation of $\hat{C}_{\textup{two-sid.}}(X_{n+1})$.

To begin, let $\eta^{S,\tau}$ denote an optimal solution to the dual program \eqref{eq:generic_dual} when $\alpha$ is replaced by $1-\tau$, i.e. recalling the definition of $\mathcal{R}^* (\eta) := - \min_{g \in \mathcal{F}} (n+1)\mathcal{R}(g) - \sum_{i=1}^{n+1} \eta_i g(X_i)$, let $\eta^{S,\tau}$ be a solution to
\begin{equation*}
\begin{split}
    \underset{\eta \in \mmr^{n+1}}{\text{maximize}} \quad & \sum_{i = 1}^n \eta_i S_i + \eta_{n+1} S - \mathcal{R}^*\left( \eta \right)   \\
    \text{subject to} \quad & -(1-\tau) \leq \eta_i \leq \tau.
 \end{split}
\end{equation*} 
Then, similar to our one-sided sets, $\hat{C}_{\textup{two-sid.}}(X_{n+1})$ also admits the analogous dual formulation
\begin{equation}\label{eq:two-sided_dual_set}
\hat{C}_{\textup{dual, two-sid.}}(X_{n+1}) = \{y : \eta^{S(X_{n+1},y),\alpha/2} > -(1-\alpha/2) \text{ and } \eta^{S(X_{n+1},y),1-\alpha/2} < 1-\alpha/2\},
\end{equation}
and the analogous randomized prediction set
\begin{equation}\label{eq:rand-two-sided_dual_set}
\hat{C}_{\textup{dual, two-sid., 
 rand.}}(X_{n+1}) = \{y : \eta^{S(X_{n+1},y),\alpha/2} > U_1 \text{ and } \eta^{S(X_{n+1},y),1-\alpha/2} < U_2\},
\end{equation}
where $U_1 \sim \text{Unif}([-(1-\alpha/2),\alpha/2])$ and independently $U_2 \sim \text{Unif}([-\alpha/2,1-\alpha/2])$.

As the next theorem states formally, these two-sided predictions set have identical coverage properties to their one-sided analogs.

\begin{theorem}\label{thm:two-sided_cov}
    Let $\mathcal{F}$ be any vector space, and assume that for all $f,g \in \mathcal{F}$, the derivative of $\epsilon \mapsto \mathcal{R}(g + \epsilon f)$ exists. If $f$ is non-negative with $\mme_P[f(X)] > 0$, then the unrandomized prediction set given by \eqref{eq:two-sided_dual_set} satisfies the lower bound
    \begin{align*}
    \mmp_f(Y_{n+1} \in \hat{C}_{\textup{dual, two-sid.}}(X_{n+1})) \geq 1- \alpha  & - \frac{1}{\mme_P[f(X)]} \mme\left[  \frac{d}{d\epsilon} \mathcal{R}(\hat{g}^{1-\alpha/2}_{S_{n+1}} + \epsilon f) \bigg|_{\epsilon = 0} \right]\\
    & - \frac{1}{\mme_P[f(X)]} \mme\left[  \frac{d}{d\epsilon} \mathcal{R}(\hat{g}^{\alpha/2}_{S_{n+1}} + \epsilon f) \bigg|_{\epsilon = 0} \right]. 
    \end{align*}
    On the other hand, suppose $(X_1,Y_1),\dots,(X_{n+1},Y_{n+1}) \stackrel{i.i.d.}{\sim} P$. Then, for all $f \in \mathcal{F}$, we additionally have the two-sided bound,
    \begin{equation}    \label{eq:infinite_dim_cov_2}
    \begin{split}
    & \mme[f(X_{n+1})(\bone\{Y_{n+1} \in \hat{C}_{\textup{dual, two-sid.}}(X_{n+1})\} - (1-\alpha))]\\
    & = - \frac{1}{\mme_P[f(X)]} \mme\left[  \frac{d}{d\epsilon} \mathcal{R}(\hat{g}^{1-\alpha/2}_{S_{n+1}} + \epsilon f) \bigg|_{\epsilon = 0} \right] - \frac{1}{\mme_P[f(X)]} \mme\left[  \frac{d}{d\epsilon} \mathcal{R}(\hat{g}^{\alpha/2}_{S_{n+1}} + \epsilon f) \bigg|_{\epsilon = 0} \right] + \epsilon_{\textup{int}},
    \end{split}
    \end{equation}
    where $\epsilon_{\textup{int}}$ is an interpolation error term satisfying $|\epsilon_{\textup{int}}| \leq \mme[|f(X_i)| \bone\{S_i = \hat{g}^{1-\alpha/2}_{S_{n+1}}(X_i)\}] + \mme[|f(X_i)| \bone\{S_i = \hat{g}^{\alpha/2}_{S_{n+1}}(X_i)\}]$. Furthermore, the same results also hold for the randomized set \eqref{eq:rand-two-sided_dual_set} with $\epsilon_{\text{int}}$ replaced by $0$.
\end{theorem}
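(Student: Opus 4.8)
The plan is to reduce \Cref{thm:two-sided_cov} to two applications of the first-order (KKT) analysis already used for \Cref{prop:cov_of_dual}, one at the upper quantile level $1-\alpha/2$ and one, in mirror-image form, at the lower level $\alpha/2$. First I would rewrite the two-sided dual set as an intersection, $\hat C_{\textup{dual, two-sid.}}(X_{n+1}) = \hat C^{\mathrm{up}}(X_{n+1}) \cap \hat C^{\mathrm{lo}}(X_{n+1})$, where $\hat C^{\mathrm{up}} := \{y : \eta^{S(X_{n+1},y),1-\alpha/2}_{n+1} < 1-\alpha/2\}$ is \emph{exactly} the one-sided dual set $\hat C_{\textup{dual}}$ of \eqref{eq:dual_pred_set} with the target miscoverage $\alpha$ replaced by $\alpha/2$, and $\hat C^{\mathrm{lo}} := \{y : \eta^{S(X_{n+1},y),\alpha/2}_{n+1} > -(1-\alpha/2)\}$ is its reflection (the image of the $\alpha/2$-quantile program under $S_i \mapsto -S_i$, which is again a $(1-\alpha/2)$-quantile program up to a sign bookkeeping on the penalty). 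Correspondingly I would decompose the miscoverage indicator as $\bone\{Y_{n+1} \notin \hat C_{\textup{dual, two-sid.}}\} = L + M - LM$, with $L := \bone\{Y_{n+1}\notin \hat C^{\mathrm{lo}}\}$ (lower bound violated) and $M := \bone\{Y_{n+1}\notin \hat C^{\mathrm{up}}\}$ (upper bound violated).

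The bulk of the work is then two copies of the argument from the proof of \Cref{prop:cov_of_dual}. Applying it to $\hat C^{\mathrm{up}}$ (i.e.\ with $\alpha$ replaced by $\alpha/2$), complementary slackness for the dual optimum $\eta^{S_{n+1},1-\alpha/2}$ together with exchangeability of the tuples $\{(f(X_i), \hat g^{1-\alpha/2}_{S_{n+1}}(X_i), S_i, \eta^{S_{n+1},1-\alpha/2}_i)\}_{i=1}^{n+1}$ yields an identity of the form
\[
\mme\!\left[f(X_{n+1})\bigl(M - \tfrac{\alpha}{2}\bigr)\right] = \pm\,\mme\!\left[\tfrac{d}{d\epsilon}\mathcal R\bigl(\hat g^{1-\alpha/2}_{S_{n+1}} + \epsilon f\bigr)\bigg|_{\epsilon = 0}\right] + \epsilon^{\mathrm{up}}_{\mathrm{int}},
\]
with $|\epsilon^{\mathrm{up}}_{\mathrm{int}}| \le \mme[\,|f(X_i)|\,\bone\{S_i = \hat g^{1-\alpha/2}_{S_{n+1}}(X_i)\}\,]$, and the reflected computation gives the analogue for $L$ with $\hat g^{\alpha/2}_{S_{n+1}}$ and a remainder $\epsilon^{\mathrm{lo}}_{\mathrm{int}}$. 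For non-negative $f$ the tie contributions in each remainder are one-signed, which upgrades these identities to the stated one-sided lower bound on $\mmp_f$. For the randomized set \eqref{eq:rand-two-sided_dual_set}, averaging over $U_1$ (resp.\ $U_2$) replaces the tie terms by the exact affine-in-$\eta$ expressions appearing in the proof of \Cref{prop:cov_of_dual}, so $\epsilon^{\mathrm{up}}_{\mathrm{int}} = \epsilon^{\mathrm{lo}}_{\mathrm{int}} = 0$. Throughout, the signs must be propagated carefully through the two quantile levels and the reflection; adding the two displays, rearranging, and setting $\epsilon_{\mathrm{int}} := -(\epsilon^{\mathrm{lo}}_{\mathrm{int}}+\epsilon^{\mathrm{up}}_{\mathrm{int}})$ reproduces \eqref{eq:infinite_dim_cov_2}, with $|\epsilon_{\mathrm{int}}|$ bounded as claimed.

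The remaining term is the cross term $LM$, and discharging it is the step I expect to be the main obstacle. The desired fact is that $L$ and $M$ are mutually exclusive: if the lower bound is violated at the test point then $S_{n+1} < \hat g^{\alpha/2}_{S_{n+1}}(X_{n+1}) \le \hat g^{1-\alpha/2}_{S_{n+1}}(X_{n+1})$, so the upper bound holds, and symmetrically; hence $LM \equiv 0$ and $\bone\{Y_{n+1} \notin \hat C_{\textup{dual, two-sid.}}\} = L + M$. The gap is the middle inequality $\hat g^{\alpha/2}_{S_{n+1}}(X_{n+1}) \le \hat g^{1-\alpha/2}_{S_{n+1}}(X_{n+1})$ --- that the two separately fitted quantile surfaces do not cross at $X_{n+1}$ --- which need not hold for unconstrained pinball fits over a general $\mathcal F$. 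I would discharge it either by invoking monotonicity that is built into the fitted surfaces (or enforced by rearrangement), or, failing that, by showing the ``crossing'' mass $\mme[f(X_{n+1})LM]$, which by exchangeability equals $\mme[\tfrac{1}{n+1}\sum_{i=1}^{n+1} f(X_i)\bone\{\hat g^{1-\alpha/2}_{S_{n+1}}(X_i) < S_i < \hat g^{\alpha/2}_{S_{n+1}}(X_i)\}]$, is absorbed into the interpolation budget already accounted for in $\epsilon_{\mathrm{int}}$.
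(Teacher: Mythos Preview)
Your plan is essentially the paper's proof. The paper writes, without further comment,
\[
\alpha - \bone\{Y_{n+1}\notin \hat C_{\textup{dual, two-sid.}}(X_{n+1})\}
= \bigl(\tfrac{\alpha}{2} - \bone\{\eta^{S_{n+1},\alpha/2}_{n+1} = -(1-\tfrac{\alpha}{2})\}\bigr)
+ \bigl(\tfrac{\alpha}{2} - \bone\{\eta^{S_{n+1},1-\alpha/2}_{n+1} = 1-\tfrac{\alpha}{2}\}\bigr),
\]
and then says ``repeat the steps of \Cref{prop:cov_of_dual} twice.'' That is exactly your decomposition $\bone\{\cdot\notin\} = L + M$ followed by two invocations of the one-sided KKT/exchangeability argument, once at level $1-\alpha/2$ and once (in reflected form) at level $\alpha/2$; the randomized claim is handled the same way.

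The one substantive difference is the cross term $LM$. You flag it as the main obstacle; the paper simply asserts the display above as an identity, i.e.\ it tacitly takes $LM\equiv 0$ and does not discuss quantile crossing at $X_{n+1}$ at all. So you are being more careful here than the paper, not less. Note, however, that your fallback idea---absorbing $\mme[f(X_{n+1})LM]$ into the interpolation budget---does not work as stated: the event $\{\hat g^{1-\alpha/2}_{S_{n+1}}(X_{n+1}) < S_{n+1} < \hat g^{\alpha/2}_{S_{n+1}}(X_{n+1})\}$ need not force $S_i = \hat g^\tau_{S_{n+1}}(X_i)$ for either $\tau$, so it is not dominated by the tie indicators that define $\epsilon_{\textup{int}}$. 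If you want to close this cleanly you should take the first route you mention (assume or enforce $\hat g^{\alpha/2}_{S_{n+1}}(X_{n+1}) \le \hat g^{1-\alpha/2}_{S_{n+1}}(X_{n+1})$, e.g.\ by rearrangement), which makes $LM\equiv 0$ and matches the paper's implicit stance.
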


\begin{proof}
    Note that
    \begin{align*}
    & \mme[f(X_{n+1})(\bone\{Y_{n+1} \in \hat{C}_{\textup{dual, two-sid.}}(X_{n+1})\} - (1-\alpha))]\\
    & = \mme[f(X_{n+1})(\alpha - \bone\{Y_{n+1} \notin \hat{C}_{\textup{dual, two-sid.}}(X_{n+1})\})]\\
    & = \mme[f(X_{n+1})(\alpha/2 - \bone\{\eta^{\alpha/2,S_{n+1}}_{n+1} = -(1-\alpha/2)\})] + \mme[f(X_{n+1})(\alpha/2 - \bone\{\eta^{1-\alpha/2,S_{n+1}}_{n+1} = (1-\alpha/2)\})].
    \end{align*}
    The result then follows by repeating the steps of \Cref{prop:cov_of_dual} twice to bound the two terms above separately. A similar argument demonstrates the coverage of $\hat{C}_{\textup{dual, two-sid., rand.}}$
\end{proof}

\subsection{Proofs of additional technical lemmas} \label{sec:technical_proofs}

    \begin{lemma}[Lipschitz property of the pinball loss]\label{lem:pinball_loss_is_lip}
        The pinball loss is 1-Lipschitz in the sense that for any $y_1, y_2, y_3, y_4 \in \mmr$,
        \[
        |\ell_{\alpha}(y_1,y_2) - \ell_{\alpha}(y_3,y_4)| \leq |(y_1 - y_2)  - (y_3 - y_4)|.
        \]
    \end{lemma}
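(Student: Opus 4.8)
The plan is to reduce the statement to the elementary fact that a pointwise maximum of linear functions is Lipschitz with constant equal to the largest of the slopes involved. The first step would be to observe that $\ell_\alpha(\theta,S)$ depends on its two arguments only through the difference $S-\theta$: writing $t := S-\theta$, the definition gives $\ell_\alpha(\theta,S)=(1-\alpha)t$ when $t\ge 0$ and $\ell_\alpha(\theta,S)=-\alpha t$ when $t<0$. Since $(1-\alpha)t\ge -\alpha t$ exactly when $t\ge 0$, this can be rewritten in the single closed form
\[
\ell_\alpha(\theta,S)=\max\{(1-\alpha)(S-\theta),\; -\alpha(S-\theta)\}.
\]
Hence, setting $a:=y_1-y_2$ and $b:=y_3-y_4$, we get $\ell_\alpha(y_1,y_2)=\max\{-(1-\alpha)a,\;\alpha a\}$ and $\ell_\alpha(y_3,y_4)=\max\{-(1-\alpha)b,\;\alpha b\}$.

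Next I would invoke the elementary bound $|\max\{p,q\}-\max\{r,s\}|\le \max\{|p-r|,\,|q-s|\}$, proved by assuming without loss of generality that the left-hand maximum is the larger of the two and is attained at, say, $p$, and then writing $\max\{p,q\}-\max\{r,s\}=p-\max\{r,s\}\le p-r\le \max\{|p-r|,|q-s|\}$ (the case where it is attained at $q$, and the symmetric case, are identical). Applying this with $p=-(1-\alpha)a$, $q=\alpha a$, $r=-(1-\alpha)b$, $s=\alpha b$ yields
\[
|\ell_\alpha(y_1,y_2)-\ell_\alpha(y_3,y_4)|\le \max\{(1-\alpha)|a-b|,\;\alpha|a-b|\}=\max\{\alpha,1-\alpha\}\,|a-b|\le |a-b|,
\]
and since $a-b=(y_1-y_2)-(y_3-y_4)$ this is precisely the claim.

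There is no genuine obstacle here; the only points that require a little care are the sign bookkeeping in the reduction to $S-\theta$ and recording that $\max\{\alpha,1-\alpha\}\le 1$ for $\alpha\in[0,1]$. If one prefers to avoid the max-representation entirely, an equivalent route is a direct case split on the signs of $a$ and $b$: in each of the four cases $\ell_\alpha(y_1,y_2)-\ell_\alpha(y_3,y_4)$ is either an explicit affine function of $a,b$ with slopes drawn from $\{-\alpha,1-\alpha,-(1-\alpha),\alpha\}$, or is controlled by the triangle inequality, and the same constant $\max\{\alpha,1-\alpha\}\le 1$ emerges. I would present the one-line max argument as the proof and, if desired, mention the case analysis as a remark.
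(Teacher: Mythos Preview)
Your proof is correct. The paper's own argument proceeds by a direct four-case split on the signs of $y_1-y_2$ and $y_3-y_4$, checking in each case that $\ell_\alpha(y_1,y_2)-\ell_\alpha(y_3,y_4)\le |(y_1-y_2)-(y_3-y_4)|$ and then invoking symmetry. Your route is genuinely different: you first rewrite $\ell_\alpha$ as a pointwise maximum of two linear functions of $S-\theta$, then appeal to the general inequality $|\max\{p,q\}-\max\{r,s\}|\le\max\{|p-r|,|q-s|\}$. This avoids the case analysis entirely and, as a bonus, delivers the sharper Lipschitz constant $\max\{\alpha,1-\alpha\}\le 1$ rather than just $1$. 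The paper's approach is more pedestrian but completely self-contained; yours is cleaner and makes the structure (max of affine functions) explicit, which is arguably more informative. Either is fine here since the lemma is only used qualitatively downstream.
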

    \begin{proof}
    We will show that $\ell_{\alpha}(y_1,y_2) - \ell_{\alpha}(y_3,y_4) \leq |(y_1 - y_2)  - (y_3 - y_4)|$. The reverse inequality will then follow by symmetry. There are four cases.

    \paragraph{Case 1:} $y_1 \geq y_2,\ y_3 \geq y_4$.
    \[
    \ell_{\alpha}(y_1,y_2) - \ell_{\alpha}(y_3,y_4) = \alpha(y_1 - y_2) - \alpha(y_3 - y_4) \leq |(y_1 - y_2)  - (y_3 - y_4)|.
    \]
    \paragraph{Case 2:} $y_1 < y_2,\ y_3 < y_4$.
    \[
    \ell_{\alpha}(y_1,y_2) - \ell_{\alpha}(y_3,y_4) = (1-\alpha)(y_2 - y_1) - (1-\alpha)(y_4 - y_3) \leq |(y_1 - y_2)  - (y_3 - y_4)|.
    \]
    \paragraph{Case 3:} $y_1 \geq y_2,\ y_3 < y_4$.
    \begin{align*}
    \ell_{\alpha}(y_1,y_2) - \ell_{\alpha}(y_3,y_4) & = \alpha(y_1- y_2) - (1-\alpha)(y_4 - y_3)\\
    &= \alpha(y_1- y_2 - (y_3 - y_4)) + (y_3 - y_4) \leq |(y_1 - y_2)  - (y_3 - y_4)| .
    \end{align*}
    \paragraph{Case 3:} $y_1 < y_2,\ y_3 \geq y_4$.
    \begin{align*}
    \ell_{\alpha}(y_1,y_2) - \ell_{\alpha}(y_3,y_4) & = (1-\alpha)(y_2- y_1) - \alpha(y_3 - y_4)\\
    &= (1-\alpha)(y_2- y_1 - (y_4 - y_3)) + (y_4 - y_3) \leq |(y_1 - y_2)  - (y_3 - y_4)| .
    \end{align*}
    \end{proof}

    \begin{lemma}\label{lem:lower_concentration_of_phi}\sloppy
        Let $\Phi(X_1),\dots,\Phi(X_{n+1}) \in \mmr^d$ be i.i.d.~and assume that there exists constants $C_2, c_2, \rho > 0$ such that $\sup_{\beta : \|\beta\|_2 = 1} \mme[|\Phi(X_i)^\top \beta|^2]^{1/2} \leq c_2$, $\mme[\|\Phi(X_i)\|^2] \leq C_2 d$, and $\inf_{\beta : \|\beta\|_2 = 1} \mme[|\Phi(X_i)^\top \beta|] \geq \rho$. Then there exists constants $c,c'>0$ (depending only on $C_2$, $c_2$, and $\rho$) such that,
    \[
    \mmp\left( \inf_{\beta : \|\beta\|_2 = 1}  \frac{1}{n+1}\sum_{i=1}^{n+1} |\Phi(X_i)^\top \beta| \geq c \right) \geq 1- c'\frac{d^2}{(n+1)^2}    
    \]
    \end{lemma}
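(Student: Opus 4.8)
The plan is to reduce the infimum over the unit sphere to a finite $\epsilon$-net, boost the per-direction confidence by a median-of-means (blocking) trick so that a union bound over the net is affordable, and then tune the net resolution and the number of blocks together so as to land at the $d^2/(n+1)^2$ scale.

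First I would write $g(\beta) := \frac{1}{n+1}\sum_{i=1}^{n+1}|\Phi(X_i)^\top\beta|$ and $\bar g(\beta) := \mme[|\Phi(X)^\top\beta|]$, so that the hypothesis gives $\bar g(\beta)\ge\rho$ for every unit vector $\beta$ and the goal becomes $\mmp\bigl(\inf_{\|\beta\|_2=1}g(\beta)<c\bigr)\le c'd^2/(n+1)^2$ for a small constant $c$. Fix an $\epsilon$-net $\mathcal N$ of the unit sphere with $|\mathcal N|\le(3/\epsilon)^d$; since $\bigl||a|-|b|\bigr|\le|a-b|$, the discretization error of $g$ between a unit $\beta$ and its nearest net point is at most $\epsilon\cdot\frac1{n+1}\sum_i\|\Phi(X_i)\|_2$. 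Using only the assumed second moment $\mme[\|\Phi(X_i)\|_2^2]\le C_2 d$, I would bound $\frac1{n+1}\sum_i\|\Phi(X_i)\|_2$ by its mean ($\le\sqrt{C_2d}$ via Jensen) plus a Chebyshev fluctuation, obtaining an event $L$ of probability at least $1-\tfrac12 c'd^2/(n+1)^2$ on which this empirical average is $O(\sqrt{n/d})$; then choosing $\epsilon\asymp\sqrt{d/n}$ makes the discretization error a fixed small multiple of $\rho$ while keeping $\log|\mathcal N|=O(d\log n)$.

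Next, for a \emph{fixed} $\beta$, a direct Chebyshev bound gives only $\mmp(g(\beta)<\rho/2)=O(1/n)$, which is far too weak to survive a union bound over an exponentially large net. I would repair this by a blocking argument: partition the sample into $K$ disjoint blocks of constant size $m:=\lceil 16c_2^2/\rho^2\rceil$, so that each block mean $g_j(\beta)$ has mean $\bar g(\beta)\ge\rho$ and variance $\le c_2^2/m\le\rho^2/16$, hence satisfies $g_j(\beta)>\rho/2$ with probability at least $3/4$ by Chebyshev; the blocks being independent, Hoeffding gives that at least half of them do so except on an event of probability $e^{-\Omega(K)}$, and on the complement $g(\beta)\ge\frac{Km}{n+1}\cdot\frac K2\cdot\frac{1}{K}\cdot\frac{\rho}{2}\ge\rho/8$. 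Thus $\mmp(g(\beta)<\rho/8)\le e^{-\Omega(K)}$ for each fixed $\beta$, so a union bound over $\mathcal N$ costs $|\mathcal N|\,e^{-\Omega(K)}$; taking $K\asymp d\log n$ (which fits inside $\lfloor(n+1)/m\rfloor$ blocks precisely when $n\gtrsim d\log n$, i.e.\ in every regime where $d^2/(n+1)^2$ is not already vacuous) drives this to $n^{-\Omega(d)}\le\tfrac12 c'd^2/(n+1)^2$. Summing the failure probability of $L$ and the union bound then yields the lemma with $c=\rho/8$ and a suitable $c'$.

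The main obstacle is precisely the mismatch between what a single second-moment Chebyshev buys per direction ($O(1/n)$) and the $O(d^2/n^2)$ the statement demands after controlling a $d$-dimensional continuum of directions: a naive net-plus-union-bound is hopeless, so the boosting step is essential and its parameters must be chosen jointly with the net resolution. A secondary but genuine nuisance is that only \emph{second} moments of $\|\Phi(X)\|$ are assumed, which is what forces the net to be polynomially fine and pins down the admissible range of $n$; since in all of the downstream applications (Lemma~\ref{lem:beta_is_bounded} and Propositions~\ref{prop:rkhs_bounds}, \ref{prop:rkhs_inner_prod_est}) one has $n$ at least polynomially larger than $d$, this restriction is immaterial there.
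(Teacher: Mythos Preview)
Your route is genuinely different from the paper's. The paper does not discretize the sphere or use blocking; it invokes Mendelson's small-ball theorem (Theorem~5.4 of \citet{Mendelson2014}) as a black box, verifying its two inputs by (i) the Rademacher-complexity bound
\[
\mme\Bigl[\sup_{\|\beta\|_2=1}\Bigl|\tfrac{1}{n+1}\sum_i\sigma_i|\Phi(X_i)^\top\beta|\Bigr|\Bigr]\le\sqrt{C_2 d/(n+1)}
\]
and (ii) the Paley--Zygmund small-ball estimate $\mmp(|\Phi(X)^\top\beta|>\rho/2)\ge\rho^2/(4c_2^2)$. Your $\epsilon$-net plus median-of-means argument is more elementary and self-contained, which is a real advantage if one wants to avoid the external reference.

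There is, however, a quantitative inconsistency in your blocking step. You fix the block size to the constant $m=\lceil 16c_2^2/\rho^2\rceil$ and then write ``taking $K\asymp d\log n$''. With constant $m$ and $K\asymp d\log n$, only $Km\asymp d\log n$ of the $n+1$ samples participate, and your displayed lower bound
\[
g(\beta)\ \ge\ \frac{Km}{n+1}\cdot\frac{K/2}{K}\cdot\frac{\rho}{2}\ =\ \frac{Km}{n+1}\cdot\frac{\rho}{4}
\]
is then of order $(d\log n)/n$, not $\rho/8$; the claimed inequality $\frac{Km}{n+1}\cdot\frac{\rho}{4}\ge\rho/8$ forces $Km\ge(n+1)/2$, which contradicts $K\asymp d\log n$ with constant $m$ whenever $n\gg d\log n$. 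The fix is immediate: take $K=\lfloor(n+1)/m\rfloor\asymp n$ (use all the samples). Then $Km/(n+1)\ge 1/2$ holds automatically, the per-direction failure probability is $e^{-\Omega(n)}$, and the union bound over $|\mathcal N|\le e^{O(d\log n)}$ net points is still $o(d^2/(n+1)^2)$ in the regime $n\gtrsim d\log n$. With this correction the rest of your outline goes through. (A minor related point: the bound $1-c'd^2/(n+1)^2$ is not automatically vacuous throughout $n\lesssim d\log n$ for any fixed $c'$; it is only vacuous for $n\le\sqrt{c'}\,d$, so your restriction to $n\gtrsim d\log n$ leaves a small uncovered range that would need a separate, easy argument or a slightly sharper version of the net/Chebyshev step.)
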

    \begin{proof}
        The main idea of this proof is to apply Theorem 5.4 of \citet{Mendelson2014} and thus conclude that for some constant $a>0$, $|\{i : |\Phi(X_i)^\top \beta | > a(n+1)\}|$ is large uniformly in $\beta$. To apply this theorem we need to check two technical conditions. Namely, we need to show that the class of functions $x \mapsto |\Phi(x)^\top\beta|$ has bounded Rademacher complexity and that  $\mmp(|\Phi(X_i)^\top \beta | > \Omega(a))$ is not too small. We now check these conditions.
        
        Let $\sigma_1,\dots,\sigma_{n+1}$ denote i.i.d.~Rademacher random variables. Then, the Rademacher complexity of $x \mapsto |\Phi(x)^\top\beta|$ can be bounded as 
        \begin{align*}
        \mme\left[ \sup_{\beta : \|\beta\|_2 = 1} \left| \frac{1}{n+1} \sum_{i=1}^{n+1} \sigma_i |\Phi(X_i)^\top \beta| \right| \right] & = \mme\left[ \left\| \frac{1}{n+1} \sum_{i=1}^{n+1} \sigma_i \Phi(X_i)\right\|_2 \right]\\
        & \leq \mme\left[\left\|\frac{1}{n+1} \sum_{i=1}^{n+1} \sigma_i \Phi(X_i)\right\|_2^2\right]^{1/2} \leq \sqrt{\frac{C_2 d}{n+1}}.
        \end{align*}
        This verifies our the first technical condition. For the second condition note that by the Paley-Zygmund inequality
        \begin{align*}
        \mmp\left(|\Phi(X_i)^\top \beta | > \frac{1}{2}\rho\right) \geq \frac{\mme[|\Phi(X_i)^\top \beta |]^2}{4\mme[|\Phi(X_i)^\top \beta |^2]} \geq \frac{\rho^2}{4c^2_2}.
        \end{align*}
        Thus, by Theorem 5.4 in \citet{Mendelson2014} we have that there exists constants $a, b>0$ such that with probability at least $1-c'd^2/(n+1)^2$
        \[
        \inf_{\beta : \|\beta\| = 1} |\{i : |\Phi(X_i)^\top \beta | > a(n+1)\} \geq (n+1) b,
        \]
        and thus in particular,
        \[
        \inf_{\beta : \|\beta\| = 1}  \frac{1}{n+1} \sum_{i=1}^{n+1} |\Phi(X_i)^\top \beta | \geq ab.
        \]
        Taking $c=ab$ gives the desired result.
    \end{proof}

\end{document}